\theoremstyle{definition}
\newtheorem{remark}{Remark}
\theoremstyle{theorem}
\newtheorem{theorem}{Theorem}
\newtheorem{lemma}{Lemma}
\newtheorem{corollary}{Corollary}
\renewcommand{\Re}{\mathrm{Re}}
\renewcommand{\emph}[1]{\textit{#1}}
\newcounter{para}
\newcommand*\bigcdot{\mathpalette\bigcdot@{.5}}
\newcommand*\bigcdot@[2]{\mathbin{\vcenter{\hbox{\scalebox{#2}{$\m@th#1\bullet$}}}}}
\newcommand{\llangle}[1][]{\savebox{\@brx}{\(\m@th{#1\langle}\)}%
  \mathopen{\copy\@brx\kern-0.5\wd\@brx\usebox{\@brx}}}
\newcommand{\rrangle}[1][]{\savebox{\@brx}{\(\m@th{#1\rangle}\)}%
  \mathclose{\copy\@brx\kern-0.5\wd\@brx\usebox{\@brx}}}
\newcolumntype{L}{>{$}l<{$}} 
\newcolumntype{C}{>{$}c<{$}} 
\newcolumntype{R}{>{$}r<{$}} 
\newtheorem*{theorem*}{Theorem}
\newtheorem*{lemma*}{Lemma}
\newtheorem{proposition}{Proposition}
\let\save@mathaccent\mathaccent
\newcommand*\if@single[3]{%
  \setbox0\hbox{${\mathaccent"0362{#1}}^H$}%
  \setbox2\hbox{${\mathaccent"0362{\kern0pt#1}}^H$}%
  \ifdim\ht0=\ht2 #3\else #2\fi
  }
\newcommand*\rel@kern[1]{\kern#1\dimexpr\macc@kerna}
\newcommand*\widebar[1]{\@ifnextchar^{{\wide@bar{#1}{0}}}{\wide@bar{#1}{1}}}
\newcommand*\wide@bar[2]{\if@single{#1}{\wide@bar@{#1}{#2}{1}}{\wide@bar@{#1}{#2}{2}}}
\newcommand*\wide@bar@[3]{%
  \begingroup
  \def\mathaccent##1##2{%
    \let\mathaccent\save@mathaccent
    \if#32 \let\macc@nucleus\first@char \fi
    \setbox\z@\hbox{$\macc@style{\macc@nucleus}_{}$}%
    \setbox\tw@\hbox{$\macc@style{\macc@nucleus}{}_{}$}%
    \dimen@\wd\tw@
    \advance\dimen@-\wd\z@
    \divide\dimen@ 3
    \@tempdima\wd\tw@
    \advance\@tempdima-\scriptspace
    \divide\@tempdima 10
    \advance\dimen@-\@tempdima
    \ifdim\dimen@>\z@ \dimen@0pt\fi
    \rel@kern{0.6}\kern-\dimen@
    \if#31
      \overline{\rel@kern{-0.6}\kern\dimen@\macc@nucleus\rel@kern{0.4}\kern\dimen@}%
      \advance\dimen@0.4\dimexpr\macc@kerna
      \let\final@kern#2%
      \ifdim\dimen@<\z@ \let\final@kern1\fi
      \if\final@kern1 \kern-\dimen@\fi
    \else
      \overline{\rel@kern{-0.6}\kern\dimen@#1}%
    \fi
  }%
  \macc@depth\@ne
  \let\math@bgroup\@empty \let\math@egroup\macc@set@skewchar
  \mathsurround\z@ \frozen@everymath{\mathgroup\macc@group\relax}%
  \macc@set@skewchar\relax
  \let\mathaccentV\macc@nested@a
  \if#31
    \macc@nested@a\relax111{#1}%
  \else
    \def\gobble@till@marker##1\endmarker{}%
    \futurelet\first@char\gobble@till@marker#1\endmarker
    \ifcat\noexpand\first@char A\else
      \def\first@char{}%
    \fi
    \macc@nested@a\relax111{\first@char}%
  \fi
  \endgroup
}
\begin{document}
\newcommand{\MITphys}{Center for Theoretical Physics---a Leinweber Institute, Massachusetts Institute of Technology, Cambridge, MA 02139, USA}
\newcommand{\MITstats}{Statistics and Data Science Center, Massachusetts Institute of Technology, Cambridge, MA 02139, USA}
\newcommand{\MITeecs}{Department of EECS, Massachusetts Institute of Technology, Cambridge, MA 02139, USA}
\newcommand{\nocontentsline}[3]{}
\let\origcontentsline\addcontentsline
\newcommand\stoptoc{\let\addcontentsline\nocontentsline}
\newcommand\resumetoc{\let\addcontentsline\origcontentsline}

\preprint{MIT-CTP/5936}
\title{
How much can we learn from quantum random circuit sampling?
}
\date{\today}
\author{Tudor Manole}
\thanks{These authors contributed equally to this work.}
\affiliation{\MITstats}
\author{Daniel K. Mark}
\thanks{These authors contributed equally to this work.}
\affiliation{\MITphys}
\author{Wenjie Gong}
\affiliation{\MITphys}
\author{Bingtian Ye}
\affiliation{\MITphys}

\author{Yury Polyanskiy}
\email{yp@mit.edu}
\affiliation{\MITstats}
\affiliation{\MITeecs}
\author{Soonwon Choi}
\email{soonwon@mit.edu}
\affiliation{\MITphys}

\begin{abstract}
Benchmarking quantum devices is a foundational task for the sustained development of quantum technologies. However, accurate \textit{in situ} characterization of large-scale quantum devices remains a formidable challenge: such systems experience many different sources of errors, and cannot be simulated on classical computers.
Here, we introduce new benchmarking methods based on random circuit sampling (RCS), that substantially extend the scope of conventional approaches. Unlike existing benchmarks that report only a single quantity---the circuit fidelity---our framework extracts rich diagnostic information, including spatiotemporal error profiles, correlated and contextual errors, and biased readout errors, without requiring any modifications of the experiment.
Furthermore, we develop techniques that achieve this task without classically intractable simulations of the quantum circuit, by leveraging \textit{side information}, in the form of bitstring samples obtained from reference quantum devices. 
Our approach is based on advanced high-dimensional statistical modeling of RCS data. We sharply characterize the information-theoretic limits of error estimation, deriving matching upper and lower bounds on the sample complexity across all regimes of side information.
We identify surprising phase transitions in learnability as the amount of side information varies. We demonstrate our methods using publicly available RCS data from a state-of-the-art superconducting processor, obtaining \textit{in situ} characterizations that are qualitatively consistent yet quantitatively distinct from component-level calibrations. Our results establish both practical benchmarking protocols for current and future quantum computers and fundamental information-theoretic limits on how much can be learned from RCS data.

\end{abstract}

\maketitle

\stoptoc
\section{Introduction}
\begin{figure*}[t!]    
\includegraphics[width=\textwidth]{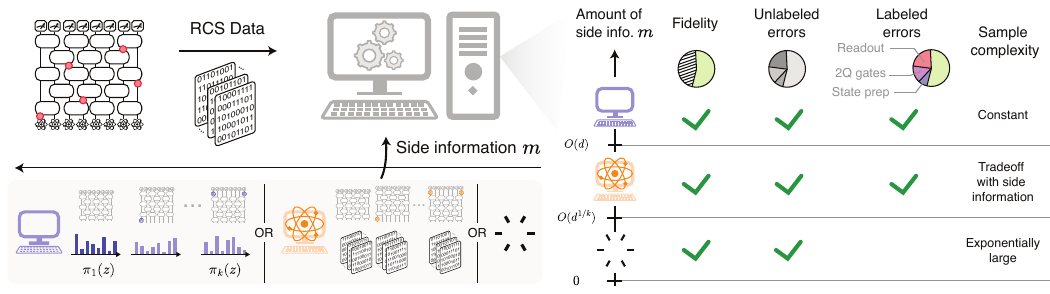}
    \caption{Overview of this work. We develop methods to learn many noise parameters from random circuit sampling (RCS) bitstring data.
    Our protocol makes use of side information, which can be  the classically computed bitstring distribution, or samples of these distributions obtained from a reference quantum computer.    
    Our analysis includes the special case where no side information is available --- even in this case, error rates can be learned given enough RCS data.
    Our method allows extracting more information than previously known benchmarking methods:
    in addition to the state fidelity, we can estimate the error rates for many types of errors,
    including state preparation errors, correlated multi-qubit errors, contextual errors which
    depend on previous gates applied, and readout errors. We find a phase diagram that dictate the hardness, and types of information that can be learned from RCS data, as a function of the amount of side information available.
    The sample complexity in each regime is analyzed.
    } 
    \label{fig:figure0}
\end{figure*}

Quantum information processing has made remarkable progress in recent years, reaching major milestones such as the demonstration of beyond-classical computational tasks~\cite{arute2019quantum, shaw2023benchmarking, bluvstein2024logical,haghshenas2025digital}, and the realization of quantum error correction and early fault-tolerant operations~\cite{egan2021fault,google2023suppressing,sivak2023real,bluvstein2024logical,paetznick2024demonstration,google2025quantum,bluvstein2025architectural}.
As quantum devices become increasingly advanced, we also need improved methods to characterize and benchmark them. 

Quantum processors experience a variety of errors, including coherent errors~\cite{shaw2024universal,kaufmann2025characterization,steckmann2025error}, errors that vary over space and time~\cite{arute2019quantum}, correlated multi-qubit errors~\cite{mcewen2022resolving,wilen2021correlated}, leakage errors~\cite{mcewen2021removing,scholl2023erasure,ma2023high}, and \textit{contextual} errors whose presence depend on the choice of earlier operations.
This diversity of error types reflects the diversity of physical mechanisms for error in quantum devices. Examples include excitation into higher transmon levels~\cite{mcewen2021removing}, spontaneous emission of photons from atoms~\cite{scholl2023erasure,ma2023high}, and slow ringdown~\cite{wei2024characterizing} or fluctuations~\cite{shaw2024universal} control pulses. These various types of errors must be identified and quantified in the effort to improve quantum hardware components, their interconnects, and systems architecture~\cite{wei2024characterizing}.
In this regard, traditional methods often  fall short in accurately characterizing complex devices.
In most common practices, a quantum system is characterized component-by-component and operation-by-operation, separately, rather than when they work together as in a full quantum circuit~\cite{arute2019quantum}.
In such approaches, certain types of errors may be missed or incorrectly estimated~\cite{chen2025randomized}.

Cross-entropy benchmarking (XEB) is the state-of-the art approach to characterize large scale quantum devices~\cite{boixo2018characterizing}. The XEB fidelity is a proxy
for the quantum fidelity and provides
a single-number
summary of circuit performance.
It is remarkably versatile, being
applicable across different hardware platforms, both for physical and logical circuits, and beyond the ideal setting of deep random unitary
circuits~\cite{mark2023benchmarking,shaw2023benchmarking,bluvstein2024logical,andersen2025thermalization}; hence it is one of a few industry-standard approaches for benchmarking quantum systems~\citep{baldwin2022re,cross2019validating,mayer2021theory,proctor2022measuring}, and has seen wide adoption~\cite{arute2019quantum, wu2021strong, liu2021benchmarking,shaw2023benchmarking, bluvstein2024logical,decross2024computational,
andersen2025thermalization, gao2025establishing}.
However, XEB has two major limitations. First, it aggregates all types of errors into a single metric---the state infidelity---thereby obscuring detailed information needed to guide improvements in quantum hardware. Second, it relies on classical simulation of the ideal circuit output, which limits its applicability to relatively small system sizes.

In this work, we present new benchmarking methods based on random circuit sampling (RCS).
Our approach requires no modification in experiments, as it relies on exactly the same data needed for XEB. However, our methods significantly extend the scope of the XEB fidelity, largely addressing the aforementioned limitations and providing detailed information about noise.
Explicitly, starting from the bitstring data obtained from noisy RCS, our method produces an elaborate report that contains a rich set of information such as the estimated circuit fidelity, spatiotemporal profiles of single- or two-qubit errors, correlated or contextual errors,   biased readout errors, and, in certain cases
identifies the dominant physical processes behind the errors.
This report can be made in a computationally efficient manner, provided sufficient amount of \textit{side information} that
describe the expected sampling distributions in the absence of unwanted errors.
We consider three different regimes of side information: A) maximal side information 
where the expected output distribution is available from explicit classical computation, 
B) partial side information, where the expected distributions can be inferred from $m<\infty$ bitstring samples generated from a clean reference quantum computer, and 
C) no side information where there is no reference data (Fig.\,\ref{fig:figure0}).
Formally, A) and C) correspond to $m=\infty$ and $m=0$, respectively.
The regimes B) and C) are increasingly relevant for demonstrations of beyond-classical circuit sampling.

Our approach to benchmarking is based on improved modeling of noisy random circuit
output, building upon
a model introduced by~\cite{rinott2022statistical}.
We show that this modeling approach
adequately resolves the different types and spacetime positions of errors, using 
advanced statistical data-processing
algorithms 
of the high-dimensional measurement data.   
In statistical terms, we describe RCS
data as arising from a mixture of 
random high-dimensional
probability distributions,
representing the various error channels present in the
random quantum circuit. 
We provide a detailed description
of our statistical setup, which is accessible
to statisticians,  
in Appendix~\ref{sec:estimators}.  

Our algorithms require more RCS samples when there is less side-information available. Interestingly, we discover \textit{phase transitions} in the sample complexity
of error learning, 
which allow us to precisely characterize
the regimes in which side information is beneficial. In all settings, we prove that our methods achieve optimal sample complexity, saturating information-theoretic lower bounds. 
These sample complexity bounds
provide valuable insight into the scope of possible
applications of our benchmarking methods. As
an example, our results imply that one can benchmark the full execution of an RCS
experiment on  a quantum processor of approximately
50 qubits in an efficient manner both in memory and  computation, given sufficient
side information from 
a reference quantum computer. 

Finally, we apply our methods to existing, publicly available RCS data~\cite{arute2019quantum} to characterize the diverse sources of error that arise 
in a state-of-the-art quantum processor.
Unlike earlier error characterizations of the quantum processor, our report (Fig.~\ref{fig:google_data}) provides \textit{in-situ} information about the errors experienced in full operation of the circuit.
The extracted error rates and breakdown into different sources are consistent with the expected behavior.

\section{Overview}
Our method builds upon the operating principle of cross-entropy benchmarking (XEB)~\cite{boixo2018characterizing}: that the output of a random circuit is a highly entangled wavefunction $|\psi\rangle$ which is a superposition over all $N$-bit strings. Measuring this state in the computational basis amounts to sampling from the probability distribution $\pi_1(z) \equiv |\langle z| \psi \rangle|^2$ over $d=2^N$ possible bitstrings. In a typical circuit, this probability distribution is highly complex: its individual entries $\pi_1(z)$ fluctuate strongly among bitstrings $z$,
and form a many-body \textit{speckle pattern} that is essentially unique to the state $|\psi\rangle$ (Fig.~\ref{fig:figure0}) in practical settings\footnote{We note that states of the form $\exp(i\theta P)|\psi\rangle$ (for some Pauli operator $P$ consisting of $Z$ operators) displays the same speckle pattern $\pi_1(z)$. We are indifferent to those cases since they do not change computational outcomes anyways.}. Moreover, for a sufficiently deep circuit, $|\psi\rangle$ is well-approximated as a Haar-random state~\cite{harrow2009random} and hence $\pi_1(z)$ satisfies universal statistical properties on the distribution of $\{\pi_1(z)\}$ values, known as the Porter-Thomas distribution~\cite{porter1956fluctuations,arute2019quantum}.
This universality provides the foundation for the XEB fidelity.

The XEB fidelity is an approximation of the quantum fidelity $\langle \psi| \rho |\psi\rangle $ between the ideal target state $|\psi\rangle$  produced by a programmed quantum circuit and the experimentally prepared mixed state $\rho$. It compares experimental bitstrings against the ideal distribution $\pi_1$ to estimate the fraction of samples drawn from $\pi_1$.
The \textit{white noise model}, often evoked to explain the XEB fidelity, makes this notion of the ``fraction" concrete. It posits that the experimental distribution $p(z) \equiv \langle z |\rho |z\rangle$ is of the form
\begin{equation}
   \text{White noise model: } p(z) = F \pi_1(z) + (1-F)/d. 
   \label{eq:white_noise_model}
\end{equation}
Experimental samples are therefore drawn from $\pi_1(z)$ with probability $F$ and from the uniform distribution $\pi_\text{wh}(z)\equiv 1/d$ with probability $1-F$. This relation arises, for example, when $\rho = F|\psi\rangle\langle\psi| + (1-F)I_d/d$
is the globally depolarized state (where $I_d/d$ is the maximally mixed state), but can also arise from the combined effect of many errors throughout the circuit~\cite{dalzell2024random}.
The operational meaning of $F$ is the probability that an experiment successfully executes the circuit with no error, hence estimating the quantum fidelity.

In this work, we adopt a more refined \textit{$k$-component model}~\cite{rinott2022statistical}
\begin{equation}
    \text{$k$-component model: } p(z) = \sum_{i=1}^k c_i \pi_i(z)~.
\end{equation}
Each term $\pi_i(z)$ is the distribution associated with each physical error pattern $i$.
For example, during one execution of a deep circuit, a particular qubit at site $a$ may experience a Pauli $X$ error at a specific time $t$. Alternatively, a pair of qubits at site $a$ and $b$ experience Pauli $X$ and $Y$ errors at times $t_1$ and $t_2$, respectively.
We imagine all such patterns of errors (``events") that may reasonably occur in the circuit and enumerate them by the index $i$. We identify the special case $i=1$ with the perfect execution of the circuit.

Formally, we can understand that the index $i$ enumerates over the ensemble of \textit{quantum trajectories} obtained from an unraveling of error channels~\cite{nielsen2010quantum}. Each trajectory is associated with a pure wavefunction evolving under the programmed unitary circuit interspersed by error (Kraus) operators at specific spacetime locations. Rapid scrambling of random unitary circuits~\cite{nahum2018operator,fisher2023random} ensure that such trajectory states are, with high probability, also Haar-random and hence their measurement distribution $\pi_i(z)$ can be approximated as vectors independently sampled from the Porter-Thomas distribution~\cite{rinott2022statistical,shaw2024universal}.   

In the simplest version of our
protocols, we essentially perform cross-entropy benchmarking on each distribution $\pi_i(z)$ to estimate its coefficient $c_i$, representing the probability of the particular error event $i$. 
We consider a total of $k$ error events. Advanced statistical algorithms allow us to utilize the information provided by bitstring measurements and simultaneously and efficiently estimate all $c_i$.
This enables \textit{in-situ} characterization of errors in a quantum circuit, complementing prevailing approaches of error characterization by single- and two-qubit experiments. Error rates may differ between these isolated experiments and full operation of a quantum circuit when all components are simultaneously in operation.
This also enables the characterization of complex error types such as correlated and contextual errors.

For large quantum systems, obtaining exact knowledge of $\pi_i(z)$ for many different error patterns is computationally intractable.
To this end, we study the error estimation task when $\pi_i(z)$ is known only partially.
Specifically,  
we envision that information about $\pi_i(z)$ is obtained by sampling from a \textit{reference} quantum computer that prepares the ideal state $|\psi\rangle$ and any of the noise trajectory states. 
Such states can be obtained, for example, by stringent quantum error detection~\cite{bluvstein2024logical,ma2023high,scholl2023erasure}.
Drawing inspiration from a related line of work in the statistical literature~\citep{angelopoulos2023prediction,xia2024,gerber2024likelihood}, we dub this \textit{side information}, which enables our use of the experimental data for parameter estimation.
We quantify the amount of side information as the number of measurements $m$ of each trajectory state, yielding bitstrings $W_{im}$ sampled from each $\pi_{i}$. By comparing $W_{im}$ and bitstring data from noisy RCS, we produce the same kind of benchmark report.

We find phase transitions in sample complexity as a function of the amount of side information (Fig.~\ref{fig:figure0}): in the full information, $m=\infty$ phase (which includes the case where $\pi_i$ are classically simulated), the $c_i$'s can be estimated with a number of samples independent of system size. As $m$ decreases and crosses the phase boundary at $m=O(d)$ (\textit{i.e.}, sublinear in $d$), we enter the partial side-information phase which features a sample complexity tradeoff: the sample complexity is set by the \textit{product} $n\times m$ of the numbers of experimental and side-information samples.  Estimation is feasible as long as $nm \geq d \log k$.
Surprisingly, even in the $m=0$ limit with no side information, estimation is still possible by making use of the universal Porter-Thomas properties of each $\pi_i$. This transition occurs at $m=O(d^{1/k})$ which depends on the number of errors $k$ considered in our model. In this phase, only the \textit{unlabeled} $\{c_i\}$ can be estimated, i.e.~we can determine the values of the $c_i$'s, such as the largest $c_i$, but cannot assign the indices $i$ to each value. Detailed expressions for the sample complexities are summarized in Table~\ref{tab:sample_complexities}.

Across all regimes, the key properties we utilize are the typical properties of Porter-Thomas (or Dirichlet random) distributions $\pi_i$. These properties \textit{concentrate}: random instances are close to the average value with (exponentially) small fluctuations. For example, two random distributions have fixed overlap $\sum_z \pi_i(z) \pi_j(z) = (1+\delta_{ij})/d + O(d^{-3/2}) $ and hence are in some sense approximately orthonormal.
This bilinear structure is tremendously helpful: the product 
$\sum_z \pi_i(z) p(z)$ can be estimated with only a few samples from 
a distribution $p(z)$, far fewer than the (exponentially large) number needed to accurately estimate each entry $p(z)$. As long as the number of signals $k$ is   less than the dimension $d$, $k$ different products $\sum_z \pi_i(z) p(z)$ can be straightforwardly distinguished with minimal overhead.

In the partial side information phase, we utilize a \textit{collision estimator} which counts the number of bitstrings $z$ seen in both the RCS and reference data.
The Porter-Thomas nature of the distributions
$\pi_i$ ensures that they fluctuate
around the uniform distribution, and in this
regime, the classical birthday paradox ensures that 
collisions begin to occur
once the \textit{product} of sample sizes
$n\times m$ exceeds
the dimension $d$~\citep{wendl2003collision}. 
This is precisely the threshold
at which our methods can reliably estimate $c_i$,
up to a logarithmic correction in $k$.

Even in the absence of \textit{any} side information ($m=0$), the typicality of high-dimensional
Dirichlet random distributions means that information about $c$ is present in the measurement data, independent of 
the distributions $\pi_i$, and hence not requiring knowledge of them. As a simple example, under Eq.~\eqref{eq:our_model}, the sum-of-squares $\sum_i c_i^2$ is well estimated by the collision probability $d\sum_z p(z)^2-1$~\cite{shaw2024universal,andersen2025thermalization}. This, and higher moments of $p(z)$, can be estimated with enough RCS data.

The high-dimensional
nature 
of RCS data---in which the Hilbert space
dimension $d$ significantly exceeds the sample
size $n$---is both a blessing and  
a curse throughout our analysis. 
This high-dimensionality
enables the universal
behavior of random
Porter-Thomas distributions,
leading to particularly simple 
and practical algorithms that generalize
XEB, 
 while on the other hand placing lower
 bounds on the sample complexity 
 of error characterization which 
 sometimes grow
 exponentially in system size. 
 From a technical lens, 
 our analysis leverages tools
 from high-dimensional statistical theory~\cite{wainwright2019},
 though in the setting of count-based 
 data which is heteroscedastic in nature, 
 unlike the more
 classical setting of Gaussian additive noise.  

We show that our rates of estimation are optimal. We establish information-theoretic lower bounds on the sample complexity of the inference task. We establish matching upper bounds by explicitly presenting optimal statistical estimators, sharply resolving the question of parameter estimation with RCS data. 

We proceed to apply our methods to synthetic data obtained from numerical simulations as well as publicly available data produced in a quantum experiment~\cite{arute2019quantum}. We confirm that we can identify time-varying error rates and non-local correlated errors. From the real-world data, we successfully estimate state preparation, rates of errors affecting single- and two-qubits, as well as biased readout errors, resolved on each qubit. Our results are qualitatively consistent with anticipated values, but, crucially, our approach estimates those error rates \textit{in-situ} whereas the previously reported values rely on data obtained from separate experiments.

On one hand, our explicit estimators and algorithms imply that our approach is the best approach for parameter estimation from RCS data. In practical terms, we expect our methods to be applicable to quantum circuits of up to $N=40-50$ (possibly logical) qubits, even in the \textit{beyond-classical regime}.  On the other hand, our lower bound implies we cannot do better than this: there is a fundamental limit on how much we can learn from RCS data, and each side-information or experimental measurement contains an exponentially small amount of data. 
In order to circumvent our lower bound, one must use quantum circuits with additional structure, such as those in mirror benchmarking protocols~\cite{mayer2021theory,proctor2022scalable,gong2025robust} or random Clifford circuits~\cite{magesan2011scalable,magesan2012characterizing}.
\\
 
\noindent {\bf Related Work.} As mentioned previously, the work of~\cite{rinott2022statistical}
was the first to propose 
model~\eqref{eq:our_model},
and statistical estimators 
for $c$ when $m=\infty$. 
When $m=0$, Ref.~\cite{arute2019quantum}
introduced a method called
speckle XEB, for estimating
the  XEB fidelity  under the white noise model~\eqref{eq:white_noise_model}   without knowledge of $\pi_1$, using the empirical second moment of the bitstring data (subsequently termed ``self-XEB" in Ref.~\cite{andersen2025thermalization}). 
Their procedure
was generalized by the
work of~\cite{shaw2024universal} to model~\eqref{eq:our_model},
who showed that higher-order empirical moments of the 
bitstring data can be used
to estimate the moments of 
the unordered vector $c$.
They also derived an upper bound on the sample complexity of estimating
the second moment of the vector $c$, which can be viewed
as a precursor to our
sample complexity bounds to come,
for the special case $k=2$. 

Estimating the  overlap fidelity
between two quantum 
states prepared on separate
quantum computers has been studied as 
``cross-platform verification" in Refs.~\cite{greganti2021cross,zhu2022cross}, and specifically in the context of randomized measurements in Ref.~\cite{elben2020cross}. 
However, not much is understood about its sample complexity, and the task of learning multiple parameters in a cross-platform approach has not been 
explored before our work.

\section{Learning from bitstrings}
\label{sec:rcs}
\subsection{Random circuit sampling as a statistical mixture model}
Having provided a high-level overview of our results, in this section we begin our technical discussion.
Our benchmarking methods are developed for RCS data~\cite{boixo2018characterizing}, in which a quantum processor
executes a circuit obtained
by composing randomly-sampled single- and two-qubit gates, a popular benchmark in the field with several demonstrations~\citep{arute2019quantum,wu2021strong,liu2021benchmarking,zhu2022quantum,moses2023racetrack,decross2024computational,bluvstein2024logical,gao2025establishing}.

Random circuit sampling has the advantage of being an unbiased measure of quality of a quantum device: its underlying gates are randomized and hence the measurement outcomes are not biased towards one particular type of error, nor are they tailored to a particular circuit, which may have highly structured outcomes. In this respect, it is similar to randomized benchmarking~\cite{knill2008randomized,magesan2011scalable,wallman2014randomized, harper2019statistical,erhard2019characterizing, proctor2022scalable} as well as the suite of tools known as the \textit{randomized measurement toolbox}~\cite{elben2023randomized}. 

Originally motivated as a quantum advantage demonstration,
 RCS has since become a 
general-purpose and widely-used tool for benchmarking quantum hardware~\citep{lall2025review}. 
Furthermore, random quantum circuits 
are good approximations for Haar-random unitaries~\cite{harrow2009random}, 
which makes them useful for quantum information tasks~\cite{ambainis2007quantum} including state learning~\cite{huang2020predicting,elben2023randomized} and random-number generation, as well as for the study of questions in basic science, such as quantum
chaos and thermalization~\citep{zhou2019emergent,hayden2007black,nahum2018operator}.

In an RCS experiment,
an $N$-qubit random circuit produces an output state $|\psi\rangle \equiv U |\psi_0\rangle$, which is then measured in the computational basis. Each experiment yields a random $N$-bitstring $z\in \{0,1\}^N$, 
whose  probability distribution is given by $\pi_1(z)\equiv|\langle z| \psi\rangle|^2$
in the noiseless case. In practice, there are uncontrolled errors, and a noisy physical quantum device instead
 transforms  $|\psi_0\rangle$ into a 
 mixed state $\rho$, with a corresponding measurement distribution $p(z) = \langle z|\rho|z\rangle$. 

The most widely-used
technique for analyzing RCS data
is the (linear) cross-entropy benchmark
(XEB)~\cite{boixo2018characterizing,arute2019quantum}. 
The XEB estimates the many-body fidelity $F\equiv\langle \psi|\rho|\psi\rangle$ by comparing the experimental distribution $p(z)$ with the ideal one $\pi_1(z)$:
\begin{equation} \label{eq:fidelity} 
F_\text{XEB} = d \sum_{z \in \{0,1\}^N} p(z)\pi_1(z)-1.
\end{equation} 
Experimental samples provide an \textit{empirical} estimate of $p(z)$ which furnishes an unbiased estimator $\hat{F}_\text{XEB}$ (Appendix~\ref{sec:estimators}). In turn, the XEB furnishes an accurate approximation of 
the quantum fidelity, $F_\text{XEB}\approx F$
in sufficiently deep random circuits with local noise~\cite{ware2023sharpphasetransitionlinear,morvan2024phase}. 

The above relation is justified by two properties. First, random unitary circuit dynamics results in distributions $\pi_1$ with highly typical properties. This is referred to by the
{\it Porter-Thomas} distribution~\citep{porter1956fluctuations,boixo2018characterizing} which governs the distribution of values $\pi_1(z)$. Mathematically, this is equivalent to assuming that $\pi_1$ is a random probability vector sampled from the \textit{Dirichlet distribution}, i.e.~uniformly random on the probability simplex
(see condition~\ref{assm:pt}
below for a formal
definition). Second, one needs to make an assumption about the bitstring output of the ``noisy part" of the state. The simplest such model is the white noise model [Eq.~\eqref{eq:white_noise_model}]. However, the XEB remains an accurate estimate of the fidelity even in the more general situation, repeated here:
we assume that the quantum device
may experience $k-1$ different 
incoherent error patterns, and thus
that the random state $\rho$ outputs bitstrings according to the bitstring distribution:
\begin{align} 
\label{eq:our_model}
p_c(z|\Pi) = \sum_{i=1}^k c_i \pi_i(z),\quad z \in \{0,1\}^{N},
\end{align}
where $\pi_1$ is the ideal {\it random} probability distribution,  $\pi_2,\dots,\pi_k$
are the {\it random} probability
distributions of $k$ different incoherent error sources
in the circuit, and $c=(c_1,\dots,c_k)$ is the corresponding
vector of probabilities (``error weights"). See Refs.~\cite{rinott2022statistical,shaw2024universal} for similar models. We collect the distributions $\pi_i$ into a single matrix $\Pi \in \mathbb{R}^{k\times d}$ with entries $\Pi_{ij} = \pi_i(z_j)$, where the index $i$ denotes the error type, and $j$ denotes the bitstring index, and explicitly highlighted the dependence of $p_c$ on $\Pi$.
The matrix $\Pi$ is random and depends on the choice of random circuit and on the errors in the model (see Eq.~\eqref{eq:error_channels} below). 
In statistical language, 
we recognize model~\eqref{eq:our_model} as a mixture
model consisting of $k$ components, 
the first of which corresponds
to the ideal bitstring
distribution $\pi_1$, which occurs with a probability $c_1$ that can be viewed as an analogue of the XEB fidelity $F$. The
remaining terms of the mixture model
correspond to $k-1$ noise sources $\pi_i$, 
each occurring with probability $c_i$.

Eq.~\eqref{eq:our_model} is not only more physically realistic, it also allows for learning beyond the single-number
summary of the circuit infidelity provided by XEB. Learning the coefficients $c_i$ in our model provides detailed information about the error processes that contribute to this infidelity. 

Physically, Eq.~\eqref{eq:our_model} arises from 
the following model of the noisy state 
\begin{equation}
    \rho = \mathcal{R}_{J} \circ \mathcal{U}_J \circ \cdots \circ \mathcal{U}_1\circ \mathcal{R}_0 [|\psi_0\rangle \langle\psi_0|]
\end{equation} 
where $\mathcal{R}_i$ denotes the error channel at circuit layer $i$ and $\mathcal{U}_i[~\cdot~] \equiv U_i[~\cdot~] U^\dagger_i$ denotes the ideal quantum unitary acting on layer $i$. Each error channel consists of a number of physical errors, denoted by Kraus operators $K_\ell$~\citep{nielsen2010quantum}:
\begin{equation}
\mathcal{R}_i[\rho] = \sum_\ell \Gamma_{\ell}^{(i)} K_{\ell} \rho K_{\ell}^\dagger,
\label{eq:error_channels}
\end{equation}
with physical error rates $\Gamma_\ell^{(i)}$ that can depend both on error type, location, and time (layer). 

Eq.~\eqref{eq:our_model} is related to \eqref{eq:error_channels} by the following: Each distribution $\pi_i$ is associated with a particular \textit{trajectory} $(K_{\ell_0}, K_{\ell_1},\cdots, K_{\ell_J})$ that denotes 
a sequence of Kraus operators. For instance, the ideal distribution $\pi_1$ corresponds 
to the trajectory where all $K_{\ell_i} = \mathbb{I}$, and $c_1$ is the 
probability that no error occurred. Other trajectories include those where 
one $K_{\ell_i}$ is non-trivial, indicating the occurrence of an error at layer $i$, 
of type $\ell_i$. Its corresponding distribution is given by
\begin{equation}
\label{eq:pi_Kraus}
\pi_{(\ell_i,i)}(z) = |\langle z|U_J \cdots K_{\ell_i} U_i \cdots U_1|\psi_0\rangle|^2.
\end{equation}
When the operators $K_{\ell_i}$ are unitary, e.g.~for a Pauli error channel,
equation~\eqref{eq:pi_Kraus}
is a probability distribution (non-negative and summing to 1), and due to the operator spreading~\cite{fisher2023random} in the random circuit, each $\pi_{(\ell_i,i)}(z)$ is an independent Dirichlet-random distribution (see condition~\ref{assm:pt}
below). 

This condition is necessary for the statistical model \eqref{eq:our_model} and our theoretical analysis, but will not be necessary for the analysis of real data in Section~\ref{sec:google}: our estimators are robust to deviations from Assumption~\ref{assm:pt}.

Current XEB approaches typically require complete knowledge of the ideal distribution $\pi_1$, which
is extremely challenging to classically simulate when the system size $N$ is approximately greater than
30.
In practice, a typical workaround is to estimate the XEB fidelity based on patches of disconnected circuits, or with specially structured circuits that can be simulated~\citep{arute2019quantum}. However, such methods require dedicated experiments and are not guaranteed to provide an accurate estimate of the global circuit fidelity. 
Relaxing the assumption that $\pi_1$ is classically computable not only addresses practical needs, it also defines a theoretically rich statistical problem. Fixing notation we will use in the rest of this work, we denote the bitstring measurements (``RCS data") as i.i.d. samples
\begin{align} \label{eq:main_sample}
Z_1,\dots,Z_n \,\big|\,\Pi \sim p_c(\cdot|\Pi),
\end{align}
where we have explicitly highlighted the dependence of $p_c$ on the random circuit realization. This determines the matrix~$\Pi$, a deterministic function of
the random choice of circuit, which we equivalently treat as a random variable in itself~\ref{assm:pt}. We will frequently summarize these measurements in terms of the empirical counts
$Y_j= \sum_{i=1}^n   \delta_{Z_i,z_j}$,
indexed by $j=1,\dots,d$. For example, the statistical estimator for the XEB 
fidelity is simply $\hat F_n = (d/n) \sum_{j=1}^d Y_j\pi_1(z_j)-1$, 
arising from the approximation $p(z_j)\approx Y_j/n$. 

We additionally assume that the practitioner has access
to {\it side information} in the form of $m$ bitstring samples drawn from a 
reference quantum computer
which perfectly implements the ideal circuit $\pi_1$, and any of the noisy
circuits $\pi_i$:
\begin{align} 
\label{eq:side_info}
W_{i1},\dots,W_{im} \,\big|\,\Pi \sim \pi_{i},\quad i=1,\dots,k.
\end{align} 

The parameter $m$ allows us to systematically analyze different classes of protocols.
On one extreme, when $m=\infty$, 
we interpret the matrix $\Pi$
as being perfectly known. This corresponds
to the conventional situation
in which the bitstring distributions
can be classically simulated, as in the XEB setup. 
Meanwhile, the $m= 0$ limit 
indicates that no side information is given (see Refs.~\cite{shaw2024universal,andersen2025thermalization} for earlier work in this limit). This regime is particularly relevant for large system sizes and deep circuits, where no classical simulation or reference quantum computation is viable. Even in 
this situation, our benchmarking algorithms
provide nontrivial information about the characteristics of noise.
Our information-theoretic
phase transitions indicate that the boundaries 
of these two regimes
occur at $m\geq d$ and $m\leq d^{1/k}$,
respectively.
When $m$ lies between these two extremes, a reference quantum computer provides partial \textit{side information} about $\Pi$ in the form of samples from the ideal distribution $\pi_1$, and all noisy distributions $\pi_i$, and we develop 
algorithms which efficiently leverage this side
information.


\subsection{Estimators}
 \label{sec:main_estimators}
Given $n$ samples from a distribution of the form Eq.\,\eqref{eq:our_model}, our task is to estimate the error weights $c=(c_1,\dots,c_k)$, with the aid of $m$ samples of side information that give us knowledge about $\Pi$.

How might it be possible to estimate a large number $k$ 
of parameters from a single realization of a random circuit? The key is that our data is high-dimensional: they are samples drawn from a $d=2^N$ dimensional probability distribution $p(z)$. For a collection of $k$ different circuits (here representing the original circuit with injected errors), it is highly likely that their output distributions are linearly independent. In other words, in a high-dimensional space, different errors distort the output distribution in different ways and hence can be distinguished in the measurement data.

We develop several estimators to estimate the parameter vector $c=(c_1,\dots,c_k)$, suitable suitable for various regimes of side information $m$. 
We discuss several
of these estimators in what
follows, deferring a more complete
discussion to Appendix~\ref{sec:estimators},
including further discussion
of related statistical literature.

\subsubsection{Regime A: Classical Simulation ($m=\infty$)}
In the simplest case with classically-computed side information,  
 the matrix
$\Pi$ is known to the practitioner. 
In this regime, we
estimate the parameter $c_i$ in two steps. 
We first observe that
the products $\zeta_i = \sum_z \pi_i(z) p_c(z|\Pi)$
satisfy 
\begin{align} \label{eq:main_zeta_ortho} 
\zeta_i  
 =  \sum_\ell c_\ell  \sum_z \pi_i(z) \pi_\ell(z)
 =\frac {1+c_i} d + O(d^{-3/2}),
 \end{align}
as a result of the concentration of the Porter-Thomas
rows of $\Pi$.
One can form unbiased estimators $(1/n)\sum_{j=1}^d Y_j \pi_i(z_j)$ of 
these products, 
which leads to a first 
simple estimator of $c$:
\begin{align} 
\label{eq:main_ortho}
\hat c_i^{\mathrm{XEB}} = \frac d n\sum_{j=1}^d 
Y_j \pi_i(z_j) - 1,\quad i=1,\dots,k.
\end{align} 
This estimator was first
proposed by~\citep[Eq.~(5.1)]{rinott2022statistical}.
We refer to the vector $\hat c^{\mathrm{XEB}}$
as the (generalized) XEB
estimator. Much like the XEB fidelity
estimator $\hat F_n$, 
this generalized XEB estimator
 achieves a sample complexity
 which does not depend
on the Hilbert space dimension~$d$.
It does, however, depend
linearly on the number of errors $k$, 
and this dependence can be mitigated
by appropriate regularization. 
For example,
given an appropriate tuning parameter $\lambda
> 0$, 
we will show that
the {\it hard-thresholded}~\citep{donoho1994ideal}
XEB estimator 
\begin{align}\label{eq:HT} 
\hat c_i^{\mathrm{HT}}  
 = \begin{cases} 
 \hat c_i^{\mathrm{XEB}}, & \hat c_i^{\mathrm{XEB}}  > \lambda, \\
 0, &\mathrm{otherwise}, 
\end{cases}
\quad i=1,\dots,k,
\end{align}
has sample complexity 
which merely degrades logarithmically
with the number of errors $k$. 
Roughly speaking, this favorable
dependence on $k$ arises from the fact
that the vector $c$ has bounded $\ell_1$ norm,
and is therefore approximately sparse, 
a fact which is leveraged by estimator~\eqref{eq:HT}~\citep{donoho1994minimax,raskutti2011,li2018}.

As we discuss in Appendix~\ref{sec:estimators},
the XEB estimator
can be viewed as an approximation
of the ordinary
least squares estimator 
for performing a linear regression
of the histogram~$Y$ onto~$\Pi^\top$,  
 whereas
our model is a multinomial regression model for which the 
canonical estimator is the
maximum likelihood estimator 
(MLE), defined by
\begin{align} \label{eq:main_mle}
\hat{c}^\mathrm{MLE} = \argmax_{\gamma \in \Delta_k} \sum_{j=1}^d Y_j \log(\Pi_{\cdot j
}^\top \gamma).
\end{align}
This estimator was also noted by~\cite[Eq.~(5.2)]{rinott2022statistical}.
Much like in the RCS literature, where linearization of the XEB 
fidelity is typically adopted, we do not 
  find that 
the MLE and XEB estimators
  differ appreciably  
due to the small magnitude of $\Pi$,
and hence the 
  mild heteroscedasticity
of the histogram $Y$ (cf.\, Appendix~\ref{sec:estimators}). 
Indeed,  we have found that the
XEB and MLE estimators perform similarly
when the independent Porter-Thomas
assumption on the rows of $\Pi$ holds.
The XEB estimator has the disadvantage
of not being robust to deviations
from this modeling assumption, 
but it has the advantage
of being     easily computable even for
large system sizes $d=2^N$, 
whereas the program~\eqref{eq:main_mle} can be 
 somewhat more costly to optimize 
despite its convexity. 
Another advantage of the MLE is the fact
that it is free of
tuning parameters, yet still
provides accurate estimates when $k$ is large~\citep{bing2022}.
Roughly speaking, this behavior is
due to the restriction of the optimization problem~\eqref{eq:main_mle}
to the simplex, which significantly reduces the volume of the search
space despite the potentially large magnitude of~$k$.

\subsubsection{Regime B: Partial Side Information ($1 < m < \infty$)}
 
When the amount of side information $m$ is finite but nonzero, we recommend the following adaptation of the 
XEB estimator:
\begin{align}
\label{eq:main_ortho_eiv}
\hat c_i^{\mathrm{coll}} = \frac {d} {nm}\sum_{\ell=1}^n \sum_{r=1}^m \delta_{Z_\ell,W_{ir}} - 1,\quad i=1,\dots,k, 
\end{align} 
Up to centering
and scaling, this estimator consists
of counting the number of collisions
between the primary bitstring samples
$\{Z_\ell\}$ and each of the side
samples~$\{W_{ir}\}$. 
Once again, $\hat c^{\mathrm{coll}}$ can  
be regularized using hard-thresholding,  and 
we will show that the resulting
estimator achieves the optimal sample
complexity of estimating~$c$.

An important
practical benefit
of the collision estimator is the fact
that its computational complexity
scales as $O(k(n+m))$, while
its memory complexity scales as $O(k\cdot \min\{n,m\})$,
neither of which depend
on the Hilbert
space dimension $d$. 
This estimator can therefore be used
in the beyond-classical regime where
objects of dimension $d$ cannot 
easily be stored in the memory
of a classical processor.
Another practical benefit of this estimator
is its linear structure, which allows it
to be updated as more bitstring
data becomes available,
without needing to be recomputed. 

As before, it is also natural to consider
the maximum likelihood estimator, which is now given by 
\begin{equation} 
\label{eq:side_info_mle}
\argmax_{\gamma\in\Delta_k}~ \log
  \int_{\Delta_d^k} ~
\prod_{j=1}^d \bigg((\Pi_{\cdot j}^\top\gamma)^{Y_{j}}
\prod_{i=1}^k \pi_{ij}^{V_{ij}} \bigg)~ d\Pi,
\end{equation}
where the integral is taken over the set of $k\times d$
matrices whose rows are constrained to the $d$-dimensional simplex.
Unlike equation~\eqref{eq:main_mle},
this optimization problem is nonconvex.
In Appendix~\ref{sec:estimators}, we develop
a heuristic optimization algorithm
for this problem
by interpreting equation~\eqref{eq:side_info_mle}
as a partition function which integrates
over states $\Pi$. 
Using a 
mean-field approximation, 
we   derive
an algorithm that maximizes
the corresponding 
variational Gibbs free entropy, 
and consists of iteratively solving
the following fixed-point equation
with respect to $\gamma \in \Delta_k$:
\begin{equation}
\label{eq:em_fixed_point}
n = \sum_{j=1}^d \frac{Y_j S_{ij}}{\sum_{r=1}^k S_{rj} \gamma_r},\quad 
\text{where } S_{ij} = \exp\{\psi(1+V_{ij})\},
\end{equation}
for $i=1,\dots,k$, where $\psi$ denotes the di-gamma function. This iteration
is, once again, computable with
time and memory
complexity that are independent of the Hilbert
space dimension $d$, since the histogram
$Y$ is supported on at most $n$ entries.
A close analogue
of the fixed point 
equation~\eqref{eq:em_fixed_point}   arises
in a statistical method for text analysis 
known
as latent Dirichlet allocation~\citep{blei2003}. 
In that context, it has
been argued~\citep{ghorbani2019instability,celentano2023local,celentano2023mean} that the mean-field approximation
can be significantly
improved by working
with an analogue
of the Thouless-Anderson-Palmer (TAP)
free entropy, 
and we believe it is an interesting
avenue of future work to adapt
such ideas to our model. 
We defer further discussion to Appendix~\ref{sec:estimators}.

\subsubsection{Regime C: Blind Source Separation ($m=0$)}
\label{sec:moment_estimator_main}

In the most difficult regime where $m=0$, 
model~\eqref{eq:our_model} 
is invariant to relabeling the mixture
components. Remarkably, even in this
regime we are able to estimate 
$c$ up to reordering its elements,
thus allowing us
to identify 
error patterns without any prior knowledge
of  how different errors affects the measurement outcome probabilities.

Our strategy in this setting is to leverage the fact
that the first $k$  moments
$m_j(c) = \sum_{i=1}^k c_i^j$ 
uniquely characterize $c$ up to ordering~\citep{hundrieser2025}. 
Indeed, this characterization is a consequence
of {\it Newton's identities}, which
assert that the coefficients
of the polynomial $f(z) =\prod_{i=1}^k (z-c_i)$
can be written solely in terms of 
$m_1(c), \dots, m_k(c)$ (cf.~Appendix~\ref{app:elementary_symmetric_polynomials}). 
The concetration of
high-dimensional Porter-Thomas
distributions allows us to
identify and estimate the moments~$m_j(c)$
directly from the bitstring data. These moment estimators can, in turn, be used
to construct an estimator $\hat f$ of the $k$-degree polynomial $f$.  
Our estimator of $c$, denoted $\hat c^{\mathrm{mom}}$, is then given
by the   collection of $k$ roots of $\hat f$.
We defer a rigorous description of this estimator
to Appendix~\ref{app:moment}.

\subsection{Sample Complexity of Error Learning}
\label{sec:main_results}

We now state our main results regarding the sample complexity of error estimation under model~\eqref{eq:our_model}. 
It will be convenient to state our
sample complexity bounds in terms
of the {\it minimax estimation risk}, a standard
statistical benchmark for quantifying the best possible error
that can be achieved by a statistical estimator uniformly over the space
$\Delta_k$.  
Concretely, the minimax risk $\calM(n,d,k,m)$ is defined as the smallest achievable upper bound epsilon for the average $\ell_2$ distance between the estimated values $\hat{c}$ and the worst case true value c:
$\max_c \mathbb{E} \| \hat{c} - c \|_2 < \epsilon$.
Here, the averaging is taken over randomness of the measured samples from an experiment, reference computers, as well as the Porter Thomas distributions $\Pi$ (arising from random circuit choices).

Our results
are stated under two conditions. 
First, we impose the following assumptions on the problem
parameters $n,d,k,m$. 
\begin{enumerate}[leftmargin=0.8cm,listparindent=-\leftmargin,label=\textbf{(S)}]   
\item  \label{assm:sample_size} 
Let $\rho = \min\{m/d,1\}$. Then, there exists an arbitrarily small constant $\gamma > 0$
such that the following assertions hold. 
\begin{enumerate}[left=-0.3cm]
\item[(i)] $n^{1+\gamma} \leq d$. 
\item[(ii)] Either $nm \leq d$
or $nm > d^{1+\gamma}$.  
\item[(iii)] Either $k \leq \sqrt{n\rho}$, or $k > (\sqrt{n\rho})^{1+\gamma}$.
\item[(iv)] Either $k\leq d \leq m$, or $d > m^{1+\gamma}$ and $k^{1+\gamma} < \frac d m$. 
\end{enumerate}
\end{enumerate}
Condition~(i) requires the sample size to be smaller than the Hilbert space dimension $d$, 
which is the most practical regime for RCS experiments. 
Once the sample size exceeds $d$,
a number of different approaches
based on (approximate)
quantum state tomography become available~\citep{wright2016learn}. 
Conditions (ii) and (iii) are mild assumptions made   for ease of exposition; they preclude the problem parameters from falling in narrow regimes where logarithmic corrections appear in our sample complexity bounds, which we do not bother to characterize sharply.
Condition (iv) is not needed for our upper bounds, but is used in our lower bounds; 
this condition allows the number of errors $k$ to be on the same order as $d$
when $m\geq d$, but somewhat limits the magnitude of $k$ when $m$ is smaller than $d$. 

Second, as discussed in Section~\ref{sec:rcs}, 
we assume the random unitary circuit
is sufficiently
deep for the following Porter-Thomas
assumption to be met. 
\begin{enumerate}[leftmargin=0.8cm,listparindent=-\leftmargin,label=\textbf{(PT)}]   
\item  \label{assm:pt} 
The random matrix $\Pi \in \bbR^{k\times d}$
has  mutually
independent rows $\Pi_{i\cdot}$ which
follow the flat Dirichlet distribution on 
the $(d-1)$-dimensional simplex. 
That is, for each $i=1,\dots,k$, one can
write
$\Pi_{i\cdot} = (X_{i1},\dots,X_{id}) / \sum_j X_{ij},$
where the random variables
$X_{ij}$ are independent, and follow a 
Porter-Thomas distribution:
 $$\bbP(X_{ij} > x) = e^{-dx},\quad \text{for all } x \geq 0.$$ 
\end{enumerate}
Although 
Assumption~\ref{assm:pt} is needed for much of
our theory, it is not needed for several of  
our estimators, as we explore in Appendix~\ref{sec:estimators}.

 \begin{table}[b]
\centering
\begin{tikzpicture}[scale=2.5]
 
  \def\xA{0}    
  \def\xB{0.8}  
  \def\xC{1.6} 
  \def\xD{2.4}   
  \def\yTop{1.2}   
  \def\yTt{0.6}    
  \def\yTo{0}    
  \def\yo{0}       
  \def\yB{-0.7}   
 
  \draw[->] (\xA,\yo) -- (\xD+0.02,\yo) node[below right] {\hspace{-0.4in}Side info. $m$};
  \draw[->] (\xA,\yo) -- (\xA,\yTop+0.2) node[above right] {\hspace{-0.25in} Number of Errors $k$};
  \node[below left] at (\xA,\yo) {$0$};
 
  \foreach \xx/\lab in {\xB/{$d^{1/k}$},\xC/{$d$}}{
    \draw (\xx,0.06) -- (\xx,-0.06);
    \node[below left] at (\xx,\yo) {\lab};
  }; 
  \draw (0.06,0.6) -- (-0.06,0.6);
  \node[left=6pt] at (\xA,0.6) {$\displaystyle \epsilon^{-2}$};

  \draw (0.06,1.2) -- (-0.06,1.2);
  \node[left=6pt] at (\xA,1.2) {$d$};
 
  \draw[dotted] (\xA,\yTt) -- (\xD,\yTt);
  \draw[dotted] (\xA,\yTop) -- (\xD,\yTop);
  \draw[dotted] (\xC,\yo) -- (\xC,\yTop);
 
  \draw[dotted] (\xA,\yTt) rectangle (\xC,\yTop);
  \draw[dotted] (\xC,\yTt) rectangle (\xD,\yTop);
  \draw[dotted] (\xA,\yTo) rectangle (\xC,\yTt);
  \draw[dotted] (\xC,\yTo) rectangle (\xD,\yTt);

  \node at ({(\xA+\xC)/2},{(\yTt+\yTop)/2}){$\displaystyle\frac{d\log k}{m\epsilon^4}$};
  \node at ({(\xC+\xD)/2},{(\yTt+\yTop)/2}){$\displaystyle\frac{\log k}{\epsilon^4}$};
  \node at ({(\xA+\xC)/2},{(\yTo+\yTt)/2}){$\displaystyle\frac{dk}{m\epsilon^2}$};
  \node at ({(\xC+\xD)/2},{(\yTo+\yTt)/2}){$\displaystyle\frac{k}{\epsilon^2}$}; 
  
  \draw[dotted] (\xA,\yB-0.1) rectangle (\xB,\yo-0.25);
  \draw[dotted] (\xB,\yB-0.1) rectangle (\xC,\yo-0.25);
  \draw[dotted] (\xC,\yB-0.1) rectangle (\xD,\yo-0.25);

  \node at ({(\xA+\xB-0.15)/2},{(\yB-0.3+\yo)/2}) {\Large ~~$\frac{C_kd^{1- 1 /k}}{\epsilon^2}$};
  \node at ({(\xB+\xC-0.15)/2},{(\yB-0.35+\yo)/2}) {\Large ~~$\frac{C_kd}{m\epsilon^2}$};
  \node at ({(\xC+\xD-0.15)/2},{(\yB-0.35+\yo)/2}) {\Large ~~$\frac {C_k} {\epsilon^2}$};
 
  \def\yTicks{-0.25}  
  \draw[dotted] (\xA,\yTicks) -- (\xD,\yTicks);
  
\node[align=center, rotate=270] 
    at ({\xD + 0.15}, {(\yTt+\yTop)/2 - 0.28}) 
    {Labeled Errors};

 \node[align=center, rotate=270] 
    at ({\xD + 0.15}, {(\yB+\yo)/2 -0.19}) 
    {Unlabeled \\Errors};
\end{tikzpicture}

\caption{ 
Heuristic summary of the sample complexities for 
noise learning as a function of number of errors 
$k$ and amount of side information $m$
(Theorems~\ref{thm:main_unsorted}--\ref{thm:main_sorted}). 
The sample complexity for labeled errors
(top) represents the 
smallest sample size $n$ for which 
the vector $c$ can be estimated to accuracy
$\epsilon$ under the $\ell_2$ norm, whereas
the sample complexity for unlabeled
errors (bottom) is measured 
for the unordered collection of error rates. 
$C_k$ denotes a generic constant depending on $k$.
}
\label{tab:sample_complexities}
\end{table}

In what follows, for any two nonnegative-valued functions $f,g$, 
we write $f(x) \asymp_a g(x)$ if there exist 
 constants $C_1,C_2 > 0$, possibly depending on a quantity $a$, such that $C_1 f(x) \leq g(x) \leq C_2 f(x)$
for all $x$.  Our first main result is stated as follows. 
\begin{theorem}
\label{thm:main_unsorted}
Under conditions~\ref{assm:pt} and~\ref{assm:sample_size},  we have 
\begin{equation*}
\calM(n,d,k,m) \asymp_\gamma 
\min\left\{ \left(\frac{k}{n \rho}\right)^{\frac 1 2}, \left(\frac{\log k}{n \rho}\right)^{\frac 1 4},1\right\},
\end{equation*}
where $\rho = \min\{m/d,1\}$.
\end{theorem}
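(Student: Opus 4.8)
The plan is to prove matching upper and lower bounds, organizing everything around the single quantity $\sigma^2 := 1/(n\rho)$, which will turn out to be the effective per-coordinate noise level: the claimed rate is exactly $\min\{\sqrt{k}\,\sigma,\ \sigma^{1/2}(\log k)^{1/4},\ 1\}$, which is the minimax $\ell_2$ rate for estimating a vector on the simplex $\Delta_k$ from a Gaussian sequence observation $y = c + \sigma\xi$, $\xi\sim N(0,I_k)$. So the overarching strategy is to argue that, under~\ref{assm:pt} and~\ref{assm:sample_size}, the RCS data are statistically no more (and no less) informative than this Gaussian model, up to $\gamma$-dependent constants, and then to transport the corresponding classical results; the heavy lifting is a single divergence estimate, discussed at the end.

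For the \textbf{upper bound}, when $\rho = 1$ (so $m \ge d$ and $\Pi$ is effectively known) I would analyze the XEB estimator~\eqref{eq:main_ortho}: using the Porter--Thomas concentration already recorded in~\eqref{eq:main_zeta_ortho} (approximate orthonormality of the rows of $\Pi$ together with concentration of the relevant quadratic forms), each coordinate is unbiased up to an $O(d^{-1/2})$ term---negligible against the standard deviation under~\ref{assm:sample_size}(i)---with variance $\asymp 1/n$, so summing over coordinates gives $\mathbb{E}\|\hat c^{\mathrm{XEB}}-c\|_2 \lesssim \sqrt{k/n}$. A Bernstein-type computation then shows $\hat c_i^{\mathrm{XEB}}-c_i$ is sub-Gaussian with parameter $\asymp\sigma$ uniformly in $i$; feeding this into the hard-thresholded estimator~\eqref{eq:HT} with $\lambda\asymp\sigma\sqrt{\log k}$ and invoking the standard oracle inequality for hard-thresholding over the $\ell_1$-ball $\{\,\|c\|_1\le 1\,\}\supseteq\Delta_k$ yields $\mathbb{E}\|\hat c^{\mathrm{HT}}-c\|_2\lesssim\sigma^{1/2}(\log k)^{1/4}$; and the trivial estimator $\hat c\equiv 0$ gives risk $\|c\|_2\le\|c\|_1=1$. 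Taking the best of the three proves the upper bound when $\rho=1$. For $1\le m<d$ I would rerun the identical argument with the collision estimator~\eqref{eq:main_ortho_eiv} in place of $\hat c^{\mathrm{XEB}}$: here $\mathbb{E}[\delta_{Z_\ell,W_{ir}}\mid\Pi] = \sum_z p_c(z|\Pi)\pi_i(z) = (1+c_i)/d + O(d^{-3/2})$, and a birthday-paradox count of collisions among the $nm$ dependent pairs---with the regime $nm\le d$ versus $nm>d^{1+\gamma}$ cleanly separated by~\ref{assm:sample_size}(ii)---gives coordinate-wise variance $\asymp d/(nm)=\sigma^2$ and sub-Gaussian parameter $\asymp\sigma$, so the same three bounds follow with $n$ replaced by $n\rho$.

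For the \textbf{lower bound} I would run three testing arguments, one per term, each reduced to a single divergence estimate $D\!\left(P_c\,\|\,P_{c'}\right)\lesssim_\gamma n\rho\,\|c-c'\|_2^2$ for the laws $P_c$ of the full data $(Z_{1:n},\{W_{i,1:m}\}_{i=1}^k)$ marginalized over $\Pi\sim$~\ref{assm:pt}. (i) When $n\rho\lesssim 1$ (so $\sigma\gtrsim 1$), $e_1$ and $e_2$ induce data laws at KL distance $\asymp n\rho\lesssim 1$ and are thus indistinguishable, forcing $\calM\gtrsim\|e_1-e_2\|_2\asymp 1$. (ii) In the \emph{dense} regime $k\le\sqrt{n\rho}$ (so $\tfrac1k\mathbf1$ lies at distance $\gg\sigma$ from $\partial\Delta_k$) I would apply Assouad to the hypercube $c^{(b)}=\tfrac1k\mathbf1 + c_0\,\sigma\sum_i b_i e_i'$, $b\in\{\pm1\}^k$, with $e_i'$ mean-zero perturbations keeping $c^{(b)}\in\Delta_k$: flipping one bit moves $c$ by $\asymp\sigma$ in $\ell_2$ and changes the data KL by $\asymp n\rho\sigma^2\asymp 1$, so a constant fraction of bits are irrecoverable and $\calM\gtrsim\sqrt{k}\,\sigma=\sqrt{k/(n\rho)}$. (iii) In the \emph{sparse} regime $k>(\sqrt{n\rho})^{1+\gamma}$ I would use a prior placing a bump of height $a\asymp\sigma\sqrt{\log k}$ on a uniformly random $s$-subset of $\{2,\dots,k\}$ with $s\asymp 1/(\sigma\sqrt{\log k})$ (so the $\ell_1$ budget $as\asymp 1$ is honored) and the residual mass on coordinate $1$; the classical mixture lower bound for detecting a sparse Gaussian signal, transported via the divergence estimate, shows the support cannot be identified, whence $\calM\gtrsim\sqrt{s a^2}\asymp\sigma^{1/2}(\log k)^{1/4}$. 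Conditions~\ref{assm:sample_size}(iii)--(iv) keep $k$ away from the crossover $k\asymp\sqrt{n\rho}$ and ensure these constructions remain valid for the relevant ranges of $m$; assembling (i)--(iii) matches the upper bound up to $\gamma$-dependent constants.

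I expect the \textbf{main obstacle} to be the divergence estimate $D(P_c\|P_{c'})\lesssim_\gamma n\rho\,\|c-c'\|_2^2$. Unlike the Gaussian sequence model, $P_c$ is the law of \emph{count} data that is heteroscedastic, and integrating out the Dirichlet matrix $\Pi$ couples all $k$ coordinates; establishing the bound will require expanding the likelihood ratio to second order in $c-c'$, using Porter--Thomas moment identities to isolate exactly which terms survive the $\Pi$-average---and, when $m<d$, reorganizing the survivors into a bipartite collision/birthday expansion that produces the factor $\rho=m/d$ rather than $1$---and then controlling the higher-order remainder uniformly over $\Delta_k$. A secondary difficulty is porting the $\ell_1$-ball minimax machinery (both the hard-thresholding oracle inequality and the sparse mixture lower bound) to this non-Gaussian, random-$\Pi$ setting, which is precisely where the sub-Gaussian tail bounds for $\hat c^{\mathrm{XEB}}$ and $\hat c^{\mathrm{coll}}$ and the regime separations built into~\ref{assm:sample_size} enter.
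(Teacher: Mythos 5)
Your upper-bound outline follows the paper essentially verbatim---hard-thresholding of the collision estimator $\hat c^{\mathrm{coll}}$, coordinate-wise variance $\asymp d/(nm_d)$, sub-Gaussian tail control, and the trivial $\hat c\equiv 0$ fallback (this is Proposition~\ref{prop:collision}, proved via Lemmas~\ref{lem:collision} and~\ref{lem:hard_thresh_collision}). The one genuine departure is a technical but consequential one: the collision estimator is not sub-Gaussian, since $V_{ij}Y_j$ has only $\psi_{1/2}$-Orlicz tails; the paper needs the Kuchibhotla--Chakrabortty sub-Weibull machinery (Lemma~\ref{lem:sub_weibull_conc}) and an event decomposition to show the Gaussian part of the tail dominates at the relevant threshold. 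This is more than the ``Bernstein-type computation'' you allude to, though your conclusion is right.

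The real gap is in your lower bound. You rest everything on the claimed divergence estimate $D(P_c\|P_{c'})\lesssim_\gamma n\rho\,\|c-c'\|_2^2$ ``uniformly over $\Delta_k$,'' asserting that the RCS data behaves like a Gaussian sequence model $y = c+\sigma\xi$. This is false. The likelihood ratio, once marginalized over the latent Dirichlet matrix $\Pi$, does \emph{not} simply depend on $\|c-c'\|_2$. When you expand (as the paper does via Charlier polynomials in the proof of Lemma~\ref{lem:tv_bound}) you obtain a hierarchy of contributions indexed by which moments of $c$ and $c'$ disagree: a term proportional to $|m_\alpha(c)-m_\alpha(c')|^2\,(n/d)^\alpha$ for each $\alpha\ge 2$. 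For $c,c'$ with differing low-order moments, the $\alpha=2$ term alone can be of order $n^2\,(m_2(c)-m_2(c'))^2/d$, which is not controlled by $n\rho\|c-c'\|^2$ when $n\gg \rho\cdot d/\|c-c'\|^2$ (easily possible under~\ref{assm:sample_size}). In fact, when $m=0$ (so $\rho=0$) your claimed bound would force $P_c=P_{c'}$ for \emph{all} $c,c'$, which is not true: only permutation-equivalent parameters give the same law. The clean bound $\chi^2\lesssim q_2^2 nm/d$ holds precisely when $m_j(c)=m_j(c')$ for $j=0,\dots,k-1$ (Lemma~\ref{lem:tv_bound}, in fact part~(ii) requires $W(c,c')=0$). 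Consequently, the lower-bound packing must be chosen inside a moment-matched family. The paper does this by packing inside $\Sigma_{k,s}(\beta)$---a family whose members are all coordinate permutations of a single vector, so all moments agree automatically---and running Fano with a sparse Varshamov--Gilbert code. Your Le Cam pair $(e_1,e_2)$ and your sparse-mixture construction happen to be moment-matched (both give parameters with identical multisets of entries, $W=0$), but your Assouad hypercube $c^{(b)}=\tfrac1k\mathbf 1+c_0\sigma\sum_i b_i e_i'$ is not: flipping a bit changes the multiset of entries, hence the moments, and Lemma~\ref{lem:tv_bound} no longer applies. You would need to pair up coordinates (flip $+\delta$ on one, $-\delta$ on its partner) to restore the moment constraint, at which point you essentially recover the paper's $\Sigma_{k,s}(\beta)$ construction. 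Beyond this, the actual proof of the divergence bound is far more than ``expanding the likelihood ratio to second order'': it requires truncating the exponentials, an orthogonal expansion in tensor-product Charlier polynomials, and Bell-polynomial cumulant bookkeeping (Appendix~\ref{app:pf_lem_tv_bound}), so this is the step where your proposal most seriously underestimates the work involved.
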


Theorem~\ref{thm:main_unsorted}
reveals several distinct regimes
in the  sample complexity
of error learning. 
When $k$ is held fixed, and the amount of side information $m$
exceeds $d$, the sample
complexity scales as $\epsilon^{-2}$, which
is dimension-independent and 
coincides with the rate of decay
of the traditional central limit
theorem. The same sample complexity
is achievable for estimating the XEB
under the white noise
model~\eqref{eq:white_noise_model}~\citep{arute2019quantum}, and is enabled
by the fact that the matrix $\Pi$
can be accurately estimated
when $m$ is so large. 
On the other hand, when
$m < d$, although the matrix $\Pi$ is not consistently estimable, the parameter vector $c$ can be consistently estimated so long as the {\it product} $nm$ exceeds the Hilbert space dimension $d$.
As described above, this product
scaling can be interpreted 
via the birthday paradox.

Theorem~\ref{thm:main_unsorted}
also sharply characterizes the dependence of the sample complexity on the number of errors $k$. When $k$ is large, we find that the sample complexity merely degrades logarithmically in $k$, at the price of a quadratically slower dependence on the accuracy parameter $\epsilon$. In particular, Theorem~\ref{thm:main_unsorted} implies that when $nm> d \log k$,
one can   estimate
a number of errors which
is comparable to the Hilbert
space dimension.

\begin{figure*}[tb!]
    \centering
\includegraphics[width=0.99\linewidth]{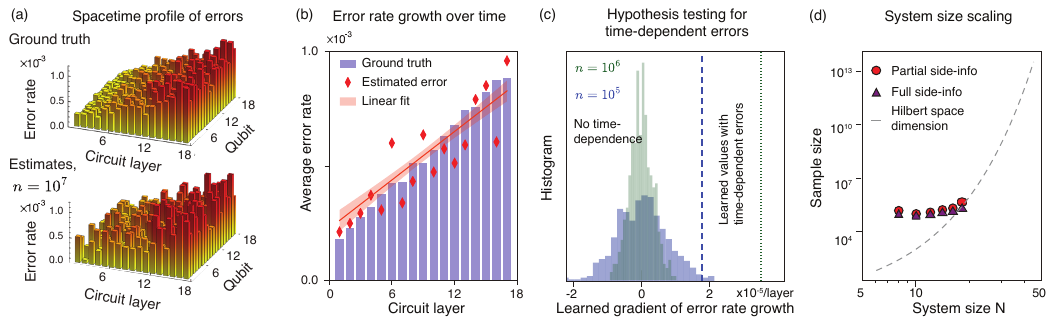}
\caption{Learning time-dependent error rates. With synthetic data, we demonstrate the use of our protocol to learn about errors that grow over time. (a)(b)(c) We simulate a $N=18$ one-dimensional brickwork random circuit subject to single-site $X,Y,Z$ Pauli errors, whose error rates (per qubit, per layer) grow from $2.5\times 10^{-4}$ in the first layer to $10^{-3}$ in the last layer. In order to simulate utility-scale circuits of different system sizes, we also set circuit depth equal to the system size while reducing the per-layer error rates such that the total fidelity is fixed at $F\approx 0.5$. (a) Upper panel: the ground truth values of the error rates at each spacetime location. Lower panel: estimated error rates with $n=10^7$ RCS samples and perfect side information (i.e., $m=\infty$). (b) By averaging over qubits, only $10^6$ samples are required to learn the increasing rate with high precision. The blue bars indicate the ground truth; red diamonds mark the estimated error rate, and the shaded red line indicates an extracted rate of error growth (a linear fit to the red diamonds). A non-zero linear fit gradient indicates increasing error rates. (c) 
Model validation between time-dependent and time-independent error models. To ensure that the learned time-dependence is statistically significant, we compare the extracted gradient (vertical dashed line) against the distribution
of gradients learned under the null hypothesis of time-independent error rates, obtained
via parametric bootstrap (Appendix~\ref{appendix:numerical}). The histogram of such gradients provides a confidence interval and $p$-values for time-dependent errors: $10^5$ and $10^6$ samples (indicated in green and blue respectively) are sufficient to learn the error rate growth with statistical significance. 
(d) System-size dependence of the sample complexity for model validation. As the system size increases, although the Hilbert space dimension increases exponentially (dashed line), the required sample size for model validation grows only polynomially with system size (orange circles). 
This sample complexity is defined as the number of RCS samples required to discriminate between a fixed rate of error growth and no error growth with $5\sigma$ significance. 
With increasing system size, classical simulation will not be feasible. In addition, we simulate the case of incomplete side-information (purple triangles) where $m=n$, i.e.~the number of side-information samples (per error component) is the same as the number of RCS samples. The sample complexity does not differ significantly between the two cases. 
}
\label{fig:timedep_error}
\end{figure*}

Theorem~\ref{thm:main_unsorted}
also indicates that the minimax risk
approaches a nondecreasing rate
of convergence when the amount of
side information approaches zero. 
This is perhaps unsurprising since 
the parameters $c_i$ are only uniquely defined
up to ordering in the absence of side information. 
Remarkably, however, our next result shows
that it is still possible to estimate the {\it unlabeled} entries of $c$ with a number of samples
that scales sublinearly in $d$. 
In what follows, we 
denote by
$\calM_<(n,d,k,m)$
the {\it unlabeled} minimax risk, 
namely the smallest real number
$\epsilon \in (0,1)$ 
for which there exists an estimator $\hat c$
such that for any $c \in \Delta$, one has
$$\min_{\sigma \in \calS_k} \left(\sum_{i=1}^k |\hat c_{\sigma(i)} - c_i|^2\right)^{\frac 1 2}\leq \epsilon,$$
where $\calS_k$ is the set of permutations on $[k]$.  
The following result
sharply characterizes the unlabeled minimax
risk when $k$ is held fixed.
\begin{theorem}
\label{thm:main_sorted}
Under conditions~\ref{assm:pt} and~\ref{assm:sample_size}, we have
\begin{equation*}
\calM_<(n,d,k,m) \asymp_{k,\gamma} ~
\frac 1 {\sqrt n} \cdot \begin{cases}
\sqrt{d^{1-\frac 1 k}}, & 0\leq m < d^{1/k}, \\
\sqrt{d/m}, & d^{1/k} \leq m < d, \\
1, & d \leq m < \infty.
\end{cases} 
\end{equation*}
\end{theorem}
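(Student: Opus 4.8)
The plan is to handle the three $m$-regimes separately and, in each, to pair an explicit estimator with a matching two-point lower bound; the regimes $m \geq d^{1/k}$ will reduce to Theorem~\ref{thm:main_unsorted}, whereas $0 \leq m < d^{1/k}$ requires the moment estimator $\hat c^{\mathrm{mom}}$ of Appendix~\ref{app:moment} together with a dedicated moment-matching lower bound. For $m\geq d^{1/k}$ the upper bound is immediate: the unordered loss is dominated by the ordered loss, so $\calM_<(n,d,k,m)\leq\calM(n,d,k,m)$, and for fixed $k$ Theorem~\ref{thm:main_unsorted} gives $\calM\asymp_k\sqrt{d/(nm)}$ when $d^{1/k}\leq m<d$ and $\calM\asymp_k n^{-1/2}$ when $m\geq d$, matching the claim. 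For the lower bound in this regime, the two-point construction behind Theorem~\ref{thm:main_unsorted} can be chosen to move a small amount of mass between two well-separated coordinates of $c$; this perturbs the \textit{sorted} vector by the same Euclidean amount, so the identical Le~Cam argument certifies $\calM_<\gtrsim_k\sqrt{d/(nm)}$ (resp.\ $n^{-1/2}$).

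For the upper bound when $0\leq m<d^{1/k}$, I would analyze $\hat c^{\mathrm{mom}}$ in three steps. (i) \textit{Moment estimation:} for $2\leq j\leq k$, estimate $\sum_z p(z)^j$ by the normalized count of $j$-wise coincidences among $Z_1,\dots,Z_n$ (the reference samples are useless here since $nm\ll d$ forbids even one coincidence between $Z$'s and $W$'s); using condition~\ref{assm:pt}, $\mathbb{E}_\Pi\sum_z p_c(z)^j = d^{1-j}\bigl(j!\,m_j(c)+R_j(m_1,\dots,m_{j-1})\bigr)+O(d^{-j})$ for an explicit symmetric polynomial $R_j$ and $m_j(c)=\sum_i c_i^j$; inverting recursively from $m_1=1$ produces estimators $\widehat m_j$, and a variance bound (again via Porter--Thomas concentration) gives $\mathbb{E}|\widehat m_j-m_j|\lesssim_k\sqrt{d^{j-1}/n^j}$, which is largest at $j=k$ because $n^{1+\gamma}\leq d$. (ii) \textit{Moments to a polynomial:} Newton's identities (Appendix~\ref{app:elementary_symmetric_polynomials}) convert $(\widehat m_1,\dots,\widehat m_k)$ into coefficients of a monic degree-$k$ polynomial $\hat f$ that are $\lesssim_k\max_j|\widehat m_j-m_j|$-close to those of $f(z)=\prod_i(z-c_i)$. (iii) \textit{Root stability:} a quantitative root-perturbation bound (Ostrowski) shows the real parts of the roots of $\hat f$ match $\{c_i\}$ up to a permutation with error $\lesssim_k(\max_j|\widehat m_j-m_j|)^{1/k}$. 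Combining and using concavity of $t\mapsto t^{1/k}$ yields $\mathbb{E}\min_\sigma\|\hat c_\sigma-c\|_2\lesssim_k (d^{k-1}/n^k)^{1/(2k)}=\sqrt{d^{1-1/k}/n}$; when this exceeds $1$, output $0$, whose risk is at most $\|c\|_2\leq 1$.

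For the lower bound when $0\leq m<d^{1/k}$, set $s=\sqrt{d^{1-1/k}/n}$ and (up to a small $k$-dependent prefactor ensuring positivity) take $c_i=1/k+s\delta_i$ and $c_i'=1/k+s\delta_i'$, where $\{\delta_i\},\{\delta_i'\}$ form a Prouhet--Tarry--Escott pair: equal power sums of orders $1,\dots,k-1$ (so, after a common translation, $\sum_i\delta_i=\sum_i\delta_i'=0$ and $c,c'\in\Delta_k$) but unequal $k$-th power sums. Then $m_j(c)=m_j(c')$ for $j\leq k-1$, $|m_k(c)-m_k(c')|\asymp_k s^k$, and $\min_\sigma\|c_\sigma-c'\|_2\asymp_k s$. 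Since the reference samples carry negligible information about $\Pi$ when $m<d^{1/k}$, and since the matched low-order moments force the joint law of any $\leq k-1$ of the $Z$'s (together with the $W$'s) to be identical under $c$ and $c'$ — the probability of any coincidence pattern among $\leq k-1$ samples is a symmetric polynomial in $c$ determined by $m_1,\dots,m_{k-1}$ under~\ref{assm:pt} — the $\chi^2$-divergence between the two joint laws has no contribution of order below $k$; its leading term, from $k$-wise coincidences, is $\asymp_k n^k(m_k(c)-m_k(c'))^2/d^{k-1}\asymp_k n^k s^{2k}/d^{k-1}$, which is $O(1)$ for this $s$ (the $\Pi$-fluctuation of the $k$-fold collision rate being a further factor $O(d^{-1/2})$ smaller). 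Le~Cam's lemma then gives $\calM_<(n,d,k,0)\gtrsim_k s$.

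The main obstacle is this last $\chi^2$ computation. Unlike a classical i.i.d.\ lower bound, the RCS samples are coupled through the shared random circuit matrix $\Pi$ and the data are heteroscedastic counts, so the divergence does not tensorize; one must expand the likelihood ratio in a basis adapted to the coincidence structure, verify that the Prouhet--Tarry--Escott moment matching annihilates every term of degree $<k$, and then control the order-$k$ term — including cross terms involving the reference bitstrings — using the concentration of Porter--Thomas rows from condition~\ref{assm:pt}. The parallel difficulty on the upper-bound side is the variance analysis of the $k$-fold collision count, whose dependent pieces (subsets of samples sharing fewer than $k$ indices, and the common fluctuation of $\Pi$) must be shown to be of lower order.
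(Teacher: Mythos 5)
Your plan mirrors the paper's proof at the structural level: Theorem~\ref{thm:main_sorted} is assembled in Appendix~\ref{app:pf_main_theorems} from exactly the pieces you name---the moment estimator through Newton's identities and root-perturbation stability (Proposition~\ref{prop:ub_blind} and Corollary~\ref{cor:pois_to_mult}(i)) for the upper bound when $m < d^{1/k}$, the domination $W\leq\|\cdot\|$ together with the analysis of the collision estimator for the remaining regimes, and a Prouhet--Tarry--Escott-type moment-matching two-point construction (Lemma~\ref{lem:existence_two_point}, used inside Proposition~\ref{prop:lower_bound_sorted}) for the lower bound. So as an outline this is the same route.

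The genuine gap is in the step you yourself flag as the main obstacle: the $\chi^2$ (or TV) bound under moment matching, i.e.\ Lemma~\ref{lem:tv_bound}. You observe correctly that under the multinomial sampling model the RCS samples are coupled through the shared $\Pi$ and the data do not tensorize, and you stop there. The paper's entire point is to remove that obstruction \emph{before} the divergence computation, via two reductions you do not mention: (1) Poissonization of the sample sizes (Lemma~\ref{lem:poissonization}) replaces the multinomial histogram by conditionally independent Poisson counts, and (2) un-normalization of $\Pi$ (Lemma~\ref{lem:kl_poisson_models}) replaces the flat-Dirichlet rows by i.i.d.\ exponential vectors. After both steps the joint law becomes an honest $d$-fold product $\bQ_c^{\otimes d}$, so the divergence \emph{does} tensorize, and the per-coordinate $\chi^2$-divergence is tractable via a Charlier-polynomial expansion (Lemma~\ref{lem:charlier_tensor}). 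The reason the moment matching kills every term of Charlier degree below $k$ is the additivity of cumulants of $\langle c,\varpi\rangle$ over the independent entries of $\varpi$, which converts ``$m_j(c)=m_j(\bar c)$ for $j<k$'' into ``equal cumulants up to order $k-1$,'' annihilating exactly the dangerous low-order coefficients. Your sentence about ``expanding the likelihood ratio in a basis adapted to the coincidence structure'' is the right intuition, but without the Poissonization/un-normalization step the argument you sketch is not executable in the form given; and the upper-bound variance analysis of the moment estimators (your ``$j$-wise coincidences'') similarly needs the precise U-statistic bounds supplied by Lemmas~\ref{lem:xip_rate} and~\ref{lem:unbiased_moment_rate} under that same Poissonized/un-normalized surrogate model, not the original one.
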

Theorem~\ref{thm:main_sorted}
shows that, even 
in the absence of side
information, the ordered vector
$c$ can be consistently
recovered when $n \geq d^{\frac{k-1}{k}}$.
Although this rate degrades 
exponentially in the system size, 
its exponent is sublinear, contrary
to tomographic methods which typically
suffer from superlinear 
exponents for recovery of the full
underlying quantum state $\rho$~\cite{wright2016learn}. 
This gap can make a significant
difference in practice; for instance, 
if one adopts a two-component model
with $k=2$, akin to the white noise model~\eqref{eq:white_noise_model}, 
then Theorem~\ref{thm:main_sorted}
shows that the fidelity can be recovered
with only $\sqrt d$ samples, without
any information about
the bitstring distributions $\pi_1$ and $\pi_2$. 
This observation is consistent with the past work of~\cite{shaw2024universal}, which indicated that the second moment of $c$ can be recovered with $\sqrt d$ samples.

It would be natural to expect that 
any amount of side information $m$
would improve the sample complexity beyond the $m=0$ case, 
however Theorem~\ref{thm:main_sorted}
surprisingly shows that this
is not the case: the sample complexity
remains constant for all $m\leq d^{1/k}$.
Beyond this point, however, a phase
transition occurs, and 
the sample complexity improves
{\it linearly} with the amount of side information, scaling analogously
as in the case of Theorem~\ref{thm:main_unsorted}.

In the regime $m < d^{1/k}$, 
the lower bound 
of Theorem~\ref{thm:main_sorted}
is achieved by error vectors
$c$
which
are close to being uniform. Remarkably, 
it turns out that faster rates
of convergence are achievable 
when some of the entries of $c$ are
separated from each other. 
We make this fact precise in Appendix~\ref{app:moment},
where we show that   the error of estimating $c$   
improves as a function of the separation between
its entries. We  highlight here an implication of this
result for fidelity estimation. 
In what follows, we denote
by $c_{(1)} \geq \dots \geq c_{(k)}$ 
the sorted entries of $c$, and we interpret
$F:=c_{(1)}$ as the fidelity. 
\begin{proposition}\label{prop:main_fidelity}
Let conditions~\ref{assm:pt}
and~\ref{assm:sample_size} hold with $m=0$,
and fix $\delta > 0$.
Then, there exists an estimator 
$\hat F$ such that for any 
$c \in \Delta_k$ with $c_{(1)} > c_{(2)} + \delta$, we have
$$\bbE_c\big|\hat F - F\big| \leq C
\sqrt{\frac{d^{k-1}}{n^k}},$$
for a constant $C > 0$ depending only
on $\delta,k,\gamma$.
\end{proposition}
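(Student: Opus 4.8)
The plan is to estimate the fidelity $F = c_{(1)}$ via the moment method of Regime C, but to quantify precisely how the separation assumption $c_{(1)} > c_{(2)} + \delta$ improves the estimation error over the worst-case rate $\sqrt{d^{1-1/k}/n}$ of Theorem~\ref{thm:main_sorted}. First I would recall that, under Assumption~\ref{assm:pt}, the empirical moments of the bitstring histogram $Y$ give access to estimators $\hat m_j$ of the power sums $m_j(c) = \sum_{i=1}^k c_i^j$ for $j = 1,\dots,k$, with the key fact that the estimation error of $\hat m_j$ scales like $\sqrt{d^{j-1}/n^{j}}$ (up to constants depending on $k$): the $j$-th moment requires looking at $j$-fold collisions among $n$ samples drawn from a $d$-dimensional distribution, so the relevant signal-to-noise ratio is governed by $n^j/d^{j-1}$. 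In particular, the highest moment $\hat m_k$ is the noisiest, with error $\sqrt{d^{k-1}/n^k}$, which is exactly the target rate in the Proposition; lower moments are estimated more accurately.

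Next I would pass from the power sums to the polynomial $\hat f(z) = \prod_i (z - \hat c_i)$ whose coefficients are the elementary symmetric polynomials $e_1,\dots,e_k$, recovered from $m_1,\dots,m_k$ via Newton's identities (Appendix~\ref{app:elementary_symmetric_polynomials}). Since the map from $(m_1,\dots,m_k)$ to the coefficient vector $(e_1,\dots,e_k)$ is a fixed polynomial map (with $k$ fixed, hence Lipschitz on the bounded region where all $c_i \in [0,1]$), the coefficients of $\hat f$ are within $O(\sqrt{d^{k-1}/n^k})$ of those of $f$ in sup-norm. The fidelity $F = c_{(1)}$ is the largest root of $f$, and the crux is a quantitative root-perturbation bound: if $c_{(1)}$ is separated from all other roots by at least $\delta$, then the largest root of $\hat f$ is a well-conditioned function of the coefficients. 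Concretely, $f'(c_{(1)}) = \prod_{i\geq 2}(c_{(1)} - c_{(i)})$ has magnitude at least $\delta^{k-1}$ (using that the other roots are at most $c_{(2)} \le c_{(1)} - \delta$ and the roots lie in $[0,1]$), so by the implicit function theorem / standard simple-root stability, a coefficient perturbation of size $\eta$ moves the largest root by $O(\eta / \delta^{k-1})$. Setting $\eta \asymp \sqrt{d^{k-1}/n^k}$ and absorbing the $\delta^{-(k-1)}$ factor and the Newton's-identities Lipschitz constant into the constant $C = C(\delta,k,\gamma)$ gives $\mathbb{E}_c|\hat F - F| \le C\sqrt{d^{k-1}/n^k}$, as claimed. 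One also has to check that the largest root of $\hat f$ is correctly identified as the one near $c_{(1)}$ — this follows because the gap $\delta$ separates it from the cluster of remaining roots once $\eta \ll \delta$, which holds for $n$ large (the regime of interest), and for small $n$ the bound is vacuous since $F \le 1$.

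The main obstacle, and the step requiring the most care, is controlling the tail of $|\hat F - F|$ well enough to get a bound in \emph{expectation} rather than merely with high probability: the moment estimators $\hat m_j$ are built from collision counts which are heavy-tailed relative to Gaussian noise (the histogram $Y$ is heteroscedastic count data, as emphasized in the text), and the root-finding map, while Lipschitz near the true configuration, can behave badly on the low-probability event that $\hat f$'s coefficients stray far from $f$'s. The fix is a truncation argument: on the good event (probability $1 - n^{-\Omega(1)}$, say) one uses the deterministic Lipschitz bound above; on the complementary event one uses the trivial bound $|\hat F - F| \le 1$ (clipping $\hat F$ to $[0,1]$), and chooses the concentration radius for $\hat m_j$ so that the bad-event contribution is lower-order. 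Making the moment concentration quantitative enough — ideally via a Rosenthal-type or martingale moment bound on the centered collision statistics, uniformly over $c \in \Delta_k$ — is the technical heart, but it parallels the analysis already needed for Theorem~\ref{thm:main_sorted}, so I would invoke the estimates from Appendix~\ref{app:moment} and simply reorganize them around the simple-root condition rather than the worst-case uniform configuration.
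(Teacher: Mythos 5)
Your proposal takes essentially the same route as the paper, and it is correct. The paper derives Proposition~\ref{prop:main_fidelity} by applying Corollary~\ref{cor:pois_to_mult}(ii) (which is built on Proposition~\ref{prop:ub_blind_local}) with $k_0=2$ and $r_1=1$, and the root-stability step there is outsourced to Lemma~\ref{lem:refined_stability_moments}, i.e.\ the Ostrowski-type polynomial perturbation bounds packaged by~\cite{hundrieser2025}. Your argument reconstructs exactly the content of that lemma from scratch: you bound the derivative $|f'(c_{(1)})| = \prod_{i\geq 2}|c_{(1)} - c_{(i)}| \geq \delta^{k-1}$ and invoke simple-root stability, which gives a linear (rather than $1/k$-H\"older) modulus of continuity of the largest root with respect to the coefficients, and you correctly propagate the moment-estimation rate $\sqrt{d^{k-1}/n^k}$ from Lemma~\ref{lem:unbiased_moment_rate} through Newton's identities. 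Your rate and the dependence of the constant on $(\delta,k,\gamma)$ match the paper's. Two minor remarks. First, your direct derivative bound works for any $c$ with $c_{(1)} > c_{(2)} + \delta$ regardless of whether the remaining entries cluster; the paper's detour through $\Delta_{k,2}(\delta)$-configurations and the $W(c,c^\star)\leq\epsilon$ neighborhood in Proposition~\ref{prop:ub_blind_local} is slightly less direct, so your argument is if anything a cleaner way to justify the stated hypothesis. Second, the tail-control concern you flag (passing from high-probability to expectation) is real, and the paper addresses it implicitly through the capped loss $\widebar{\calD}_{c^\star} = 1 \wedge (\cdots)$ together with the fact that the estimator is clipped to a bounded set; your proposed truncation argument is the same idea unfolded. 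Your instinct to cite the Rosenthal-type moment bounds already developed for the $U$-statistic $\hat\xi_p$ in Appendix~\ref{app:moment} rather than reproving concentration is also what the paper does.
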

This result is achieved by taking $\hat F$ to be
the largest entry of the moment
estimator described in Section~\ref{sec:moment_estimator_main},
and does not rely on knowledge of $\delta$.
This highlights an important property
of the moment estimator: it can estimate the
fidelity more accurately than the other entries of $c$.
This estimator does so \textit{adaptively}, without requiring assumptions of $c$ or modification of the algorithm itself.
To see this, if we heuristically
set $\gamma =0 $ and take $n$ to be on the same 
order as $d$ (i.e.~where estimation in regime C is feasible), then, absent any side information,
the whole vector $c$
is estimable at the rate $n^{-1/2k}$---which
degrades exponentially in $k$---whereas
its largest entry is estimable at the 
faster rate $n^{-1/2}$, whenever it is $\delta$-separated
from the remaining entries. While the required exponential samples with system size currently limits this to a theoretical result, it hints at the possibility of practical estimators with similar properties.

\section{Simulation Study}
\label{sec:numerical}
To demonstrate the utility of our methods, we analyze synthetic data in two distinct scenarios. 
On the one hand, we consider Regime A where $\Pi$ is classically computed, and estimate $c$ with the 
maximum likelihood estimator~\eqref{eq:main_mle}.
On the other hand, we consider Regime B where $\Pi$ is only available
through side-information, in which case we use the 
variational estimator~\eqref{eq:em_fixed_point}. 
We additionally report simulations for Regime C in Appendix~\ref{sec:estimators}.

\subsection{Learning time-dependent errors}
We first showcase the use of our technique to detect the presence of time-dependent error rates. 
Such time-dependence can exist in various quantum platforms due to distinct physical reasons, including non-Markovian noise~\cite{rower2023evolution}, burst errors~\cite{hirasaki2023detection}, and atomic heating in an optical tweezer~\cite{de2018analysis}.
To this end, we numerically simulate a depth-16 circuit of a one-dimensional $N=18$ qubit chain. We perform a circuit-level noise simulation with a random quantum circuit. At every layer and every qubit, we inject Pauli $X,Y$ and $Z$ single qubit errors with space- and time-dependent probabilities $c$, corresponding to single-qubit Pauli channels with varying rates. 

In this simulation, we set the average single-qubit error rates to grow by a factor of 4 over the course of the entire circuit (see Appendix~\ref{appendix:numerical}). 
Our estimators successfully extract the individual time-dependent error rates (Fig.~\ref{fig:timedep_error}a) with $10^7$ samples, within the ability of the existing state-of-the-art quantum platforms. 
Since our aim is to study whether the error rate increases over time, we also perform a statistical
test for the null hypothesis that the error rate is constant across layers,  which in principle should require fewer samples. This indeed turns out to be the case: the null hypothesis can be rejected with approximately $10^5$ samples at level 0.95 (Fig.~\ref{fig:timedep_error}c), and with overwhelming significance when the sample size is of order $10^6$. 

For this system size, $10^5$ samples is comparable to the Hilbert space dimension $2^{18} = 262~144$. However, we find that the number of samples required for this hypothesis test grows slowly with system size, and we expect it to be far below the Hilbert space dimension for systems of sizes $N>20$ (Fig.~\ref{fig:timedep_error}d).

\begin{figure}[btp!]
    \centering
\includegraphics[width=0.99\linewidth]{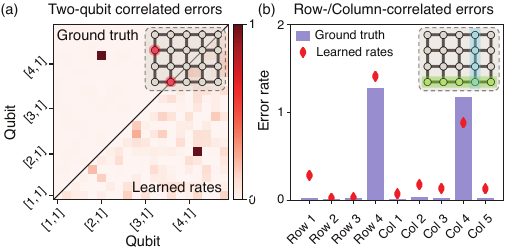}
\caption{Reconstructing correlated errors from synthetic data simulated for a $4\times 5$, depth-5 circuit. We consider two types of correlated error: (a) two-qubit $XX$ errors, or (b) multi-qubit $XX\cdots X$ errors along one row or one column to simulate errors induced along a shared control line. In both settings, we also include time-independent single-site Pauli errors at every qubit with a rate $2\times 10^{-3}$, chosen such that the total many-body fidelity is $F\approx 0.5$. (a) We learn the rates (averaged over layers) of two-qubit errors $X_u X_v$ for all pairs of qubits simultaneously and represent them on a 2D plot: specifically the correlated error rates $c_{u.v}-c_{u}c_{v}$, which subtracts the expected two-qubit error rates from independent single-qubit errors on qubits $u$ and $v$. We refer to this difference as the correlated error rates. Upper left half: ground truth: two qubits, highlighted in the inset, experience correlated $XX$ errors at a rate of $10^{-3}$ per layer. Lower right half: extracted error rates from $10^7$ samples correctly identify the correlated pair. 
(b) We also learn the rates of correlated errors on all the qubits in the same row or column. 
Blue bars: ground truth where one row and one column (inset) experience correlated errors. Red diamonds: extracted error rates. Again, $10^7$ samples are sufficient to reliably learn about correlated errors. 
}
\label{fig:correlated_error}
\end{figure}

\label{sec:real_data}
\begin{figure*}[!tb]
    \centering
    \includegraphics[width=\linewidth]{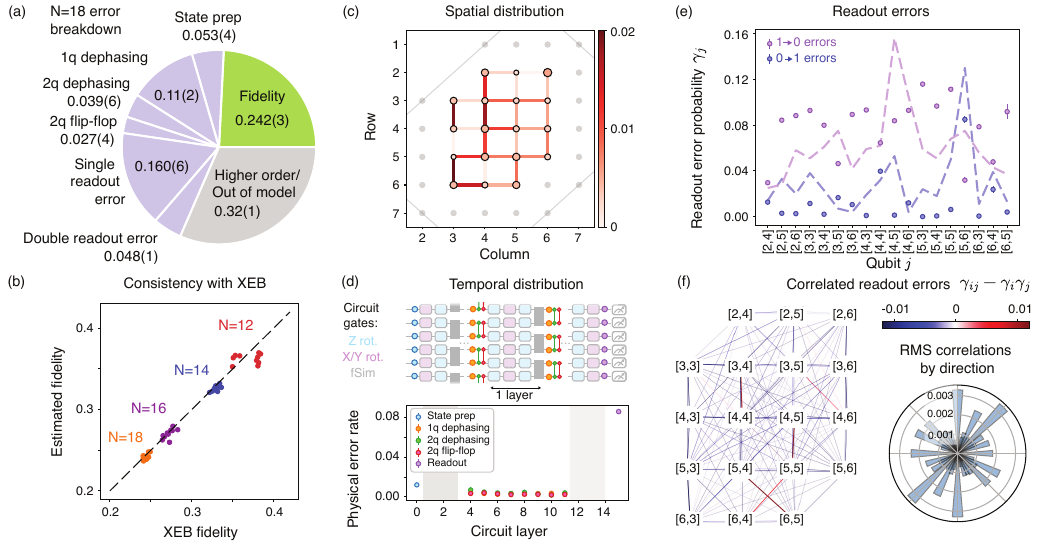}
\caption{Analysis of experimental RCS data. We apply our methods to study the publicly-available RCS data from Ref.~\cite{arute2019quantum}, results here shown for $N=18$. Using the MLE, we resolve different types of errors in many spacetime locations. We simulate state-preparation, single-qubit (1q) dephasing errors, two-qubit (2q) gate dephasing and flip-flop errors, and single and double readout errors (full details in App.~\ref{app:google}), for a total of $k=461$ total errors. (a) We summarize the combined contributions of each error type. Quoted values and error bars indicate the sample mean and its standard error over 10 random circuits. Modeled errors account for 68\% of the total weight: a remaining 32\% weight is fitted to the \textit{white noise} term representing errors outside our model such as multiple errors, consistent with expectations for this fidelity value (App.~\ref{app:google}). Note that the rate of 1q dephasing errors we learn here are the rates of errors that can be \textit{described} as single-qubit $Z_j$ operators: these errors may also arise from two-qubit \textit{gates}, and hence the proportion of 1q and 2q errors here are comparable, even though we expect them to primarily arise from two-qubit gates.
(b) Converting the results of our benchmarking report into a many-body fidelity (App.~\ref{app:converting_rates}) yields results in close quantitative agreement with the XEB fidelity.
(c) Learned error rates show considerable variation among qubits, in a consistent fashion over random circuit realizations.
We plot the total rates of the 2q dephasing and flip-flop errors on nearest neighbors, indicated by the color of the red links. We also plot the single-qubit dephasing error rates, indicated by the size and color of each qubit. Qubits are arranged according to their physical layout on the device (borders and unused qubits for the $N=18$ dataset in gray). The magnitude of learned error rates is consistent between system size, random circuit agreement, and their sum over error channels is consistent with Ref.~\cite{arute2019quantum} (main text).
(d) Our procedure also yields time-resolved error rates,
revealing approximately time-independent errors in the middle of the circuit. $1\rightarrow 0$ readout errors were found to be the largest type of error. Above, we depict the positioning of our modeled errors in the circuit: in the ideal circuit, a single ``layer" consists of four gates applied to each qubit, and we insert errors at layers in the circuit. Errors inserted near the start and end of the circuit have unusual properties, and we omit errors in the first and last three layers (gray regions) to avoid additional complications (see App.~\ref{app:converting_rates}).
(e) We explicitly compare our estimated rates (points) of readout errors with those reported in Ref.~\cite{arute2019quantum} (dashed lines). The average rates of readout errors are quantitatively similar, with deviations on certain qubits: these may arise from the fact that only a subset of qubits are simultaneously measured, which may hence experience error rates different from when all qubits are simultaneously read out (Fig. S24 of Ref.~\cite{arute2019quantum}). Error bars indicate standard error over 10 random circuits.
(f) Learning correlated readout errors: We estimate the physical error rate $\hat{\gamma}_{ij}$ of double readout errors on qubits $i$ and $j$, and compare it to the rates of independent errors $\hat{\gamma}_i,\hat{\gamma}_j$: the difference $\hat{\gamma}_{ij} - \hat{\gamma}_i \hat{\gamma}_j$ estimates the rates of correlated readout errors. We indicate these correlations with the thickness and colors of lines between all pairs of qubits $i$ and $j$. These correlations can be as large as a $1\%$ rate, although typical values are closer to $0.2\%$. 
We see correlations between many pairs of qubits, with stronger correlations (surprisingly \textit{negative}) between nearest neighbors as well as along the diagonals. We summarize these with a polar plot of the root-mean-squared (RMS) correlations averaged along each direction. Note that this is an \textit{average} over qubit pairs with a given orientation, ignoring their separation, and not simply a sum, which would weight certain directions over others because of the different number of qubit pairs for each orientation.
}
    \label{fig:google_data}
\end{figure*}
\subsection{Learning correlated errors}
We also demonstrate the detection of weak correlated errors, such as two-qubit correlated $X$ errors (Fig.~\ref{fig:correlated_error}a) and multi-qubit $XX\cdots X$ errors along a row or column of qubits in a 2D geometry (Fig.~\ref{fig:correlated_error}b). 
Such errors may occur, for example, due to qubit crosstalk~\cite{abrams2019methods,barrett2023learning} or control-line or readout multiplexing~\cite{heinsoo2018rapid} and are fundamentally inaccessible to calibration experiments on isolated subsets of qubits. 
Despite these errors being weak (respectively $0.2\%$ and $0.1\%$), $10^7$ samples suffice to detect them. 
Note also that our method does not depend on geometric locality, and is able to detect correlations between spatially separated qubits. 

\section{Analysis of experimental RCS data}
\label{sec:google}
Finally, we apply our method to the Google Quantum AI random circuit sampling (RCS) data from Ref.~\cite{arute2019quantum}. In this experiment, random quantum circuits with sizes ranging from $N=12$ to $N=53$ were executed on a 2D grid. The dataset, which is publicly available, contains
ten random circuit realizations per system size $N$ and 500,000 measurement outcomes per circuit.

We perform exact simulations of random circuits up to system size $N=18$, incorporating a variety of error mechanisms at each spacetime location. We specifically consider several sources of error: state-preparation errors, single-qubit dephasing errors, two-qubit controlled-Z dephasing and flip-flop ($|01\rangle\langle10| + |10\rangle\langle01|$) errors which may arise due to dressing by higher transmon levels, and readout errors which may be biased, i.e.~have unequal error rates between $1\rightarrow0$ and $0\rightarrow 1$ processes~\cite{arute2019quantum}, for a total of $k=461$ distinct errors.
Errors near the beginning and the end of the circuit are correlated not only with other errors, but with the ideal state. Therefore to simplify our first analysis, we only simulate errors beyond the first and last three layers, i.e.~the middle 8 circuit layers, see Appendix~\ref{app:google} for details.

Using our estimator~\eqref{eq:main_mle} on publicly-available RCS data from Ref.~\cite{arute2019quantum}, we extract physical error rates associated with each of the above sources across spacetime locations in the circuit. The results are summarized in Fig.~\ref{fig:google_data}: the data is highly rich and can be examined along multiple axes, including its behavior over space, time, and its magnitudes resolved by error type and location. Since $1\rightarrow 0$ readout errors are a dominant process and their calibration values were reported in Fig. S24 of Ref.~\cite{arute2019quantum}, we compare our learned values against reported values. We find similar average values of readout error, but slightly different qubit-to-qubit values. This could be due to the fact that readout error rates differ when the qubits were individually read out as opposed to simultaneously read out~\cite{arute2019quantum}. Meanwhile, the data sets we consider lie between both extremes: approximately half the qubits are simultaneously read out.

Physically-realistic error sources introduce systematic deviations from the i.i.d.~Porter-Thomas assumption~\ref{assm:pt} in several ways. As a result, one cannot directly equate the many-body fidelity to the coefficient $c_1$. We develop a theory for converting the learned rates $c$ into physically meaningful quantities such as the many-body fidelity and the physical per-qubit, per-layer error rate, detailed in Appendix~\ref{app:converting_rates}.
We summarize the dependence of the error rates over time and over qubits: error rates exhibit inhomogeneity over qubits but remain approximately constant across time, with larger state-preparation and readout errors at the initial and final circuit layers, respectively. Our estimates are consistent across the ten random circuit realizations analyzed, with reproducible trends being observed.

The orders of magnitude of the learned error rates in the bulk of the circuit agree with reported values of two-qubit gate errors, with a physical error rate (combined over 1q dephasing, 2q dephasing and 2q flip-flop) of $0.010(2)$ per qubit per layer, c.f. the reported mean two-qubit cycle benchmarking error rate of $0.0093(4)$ (Fig.~2 of Ref.~\cite{arute2019quantum}). Note that our `1q' and `2q' error sources refer to errors that can be expressed in terms of single- and two-qubit operators, and are not able to distinguish whether these come from single or two-qubit gates: this may be addressed with more careful positioning of errors in the circuit sequence [Fig.~\ref{fig:google_data}(d)].

We also investigate correlated readout errors by learning the rates of readout errors occurring on two (potentially distant) qubits simultaneously. While we see a large number of correlations which may be due to statistical noise, we also observe a certain directional-dependence even of non-local correlated readout errors. These may arise, for example from multiplexed qubit readout lines~\cite{arute2019quantum}.

Finally, we find that our learning procedure is remarkably robust. 
Even if our $k$-component model does not specify all sources of error, the error rates for the components that are modeled have estimated values that are stable to the presence or absence of other components in the model. This is illustrated in the close agreement between the XEB fidelity and the fidelity estimated by our protocol in Fig.~\ref{fig:google_data}(b), which holds true for other quantities as well, such as the fractions of each error component in Fig.~\ref{fig:google_data}(a). To a large extent, this is because of the approximate orthogonality of the components $\pi_i$, which implies that estimators such as $\hat{c}^\text{XEB}_i$ are accurate. 
Such estimators estimate each component without reference to the other components of the model and hence are inherently robust to this type of misspecification. However, in our error model in this section, components are non-orthogonal (App.~\ref{app:fidelity_conversion}) yet our protocol retains this stability.
Indeed, such stability is the operating principle behind the XEB: knowledge of the error processes are not required to estimate the many-body fidelity. This enables refining our error models hierarchically by systematically adding error sources according to expected significance, up to a desired level of precision.

\section{Outlook}
Our work demonstrates the utility of novel data processing methods to extract 
detailed information from random unitary circuits. 
These have become
an industry standard for benchmarking
quantum devices, for which many existing datasets are publicly available.
Our methods pave the way to a more accurate understanding of the errors that quantum computers experience, which come in many forms, often unexpected. We anticipate further extensions 
of our methods not only to learn about errors, but also to sense unknown, complicated signals~\cite{gong2025robust}. We also
anticipate possible applications of our methods
to basic science experiments, to learn about the properties of possibly exotic states prepared on a quantum computer~\cite{altman2021quantum}.

From a statistical lens, the quantum information setting poses a unique set of new challenges
for statisticians: its discrete data in the form of bitstring counts differs from traditional physics experiments involving continuous-variable data, and its high dimensionality means that each individual measurement reveals little about the underlying distribution. In our setting, we also encountered a synergistic dual role of randomness in the quantum circuit: random quantum circuits are hard to classically simulate and serves as a task that separates classical from quantum computers. However, they also have many typical properties which 
we exploit, and which led us to study a 
new family of high-dimensional
latent variable models.

We anticipate that our methods developed in the setting of partial side information generalize to cases when the reference quantum computer is noisy. 
In the meantime, however, our methods are still applicable when clean side information samples can be obtained using quantum error detection~\citep{bluvstein2024logical} or error correction methods~\cite{google2023suppressing,google2025quantum}.

Our results point to the broader relevance of high-dimensional statistical methods in quantum computing. The data produced by quantum devices are inherently high-dimensional, and the number of accessible samples is often far smaller than the dimension of the underlying Hilbert space. This imbalance makes quantum data analysis a natural arena for ideas from high-dimensional inference. We anticipate not only the fruitful application of existing tools---such as those introduced in early pioneering work on compressed-sensing-based quantum state tomography~\cite{flammia2012quantum}---but also the development of new statistical frameworks tailored to the distinct structure and constraints of quantum computing.

\section*{Acknowledgements}
TM would like to thank Florentina Bunea and Marten Wegkamp
for discussions related to
this work, and  for
bringing his attention to 
Ref.~\citep{bing2022}. 
We thank Trond Andersen, Dmitry Abanin, Bryce Kobrin, Elizabeth Bennewitz, Nikita Astrakhantsev, Weijie Wu, Kostyantyn Kechedzhi, Dvir Kafri and Joonhee Choi for insightful discussions.
We acknowledge support by 
the NSF QLCI Award OMA-2016245, 
the Center for Ultracold Atoms, an NSF Physics Frontiers Center (NSF Grant PHY-1734011), 
and the
NSF CAREER award 2237244. 
TM gratefully acknowledges the support of a Norbert Wiener fellowship.
WG is supported by the Hertz Foundation Fellowship.
 
\bibliography{manuscript_rcs_arxiv}

\clearpage
\newpage
\onecolumngrid

\appendix

\setcounter{figure}{0}
\renewcommand{\figurename}{Fig.}
\renewcommand{\thefigure}{S\arabic{figure}}
\setcounter{table}{0}
\renewcommand{\tablename}{Table}
\renewcommand{\thetable}{S\arabic{table}}

\makeatletter
\long\def\@footnotetext#1{\insert\footins{\reset@font\footnotesize
  \interlinepenalty\interfootnotelinepenalty
  \splittopskip\footnotesep
  \splitmaxdepth \dp\strutbox \floatingpenalty \@MM
  \hsize\columnwidth \@parboxrestore
  \protected@edef\@currentlabel{%
   \csname p@footnote\endcsname\@thefnmark
  }%
  \color@begingroup
    \@makefntext{%
      \rule\z@\footnotesep\ignorespaces#1\@finalstrut\strutbox}%
  \color@endgroup}}%
\makeatother

\interfootnotelinepenalty=10000

\newgeometry{left=1.5in, right=1.5in, top=1.25in, bottom=1.25in}

\allowdisplaybreaks

\begin{center}
{\large\bf  SUPPLEMENTARY MATERIAL}
\end{center}

\tableofcontents

\resumetoc

\section{Notation} 
\label{sec:notation}
Throughout the  
manuscript, we adopt
the following notation. 
\begin{itemize} 
\item 
Given a vector $x\in \bbR^d$, 
$\|x\|_p$ denotes the $\ell_p$ 
norm of $x$, and $\|x\|$ denotes
its $\ell_2$ norm 
when no subscript is specified.
\item Given $a,b \in \bbR$, we write $a\wedge b=\min\{a,b\}$,
$a\vee b = \max\{a,b\}$, and $(a)_+ = \max\{0,a\}$. 
\item 
For any matrix $\Pi$, 
$\|\Pi\|$ denotes its Frobenius norm, and  $\Pi_{\cdot j}$ and $\Pi_{i\cdot}$ denote its $j$-th column and $i$-th row, respectively.
\item 
For any two vectors $c,c'\in \bbC^k$, we write
$$W(c,c')  = \inf_{\sigma\in \calS_k} \sum_{i=1}^k |c_{\sigma(i)} - c_{i}'|,$$
where the infimum is over the permutation group $\calS_k$ on $[k]$.
\item 
$\Delta_k$ denotes the $(k-1)$-dimensional simplex, namely the set of vectors $x \in \bbR^k$
with nonnegative entries such that $\|x\|_1=1$.
Furthermore, given $1 \leq k_0 \leq k$, 
$\Delta_{k,k_0}$ denotes
the set of all elements $c \in \Delta_k$
which have exactly $k_0$ distinct entries.
\item 
Let $\bbN= \{1,2,\dots\}$
and $\bbN_0 = \{0,1,\dots\}$. 
For  integers $m,r \in \bbN$, we denote the $r$-th falling factorial of $m$ by 
$(m)_r = m(m-1)\dots(m-r+1)$.
\item 
Given a random variable $X$, its $r$-th moment and cumulant (when they exist) are denoted by 
$m_r(X)$ and $\kappa_r(X)$ respectively. By abuse of notation,  also abbreviate these quantities by 
$m_r(f)$ and $\kappa_r(f)$ when $f$ is the distribution or density of $X$. 
We abuse notation by writing $m_\alpha(x) =  \sum_{i=1}^k x_i^\alpha$ for any $x \in \bbR^k$.  

\item 
We denote by $I(A)$ the indicator function of a set
$A$, and by $\delta_{i,j} = I(i=j)$ the Kronecker delta
function.

\item 
We denote by $\calE_{d}$ the 
Porter-Thomas
or exponential distribution with parameter $d$, and we use the 
abbreviation $\calE_d^k\equiv \calE_d^{\otimes k}$ for its $k$-fold product distribution.
That is, 
$$\mathrm{d} \calE_{d}^k(\varpi) := d^k \exp\left(d\|\varpi\|_1\right) \mathrm{d}\varpi,\quad \varpi\in\bbR_+^k.$$
We also abbreviate
by $\calD_d$ the flat Dirichlet (i.e. uniform) distribution
over $\Delta_d$. 
We   abbreviate by 
$\mathrm{Mult}(n;p_1,\dots,p_d)$
the multinomial distribution 
with $n$ trials, $d$ categories, and success
probabilities $(p_1,\dots,p_d)\in \Delta_d$. 
 Furthermore, $\mathrm{Poi}(\lambda)$
 denotes the Poisson distribution with
 intensity parameter $\lambda > 0$,
 and $\mathrm{Gamma}(\alpha,\lambda)$
 denoted the Gamma distribution
 with shape parameter
 $\alpha >0$ and rate parameter
 $\lambda > 0$ (its mean is $\alpha/\lambda$).

\item 
For a random variable $V$ and  $\alpha \in (0,2)$,
we define the Orlicz
norm of $V$ by
\begin{equation} 
\label{eq:orlicz}
\|V\|_{\psi_{\alpha}} = \inf\big\{ \eta > 0 : \bbE\big[e^{(|V| / \eta)^{\alpha}} \big] \leq 2\big\}.
\end{equation} 

 \item 
Given two sequences of nonnegative real
numbers $(a_n)_{n=1}^\infty$ and $(b_n)_{n=1}^\infty$, 
we write $a_n \lesssim b_n$ if there exists
a constant $C > 0$ such that $a_n \leq C b_n$
for all $n \geq 1$, and we write $a_n \asymp b_n$
if $a_n \lesssim b_n \lesssim a_n$. Throughout
the manuscript, the constant
$C$ may always
depend on the parameter
$\gamma$ arising in condition~\ref{assm:sample_size}.
In some cases, when it is clear
from context, the constant $C$ may
also depend on $k$, or other problem
parameters.

\item 
We define the hard- and soft-thresholding
functions, with a parameter $\lambda > 0$, for all $x \in \bbR^d$, 
by
\begin{align} \label{eq:thresholding_functions}
\calH_\lambda(x) = 
\begin{cases} 
x, & |x| \geq \lambda , \\
0, & |x| \leq \lambda,
\end{cases}
\qquad 
\calS_\lambda(x) = 
\mathrm{sign}(x) \cdot \max\{|x|-\lambda,0\}.
\end{align}

\item 
Given two probability measures
$P$ and $Q$, admitting densities $f$
and $g$ with respect to a 
$\sigma$-finite dominating measure
$\nu$, 
we  make use of the standard statistical
divergences: 
the total variation distance $\mathrm{TV}(P,Q) =\frac 1 2 \int |f-g|d\nu$, 
the Hellinger distance $H^2(P,Q) = \int (\sqrt f - \sqrt g)^2d\nu$, the Kullback-Leibler divergence
$\mathrm{KL}(P\|Q) = \int \log(f/g) fd\nu$, and
the $\chi^2$-divergence
$\chi^2(P\|Q) = \int \frac{(f-g)^2}{g} d\nu$. 
If $J$ is a joint distribution of $P$ and $Q$, 
then its
mutual information is denoted by 
$I(X;Y) = \KL(P\otimes Q\|J)$ for $(X,Y)\sim J$.
If $X$ is discrete, then $H(X) = -\bbE[\log f(X)]$ denotes the Shannon entropy of $P$.

\end{itemize}

\section{Comparison of Estimators}
\label{sec:estimators}

In this Appendix, we expand
on our discussion from Section~\ref{sec:main_estimators},
and discuss various  estimators
for  the   parameter vector $c=(c_1,\dots,c_k)$
in Regimes~A--C.  

Let us begin by recalling our
statistical model.
Let $\Pi =(\pi_{ij}) \in \bbR^{k\times d}$
be a random matrix whose
 rows belong to
 the simplex $\Delta_d$. Under condition~\ref{assm:pt}, 
 the rows of $\Pi$ are i.i.d., and distributed
 according to the uniform distribution over $\Delta_d$, 
 also known as the flat Dirichlet law $\calD_d$.  
Although $\Pi$ is assumed
to satisfy condition~\ref{assm:pt}
for our theoretical results, 
this condition is not required
for all estimators  below.
Under this known marginal distribution
of $\Pi$, we adopt the sampling
model put forth in Section~\ref{sec:rcs}:
Given an unknown error
vector $c \in \Delta_k$, one 
draws conditionally independent observations
of the form
\begin{align*}
Z_1,\dots,Z_n \,\big|\,\Pi &\sim \sum_{j=1}^d (\Pi_{\cdot j}^\top c) \delta_{z_j}\\ 
W_{i1},\dots,W_{im} \,\big|\,\Pi &\sim \sum_{j=1}^d \pi_{ij}\delta_{z_j},\quad i=1,\dots,k,
\end{align*}
where $z_j \in \{0,1\}^N$ denotes the binary enumeration of the integer $j$,
and $d=2^N$.
Furthermore, let
$Y_j = \sum_{\ell=1}^n I(Z_\ell=z_j)$
and $V_{ij} := \sum_{\ell=1}^m I(W_{i\ell} = z_j)$ 
denote the induced
histograms, 
for all $i=1,\dots,k$ and $j=1\dots,d$. 
Recall that we consider three regimes:
\begin{enumerate}
\item {\bf Regime A} ($m=\infty$):
The user observes
$Z_1,\dots,Z_n$ and the matrix $\Pi$. 
\item {\bf Regime B} ($0 <  m < \infty$):
The user observes  
$Z_1,\dots,Z_n$ and $W_{11},\dots,W_{km}$.
\item {\bf Regime C} ($m=0$):
The user merely observes 
$Z_1,\dots,Z_n$. 
\end{enumerate} 
In the following subsections, 
we discuss several practical estimators for $c$ in these three settings,
and connect them to existing statistical literature. 
We also report simulation studies comparing the numerical performance of these estimators. Whenever
possible, we state upper bounds on their sample complexity.  
We begin with the simplest
case of Regime~A. 

\subsection{Regime A}

In the simplest case $m=\infty$
where the matrix 
$\Pi$ is known to the practitioner,
our model reduces to a
multinomial generalized linear model
with identity link function,
in the sense that the histogram 
$Y$ satisfies the relation
\begin{equation}\label{eq:multinomial_regression}
Y \sim \mathrm{Mult}(n;\Pi^\top c).
\end{equation}
Since the rows of $\Pi\in \bbR^{k\times d}$
define probability mass functions 
with support size $d$, model~\eqref{eq:multinomial_regression} 
can also be interpreted as a mixture of known multinomial distributions
with unknown mixing weights. 
This is a well-studied
model~\citep{rao1957maximum,birch1964new,fienberg,bing2022,manole2021estimating} for which the most
natural estimator is, perhaps, 
the maximum likelihood estimator,
which we describe below. 
Nevertheless, we begin by detailing how the simple
XEB estimator~\eqref{eq:main_ortho}
arises in this model. 

\subsubsection{The XEB Estimator}
Under a Poisson
approximation of the multinomial distribution---which we will justify in the next section---one can think of the 
histogram entries $Y_j$ as being approximately
independent, and distributed 
as 
\begin{align}\label{eq:looks_like_poisson}
Y_j \sim \mathrm{Poi}(n\Pi_{\cdot j}^\top c),
\quad j=1,\dots,d.
\end{align}
Under this approximation, 
one has 
\begin{align} 
Y = n\Pi^\top c+ \epsilon,
\end{align}
where, conditionally on $\Pi$, $\epsilon$  is a vector
with independent and mean-zero entries
satisfying $\Cov[\epsilon|\Pi] = \diag(n\Pi^\top c)$.
This defines a linear regression model with
heteroscedastic errors.
Under condition~\ref{assm:pt}, 
the marginal covariance of $\epsilon$
is simply $\Cov[\epsilon] = nI_d/d$, 
which is homoscedastic,
and is of the same order
of magnitude as the entries of $\Pi$.
Therefore, if one's goal
is to obtain an estimator
which performs well 
{\it unconditionally}---on
average over $\Pi$---one possible
approach to estimating $c$
is to fit the ordinary
least-squares estimator
of $c$, which ignores the heteroscedastic nature of the problem, 
\begin{align}\label{eq:ols}
\hat c^{\mathrm{OLS}} = \argmin_{x \in \bbR^d} \|Y - n\Pi^\top x\|_2^2 = (\Pi\Pi^\top)^{-1} \Pi Y / n.
\end{align}
One can also
restrict the minimization to 
the natural parameter
space $\Delta_k$ of $c$, and such a restriction
has regularization advantages which
we describe below, 
but no longer admits a closed form.
Under condition~\ref{assm:pt}
and for large values of $d$, the matrix
$\Pi\Pi^\top$ concentrates
rapidly around its mean value,
which is well-approximated by the matrix
 $$A = (I_k+\one_k\one_k^\top)/d.$$
Thus, an even simpler estimator
of $c$ is given by
\begin{align} 
\label{eq:ortho_with_A}
A^{-1} \Pi Y/n = \frac{d}{n}  \left(  \Pi Y  - \frac{\one_k^\top \Pi Y}{k+1} \one_k\right).
\end{align}
Up to centering, this approximation of the least-squares
estimator is similar
to that arising in linear regression with   orthogonal
design matrices~\citep{johnstone2019}. 
The XEB estimator~\eqref{eq:main_ortho},
\begin{equation}\label{eq:xeb_app}
\hat c^{\mathrm{XEB}} = 
(d/n) \Pi Y - \one_k
\end{equation}
can now be understood
as a variant of equation~\eqref{eq:ortho_with_A}
in which the second term is simply 
approximated by $\one_k$---an approximation
which can be justified when $d$ is large using
the fact
that the mean of $Y$ is $n\Pi^\top c$ 
with $\one_k^\top c = 1$.

As we have already indicated, the XEB
estimator can be improved when $k$ is large 
via its hard- or soft-thresholded counterparts: 
\begin{equation} \label{eq:xeb_thresh}
\hat c^{\mathrm{XHT}} = \calH_\lambda(\hat c^{\mathrm{XEB}}),\quad 
\hat c^{\mathrm{XST}} = \calS_\lambda(\hat c^{\mathrm{XEB}}).
\end{equation}
Here,  $\lambda > 0$
is a tuning parameter, and 
$\calH_\lambda,\calS_\lambda$ denote  the thresholding functions
defined in equation~\eqref{eq:thresholding_functions}. Once again,  these estimators
are similar in spirit to hard- and soft-thresholding estimators
in linear regression models
with orthogonal design matrix~\citep{donoho1994ideal,johnstone2019}. 
In our context, they satisfy the following upper bound. 
\begin{proposition}\label{prop:regime_A_ub}
Under condition~\ref{assm:pt}, there exists a universal constant $C > 0$ such that for all 
$1\leq k,n\leq d$, there exists $\lambda \geq 0$ such  that 
$$\sup_{c\in\Delta_k} \bbE_c \|\hat c^{\mathrm{XHT}} - c\|_2 \leq C \cdot \min \Big\{ (k/n)^{1/2}, (\log k / n)^{1/4}\Big\}.$$
\end{proposition}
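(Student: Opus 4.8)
The plan is to analyze the hard-thresholded XEB estimator $\hat c^{\mathrm{XHT}} = \calH_\lambda(\hat c^{\mathrm{XEB}})$ via the standard oracle-inequality toolkit for thresholding estimators in (approximately) orthogonal sequence models, and to show that the resulting bound matches the two-sided minimax rate from Theorem~\ref{thm:main_unsorted} specialized to $m=\infty$ (so $\rho = 1$). The starting point is to write $\hat c_i^{\mathrm{XEB}} = c_i + \xi_i$, where $\xi_i$ is the estimation error, and to control the noise vector $\xi = (\xi_1,\dots,\xi_k)$. From equation~\eqref{eq:main_zeta_ortho} and the Poisson-type approximation~\eqref{eq:looks_like_poisson}, one has $\bbE[\hat c_i^{\mathrm{XEB}} \mid \Pi] = (d \, \Pi_{i\cdot}^\top \Pi^\top c) - 1$, and under condition~\ref{assm:pt} the concentration of the Porter-Thomas rows gives $d\,\Pi_{i\cdot}^\top \Pi_{\ell\cdot} = 1 + \delta_{i\ell} + O(d^{-1/2})$ with exponentially small fluctuations. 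So, up to a bias term of size $O(d^{-1/2} \|c\|_1) = O(d^{-1/2})$ which is negligible since $n \leq d$, the estimator is centered at $c_i$. The variance contribution: conditionally on $\Pi$, $\hat c_i^{\mathrm{XEB}}$ is (essentially) $(d/n)\sum_j Y_j \pi_{ij} - 1$ with $Y \sim \mathrm{Mult}(n; \Pi^\top c)$, so $\Var(\hat c_i^{\mathrm{XEB}} \mid \Pi) \leq (d^2/n) \sum_j \pi_{ij}^2 (\Pi_{\cdot j}^\top c) \asymp (d^2/n)(1/d)(1/d) = 1/n$, again using the Porter-Thomas moment estimates $\sum_j \pi_{ij}^2 \asymp 1/d$ and $\Pi_{\cdot j}^\top c \asymp 1/d$. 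Hence each $\xi_i$ has standard deviation of order $n^{-1/2}$.

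The next step is a tail bound: I would show $\|\xi_i\|_{\psi_\alpha} \lesssim n^{-1/2}$ (with $\alpha$ possibly $1$ or $1/2$, reflecting the count data), uniformly in $i$, using Bernstein/Bennett-type inequalities for the multinomial counts together with the concentration of $\Pi$. This yields a uniform-noise control $\max_{i\leq k} |\xi_i| \lesssim \sqrt{(\log k)/n}$ with high probability, which is exactly the threshold level one should pick: $\lambda \asymp \sqrt{(\log k)/n}$. Given this, two complementary bounds follow from the standard hard-thresholding oracle inequality (as in Donoho--Johnstone sparse sequence estimation, cf.~\citep{donoho1994ideal,johnstone2019}):

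\begin{itemize}
\item[(a)] The crude bound $\|\hat c^{\mathrm{XHT}} - c\|_2 \leq \|\hat c^{\mathrm{XEB}} - c\|_2 = \|\xi\|_2$, whose expectation is $\lesssim \sqrt{k/n}$ since each of the $k$ coordinates of $\xi$ has second moment $O(1/n)$.
\item[(b)] The thresholding bound: partition coordinates into those with $|c_i|$ large versus small relative to $\lambda$. On the small-signal coordinates, hard-thresholding at level $\lambda$ kills the noise except on a low-probability event, contributing $O(\lambda^2 \cdot \#\{i: c_i \lesssim \lambda\}) + (\text{small-probability tail})$; crucially $\#\{i: c_i > \lambda\} \leq \lambda^{-1}$ because $\|c\|_1 = 1$, so the large-signal coordinates number at most $\lambda^{-1}$ and each contributes $O(\lambda^2 \wedge (|\xi_i|^2))$. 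Summing, $\bbE\|\hat c^{\mathrm{XHT}} - c\|_2^2 \lesssim \lambda^{-1}\cdot \lambda^2 + (\text{sum over small coords of } \bbE[\xi_i^2 I(|\xi_i|>\lambda)]) \lesssim \lambda + k \cdot \lambda \cdot e^{-c\lambda^2 n} \lesssim \lambda \asymp \sqrt{(\log k)/n}$, hence $\bbE\|\hat c^{\mathrm{XHT}} - c\|_2 \lesssim ((\log k)/n)^{1/4}$.
\end{itemize}

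Taking the minimum of (a) and (b) gives the claimed bound $C\min\{(k/n)^{1/2}, (\log k/n)^{1/4}\}$; the trivial bound $\|\hat c - c\|_2 \leq \sqrt 2$ on the simplex handles the regime where neither term is small, though as stated the proposition doesn't even need the $1$ in the min. The main obstacle I anticipate is \emph{not} the thresholding argument itself—that is textbook once the noise model is pinned down—but rather making rigorous the passage from the exact multinomial model with a \emph{random} design $\Pi$ to the clean "orthogonal Gaussian-like sequence model" heuristic: one must simultaneously control (i) the Poissonization error in replacing $\mathrm{Mult}(n;\cdot)$ by independent Poissons, (ii) the fluctuations of $\Pi\Pi^\top$ around $A=(I_k+\one_k\one_k^\top)/d$ and of the individual moments $\sum_j \pi_{ij}^2$, and (iii) the fact that the effective noise is heteroscedastic conditionally on $\Pi$ (variance proportional to $\Pi_{\cdot j}^\top c$, not constant), so the Bernstein bound needs the worst-case-over-$c$ control $\Pi_{\cdot j}^\top c \leq \max_j \|\Pi_{\cdot j}\|_\infty \lesssim (\log d)/d$ on the high-probability event. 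These are all "soft" concentration estimates for exponential random variables and their normalizations, but assembling them into a clean uniform-in-$c$, uniform-in-$i$ sub-exponential tail for $\xi_i$ is where the real work lies. I would organize this as a single lemma ("design concentration"), prove it once via union bounds over the $k\cdot d$ entries using condition~\ref{assm:pt}, and then run the deterministic thresholding argument on the good event.
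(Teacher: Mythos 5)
Your high-level route---hard-thresholding oracle inequality combined with the $\ell_1$ constraint $\|c\|_1 = 1$ and a design-concentration lemma for the random matrix $\Pi$---is the same machinery the paper ultimately deploys, but you attack Regime~A directly whereas the paper derives Proposition~\ref{prop:regime_A_ub} as the $m=\infty$ special case of Proposition~\ref{prop:collision} (Regime~B), whose proof runs through Lemmas~\ref{lem:collision} and~\ref{lem:hard_thresh_collision}. Your instinct that a direct Regime~A proof should be lighter is sound: when $\Pi$ is known, only one factor in the summand $\pi_{ij}Y_j$ is random (versus both in $V_{ij}Y_j$), so the conditional noise is $\psi_1$-sub-exponential rather than $\psi_{1/2}$-sub-Weibull. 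That said, three steps in the write-up need repair before this could serve as a proof.

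First, the ``crude bound'' in (a), $\|\hat c^{\mathrm{XHT}}-c\|_2 \le \|\hat c^{\mathrm{XEB}}-c\|_2$, is false for fixed $\lambda>0$: if $0 < \hat c_i^{\mathrm{XEB}} \le \lambda < c_i$, thresholding inflates the per-coordinate error from $c_i-\hat c_i^{\mathrm{XEB}}$ up to $c_i$. The paper instead obtains the $\sqrt{k/n}$ rate by taking $\lambda = 0$ outright (Lemma~\ref{lem:collision}); the ``there exists $\lambda \ge 0$'' phrasing of the proposition is what permits choosing a different $\lambda$ for each branch of the $\min$, and you should say so explicitly rather than assert a monotonicity that does not hold.

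Second, in (b) the approximation-error accounting is wrong: for the small-signal coordinates the contribution is $\sum_{i:c_i\lesssim\lambda}(c_i\wedge\lambda)^2 = \sum_{i:c_i\lesssim\lambda} c_i^2$, not $O(\lambda^2\cdot\#\{i:c_i\lesssim\lambda\})$; the latter can be as large as $k\lambda^2 \gg \lambda$. You land on the right answer only because of the bound $\sum_{i:c_i\le\lambda} c_i^2 \le \lambda\|c\|_1 = \lambda$, which must be invoked explicitly (this is exactly the display the paper writes in the proof of Lemma~\ref{lem:hard_thresh_collision}).

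Third---and this is the real gap---the claimed tail $\bbP(|\xi_i| > x\,|\,\Pi) \lesssim e^{-cnx^2}$ is not a consequence of anything you have established. Conditionally on $\Pi$, the summands $(d/n)\pi_{ij}(Y_j - n(\Pi^\top c)_j)$ are sub-exponential with $\psi_1$-norm $\asymp 1/n$, and a Bernstein/Kuchibhotla bound (the paper's Lemma~\ref{lem:sub_weibull_conc}) yields a mixed tail of the form $\exp\{-c\,(nx^2 \wedge nx/\log d)\}$. Whether the Gaussian branch controls the scale $x\asymp\lambda\asymp\sqrt{(\log k)/n}$ after a union bound over $k$ coordinates is a nontrivial arithmetic fact that depends on $n$ relative to $(\log d)^2\log k$; the paper handles this under condition~\ref{assm:sample_size} and with the two-branch tail written out explicitly (its equation~\eqref{eq:bernstein}). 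You correctly flag the design-concentration lemma as where the work lies, but until the conditional tail is pinned down at that level of precision, the $e^{-c\lambda^2 n}$ step in (b) remains a heuristic.
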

Proposition~\ref{prop:regime_A_ub} is a special case of upper bounds for Regime B which
we will develop in the next section, thus we leave it without explicit proof. 
It is worth emphasizing that Proposition~\ref{prop:regime_A_ub} exhibits
the same convergence rate
as typically seen
in $\ell_1$-sparse linear regression problems~\cite{donoho1994minimax,raskutti2011}. 
This is perhaps surprising, 
since the heteroscedastic nature
of multinomial regression problems
can alter the minimax estimation rate, as we
shall see in our discussion of likelihood
estimators below. 
However, one of the implicit observations in  
our upper and lower bounds is the fact
that, under condition~\ref{assm:pt}, 
the heterosedasticity
of our model is sufficiently
mild for it to behave like
a homoscedastic 
model.  This   observation
is what allows us to establish the minimax
optimality of simple
estimators, like the (regularized) XEB estimator, which have the advantage
of being computable in closed-form---an important benefit for large-scale quantum computing problems
where $d$ grows exponentially with system size.
Let us emphasize that the work
of~\cite{li2018} also found
regularized least squares estimators
to be minimax-optimal in $\ell_1$-constrained
Poisson regression problems.

\subsubsection{The Maximum Likelihood Estimator}

Although the XEB estimator is minimax optimal and simple to compute,
it relies heavily on the Porter-Thomas assumption~\ref{assm:pt}.
We have observed some deviations
from this assumption
in our
real data analysis.  
An alternative estimator for Regime A which does not rely on this assumption
is the maximum likelihood estimator (MLE), defined by 
\begin{align} \label{eq:mle}
\hat c^\mathrm{MLE} =\argmax_{x \in \Delta_k} \sum_{j=1}^d Y_j \log(\Pi_{\cdot j
}^\top x).
\end{align}
Unlike the XEB estimator, this optimization problem does not enjoy a closed form, 
but  is nevertheless concave and can be computed   using standard solvers.
It also has the practical advantage
of being free of tuning parameters, 
unlike our thresholded
XEB estimators.

Several theoretical
properties of the MLE   have   been investigated
by~\citet{bing2022} in the context of topic modeling
(a framework which we discuss further in Appendix~\ref{app:moment}). One of their remarkable
findings is the fact that the MLE
can identify the {\it sparsity
pattern} of the error
vector $c$ without any explicit regularization. 
Concretely, under some conditions, they show that with probability
tending to one,  
$$\supp(\hat c^{\mathrm{MLE}}) \subseteq 
\supp(c),$$
cf.\,\cite[Theorem 5]{bing2022}.
In our context, this property implies
that
if one specifies  a conservative
number of candidate errors $k$, 
many of which
may not be present in a quantum device at hand, 
then the MLE is   unlikely
to assign a positive
error rate to any of these non-existent
errors.
Although this result relies
on  conditions which are only met
in our setting in the unrealistic case $n\gg d$,  it nevertheless
hints at an important practical property of the MLE.

\citet{bing2022} also
derive $\ell_1$ sample complexity upper bounds for the MLE. 
Staying again with the condition
$n\gg d$, a special case of their results
can be informally stated as
\begin{align} \label{eq:bing_rate}
\bbE \|\hat c^{\mathrm{MLE}} - c\|_1 \lesssim \min \left\{ \kappa^{-2} \sqrt{\frac{\rho \log k}{n}}, 
\kappa^{-1} \sqrt{\frac k n}\right\},
\end{align}
where\footnote{We state the results of~\citep{bing2022} in the special case where, in their notation, 
$\underline{J} = [d]$ and $\overline{J}=[d]$. These assumptions hold in our setting 
with high probability when $n\gg d$.}
\begin{align} \label{eq:mle_rate_bing}
\kappa = \min_{v \in \bbR^k} \frac{\|\Pi^\top  v\|_1}{\|v\|_1},\quad \rho = \max_{1 \leq j \leq d} \frac{\|\Pi_{j\cdot}\|_\infty}{\Pi_{\cdot j}^\top c}.
\end{align}
The quantity $\kappa$ is an $\ell_1$ analogue of the minimal singular value of the matrix $\Pi$.   
Under
condition~\ref{assm:pt}
and $k =o( d)$, 
a simple derivation shows that
$\kappa$  scales as  $k^{-1/2}$
up to logarithmic factors,
with high probability. 
Furthermore, $\rho$ is typically
of constant order in our setting. Therefore,  
equation~\eqref{eq:bing_rate}
shows that the MLE
achieves the $\ell_1$
convergence rate $k/\sqrt n$. 
This rate is consistent
with that of  Proposition~\ref{prop:regime_A_ub}
when translated from the $\ell_2$ to $\ell_1$
norm. 
We expect that the MLE
can also be shown to achieve
the optimal convergence
rate in the realistic regime
where $d$ is arbitrarily large---and related
results can already be deduced for instance from
the work of~\cite{raginsky2010}---but we leave
a careful analysis of this
problem to future work.

\subsubsection{Numerical Comparison in Regime A}
\label{app:regime_A_numerical}
We provide a brief numerical comparison of the five   estimators
discussed in the preceding subsections: the maximum likelihood estimator~\eqref{eq:mle}, 
the   least squares estimator~\eqref{eq:ols},
the simplex-constrained least squares estimator, 
the XEB estimator~\eqref{eq:xeb_app}, and the 
hard-thresholded XEB estimator~\eqref{eq:xeb_thresh}.
For the latter estimator, we choose the tuning parameter $\lambda$ via two-fold cross-validation~\citep{arlot2010survey}. 
We apply these estimators to two different models: 
\begin{enumerate}
    \item The rows of  $\Pi$ are drawn independently and uniformly from $\Delta_d$  (i.e. assumption~\ref{assm:pt} holds).
    \item The rows of $\Pi$ are obtained from numerical simulation of a one-dimensional brick-layer quantum random circuit. In particular, the first row of $\Pi$ corresponds to perfect simulation of one specific instance of the random circuit. Each of the other rows corresponds to one single Pauli error occurring in the circuit. Different rows  correspond to different Pauli errors (either occurring on different qubits or at different layers of the circuit). 
\end{enumerate}  
For both models, we choose the Hilbert space dimension as $d=2^{16}=65,536$, and   the number of 
errors $k$ from $\{46,181,721\}$.  
Recall that the first entry of $c$ should be thought of as the fidelity
of the device---representing the probability of noiseless
execution. We always set this entry to 0.5; the remaining $k-1$ entries
correspond to the probabilities of different errors occurring in the circuit, 
and we sample them from a uniform distribution on $\Delta_{k-1}$. 

The $\ell_2$ errors of these estimators  are summarized in Fig.~\ref{fig:scaling_analysis_A}
as a function of the sample size.
A few important remarks are in order.   
First, the MLE, regularized XEB, and simplex-constrained least squares estimators
all exhibit two regimes of estimation error. 
When the sample size $n$ is relatively small, the $\ell_2$ error appears to be almost independent of $k$, and improves roughly as $n^{-1/4}$, whereas for large $n$, the $\ell_2$ error becomes linear in $k$ and scales as $n^{-1/2}$;
these scalings are consistent with Proposition~\ref{prop:collision}. 
Second, the performance of the XEB estimators plateaus
in the unrealistic regime  $n > d$, which is to be expected from its derivation.   
This effect is more severe in the brick-layer model,
which suffers from slight dependence among the row vectors in the $\Pi$ matrix. 
Third, for the unregularized estimators (XEB and least squares), the $\ell_2$ error always increases linearly in $k$. 
Finally, we find that even in realistic circuits, the MLE does not outperform 
the other estimators by an appreciable margin, even though it accounts for the mild heteroscedasticity of the model.
\begin{figure*}[!tb]
    \centering
\includegraphics[width=\linewidth]{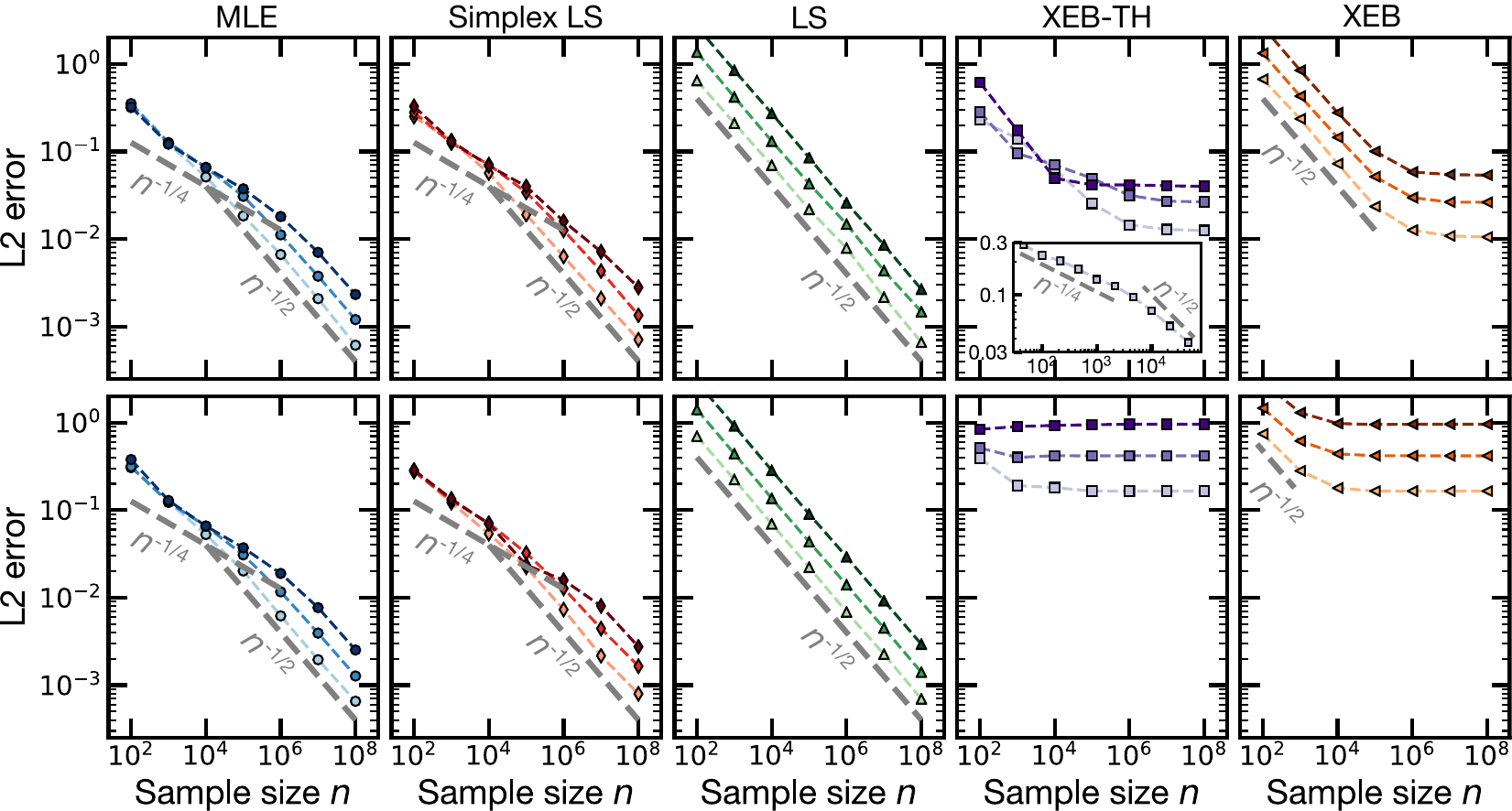}
\caption{Sample complexity of various estimators in regime A. For each panel, from the lightest to the darkest, $k$ is chosen from $\{46,181,721\}$. For the top (bottom) row, the matrix $\Pi$ is taken from the Dirichlet distribution (simulating random unitary circuits). Inset for XEB-TH: zoom-in of the regime with small sample sizes to highlight the $n^{-1/4}$ scaling. Each data point is obtained by averaging over at least 10 repetitions of simulation. }
\label{fig:scaling_analysis_A}
\end{figure*}

\subsection{Regime B}
When the amount of side information $m$ is finite but nonzero, 
our problem can be interpreted
as an {\it errors-in-variables}~\citep{carroll2006} 
multinomial regression problem, 
in which the matrix $\Pi$ is unknown, but noisy samples
$W$ from its rows are given. 
We consider three estimators
for this problem, which we discuss in turn. 
\\

\subsubsection{The Collision Estimator}
One of the benefits of the XEB
estimator from the previous section
is the fact that it is a {\it linear}
functional of $\Pi$. 
Therefore, this estimator
can   be adapted to Regime B
by  
replacing $\Pi$ with its empirical
counterpart, without incurring
any bias. 
 Concretely,  
by defining 
$$\widehat \Pi = V / m,\quad \text{i.e. } 
\widehat \Pi_{ij} = \frac 1 m \sum_{r=1}^m I(W_{ir} = j),
~~i=1,\dots,k,j=1,\dots,d,$$
we arrive at the following
counterpart of the XEB estimator,
which was already presented in equation~\eqref{eq:main_ortho_eiv}:
\begin{align}
\label{eq:collision}
\hat c^{\mathrm{coll}} = 
(d/n) \widehat\Pi Y - \one_k,
\quad \text{i.e. }
\hat c_i^{\mathrm{coll}} = \frac d {nm}\sum_{\ell=1}^n \sum_{r=1}^m
I(Z_\ell=W_{ir}) - 1,\quad i=1,\dots,k.
\end{align} 
As before, we refer to this estimator
as the {\it collision} estimator. 
We can also form
regularized variants
of this estimator via thresholding:
\begin{align}
\label{eq:collision_HT}
\hat c^{\mathrm{HT}} := \calH_\lambda(\hat c^{\mathrm{coll}}), \quad 
\hat c^{\mathrm{ST}} = \calS_\lambda(\hat c^{\mathrm{coll}}),
\end{align}
for some $\lambda > 0$. 
The following result, which we prove
in Section~\ref{sec:ub_proofs}, 
shows that the hard-thresholding  
collision estimator
achieves the optimal sample
complexity stated in Theorem~\ref{thm:main_unsorted}.
\begin{proposition}
\label{prop:collision}
There exists
a universal constant $C > 0$ such that
for all $1 \leq n,k\leq d$, $m\geq 1$,  
there exists $\lambda \geq 0$ such that
$$\sup_{c\in \Delta_k} \bbE\|\hat c^{\mathrm{HT}}-c\|_2 \leq C
\cdot \min\left\{ \left(\frac{dk}{nm_d}\right)^{\frac 1 2} ,
\left(\frac{d\log k}{nm_d}\right)^{\frac 1 4}\right\},$$
with $m_d = \min\{m,d\}$. 
\end{proposition}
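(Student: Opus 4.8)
\noindent\textbf{Proof strategy for Proposition~\ref{prop:collision}.} The plan is to write the collision estimator~\eqref{eq:collision} as signal plus bias plus conditionally centered noise, control each piece under condition~\ref{assm:pt}, and then invoke a hard-thresholding oracle inequality over the $\ell_1$-ball $\Delta_k$. Conditioning on the random circuit realization $\Pi$ and using that $V_i/m$ is conditionally unbiased for $\pi_i$ and independent of the histogram $Y$, one gets $\bbE[\hat c^{\mathrm{coll}}\mid\Pi]=d\,\Pi\Pi^\top c-\one_k$, so $\hat c^{\mathrm{coll}}=c+\beta(\Pi)+\xi$ with $\beta(\Pi)=d\,\Pi\Pi^\top c-c-\one_k$ a deterministic function of $\Pi$ and $\xi$ conditionally mean-zero. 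Since $\|c\|_2\le\|c\|_1=1$, the asserted bound is vacuous unless its right-hand side is $\lesssim 1$, which forces $nm_d\gtrsim d\log k$ (in particular $nm>d$); I will use this non-triviality restriction freely, and write $\sigma^2:=d/(nm_d)$.

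\emph{Bias.} Writing $G=\Pi\Pi^\top$, condition~\ref{assm:pt} together with~\eqref{eq:main_zeta_ortho} and routine Dirichlet moment bounds shows that $G$ concentrates entrywise around $\bbE G=(I_k+\one_k\one_k^\top)/d+O(d^{-2})$ with fluctuations of order $d^{-3/2}$ up to logarithmic factors; a matrix-concentration argument then yields, on an event of probability $1-d^{-\omega(1)}$, both $\|G-\bbE G\|_{\mathrm{op}}\lesssim\sqrt{k}\,d^{-3/2}$ and $\max_{i\ne\ell}|G_{i\ell}-d^{-1}|\lesssim d^{-3/2}\sqrt{\log k}$. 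Since $d\,\bbE G\,c = c+\one_k+O(k/d^2)$, on this event $\|\beta(\Pi)\|_2\lesssim\sqrt{k/d}\lesssim\sqrt{dk/(nm_d)}$ (using $n,m_d\le d$) and $\|\beta(\Pi)\|_\infty\lesssim\sqrt{\log k/d}\lesssim\sigma\sqrt{\log k}$ (using $nm_d\le d^2$); off the event, the deterministic bound $\|\hat c^{\mathrm{HT}}-c\|_2\le\sqrt{k}\,d$ contributes negligibly.

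\emph{Variance.} Conditionally on $\Pi$ one has $\hat c_i^{\mathrm{coll}}+1=(d/nm)\langle Y,V_i\rangle$ with $Y\sim\mathrm{Mult}(n;\Pi^\top c)$ and $V_i\sim\mathrm{Mult}(m;\pi_i)$ independent; iterating the law of total variance gives
\[
\mathrm{Var}\big(\langle Y,V_i\rangle\mid\Pi\big)\lesssim mn\,\langle\Pi^\top c,\pi_i\rangle+mn^2\sum_j(\Pi^\top c)_j^2\,\pi_{ij}+m^2n\sum_j(\Pi^\top c)_j\,\pi_{ij}^2.
\]
Under~\ref{assm:pt} the first sum is $\asymp 1/d$ and the other two are $\asymp 1/d^2$, so with $n\le d$ this yields $\mathrm{Var}(\hat c_i^{\mathrm{coll}}\mid\Pi)\lesssim\max\{m,d\}/(nm)=\sigma^2$ on the typical $\Pi$-event, hence $\bbE\|\xi\|_2^2\lesssim k\sigma^2$. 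Choosing $\lambda=0$ already gives the first branch $\sqrt{dk/(nm_d)}$ of the minimum.

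\emph{Tails and thresholding.} For the second branch I need coordinatewise concentration of $\xi_i$ that is sub-Gaussian at scale $\sigma$ for deviations up to $\sigma\sqrt{\log k}$. The subtlety is that a naive Bernstein bound based on $\|\pi_i\|_\infty\asymp\log d/d$ is too weak near the boundary; one must instead use the Porter-Thomas structure, namely that for $Z_\ell$ drawn from the near-uniform mixture $\pi_i(Z_\ell)\approx\mathrm{Exp}(1)/d$ and $V_{ij}\approx\mathrm{Poi}(m\pi_{ij})$, so the bilinear collision statistic $\langle Y,V_i\rangle$ behaves like a Gamma-type variable whose sub-Gaussian regime---after the $(d/nm)$ rescaling---extends to $\hat c$-scale deviations of order $1$; since we are in the regime $nm_d\gtrsim d\log k$, this comfortably contains the scale $\sigma\sqrt{\log k}$, and the residual sub-exponential tail affects only events of probability $o(1/k)$. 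With $\lambda\asymp\sigma\sqrt{\log k}$, a standard hard-thresholding argument over the $\ell_1$-ball (in the spirit of~\cite{donoho1994minimax,raskutti2011,li2018}) gives $\|\hat c^{\mathrm{HT}}-c\|_2\lesssim\sqrt{\|c\|_1\lambda}\asymp(\sigma^2\log k)^{1/4}$: at most $2/\lambda$ coordinates of $c$ exceed $\lambda/2$, contributing $O(\sqrt{(1/\lambda)\lambda^2})$, while the remaining coordinates are killed by the threshold on the high-probability event $\{\max_i|\xi_i+\beta_i|<\lambda\}$ and contribute $O(\sqrt{\lambda\|c\|_1})$; taking $\lambda$ equal to $0$ or $\asymp\sigma\sqrt{\log k}$ according to which of the two branches is smaller yields the stated minimum. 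I expect the main obstacle to be precisely this last step---pinning down the correct sub-Gaussian range of the bilinear collision statistic, where crude $\ell_\infty$ control of $\Pi$ fails and the exponential/Gamma structure of Porter-Thomas distributions must be exploited---whereas the bias and variance estimates are comparatively routine bookkeeping and the thresholding inequality is textbook.
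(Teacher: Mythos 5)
Your high-level structure matches the paper's: the paper also splits the proposition into an unregularized bound ($\sqrt{dk/(nm_d)}$, their Lemma~\ref{lem:collision}) and a thresholded bound ($(d\log k/(nm_d))^{1/4}$, their Lemma~\ref{lem:hard_thresh_collision}), uses the standard hard-thresholding oracle inequality with $\sum_i (c_i\wedge\lambda)^2\lesssim \lambda\|c\|_1=\lambda$, and picks $\lambda\asymp\sqrt{d\log k/(nm_d)}$. Your bias and variance bookkeeping is in the right spirit, and you have correctly identified the crux.

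However, the one step you flag as the ``main obstacle'' --- coordinatewise concentration of the bilinear collision statistic at scale $\sigma\sqrt{\log k}$ --- is precisely where the paper does real work, and your proposal does not resolve it. You assert that the statistic ``behaves like a Gamma-type variable whose sub-Gaussian regime \dots extends to $\hat c$-scale deviations of order $1$,'' but this is optimistic and, as stated, not quite right. Conditionally on $\Pi$, each summand $V_{ij}Y_j$ is a product of two (near-)Poisson variables, hence $\psi_{1/2}$-sub-Weibull, not sub-exponential and certainly not sub-Gaussian; the sum over $j$ obeys a Bernstein-type bound with a sub-Gaussian \emph{bulk} up to a transition point and then a $\psi_{1/2}$-sub-Weibull tail (this is exactly Lemma~\ref{lem:sub_weibull_conc}(ii) in the paper, applied with $\alpha=1/2$). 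Whether the sub-Gaussian bulk reaches the $\sigma\sqrt{\log k}$ scale depends delicately on the conditional variance $\sigma_{\Pi,i}^2$, the conditional bias $\beta_{\Pi,i}$, and the conditional Orlicz modulus $\zeta_{\Pi,i}$, all of which fluctuate with the random circuit $\Pi$. The paper therefore builds a cascade of high-probability events $\calA_1,\dots,\calA_7$ (again via sub-Weibull concentration of functions of the Porter-Thomas rows) on which these three quantities are simultaneously controlled, integrates the tail on the good event, and bounds the contribution off the good event by the crude $\hat c_i\le d$ bound. Your sketch acknowledges this difficulty honestly, but leaves it as an unproved heuristic, so the proposal is not a complete argument for the $(d\log k/(nm_d))^{1/4}$ branch.

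Two smaller points. First, the paper proves Lemma~\ref{lem:hard_thresh_collision} under the normalized Poisson model (using the Poissonization reduction of Lemma~\ref{lem:poissonization}); this decorrelates the $Y_j$'s, which simplifies the variance decomposition you write down --- your three-term formula for $\mathrm{Var}(\langle Y,V_i\rangle\mid\Pi)$ silently ignores the negative cross-covariances of the multinomial $Y$, which is harmless as an upper bound but worth flagging. Second, your bias analysis via matrix concentration of $G=\Pi\Pi^\top$ is a reasonable alternative to the paper's more hands-on route (Lemma~\ref{lem:Bi}), though one would need to be careful that the entrywise concentration you invoke ($\max_{i\ne\ell}|G_{i\ell}-d^{-1}|\lesssim d^{-3/2}\sqrt{\log k}$) really delivers $\|\beta(\Pi)\|_\infty\lesssim\lambda/4$ at the boundary of the nontrivial regime $nm_d\asymp d\log k$; the paper handles this via the same $\psi_{1/2}$-machinery used for the noise term.
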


As before, the collision
estimator has the advantage
of being computationally
efficient, and achieves the minimax optimal rate of convergence, 
but has the downside
of relying strongly on  
condition~\ref{assm:pt}. 
We next develop 
an estimator which 
somewhat relaxes this assumption.\\

\subsubsection{The Errors-in-Variables Estimator} 
Recall
  from the previous
section that the XEB estimator
can be understood as an approximation
of the ordinary least squares
estimator with an orthonormal design matrix
$\Pi$ (up to centering). 
One way of generalizing 
the least squares estimator to Regime B is given by the following estimator:
\begin{align} 
\argmin_{x \in \bbR^k}\Big\{ x^\top A_Vx  - 2 \frac{Y^\top V^\top x}{m}\Big\}
 = A_V^{-1}\frac{VY}{nm}, 
\end{align}
where $A_V$ is an unbiased estimator of $\Pi\Pi^\top$ based on the side information $V$. 
Again, one can also regularize this estimator by restricting
its feasible set to the simplex, at the expense of losing the closed-form solution:
\begin{align} 
\label{eq:eiv_ls}
\hat c^{\mathrm{EiV}} = 
\argmin_{x \in \Delta_k}\Big\{ x^\top A_Vx  - 2 \frac{Y^\top V^\top x}{m}\Big\}.
\end{align} 
Variants of these estimators
have previously appeared
in the work of~\citep{loh2012}, where
they were motivated by the fact that
their objective function is an unbiased
estimator
of the usual least squares
objective $\|Y - n\Pi^\top \gamma\|_2^2$,
up to addition of a constant that does not depend
on~$\gamma$. 
Analogues of such estimators
for Poissonian models  have
  appeared in the works of~\citep{jiang2022a,jiang2023},
though are based on a log-link parametrization
and do not easily extend to our setting.  

A variety of matrices $A_V$ can be used in 
the definition of $\hat c^{\mathrm{EiV}}$.
When condition~\ref{assm:pt}
happens to hold, 
one such estimator is given by
\begin{equation}
\label{eq:Av}
A_V := \mathbb{E}[\Pi\Pi^\top \,|\, V],
~~~ \text{i.e. } 
(A_V)_{i\ell} =\mu_i^\top \mu_\ell
+  \frac{(d+m)\|V_{i\cdot}+1\|_1 - \|V_{i\cdot}+1\|_2^2}{(d+m)^2(d+m+1)} I(i=\ell),
\end{equation}
where $\mu_i = (V_{i\cdot} + 1) / (d+m)$. 
Below, we will see that with this choice of $A_V$, the estimator~\eqref{eq:eiv_ls} achieves reasonable performance 
even on simulated random circuit data where   mild deviations
from the Porter-Thomas assumption can occur. 
\\

\subsubsection{The Variational Maximum Likelihood Estimator}
Our third estimator
in Regime B is the maximum likelihood
estimator (MLE), which was defined
in equation~\eqref{eq:side_info_mle}. As discussed therein, 
the MLE is nonconvex in this problem, and we proposed
to approximate it by the fixed-point equation~\eqref{eq:em_fixed_point}.
The goal of this subsection is to derive that approximation.
In statistical terms, this fixed-point equation arises as
the limit of a mean-field variational
expectation-maximization (EM) algorithm~\citep{dempster1977maximum,neal1998view,jordan1999introduction}, 
and is inspired by~\citep{blei2003}. We develop it from first
principles for completeness.

It will be convenient to rewrite the
data generating   distribution according
to the following hierarchy: Given
a matrix $\Pi$ satisfying
the Porter-Thomas assumption~\ref{assm:pt},
for $i=1,\dots,k$, 
$\ell=1,\dots,n$ and $r=1,\dots,m$,
we draw, 
\begin{align*} 
U_\ell^c \sim \textstyle \sum_{i=1}^k c_i\delta_i,\quad 
Z_\ell \,|\, (U_\ell^c,\Pi) \sim \sum_{j=1}^d (\pi_{U_\ell^cj}) \delta_{z_j},\quad 
W_{ir}\,|\,\Pi\sim \sum_{j=1}^d \pi_{ij}\delta_{z_j}.
\end{align*}
That is,  $U_\ell^c\in\{1,\dots,k\}$ denotes
a latent variable
which indicates the category from which
bitstring $Z_\ell$ was drawn. 
In this notation, the likelihood function of $c$ under Regime~B 
can be expressed as:
\begin{align*}
\calL(x) = \bbE_{U^x,\Pi} \left[\prod_{\ell=1}^n \pi_{U_\ell^x Z_\ell}
\cdot \prod_{r=1}^m \prod_{i=1}^k \pi_{iW_{ir}}
\right],\quad x \in\Delta_k,
\end{align*}
which depends on $x$ only through the 
distribution of the vector
$U^x=(U_\ell^x)_{\ell=1}^n$.
The above
expression can be interpreted 
as the partition function
of a classical physical system with states $\Pi,U^x$, and
Hamiltonian
$$\calH(U^x,\Pi|Z,W) = 
\sum_{\ell=1}^n \log (\pi_{U_\ell^x Z_\ell} )+ 
\sum_{r=1}^m \sum_{i=1}^k \log(\pi_{iW_{ir}}).
$$
By the Gibbs variational principle, 
one can express the partition function
via
\begin{equation}\label{eq:partition_fn}
\log \calL(x) =
\sup_{J} 
\Big\{ 
\bbE_{(\widebar U,\widebar \Pi)\sim J}\big[\calH(\widebar U,\widebar \Pi|Z,W)\big] - 
\KL  \big(J\,\|\,P_{U^x,\Pi|Z,W}\big) 
\Big\},
\end{equation}
where $P_{U^x,\Pi|Z,W}$ denotes
the joint probability distribution  of the latent variables
$U^x$ and $\Pi$ given the observables $Z$ and $W$, 
and the supremum is taken
over all   probability distributions
$J$ on $[k]^n \times \Delta_d^k$.
This representation suggests approximating
 the intractable distribution
$P_{U^x,\Pi|Z,W}$ 
 by a tractable family of joint
 distributions. 
We will adopt a {\it mean-field}
approximation, taking this family to consist
of all {\it independent} joint distributions $J_\phi$ on $[k]^n \times \Delta_d^k$,
whose first marginal is any discrete distribution $\phi \in \Delta_k^n$, and whose
second marginal is  given by 
the posterior law $\calD_W$ of $\Pi$ given $W$, 
$$\Pi\,|\,W \sim \calD_W := \bigotimes_{i=1}^k \mathrm{Dirichlet}\big(1+V_{i1}, \dots,
1+V_{id}\big),\quad \text{with } V_{ij} = \sum_{r=1}^m I(W_{ir} = j).$$
That is, we will restrict the supremum in equation~\eqref{eq:partition_fn}
to the set of joint distributions of the form
$$(\widebar U, \widebar\Pi) \sim J_\phi = \bigg(\bigotimes_{\ell=1}^n \sum_{i=1}^k \phi_{i\ell} \delta_i \bigg) \otimes \calD_W,
\quad \phi\in\Delta_k^n.$$ 
This leads to the following lower
bound on the partition function:
\begin{align*}
\log \calL(x) &\geq 
\sup_{\phi\in\Delta_k^n} \calF(x,\phi),\quad x\in\Delta_k,
\end{align*} 
where $\calF(x,\phi)$ denotes the mean-field free entropy,
\begin{align*} 
\calF(x,\phi) =
\bbE_{(\widebar U,\widebar \Pi)\sim J_\phi}\big[\calH(\widebar U,\widebar \Pi|Z,W)\big] - 
\KL  \big(J_\phi\,\|\,P_{U^x,\Pi|Z,W}\big),
\end{align*}
The variational EM algorithm consists of performing coordinate
ascent on the mean-field free entropy: Given
initial values $x^{(0)}\in \Delta_k$  and $\phi^{(0)} \in \Delta_k^n$,
we perform the following iterations for all $t=0,1,\dots$
\begin{enumerate}
\item[(a)] $\phi^{(t+1)} = \argmax_{\phi\in \Delta_k^n} \calF(x^{(t)},\phi)$.
\item[(b)] $x^{(t+1)} = \argmax_{x \in\Delta_k}\calF(x,\phi^{(t+1)})$.
\end{enumerate}
In order to derive the maxima in the above steps, notice first
that the free entropy can be rewritten as
\begin{align*}
\calF(x,\phi) = 
 \bbE_{(\widebar U,\widebar \Pi)\sim J_\phi}\big[\log p_{U^x,\Pi,Z,W}(\widebar U,\widebar \Pi,Z,W)\big] + H(J_\phi) - \log p_{Z,W}(Z,W),
\end{align*}
where $H$ denotes the differential entropy of $J_\phi$, and the joint law
of the random variables is given, over their support, by
$$p_{U^x,\Pi,Z,W}(\widebar U, \widebar \Pi,Z,W) = \prod_{\ell=1}^n x_{\widebar U_\ell} \widebar \pi_{\widebar U_\ell Z_\ell} 
\cdot 
\prod_{i=1}^k \prod_{r=1}^m \widebar \pi_{iW_{ir}}.$$
Letting ``$\propto$'' denote
equality
up to additive constants not depending on $x,\phi$, we deduce
\begin{align*}
    \calF(x,\phi)
     &\propto  \sum_{i=1}^k\sum_{\ell=1}^n \phi_{i\ell} \bbE_{\calD_V}[\log (x_i\widebar \pi_{iZ_\ell})]
      - \sum_{i=1}^k  \sum_{\ell=1}^n\phi_{i\ell} \log\phi_{i\ell} \\
&= \sum_{i=1}^k  \sum_{\ell=1}^n \phi_{i\ell} \log\frac{x_i}{\phi_{i\ell}} 
+ \sum_{i=1}^k \sum_{\ell=1}^n \phi_{i\ell} \big(\psi(1+W_{iZ_\ell})-\psi(d+m)\big)\\
&\propto \sum_{i=1}^k \sum_{\ell=1}^n \phi_{i\ell} \log\left(\frac{x_i \exp\{\psi(1+W_{iZ_\ell})\}}{\phi_{i\ell}}\right),
 \end{align*}
where $\psi$ denotes the di-gamma function, and the second line
is obtained in closed form using the fact that the marginal distribution of $\widebar\pi_{ij}$ is $\mathrm{Beta}(1+W_{iZ_\ell},d-1+\sum_{s \neq Z_\ell} W_{is})$.
By maximizing the above display with respect to each variable $x$ and $\phi$, we deduce that
the iterations (a) and (b) reduce to:
\begin{enumerate}
\item[(a)] $\phi_{i\ell}^{(t+1)} = x_i^{(t)}   \exp\{\psi(1+W_{iZ_\ell})\} / \sum_{r=1}^k x_r^{(t)} 
\exp\{\psi(1+W_{rZ_\ell})\}$,
\item[(b)] $x_i^{(t+1)} = \frac 1 n \sum_{\ell=1}^n \phi_{i\ell}^{(t+1)}$,
\end{enumerate}
for $t=0,1,\dots$ These iterations simplify to 
\begin{align}\label{eq:em_vi_algo} 
x_i^{(t+1)} = \frac{x_i^{(t)}}{n} \sum_{\ell=1}^n 
\frac{\exp\{\psi(1+W_{iZ_\ell})\} }{\sum_{r=1}^k x_r^{(t)}   \exp\{\psi(1+W_{rZ_\ell})\}}
 = \frac{x_i^{(t)}}{n} \sum_{j=1}^d
\frac{Y_jS_{ij} }{\sum_{r=1}^k x_r^{(t)}  S_{ij}},
\quad i=1,\dots,k,
\end{align}
where $S_{ij} = \exp\{\psi(1+V_{ij})\}$. We refer to these iterates
as the {\it variational EM estimator}. These iterates 
converge precisely to a solution of the
mean-field fixed-point equation
$$n = \sum_{j=1}^d \frac{Y_j S_{ij}}{\sum_{r=1}^k S_{rj} x_r},\quad i=1,\dots,k,$$
which completes our derivation of equation~\eqref{eq:em_fixed_point}. 

Algorithm~\eqref{eq:em_vi_algo} has the well-known property of increasing the likelihood
at each iteration, in the sense that $\calL(x^{(t+1)}) \geq \calL(x^{(t)})$
for all $t\geq 0$. 
Nevertheless, this algorithm is not guaranteed to converge to the maximum likelihood estimator, 
i.e. the global 
optimum of~$\calL$. 
In a closely-related statistical model known as latent Dirichlet allocation, which we will discuss further below,
it has recently been shown that the mean-field approximation
is negligible for computing Bayesian estimators
when $n = o(k)$~\citep{zhong2025variational}, however
this analysis corresponds most closely
to our model with $m=\infty$. In contrast, we believe that the mean-field
approximation becomes very poor when $m$ is of lower order than $n$ and $d$,
as will become clear in our numerical comparisons below. 
As we have already indicated, one avenue for possible improvement of this algorithm is 
to optimize the Thouless-Anderson-Palmer free entropy~\citep{ghorbani2019instability,celentano2023local,celentano2023mean}
instead of the mean-field free entropy. We intend to pursue this avenue in future work.

\subsubsection{Numerical Comparison in Regime B} 

We again analyze five different estimators in Regime B, which are counterparts
to those in Appendix~\ref{app:regime_A_numerical} above:
The variational EM estimator~\eqref{eq:em_vi_algo}, 
the error-in-variable least-square estimators 
with and without the simplex
constraint~\eqref{eq:eiv_ls}, and the collision-based estimators with and
without hard-thresholding~\eqref{eq:collision}--\eqref{eq:collision_HT}.
We apply these estimators to the same data as in  Appendix~\ref{app:regime_A_numerical}.
%
%
This time, we fix the number of errors $k=46$, and vary the side information sample size $m$. 

The $\ell_2$ errors of these estimators as a function of the sample size are summarized in Fig.~\ref{fig:scaling_analysis_B}.
Let us again make a few remarks. 
When $m$ is sufficiently large (in particular, larger than $d$), all estimators exhibit qualitatively similar
performance as their counterparts in Regime A. On the other hand, 
%
smaller values of $m$ have two main impacts on the the $\ell_2$ error: 
1) They set a lower bound on the attainable $\ell_2$ error, 2) 
although the $\ell_2$ error still decreases as $n^{-1/2}$, the prefactor becomes larger (with a factor of $d/m$). 
These various observations are consistent with our theoretical
predictions in Proposition~\ref{prop:collision} and Theorem~\ref{thm:main_unsorted}.
Furthermore, we observe several phenomena which parallel those of Regime A: 1) The various
estimators have comparable risks in most regimes, highlighting the fact that 
simple least squares-based estimators do not lose much despite the heteroscedastic nature of the problem, and
2) The collision estimators experience a plateau in performance when $n$ exceeds $d$. 
The variational EM estimator also experiences a pronounced plateau when $m$ is small, which 
is likely due to the poor approximation properties of the mean-field free entropy when $m$ is small.

\begin{figure*}[!tb]
    \centering
\includegraphics[width=\linewidth]{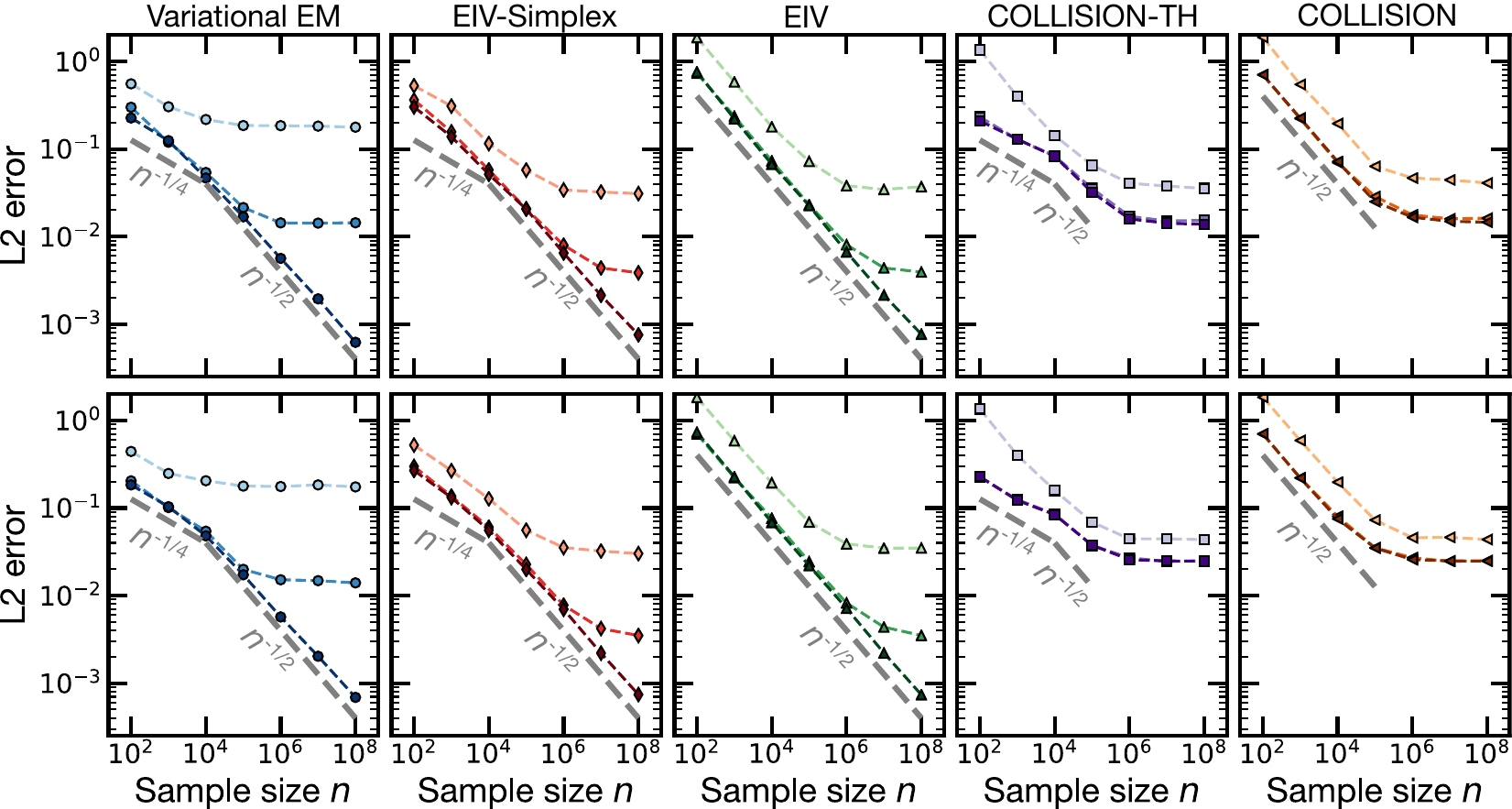}
\caption{Sample complexity of various estimators in Regime B. For each panel, from the lightest to the darkest, the side information sample size $m$ is chosen from $\{10^4,10^6,10^8\}$. For the top (bottom) row, the matrix $\Pi$ is taken from the Dirichlet distribution (simulating random unitary circuits). Each data point is obtained by averaging over at least 10 repetitions of simulation. }
\label{fig:scaling_analysis_B}
\end{figure*}

\subsection{Regime C}
Unlike Regimes A and B, which 
are  connected to multinomial
regression with measurement error, 
Regime C is closer in spirit
to latent variable or blind source separation models,  
such as nonnegative matrix factorization~\citep{donoho2003}, independent component analysis~\citep{hyvarinen2001independent}, latent class analysis~\citep{goodman1974exploratory},  and 
particularly topic models, which we discuss
in further detail below.
Moment-based estimators are known to be information-theoretically optimal
for many of these problems (e.g.~\citep{moitra2018algorithmic} and references therein). 
Inspired by this line of work, our focus will  
be to derive a moment estimator for  Regime C. We rely on   assumption~\ref{assm:pt}
throughout this subsection.

\subsubsection{The Moment Estimator}\label{app:moment}
Recall from
equation~\eqref{eq:looks_like_poisson} that the
histogram entries $Y_j$  are approximately  independent, conditionally on the latent
variable $\Pi$, and approximately follow
a $\mathrm{Poi}(n\textstyle\sum_{i=1}^k c_i\pi_{ij})$ distribution. Furthermore, 
as we recall in Lemma~\ref{lem:dirichlet} below, the flat Dirichlet 
vector $\Pi_{i\cdot}=(\pi_{i1},\dots,\pi_{id})$ is equal in distribution to
$$\Pi_{i\cdot} = \frac{(X_{i1},\dots,X_{id})}{\sum_{j=1}^d X_{ij}},\quad i=1,\dots,k,$$
where $X_{ij} \sim \calE_d$ are i.i.d. exponential  random variables. 
When the dimension $d$ is large, the denominator in the above display concentrates rapidly around
1, and in this case one can further approximate the histogram $Y$ by a random vector $\widebar Y$
which follows the distribution:
$$\widebar Y_j|X \sim \mathrm{Poi}\Big(n\sum_{i=1}^k c_i X_{ij}\Big),\quad j=1,\dots,d.$$
Due to the independence of the rows of $X=(X_{ij})$, the random variables
$\widebar Y_j$ are marginally~i.i.d.:
\begin{equation}
\label{eq:poisson_model_for_expo}
\widebar Y_j \overset{\mathrm{i.i.d.}}{\sim} \bQ_c = \int_{\bbR^k_+}\mathrm{Poi}(n \langle \varpi,c\rangle) d\calE_d^{\otimes k}(\varpi), \quad j=1,\dots,d.
\end{equation}
We will show in the following section that model~\eqref{eq:poisson_model_for_expo}
is statistically indistinguishable from the original model for $Y$, when the dimension
$d$ is sufficiently large. We will also show that model~\eqref{eq:poisson_model_for_expo} is identifiable
up to sorting the entries of $c$, in the sense that for any $c,\bar c \in \Delta_k$, 
$\bQ = \bQ_{\bar c}$ implies that $W(c,\bar c)=0$. Taking these facts for granted momentarily, 
we will derive a moment-based
estimator of $c$  motivated by model~\eqref{eq:poisson_model_for_expo}. 

Since we only seek to estimate $c$ up to permutation of its entries, it will suffice
to derive estimators
for the first
$k$ moments of $c$, namely
$$m(c) = \big(m_1(c),\dots,m_k(c)\big)^\top,
\quad \text{with }
m_p(c) = \sum_{i=1}^k c_i^p,
\quad p=1,\dots,k.$$
These moments uniquely
identify $c$ up to permutation of its entries (cf.~\citep{hundrieser2025}), since
the polynomial 
$$f_c(z) = \prod_{i=1}^k (z-c_i),
\quad z \in \bbC,$$
 is uniquely
determined, on the one hand, by its
set of roots $\{c_1,\dots,c_k\}$, 
and on the other hand, by its moment
vector $m(c)$. This fact follows
from {\it Newton's identities}
(recalled in~Appendix~\ref{app:elementary_symmetric_polynomials}), which imply 
\begin{align*} 
f_c(z) &= z^k + \sum_{j=1}^k (-1)^j e_j(c) z^{k-j},\quad z \in \bbC,\\
&\text{with } 
e_0(c) =1, ~~ e_\ell(c) = \frac k \ell \sum_{j=1}^\ell (-1)^{j-1} e_{\ell-j}(c) m_j(c),
\quad \ell=1,\dots,k.
\end{align*}
Given estimators of the moments
of $c$, say $\hat m_1,\dots,\hat m_k$, which we will define
below, one can
now  construct an 
estimator of $c$ by forming
an estimator of $f_c$, and 
returning its $k$ roots. 
Concretely, define \begin{align}\label{eq:fitted_newton_identity}
\hat e_0 = 1,\quad  	\hat e_\ell 
 	&= \frac{k}{\ell} \sum_{j=1}^{\ell} (-1)^{j-1} \hat e_{\ell-j}  \hat m_j,
 	~~\ell=1,\dots,k.
 \end{align}
and define the fitted $k$-degree polynomial
$$\hat f(z) =  z^k + \sum_{j=1}^k (-1)^j \hat e_j z^{k-j},\quad z \in \bbC.
$$
We then define 
a pilot estimator $\tilde c = (\tilde c_1,\dots,\tilde c_k)$ as the vector
of $k$ (possibly complex) roots of $\hat f$, ordered arbitrarily. 
To obtain an estimator with real coordinates, we
define our final estimator by
\begin{align} 
\label{eq:mom_estimator}
\hat c = \big( \Re(\tilde c_1),\dots,\mathrm{Re}(\tilde c_k)\big).
\end{align}
We refer the above as the 
{\bf moment estimator} for Regime C. In order to complete its
definition, we need 
to define the estimators
$\hat m_1,\dots,\hat m_k$, 
which we turn to next.  

Our starting point is inspired
by the past works of~\citep{arute2019quantum,shaw2024universal,andersen2025thermalization,rinott2022statistical}, 
which   noted that 
when $k=2,3$, the {\it cumulants}
of the histogram are
related to the moments of $c$. 
Let $\kappa_p(X)$ denote the $p$-th cumulant of any random variable $X$.
Given independent
exponential random variables $\varpi_1,\dots,\varpi_k \sim \calE_d$, define the random variable 
$\theta = \sum_{i=1}^k c_i \varpi_i$, and write 
$\xi_p = \kappa_p(\theta)$
for any $p\geq 2$. 
Recalling that cumulants are additive across sums of independent random variables,  we have
\begin{equation}
\label{eq:cumulant_to_moment}
\xi_p = \sum_{i=1}^k c_i^p \kappa_p(\varpi_{i}) = \frac{(p-1)!}{d^p} m_p(c).
\end{equation}
Thus, to estimate the moment vector $m(c):=(m_1(c),\dots,m_k(c))^\top$, it suffices
to estimate the cumulant vector $\xi = (\xi_1,\dots,\xi_k)^\top$.  
To do so, we recall 
that the cumulants $\xi_p = \kappa_p(\theta)$
are related to the moments $\eta_p = \bbE[\theta^p]$ via 
\begin{equation}
\label{eq:cumulant_bell}
\xi_p = \sum_{\ell=1}^p  (-1)^{\ell-1} (\ell-1)! B_{p,\ell}\big(\eta_1,\dots,\eta_{p-\ell+1}\big),
\end{equation}
where $B_{p,\ell}$ are the 
Bell polynomials (cf. Appendix~\ref{app:bell_polynomials}),
defined by
\begin{align}\label{eq:bell_deriv}
B_{p,\ell}(x_1,\dots,x_{p-\ell+1}) = p! \sum_{(h_1,\dots,h_{p-\ell+1})\in \calH_{p,\ell}} \prod_{i=1}^{p-\ell+1} \frac{x_i^{h_i}}{(i!)^{h_i} h_i!},
\end{align}
where $\calH_{p,\ell}$ consists
of all tuples $(h_1,\dots,h_{p-\ell+1})$ of nonnegative integers
such that:
$$\sum_{i=1}^{p-\ell+1} h_i = \ell,\quad \sum_{i=1}^{p-\ell+1} ih_i = p.$$
These expressions suggest that the cumulants $\xi_p$ can be estimated 
by first estimating
the moments $\eta_\ell$, which 
in turn can be done using the classical unbiased estimators
for a Poisson model, given by 
$$T_{j,\ell} = \begin{cases}
1/d, & \ell=1,\\
\frac{Y_j!}{n^\ell(Y_j-\ell)!},&\ell=2,3,\dots
\end{cases}$$
Specifically, one has $\eta_\ell = \bbE[T_{j,\ell}]$
for all $\ell=1,2,\dots$. 
We can use these quantities
to 
build estimators
for each term in the summations~\eqref{eq:cumulant_bell}--\eqref{eq:bell_deriv}. 
To this end, given $\bh \in \calH_{p,\ell}$, 
let $\calI_{\ell,\bh}$ be the set of all tuples  
$(S_1,\dots,S_{p-\ell+1})$ consisting of pairwise disjoint sets $S_1,\dots,S_{p-\ell+1} \subseteq \{1,\dots,d\}$
such that $|S_i| = h_i$ for all $i$.  Notice that some of the $S_i$ could be empty, and 
that $\big|\bigcup_{i} S_i\big| = p$ by definition of $\calH_{p,\ell}$. Furthermore, 
$$|\calI_{\ell,\bh}| = {d \choose \ell} {\ell \choose {\bh} },$$
where the second factor denotes a multinomial coefficient. 
%
Now, define the generalized U-Statistic:
$$W_{\bh} = \frac 1 {{d \choose \ell} {\ell \choose {\bh} }}  \sum_{(S_1,\dots,S_{p-\ell+1})\in \calI_{\ell,\bh}}
\prod_{i=1}^{p-\ell+1} \prod_{j\in S_i} T_{j,i}.$$
If the Poisson model~\eqref{eq:poisson_model_for_expo} held true, then the above would be an unbiased estimator of
$\prod_{i=1}^{p-\ell+1} \eta_i^{h_i}.$
A natural
estimator of $B_{p,\ell}(\eta_1,\dots,\eta_{p-\ell+1})$ is thus the following:
$$p! \sum_{\bh\in \calH_{p,\ell}} \frac{W_{\bh}} {\prod_{i=1}^{p-\ell+1} (i!)^{h_i}h_i!}.$$
Combining these ideas, 
a natural estimator of $\xi_p$ is
given by:
\begin{equation}
\label{eq:u-stat}
\begin{aligned}
\hxi_p &= p!  \sum_{\ell=1}^p (-1)^{\ell-1} (\ell-1)! \sum_{\bh\in \calH_{p,\ell}} \frac{W_{\bh}} {\prod_{i=1}^{p-\ell+1} (i!)^{h_i}h_i!} \\ 
 &=  p!\sum_{\ell=1}^p (-1)^{\ell-1} \frac{(d-\ell)!(\ell-1)!}{d!} \sum_{\bh\in \calH_{p,\ell}}  \sum_{(S_1,\dots,S_{p-\ell+1})\in \calI_{\ell,\bh}} \prod_{i=1}^{p-\ell+1}\prod_{j\in S_i} \frac{T_{j,i}}{i!}.
\end{aligned}
\end{equation}
By combining this expression with equation~\eqref{eq:cumulant_to_moment},
we arrive at the following estimator of $m_p(c)$ (which is unbiased under the Poisson model~\eqref{eq:poisson_model_for_expo}):
\begin{equation}
\label{eq:unbiased_moment_estimate_p}
\hat m_p = pd^p\sum_{\ell=1}^p (-1)^{\ell-1} \frac{(d-\ell)!(\ell-1)!}{d!} \sum_{\bh\in \calH_{p,\ell}}  \sum_{(S_1,\dots,S_{p-\ell+1})\in \calI_{\ell,\bh}} \prod_{i=1}^{p-\ell+1}\prod_{j\in S_i} \frac{T_{j,i}}{i!},\quad p=1,\dots,k.
\end{equation}
Together with equation~\eqref{eq:mom_estimator},
this completes our definition of the moment estimator $\hat c^\mathrm{mom}$. 
Despite its unwieldy definition, this estimator
can be readily implemented in closed form, 
with the exception of a root-finding step.

In what follows, 
we present an upper bound
on the risk of the moment
estimator under the simplified
model~\eqref{eq:poisson_model_for_expo}.
In Appendix~\ref{app:mom_corr},
we will then
show that this upper bound
readily extends to our original
multinomial sampling model.
\begin{proposition}
\label{prop:ub_blind}
Given $\gamma > 0$ arbitrarily small, let $n,d,k\geq 1$ satisfy $d^{1-\frac 1 k} \leq n \leq d^{\frac 1 {1+\gamma}}$. Assume that
 condition~\ref{assm:pt} holds. Then, 
 under model~\eqref{eq:poisson_model_for_expo}, 
 there exists a constant $C =C(k,\gamma) > 0$ such 
that 
$$\sup_{c \in \Delta_k} \bbE_c \Big[W(\hat c^{\mathrm{mom}}, c)\Big] \leq C \sqrt{\frac{d^{1-\frac 1 k}}{n}}.$$
\end{proposition}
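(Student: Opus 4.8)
The plan is to decouple the argument into two steps: (i) a variance bound for the moment estimators $\hat m_1,\dots,\hat m_k$ of~\eqref{eq:unbiased_moment_estimate_p}, and (ii) a quantitative stability estimate for the map sending the coefficients of a monic degree-$k$ polynomial to the multiset of its roots, which converts the moment error into an error on $\hat c^{\mathrm{mom}}$. Throughout I work under the simplified Poisson model~\eqref{eq:poisson_model_for_expo}, where $\bar Y_1,\dots,\bar Y_d$ are i.i.d.\ draws from $\bQ_c$, so the estimators in question are genuine U-statistics in i.i.d.\ data and their variances decompose via the Hoeffding decomposition.

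\textbf{Step (i): moment error.} Write $\theta_j=\sum_{i=1}^k c_iX_{ij}$, so that $\bar Y_j\mid X\sim\mathrm{Poi}(n\theta_j)$ and $\bbE[T_{j,\ell}]=\eta_\ell:=\bbE[\theta_j^\ell]$. The unbiasedness of $\hat m_p$ for $m_p(c)$ (noted below~\eqref{eq:unbiased_moment_estimate_p}) then follows from the factorization of products of $T_{j,\ell}$ over disjoint index sets together with the Newton/Bell identities~\eqref{eq:cumulant_bell}--\eqref{eq:fitted_newton_identity}, so it suffices to bound $\mathrm{Var}(\hat m_p)$ uniformly over $c\in\Delta_k$. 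The two elementary inputs are: $1/d^\ell\le\eta_\ell\le \ell!/d^\ell$ (lower bound by Jensen applied to $\bbE\theta_j$; upper bound by convexity of $x\mapsto x^\ell$ and $\bbE[X_{ij}^\ell]=\ell!/d^\ell$), and the falling-factorial identity $\bbE[(\bar Y_j)_\ell^2\mid X]=\sum_{r=\ell}^{2\ell}\beta_{\ell,r}(n\theta_j)^r$, whose lowest-order term $\ell!\,(n\theta_j)^\ell$ dominates because $n\theta_j=O(n/d)=o(1)$ under $n\le d$. Together these give $T_{j,1}\equiv1/d$ deterministic and $\mathrm{Var}(T_{j,\ell})\asymp_k\eta_\ell/n^\ell\asymp_k 1/(n^\ell d^\ell)$ for $\ell\ge2$. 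Expanding $\hat m_p$ as a signed combination of complete U-statistics and applying the Hoeffding decomposition, its leading contribution is the degree-one term $\tfrac{d^{p-1}}{(p-1)!}\sum_{j=1}^d T_{j,p}$, of variance $\asymp_k d^{2(p-1)}\cdot d\cdot\mathrm{Var}(T_{j,p})\asymp_k d^{p-1}/n^p$; all remaining terms, and cross terms controlled by Cauchy--Schwarz, are of lower order once $n\le d$. Hence, uniformly over $c\in\Delta_k$,
\begin{equation*}
\bbE_c\big[(\hat m_p-m_p(c))^2\big]=\mathrm{Var}(\hat m_p)\le C_k\,\frac{d^{p-1}}{n^p}\le C_k\,\frac{d^{k-1}}{n^k},\qquad p=1,\dots,k,
\end{equation*}
where the second inequality uses $n\le d^{1/(1+\gamma)}\le d$.

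\textbf{Step (ii): from moments to roots.} By~\eqref{eq:fitted_newton_identity}, each $\hat e_\ell$ is a fixed polynomial with $k$-dependent coefficients in $\hat m_1,\dots,\hat m_\ell$, and since $m_p(c)\in[k^{1-p},1]$ on $\Delta_k$, there is $c_k\in(0,1)$ so that on $\mathcal{G}=\{\max_{p\le k}|\hat m_p-m_p(c)|\le c_k\}$ one has $\max_{\ell\le k}|\hat e_\ell-e_\ell(c)|\le C_k\delta$ with $\delta:=\max_p|\hat m_p-m_p(c)|$. On $\mathcal{G}$ the coefficients of $\hat f$ are within $C_k\delta$ of those of $f_c$, whose roots lie in $[0,1]$; a classical quantitative bound for continuity of polynomial roots (Ostrowski type) shows that, for $c_k$ small enough, all roots of $\hat f$ then lie in a bounded region and the optimal matching distance obeys $W(\tilde c,c)\le C_k\delta^{1/k}$. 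Passing to real parts only shrinks each matched distance since the $c_i$ are real, so $W(\hat c^{\mathrm{mom}},c)\le C_k\delta^{1/k}$ on $\mathcal{G}$. On $\mathcal{G}^c$ I use $W(\hat c^{\mathrm{mom}},c)\le\|\hat c^{\mathrm{mom}}\|_1+1$ with the Cauchy root bound $\|\hat c^{\mathrm{mom}}\|_1\le k(1+\max_\ell|\hat e_\ell|)$ and crude moment estimates $\bbE_c|\hat m_p|^{q}\le C_{k,q}$ (valid in the regime $d^{1-1/k}\le n$ by the same Poisson computations) to get $\bbE_c[W(\hat c^{\mathrm{mom}},c)^2]\le C_k$; combined with Chebyshev, $\bbP(\mathcal{G}^c)\le c_k^{-2}\sum_p\mathrm{Var}(\hat m_p)\le C_k d^{k-1}/n^k$, so $\bbE_c[W\,I(\mathcal{G}^c)]\le C_k\sqrt{d^{k-1}/n^k}$. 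On $\mathcal{G}$, Jensen (concavity of $t\mapsto t^{1/k}$) and $\bbE|X|\le\sqrt{\bbE X^2}$ give $\bbE_c[W\,I(\mathcal{G})]\le C_k\big(\sum_p\sqrt{\mathrm{Var}(\hat m_p)}\big)^{1/k}\le C_k(d^{k-1}/n^k)^{1/(2k)}$. Adding the two pieces and using $\sqrt{d^{k-1}/n^k}=(d^{1-1/k}/n)^{k/2}\le(d^{1-1/k}/n)^{1/2}$ (the base is $\le1$ since $d^{1-1/k}\le n$) yields $\bbE_c[W(\hat c^{\mathrm{mom}},c)]\le C_k\sqrt{d^{1-1/k}/n}$ uniformly over $c\in\Delta_k$.

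\textbf{Main obstacle.} The bulk of the work is Step (i): one must carefully expand the combinatorial sum~\eqref{eq:unbiased_moment_estimate_p} --- a signed combination of complete U-statistics of degrees up to $p$ with heteroscedastic, Poisson-type kernels built from the $T_{j,\ell}$ --- and verify that the degree-one term $\propto d^{p-1}\sum_j T_{j,p}$ genuinely dominates, uniformly over $c\in\Delta_k$ and over the full polynomial range $d^{1-1/k}\le n\le d^{1/(1+\gamma)}$; this is precisely where the bounds $\eta_\ell\asymp_k d^{-\ell}$ and $n\theta_j=o(1)$ are used. Step (ii) is comparatively routine, the only care being the bookkeeping of the bad event $\mathcal{G}^c$, which is dispatched by soft moment bounds rather than sharp tail estimates.
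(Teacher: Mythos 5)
Your two-step plan is essentially the paper's own argument. Step~(i), the bound $\mathrm{Var}(\hat m_p)\lesssim d^{p-1}/n^p$, matches Lemmas~\ref{lem:risk_V} and~\ref{lem:unbiased_moment_rate}; the paper routes this through the cumulant estimators and bounds each constituent U-statistic uniformly via a Hoeffding-type inequality together with the independence of factors in its kernel, rather than singling out the $\ell=1$ block as you do, but the variance estimate and its uniformity over $\Delta_k$ under $n\leq d$ come out the same. Step~(ii) then converts the moment error into a root error with a $(1/k)$-H\"older modulus in both treatments, and the final Jensen/exponent arithmetic agrees. The one genuine difference is how Step~(ii) is packaged: the paper applies Lemma~\ref{lem:stability_moments} from~\cite{hundrieser2025}, the global Ostrowski-type bound $W(c,c')\leq C(k)\|m(c)-m(c')\|_2^{1/k}$ for all $c,c'\in\mathbb{C}^k$, which combined with $W(\hat c^{\mathrm{mom}},c)\leq W(\tilde c,c)$ (Lemma~48 of that reference) and Jensen yields the rate directly, with no truncation. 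You re-derive a local version of that stability estimate, and in exchange you must introduce the good/bad-event split and the crude higher-moment bounds $\mathbb{E}_c|\hat m_p|^q\leq C_{k,q}$ on the bad event. Those $L^q$ bounds do hold uniformly over $n\geq d^{1-1/k}$, but only because the falling-factorial kernel gives $\mathbb{E}\big[(\widebar Y_j)_p^q\mid X\big]\asymp_q (n\theta_j)^p$ when $n\theta_j\ll 1$; the cruder estimate $(\widebar Y_j)_p\leq \widebar Y_j^{\,p}$ yields only $\mathbb{E}[\widebar Y_j^{\,pq}\mid X]\asymp n\theta_j$, which is not enough of a power of $n\theta_j$ to keep the bound uniform at the boundary $n\asymp d^{1-1/k}$ when $p=k$. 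That detail is worth flagging if you formalize the argument, but both routes are correct and yield the claimed rate.
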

The proof appears in Appendix~\ref{app:pf_prop_ub_blind}.
This result shows that the moment estimator achieves the minimax optimal rate
of convergence stated in Theorem~\ref{thm:main_sorted}, under the sorted loss function~$W$. 
Proposition~\ref{prop:ub_blind}
merely studies the estimation rate in the narrow regime $d^{1-1/k} \leq n \leq d^{1-\epsilon}$,
for $\epsilon$ arbitrarily small. When $n$ falls below $d^{1-1/k}$, 
Theorem~\ref{thm:main_sorted} implies that consistent estimation is not possible
uniformly over $\Delta_k$. 
The regime $n > d$ is less relevant for RCS
experiments,  and in either case,  
 finer estimators would need to be adopted 
 in this regime since our approximate model~\eqref{eq:poisson_model_for_expo}
becomes unrealistic when $n > d$ (as we discuss further in Appendix~\ref{app:model}).

Our next results will show that the rate of convergence in Proposition~\ref{prop:ub_blind}
can be significantly improved if  mild constraints are placed on the error vector $c \in \Delta_k$. Indeed, we will see that the convergence behavior of the moment estimator
is highly heterogeneous across the parameter space $\Delta_k$, and  
improves whenever the errors~$c_i$ have {\it different magnitudes}. On a technical level, these improvements
can be anticipated
from the fact that the root-finding step
in the moment estimator is better conditioned
when the underlying roots are well-separated
(cf.\,Lemma~\ref{lem:refined_stability_moments}
below for a quantitative statement
of this fact).

To elaborate, given $1 \leq k_0 < k$, let $\Delta_{k,k_0}$ denote the set of all elements $c^\star \in \Delta_k$
which admit exactly $k_0$ distinct entries, that is, for which the set $\{c^\star_{i}: 1 \leq i \leq k\}$ 
has cardinality equal to $k_0$. Furthermore, let $\Delta_{k,k_0}(\delta)$ denote the set of elements
$c^\star\in \Delta_{k,k_0}$ such that for all $1 \leq i<i'\leq k$, either $c^\star_{i} = c^\star_{i'}$, or $|c^\star_{i} - c^\star_{i'}| \geq \delta$. The following result characterizes the minimax   rate of estimating vectors $c$ which are
in the vicinity of an element in $\Delta_{k,k_0}(\delta)$; roughly-speaking, this
means that the vector $c$ has entries which tightly cluster around $k_0$
different values. 
\begin{proposition}
\label{prop:ub_blind_cluster}
Assume the same conditions as Proposition~\ref{prop:ub_blind}. Then,
under model~\eqref{eq:poisson_model_for_expo}, there exists a constant  $\epsilon> 0$
depending on $k$ 
such that for any $\delta > 0$, there exists $C = C(k,\gamma,\delta) > 0$ such that for all $1 \leq k_0\leq k$, 
$$\sup_{c^\star\in \Delta_{k,k_0}(\delta)} 
  \sup_{\substack{c \in \Delta_k \\ W(c,c^\star)\leq \epsilon}} \bbE_c\Big[ W(\hat c^\mathrm{mom},c)\Big] \leq C \left(\frac{d^{k-1}}{n^{k}}\right)^{\frac{1}{2(k-k_0+1)}}.$$
\end{proposition}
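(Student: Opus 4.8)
I work throughout under the Poissonized model~\eqref{eq:poisson_model_for_expo}, and trace the error of $\hat c^{\mathrm{mom}}$ through its two stages: the $U$-statistic moment estimators $\hat m_1,\dots,\hat m_k$ from~\eqref{eq:unbiased_moment_estimate_p} (hence the fitted polynomial $\hat f$ assembled from them via Newton's identities~\eqref{eq:fitted_newton_identity}), and the subsequent root-extraction step~\eqref{eq:mom_estimator}. The clustering hypothesis enters only in the second stage, where it controls the conditioning of the root-finding. As a first step I would establish that for every $p\le k$ the estimator $\hat m_p$ is unbiased for $m_p(c)$ with $\sup_{c\in\Delta_k}\mathbb{E}_c[(\hat m_p-m_p(c))^2]\lesssim_k d^{p-1}/n^p$, together with matching bounds on its higher moments (which follow since $\hat m_p$ is a fixed-degree polynomial in the independent, light-tailed counts $\bar Y_j$). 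These are exactly the variance and concentration estimates already carried out in the proof of Proposition~\ref{prop:ub_blind}; under the standing assumption $d^{1-1/k}\le n\le d^{1/(1+\gamma)}$ one has $d/n>1$, so $d^{p-1}/n^p$ is increasing in $p$ and the sum over $p$ is dominated by the term $p=k$, giving $\mathbb{E}_c\|\hat m-m(c)\|_2^2\lesssim_k d^{k-1}/n^k=:\eta_n^2\le 1$. Since the Newton map $m(c)\mapsto f_c$ is a fixed polynomial map and $m(c)\in[0,1]^k$ is bounded, it is locally Lipschitz with a $k$-dependent modulus, and propagating the previous bound (using $\eta_n^{2k}\le\eta_n^2$ to absorb higher-order terms) yields $\mathbb{E}_c[\|\hat f-f_c\|^2]\lesssim_k\eta_n^2$, where $\|\hat f-f_c\|$ denotes the largest discrepancy among the $k$ coefficients.

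\textbf{Clustered root perturbation.} The heart of the argument is to convert a bound on $\|\hat f-f_c\|$ into a bound on $W(\tilde c,c)$ that exploits the cluster structure of $c$. Since $W(c,c^\star)\le\epsilon$ and $c^\star\in\Delta_{k,k_0}(\delta)$ has exactly $k_0$ distinct entries that are pairwise $\ge\delta$-separated, taking $\epsilon$ sufficiently small (below the separation scale) forces the entries of $c$---equivalently the roots of $f_c$---to split into $k_0$ clusters, the $j$-th of multiplicity $\mu_j$ and diameter at most $2\epsilon$, with $\sum_j\mu_j=k$ and distinct clusters separated by at least $\delta-2\epsilon$; in particular each $\mu_j\le k-k_0+1$. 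Invoking the refined stability bound for root-finding under clustered roots (Lemma~\ref{lem:refined_stability_moments}), whenever $\|\hat f-f_c\|$ is below a threshold $\tau(k,\delta)$ there is a bijection between the roots of $\hat f$ and the entries of $c$ under which the $j$-th cluster contributes at most $C(k,\delta)\,\mu_j\,\|\hat f-f_c\|^{1/\mu_j}$ to the matched $\ell_1$ cost; summing over $j$ and using $\mu_j\le k-k_0+1$ together with $\|\hat f-f_c\|\le 1$ gives $W(\tilde c,c)\le C(k,\delta)\,\|\hat f-f_c\|^{1/(k-k_0+1)}$. When instead $\|\hat f-f_c\|\ge\tau(k,\delta)$, the right-hand side is bounded below by a constant depending only on $(k,\delta)$, so the trivial bound $W(\tilde c,c)\le 2$ shows the same inequality (with a larger constant) holds for every realization. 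Finally, because the entries of $c$ are real, taking real parts in~\eqref{eq:mom_estimator} can only shrink distances, so $W(\hat c^{\mathrm{mom}},c)\le W(\tilde c,c)$.

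\textbf{Assembly.} Combining the two steps with Jensen's inequality (the map $x\mapsto x^{1/(2(k-k_0+1))}$ is concave on $[0,\infty)$) gives, uniformly over $c^\star\in\Delta_{k,k_0}(\delta)$ and $c$ with $W(c,c^\star)\le\epsilon$,
\begin{equation*}
\mathbb{E}_c\big[W(\hat c^{\mathrm{mom}},c)\big]\le C(k,\delta)\,\mathbb{E}_c\big[\|\hat f-f_c\|^{1/(k-k_0+1)}\big]\le C(k,\delta)\big(\mathbb{E}_c\|\hat f-f_c\|^2\big)^{\frac{1}{2(k-k_0+1)}}\lesssim_{k,\delta}\Big(\frac{d^{k-1}}{n^k}\Big)^{\frac{1}{2(k-k_0+1)}},
\end{equation*}
which is the claim.

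\textbf{Main obstacle.} The crux is the clustered root-perturbation estimate of Lemma~\ref{lem:refined_stability_moments}, and in particular showing the constants there are uniform over all admissible cluster configurations while degrading only through $\delta$: this rests on a Rouch\'e-type argument isolating each cluster, which requires a lower bound on $|f_c|$ on a circle surrounding a cluster (available because the $\ge(\delta-2\epsilon)$-separation keeps the other roots away), and it is here that the worst-case exponent $1/(k-k_0+1)$ emerges, saturated by the maximally degenerate profile $(\mu_1,\dots,\mu_{k_0})=(k-k_0+1,1,\dots,1)$ with the $(k-k_0+1)$-fold cluster collapsed to a point. A secondary, more routine difficulty is obtaining the moment bounds in the first step uniformly over $c\in\Delta_k$, since the $\hat m_p$ are polynomials in the heavy-tailed Poisson-mixture counts and their variances must be tracked to the correct order in $d$ and $n$---but this is already supplied by the proof of Proposition~\ref{prop:ub_blind}.
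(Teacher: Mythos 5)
Your proof is correct and follows essentially the same route as the paper's. The paper's argument for this proposition is very terse: it bounds $\bbE\|\hat m - m(c)\|$ via Lemma~\ref{lem:unbiased_moment_rate}, observes that Newton's identities give $m(\tilde c)=\hat m$, and then invokes Lemma~\ref{lem:refined_stability_moments} as a black box to obtain $W(\tilde c,c)\lesssim \|m(\tilde c)-m(c)\|^{1/(k-k_0+1)}$ pointwise, with Jensen's inequality and $W(\hat c^{\mathrm{mom}},c)\le W(\tilde c,c)$ completing the proof. You reroute through the polynomial coefficients $\|\hat f-f_c\|$ rather than the moment vector $\|m(\tilde c)-m(c)\|$, but since the Newton/Vieta maps between moments and coefficients are fixed-degree polynomial and hence locally Lipschitz on the bounded parameter set, this is a cosmetic reformulation. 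What you add that the paper leaves implicit is the justification for why Lemma~\ref{lem:refined_stability_moments} applies with uniform constants: the neighborhood condition $W(c,c^\star)\le\epsilon$ forces the entries of $c$ into $k_0$ clusters of multiplicities $\mu_j\le k-k_0+1$ separated by $\ge\delta-2\epsilon$, enabling a Rouch\'e-type isolation of each cluster, together with the case split when the coefficient error exceeds a threshold $\tau(k,\delta)$. This matches what Lemma~\ref{lem:refined_stability_moments} (from hundrieser2025) packages, and the worst-case exponent $1/(k-k_0+1)$ you identify is the same one the paper obtains.
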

Notice that Proposition~\ref{prop:ub_blind_cluster} recovers the convergence rate of Proposition~\ref{prop:ub_blind} when 
$k_0=1$, which corresponds to no separation assumptions.
On the other hand, it is strictly faster when $k_0 < k$, and improves as $k_0$ increases. In the most extreme case where
all errors $c_i$ are well-separated, corresponding to $k_0=k$, we find that the error vector $c$ can be estimated
at the rate $\sqrt{\frac{d^{k-1}}{n^{k}}}$, assuming again that $d$ exceeds $n$.
This dependence
on the separation of the elements of $c$, as measured by the parameter
$k_0$, is qualitatively
related to 
the minimax rate of estimating finite mixture models
under partial separation assumptions~\citep{heinrich2018,ho2016convergence,ho2019singularity,wu2020,wei2023minimum,ohn2023,hundrieser2025}.

We next state a final refinement of the minimax estimation
rate of $c$. Building upon Refs.~\citep{hundrieser2025,manole2022refined},
we will now show that, not only can the entire vector $c$ be estimated at faster
rates when the coordinates of $c$ are partially separated, but some of the individual coordinates of $c$
enjoy even faster rates than those shown in Proposition~\ref{prop:ub_blind_cluster}. We will prove this by showing that Proposition~\ref{prop:ub_blind_cluster}
continues to hold when $W$ is replaced by a stronger loss function, which sharply captures
the heterogeneity in parameter estimation across the vector~$c$. 
To elaborate, let $c^\star\in \Delta_{k,k_0}$ be given, and assume that its entries are listed in decreasing order.
Recall that the $k$ entries of $c^\star$ are assumed to take on $k_0$ distinct values, 
which we denote by $v_1 > \dots > v_{k_0}$. Let $1 \leq a_j \leq k$ denote
the smallest index $i\in \{1,\dots,k\}$ such
that $c_i^\star = v_j$, and let $r_j$ denote the number of entries in $c$ which are equal to $v_j$:
$r_j = a_{j+1} - a_j$. Furthermore, for all $i=1,\dots,k$, let $j_i \in \{1,\dots,k_0\}$
denote the unique index such that $a_{j_i} \leq i < a_{j_{i+1}}$.

Given elements $c,c'\in \Delta_k$, whose entries we again assume are in decreasing order, we write
\begin{equation}
\calD_{c^\star}(c,c') = \sum_{j=1}^{k_0} \big\|c_{a_j:(a_{j+1}-1)}-c_{a_j:(a_{j+1}-1)}'\big\|^{r_j}_2,
\end{equation}
where
$c_{a_j:(a_{j+1}-1)} \in \bbR^{a_{j+1}-a_j}$ is the vector with coordinates $c_{a_j},\dots,c_{a_{j+1}-1}$.
By abuse of notation, when $c,c'$ are not ordered, we still write $\calD_{c^\star}(c,c')$ as a shorthand for
$\calD_{c^\star}(\mathrm{ord}(c),\mathrm{ord}(c'))$, where $\mathrm{ord}(c)$
is the vector consisting of the same coordinates as $c$, in decreasing order.
Before interpreting this divergence further, let us state our final result.
\begin{proposition}
\label{prop:ub_blind_local}
Assume the same conditions as Proposition~\ref{prop:ub_blind_cluster}. 
Then, under model~\eqref{eq:poisson_model_for_expo},
for any $1 \leq k_0 \leq k$ and $\delta > 0$, 
there exist constants $C,\epsilon > 0$ depending on $k,\gamma,\delta$ such that  
$$ \sup_{c^\star \in \Delta_{k,k_0}(\delta)} \sup_{\substack{c \in \Delta_k \\ W(c,c^\star)\leq \epsilon}} \bbE_c\Big[ \calD_{c^\star}(\hat c^\mathrm{mom},c)\Big] \leq C \sqrt{\frac{d^{k-1}}{n^{k}}}.$$
\end{proposition}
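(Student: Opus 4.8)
The estimator $\hat c^{\mathrm{mom}}$ is a deterministic function of the moment estimates $\hat m_1,\dots,\hat m_k$, obtained by composing Newton's identities~\eqref{eq:fitted_newton_identity} with a root-finding step and the projection $\Re$ in~\eqref{eq:mom_estimator}. The plan is to control the fluctuations of $(\hat m_p)_{p=1}^k$, push them through these two maps, and verify that the loss $\calD_{c^\star}$ is exactly the one that linearizes the error amplification of the root-finding step when the roots of $f_c$ are clustered as prescribed by $c^\star$. Set $R = \max_{1\le p \le k}|\hat m_p - m_p(c)|$ and $\varepsilon_n = (d^{k-1}/n^k)^{1/2}$; note $\varepsilon_n \le 1$ since $n \ge d^{1-1/k}$. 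Each $\hat m_p$ in~\eqref{eq:unbiased_moment_estimate_p} is unbiased for $m_p(c)$ under model~\eqref{eq:poisson_model_for_expo}, is a degree-$\le k$ polynomial in the independent Poisson counts $Y_j$, and is bounded deterministically by $C_k d^p$ because $0 \le T_{j,\ell} \le 1$. The variance bounds established in the proof of Proposition~\ref{prop:ub_blind} give $\bbE_c(\hat m_p - m_p(c))^2 \le C_k d^{p-1}/n^p$, and since $\hat m_p$ is a bounded polynomial of the $Y_j$ those computations extend verbatim to all fixed orders:
\[
\bbE_c\big|\hat m_p - m_p(c)\big|^q \le C_{k,q}\,(d^{p-1}/n^p)^{q/2},\qquad 1 \le p \le k,\ q \ge 1.
\]
As $n \le d$ under condition~\ref{assm:sample_size}, $(d^{p-1}/n^p)^{1/2} \le \varepsilon_n$ for every $p \le k$, hence $\bbE_c[R^q] \le C_{k,q}\varepsilon_n^q$.

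Over the compact simplex $\Delta_k$ each $e_\ell$ is a bounded polynomial of $(m_1,\dots,m_\ell)$, and so is the recursion~\eqref{eq:fitted_newton_identity}; a telescoping induction gives $\max_\ell|\hat e_\ell - e_\ell(c)| \le C_k R$ whenever $R \le 1$, and $\max_\ell|\hat e_\ell| \le (k(1+R))^k$ always. Now localize: for $c^\star \in \Delta_{k,k_0}(\delta)$ and $c$ with $W(c,c^\star) \le \epsilon$, the sorted entries of $c$ split into $k_0$ blocks of sizes $r_1,\dots,r_{k_0}$, with inter-block gaps $\ge \delta - 2\epsilon$ and intra-block spread $\le 2\epsilon$; choosing $\epsilon=\epsilon(k,\delta)$ small enough places this configuration within the scope of Lemma~\ref{lem:refined_stability_moments}. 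That lemma asserts that when $\eta := C_k R$ is below a threshold $\eta_0=\eta_0(k,\delta)$, the $k$ roots $\tilde c$ of $\hat f$ can be matched block-by-block with those of $f_c$ so that $\|\tilde c_{a_j:(a_{j+1}-1)} - c_{a_j:(a_{j+1}-1)}\|_2^{\,r_j} \le C_{k,\delta}\,\eta$ for every $j$ --- intuitively, a cluster of $r_j$ roots behaves like a root of multiplicity $r_j$, which splits under a coefficient perturbation of size $\eta$ into roots within $O(\eta^{1/r_j})$ of one another, so the $r_j$-th power is once more of order $\eta$. Since $\Re$ only contracts distances to the real vector $c$, summing over $j$ yields $\calD_{c^\star}(\hat c^{\mathrm{mom}},c) \le C_{k,\delta}\,\eta \le C_{k,\delta}\,R$ on the event $\{R \le r_0\}$, where $r_0 := \min\{1,\eta_0/C_k\}$ depends only on $k,\delta$.

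On the complementary event $\{R > r_0\}$ we use the crude deterministic bound: $\max_\ell|\hat e_\ell| \le (k(1+R))^k$ forces the roots of $\hat f$ into a ball of radius $C_k(1+R)^k$, so $\calD_{c^\star}(\hat c^{\mathrm{mom}},c) \le C_k(1+R)^{k^2}$. Splitting $\{R > r_0\}$ into dyadic shells $\{2^j r_0 < R \le 2^{j+1}r_0\}_{j\ge 0}$ and applying the Markov bound $\bbP_c(R > s) \le C_{k,q}\varepsilon_n^q s^{-q}$ with a fixed integer $q > k^2$ gives a convergent geometric series, so $\bbE_c[\calD_{c^\star}(\hat c^{\mathrm{mom}},c)\mathbf 1\{R > r_0\}] \le C_{k,q,\delta}\,\varepsilon_n^q \le C_{k,\delta}\,\varepsilon_n$, the last step using $\varepsilon_n \le 1$. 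Combining with the previous paragraph and $\bbE_c[R] \le C_k\varepsilon_n$, we conclude $\bbE_c[\calD_{c^\star}(\hat c^{\mathrm{mom}},c)] \le C_{k,\gamma,\delta}\,\varepsilon_n$, uniformly over $c^\star \in \Delta_{k,k_0}(\delta)$ and $c$ with $W(c,c^\star) \le \epsilon$; the transfer from model~\eqref{eq:poisson_model_for_expo} to the multinomial sampling model proceeds as in Appendix~\ref{app:mom_corr}.

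The main obstacle is Lemma~\ref{lem:refined_stability_moments}: the cluster-adapted zero-perturbation estimate with the sharp per-cluster exponent $1/r_j$, valid uniformly in the small coefficient perturbation $\eta$ and robust to the fact that the entries of $c$ within a block are only approximately coincident (spread $\le 2\epsilon$ rather than exactly equal). Granting that lemma --- which we establish building on \citep{hundrieser2025,manole2022refined} --- the remaining pieces are routine: extending the variance computations of Proposition~\ref{prop:ub_blind} to higher moments, and the truncation on $\{R > r_0\}$. The one point requiring genuine care is that $\epsilon$ must be chosen uniformly small relative to $\delta$, $k$, and the domain of validity of Lemma~\ref{lem:refined_stability_moments}, so that no two blocks of $c$ merge and the lemma's hypotheses hold throughout the supremum.
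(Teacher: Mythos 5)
The strategy---control $R := \max_{1\le p\le k}|\hat m_p - m_p(c)|$, apply a cluster-adapted root-stability bound on the event $\{R \le r_0\}$, and truncate on the rare large-error event $\{R > r_0\}$ via a dyadic decomposition---is a reasonable route, and the cluster-splitting intuition you describe (a cluster of $r_j$ roots disperses by $O(\eta^{1/r_j})$ under an $O(\eta)$ coefficient perturbation) is exactly the content of Lemma~\ref{lem:refined_stability_moments}. However, the proposal rests on the higher-moment bound $\bbE_c[R^q] \le C_{k,q}\,\varepsilon_n^q$ for a fixed $q > k^2$, which you call ``routine'' and justify by the assertion that $0 \le T_{j,\ell} \le 1$ so that $\hat m_p$ is deterministically bounded by $C_k d^p$. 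Under the Poisson model~\eqref{eq:poisson_model_for_expo} this is false: the counts $Y_j$ are unbounded (and $\sum_j Y_j \sim \mathrm{Poi}(n)$ can exceed $n$), so the falling-factorial statistics $T_{j,\ell} = Y_j!/(n^\ell(Y_j-\ell)!)$ can exceed one, and $\hat m_p$ has only sub-Weibull tails, not a deterministic bound. Extending the variance computation of Lemma~\ref{lem:xip_rate} to $q$-th moments of the U-statistics $W_{\bh}$ therefore calls for a Rosenthal- or Orlicz-norm argument for sums of products of sub-exponential (Poisson) variables, in the spirit of Lemmas~\ref{lem:sub_weibull_conc}--\ref{lem:holder_orlicz}; this is a genuine additional step and is not established anywhere in the paper.

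The paper's own proof takes a shorter route that avoids the higher-moment bound by design: it works with the \emph{capped Voronoi loss} $\widebar\calD_{c^\star}$ from Appendix~\ref{app:elementary_symmetric_polynomials}, which is bounded by $1$ by construction and equals $1$ whenever the Voronoi cell cardinalities mismatch. It uses the contraction $\widebar\calD_{c^\star}(\hat c,c) \le \widebar\calD_{c^\star}(\tilde c,c)$ under $\Re$ (Lemma~48 of~\cite{hundrieser2025}), then applies Lemma~\ref{lem:refined_stability_moments}, whose conclusion $\widebar\calD_{c^\star}(\tilde c,c) \le C\|\hat m - m(c)\|$ is global---there is no threshold condition, precisely because of the cap---and finally takes expectations using only the $L^1$ rate $\bbE\|\hat m - m(c)\| \lesssim \varepsilon_n$ from Lemma~\ref{lem:unbiased_moment_rate}. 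Your reading of Lemma~\ref{lem:refined_stability_moments} as a local, ``$\eta < \eta_0$''-gated matching statement is therefore slightly off: that threshold is an artifact of working with the uncapped $\calD_{c^\star}$ directly, and it disappears once you pass through $\widebar\calD_{c^\star}$. The remaining subtle point in either route is relating $\calD_{c^\star}(\hat c,c)$ to $\widebar\calD_{c^\star}(\hat c,c)$ on the complementary event; your truncation addresses it explicitly but at the price of the unproven higher moments. Also, your final remark about transferring to the multinomial sampling model is superfluous here: Proposition~\ref{prop:ub_blind_local} is stated under model~\eqref{eq:poisson_model_for_expo} directly, and the multinomial transfer is done separately in Corollary~\ref{cor:pois_to_mult}.
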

An implication of Propositions~\ref{prop:ub_blind_cluster}--\ref{prop:ub_blind_local}
is the following. Given $c^\star\in \Delta_{k,k_0}$ and a parameter $c$ in a small neighborhood
of $c^\star$, every coordinate of the moment estimator satisfies:
\begin{equation}\bbE |\hat c_i^\mathrm{mom} - c_i| \lesssim \left(\frac{d^{k-1}}{n^k}\right)^{\frac 1 {2r_{j_i}}},
\quad i=1,\dots,k.
\end{equation}
For example, when $k_0=1$, so that no separation assumptions are imposed, $r_{j_i}$ must
always be equal to $k$, and we recover the rate of convergence stated in Proposition~\ref{prop:ub_blind}. 
When $k_0=k$, so that all coordinates of $c$ are well-separated, we must
always have $r_{j_i} = 1$ and we recover the rate of convergence
stated in Proposition~\ref{prop:ub_blind}. When $1 < k_0 < 1$, 
the values of $r_{j_i}$ are always bounded from above by $k-k_0+1$, 
thus Proposition~\ref{prop:ub_blind_local} 
is never worse than Proposition~\ref{prop:ub_blind_cluster}, 
but generally provides different upper bounds for 
estimating different elements $c_i$, depending on their local separation structure. 

This refined convergence rate is particularly well-suited to our problem, since 
we typically expect the fidelity parameter to be appreciably larger than the remaining parameters. 
If we assume  that $c_1 = \|c\|_\infty$ denotes the fidelity, 
and if we make the reasonable assumption that $c_2,\dots,c_k < c_1 - \delta$ for a fixed $k$
and fixed $\delta > 0$, then the above result implies 
$r_1=1$, $j_1 = 1$, and the fidelity can then be estimated at the following fast rate:
$$\bbE |\hat c_1^\mathrm{mom} - c_1| \lesssim \sqrt{\frac{d^{k-1}}{n^k}}.$$
This is the basis for Proposition~\ref{prop:main_fidelity}
of the main text.
 It is worth emphasizing that these
 various local convergence rates
 are achieved  {\it adaptively}: The moment estimator does 
 not rely on any prior information about the possible separation among the atoms of $c$, 
 and achieves these multiscale convergence rates automatically.
\\

\begin{figure*}[!tb]
    \centering
\includegraphics[width=\linewidth]{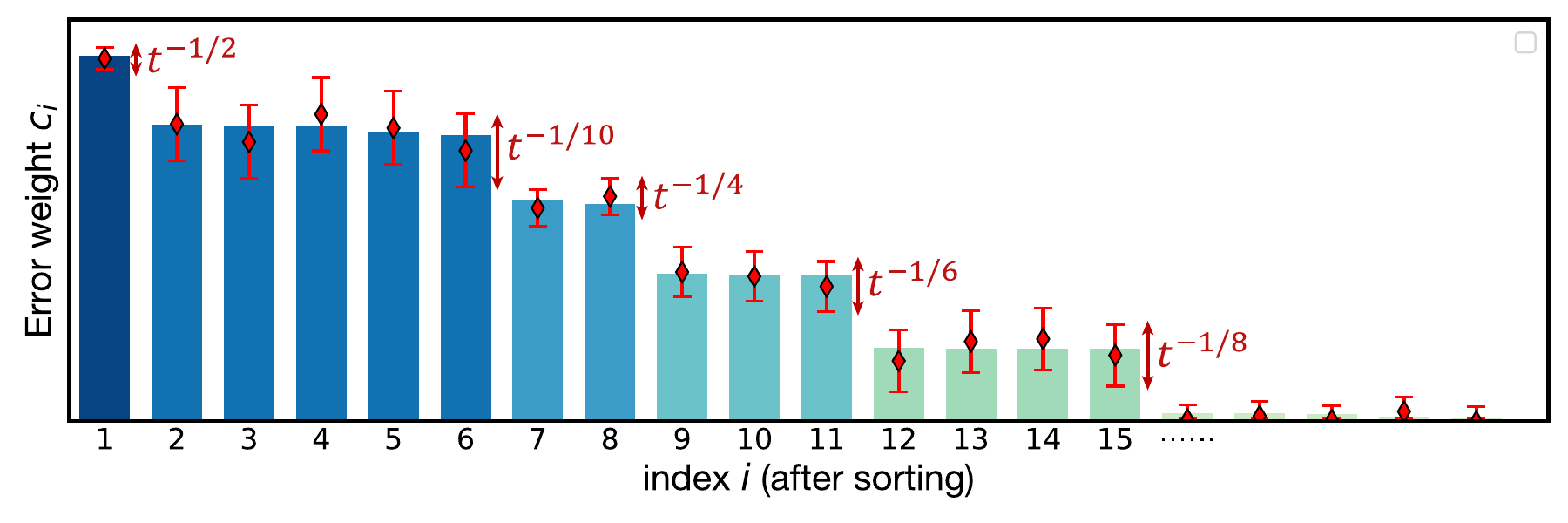}
\caption{
Illustration of the local minimax upper bound
stated in Proposition~\ref{prop:ub_blind_local}, 
with $t = n^k /  d^{k-1}$.
Different entries of the vector $c$ can be estimated at different rates, depending on their local separation
structure, thus significantly improving the rate $t^{-1/2k}$ of Proposition~\ref{prop:ub_blind}, 
which is only sharp when all entries of $c_i$ are near $1/k$.}
\label{fig:scaling_analysis_B}
\end{figure*}

\noindent{\bf Related Work.} 
Our model in Regime C
is closely-related to
to a class of
hierarchical models for text analysis
known as {\it topic models}~\citep{deerwester1990indexing,hofmann1999probabilistic}. 
The typical setup of a topic model is to observe 
a collection of $M$ random variables of the form:
\begin{align*}
Y^{(1)} &\sim \mathrm{Mult}(n;\Pi^\top c^{(1)}) \\ 
Y^{(2)} &\sim \mathrm{Mult}(n;\Pi^\top c^{(2)}) \\ 
&~\vdots \\ 
Y^{(M)} &\sim \mathrm{Mult}(n;\Pi^\top c^{(M)}),
\end{align*}
where $\Pi\in \bbR^{k\times d}$ is an unknown 
stochastic matrix, which is common
across observations, and $c^{(i)} \in \Delta_k$
are unknown mixing weights.
The goal
is to estimate $\Pi$, or the matrix  $C\in \bbR^{k\times M}$ 
comprised of columns $c^{(i)}$, for $i=1,\dots,M$. 
As stated, this problem is not statistically
identifiable without further modeling assumptions. 
A sufficient condition for identifiability
is the so-called   {\it anchor word} 
assumption~\citep{donoho2003,arora2012}
on the matrix $\Pi$, which
forms the basis for a wide array of frequentist
methods for topic modeling~\citep{anandkumar2012spectral,arora2013,arora2012computing,arora2012,bing2020c,bing2022,ke2024using,tran2023}.
A second approach consists
of treating the matrices $\Pi$ and $C$ as latent
variables, endowed with some prior distribution,
which   makes  the model identifiable
under very general conditions~\citep{do2025moment}.
The most widely-used method in this second
category is  {\it latent Dirichlet allocation} (LDA;~\cite{blei2003}), 
which consists of placing  Dirichlet
prior distributions on the columns of 
$\Pi^\top$ and~$C$, and performing
  inference for these objects
via their posterior distribution, 
typically approximated using variational inference.  

When $M=1$, our model is   closely-related
to this second line of literature, since we assume that the rows of the matrix $\Pi$ are drawn from the flat Dirichlet distribution. 
Despite the wide practical
adoption of the LDA model, 
we are only aware of a few references 
that analyze the sample complexity
of parameter estimation in this model, most of which
focus on estimating
$\Pi$ rather than~$C$~\citep{anandkumar2012spectral,nguyen2015posterior,wang2019}.
The very recent work of~\citep{do2025moment}
studies posterior contraction
rates for Bayesian estimators
of $C$, but does so
under  
pointwise and fixed-dimensional asymptotics which are incomparable to our setting.

\begin{figure}[t]
    \centering 
\includegraphics[width=0.48\columnwidth] {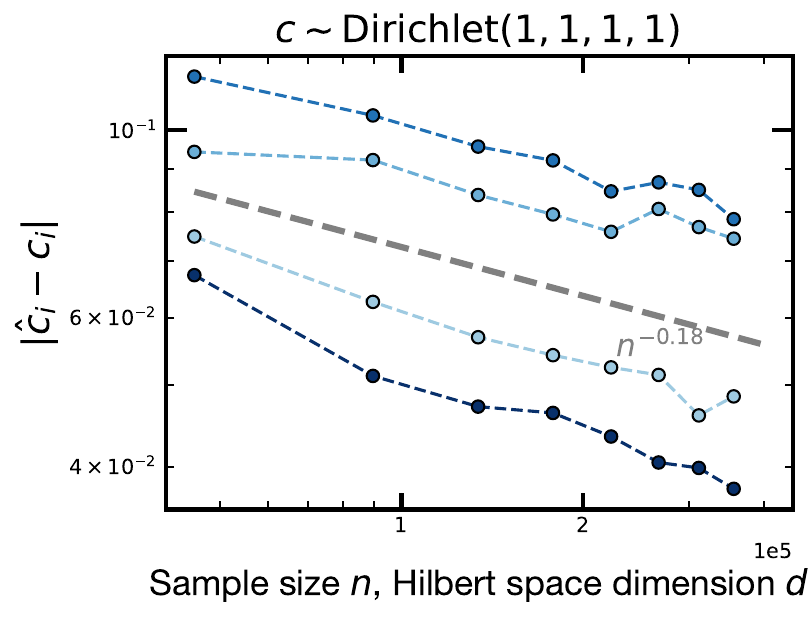}   
 \includegraphics[width=0.48\columnwidth]{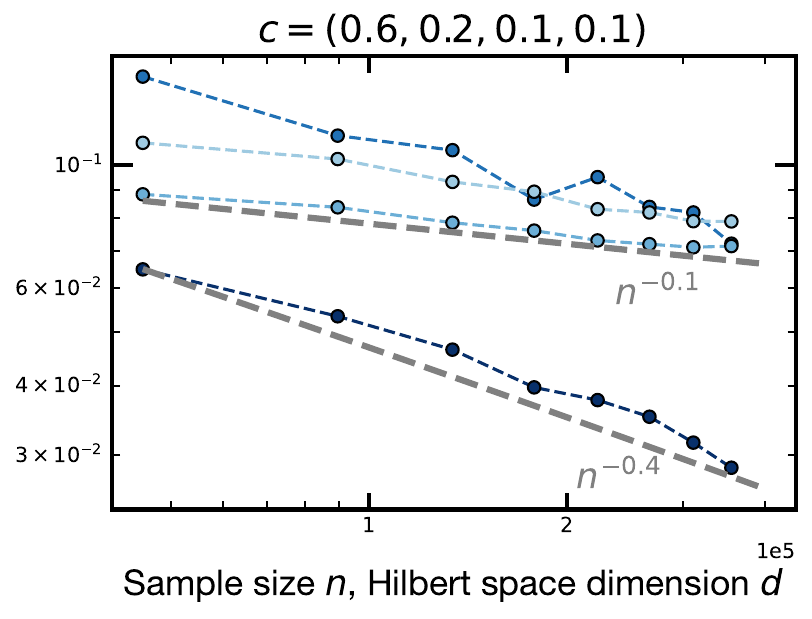}  
\caption{Sample complexity of 
the moment estimator in Regime C, with $k=4$. 
We arrange the elements of $\hat c$
and $c$ in decreasing order, and plot
the average errors $\bbE|\hat c_i - c_i|$
for $i=1,\dots,4$, with $i=1$
denoted by the darkest color, and $i=4$
by the lightest. Across 30
sample sizes $n=d$, we generate
500 replications from each model,
under multinomial sampling.
}
\label{fig:scaling_analysis_C}
\end{figure}

\subsubsection{Numerical Comparison in Regime C}

Although
the moment estimator is currently  our only practical proposal for addressing Regime C,
we close this Appendix by briefly
reporting a simulation study to 
illustrate its empirical sample
complexity, in Figure~\ref{fig:scaling_analysis_C}.
We consider two models: 
In the first, 
we draw $c$ from a flat Dirichlet
law on $\Delta_4$ at each
replication, thus placing $c$
close to the uniform
distribution, which is the least-favourable
parameter from the standpoint of sample
complexity. In the second 
model, we instead 
 fix $c = (0.6,0.2,0.1,0.1)$, 
 in which case the fidelity 0.6
 is appreciably larger than the remaining entries. 
In the first case, we nearly observe
the worst-case rate $n^{-1/8} = 0.125$ 
predicted by Proposition~\ref{prop:ub_blind}, 
across all entries of $\hat c$.  
In the latter case, we instead
see significant variation between
the convergence rate of the various
parameters. In particular, 
the estimated fidelity $\hat c_1$ 
nearly achieves the $n^{-1/2}$ rate
predicted by Proposition~\ref{prop:ub_blind_local}.

\section{Equivalent Statistical Models}
\label{app:model}
In this Appendix, 
we justify  the model approximations which we made in the exposition of the previous section:
We show that, when the dimension
$d$ is sufficiently large, 
our model is statistically equivalent
to a simpler model in which
the sample sizes $n$ and $m$ are Poissonized, and the entries of $\Pi$ are taken to be independent exponential
random variables without normalization. 
We will prove that these various
simplifications do not appreciably
alter the sample complexity of estimating $c$.
We will then adopt
the reduced model for much of the remainder of this manuscript,
without loss of generality.  

\subsection{Statistical Models}
Recall that the 
unsorted and sorted minimax estimation risks are
defined by
\begin{equation} 
\label{eq:baseline_minimax_risks}
\begin{aligned}
\calM(n,d,k,m) &= \inf_{\hat c} 
\sup_{c \in \Delta_k} 
\bbE_c \|\hat c(Z,W) - c\|, \\
\calM_<(n,d,k,m) &= \inf_{\hat c} 
\sup_{c \in \Delta_k} 
\bbE_c W\big(\hat c(Z,W),c\big),
\end{aligned}
\end{equation}
where the 
the infimum is taken over all Borel-measurable functions 
$\hat c:\bbR^n\times \bbR^{k\times m}\to \bbR^k$,
and the
expectation $\bbE_c$ is 
taken over the {\it marginal} distribution
of the random variables $Z=(Z_\ell)$ and $W=(W_{\ell i})$,
that is, over the probability law
$$(Z,W)\sim \bbE_\Pi\left[
\bigg( \sum_{i=1}^k c_i \Pi_{i\cdot}\bigg)^{\otimes n}
\otimes \bigg( \bigotimes_{i=1}^k 
\Pi_{i\cdot}\bigg)^{\otimes m}
\right],$$
where the expectation is 
to be interpreted as marginalization
 over
the law of $\Pi \in \bbR^{k\times d}$, assuming that its
rows are independently
distributed according to the 
flat Dirichlet law $\calD_d$. By abuse of notation, we identify the discrete density
$\Pi_{i\cdot}$ with the probability distribution that it induces. 

A set of sufficient statistics
for the  observations $Z,W$
is given by the following
histograms indexed by $j=1,\dots,d$, which we decorate with superscripts in this section only:
\begin{alignat*}{2}
\widetilde Y_j &= \sum_{\ell=1}^n I(Z_\ell=j),
\quad 
&&(\widetilde Y_1,\dots,\widetilde Y_d) \,\big|\, \Pi
    \sim \mathrm{Mult}(n;\Pi^\top c),  \\ 
\widetilde V_{ij} &= \sum_{\ell=1}^m I(W_{\ell i}=j),\quad 
&&(\widetilde V_{i1},\dots, \widetilde V_{id})\,\big|\,\Pi 
 \sim \mathrm{Mult}(m;\Pi_{i\cdot}),
 \quad i=1,\dots,k.
\end{alignat*}
The joint distribution 
of the random variables
$\widetilde Y=( \widetilde Y_j:1 \leq j \leq d)$
and $\widetilde V=(\widetilde V_{ij}:1 \leq i \leq k,1\leq j \leq d)$
is given by
\begin{equation} 
\label{eq:multinomial_model}
\widetilde \bQ_c = 
\bbE_\Pi \left[\mathrm{Mult}(n;\Pi^\top c)
\otimes \bigotimes_{i=1}^k 
\mathrm{Mult}(m;\Pi_{i\cdot})
\right].
\end{equation}
We will refer to the observation model~\eqref{eq:multinomial_model}
as the {\bf multinomial model}.
It follows by sufficiency of the histograms
$(\widetilde Y,\widetilde V)$ 
that the minimax risks in equation~\eqref{eq:baseline_minimax_risks}
can equivalently
be written as
\begin{equation} 
\label{eq:histo_minimax_risks}
\begin{aligned}
\calM(n,d,k,m) &= \inf_{\hat c} 
\sup_{c \in \Delta_k} 
\bbE_c \|\hat c(\widetilde Y,\widetilde V) - c\|, \\
\calM_<(n,d,k,m) &= \inf_{\hat c} 
\sup_{c \in \Delta_k} 
\bbE_c W\big(\hat c(\widetilde Y,\widetilde V),c\big),
\end{aligned}
\end{equation}
where the infimum is over
all Borel-measurable maps $\hat c$ from $\bbR^d \times \bbR^{k\times d}$ into $\bbR^k$. 


Let us now introduce two alternative 
sampling models which have comparable
minimax risks to the multinomial
model, which we already alluded to in
Section~\ref{sec:estimators}. 
The first of these models
will be referred to
as the {\bf normalized Poisson model}, 
under which the practitioner observes 
 random vectors $(\widebar Y,\widebar V)$ drawn from
the following joint distribution:
\begin{equation} 
\label{eq:normalized_poisson_model}
\widebar \bQ_c = 
\bbE_\Pi \left[\bigotimes_{j=1}^d \bigg(\mathrm{Poi}(n\Pi_{\cdot j}^\top c)
\otimes \bigotimes_{i=1}^k 
\mathrm{Poi}(m \pi_{ij})\bigg)
\right].
\end{equation}
Model~\eqref{eq:normalized_poisson_model}
can be viewed as a variant of model~\eqref{eq:multinomial_model}
in which the sample size $n$
is replaced by a Poisson random variable
$N\sim \mathrm{Poi}(n)$,
drawn independently
of all other random variables. 
The unordered and ordered
minimax risks under the normalized Poisson
model
are given by:
\begin{equation} 
\label{eq:reduced_minimax_risks}
\begin{aligned}
\calR'(n,d,k,m) &= \inf_{\hat c} 
\sup_{c \in \Delta_k} 
\bbE_c \|\hat c(\widebar Y,\widebar V) - c\|, \\
\calR_<'(n,d,k,m) &= \inf_{\hat c} 
\sup_{c \in \Delta_k} 
\bbE_c W\big(\hat c(\widebar Y,\widebar V),c\big),
\end{aligned}
\end{equation}
where the expectation is taken over
$(\widebar Y,\widebar V) \sim \widebar \bQ_c$. 

Our final model is the {\bf unnormalized
Poisson model}, in which the practitioner
observes random variables $(Y,V)$
drawn from the product measure $\bQ_c^{\otimes d}$, 
where
\begin{equation}
\label{eq:unnormalized_poisson_model}
\bQ_c  = \bbE_\varpi \left[\mathrm{Poi}(n\langle \varpi,c\rangle) \otimes \bigotimes_{i=1}^k \mathrm{Poi}(m\varpi_i) \right].
\end{equation}
Here, the expectation is taken over
a random variable $\varpi \sim \calE_d^k$
consisting of independent $\mathrm{Exp}(d)$
entries. The unnormalized Poisson model
can be viewed as a proxy of the normalized Poisson
model, in which the flat Dirichlet 
law of the rows of $\Pi$ are approximated
by the law of $\varpi$. 
The minimax risks under this model
are defined by 
\begin{equation} 
\label{eq:reduced_minimax_risks}
\begin{aligned}
\calR(n,d,k,m) &= \inf_{\hat c} 
\sup_{c \in \Delta_k} 
\bbE_c \|\hat c(Y,V) - c\|, \\
\calR_<(n,d,k,m) &= \inf_{\hat c} 
\sup_{c \in \Delta_k} 
\bbE_c W\big(\hat c(Y,V),c\big),
\end{aligned}
\end{equation}
where the expectations are now
taken over the law of a pair $(Y,V)$
drawn from $\bQ_c^{\otimes d}$.

\subsection{Equivalence of Minimax Risks}
Let us begin by showing that the 
minimax risks for the multinomial
and unnormalized Poisson models
are comparable. 
\begin{lemma}
\label{lem:poissonization}
There exists a universal constant $C > 0$ such that for all $n,m,d,k\geq 1$, 
\begin{multline*}
    \calM(2n,d,k,2m)\cdot \big(1- C e^{-n/C} - C\calI k e^{-m/C}\big) \\ 
\leq \calR'(n,d,k,m)\leq  \calM(n/2,d,k,m/2) +  C e^{-n/C} + C\calI k e^{-m/C},
\end{multline*}
where $\calI = I(m=0)$. An analogous assertions holds for the sorted minimax risks.
\end{lemma}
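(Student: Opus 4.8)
The plan is a by-now-standard Poissonization argument: I will realize each of the two models as a high-probability thinning of the other, using the elementary fact that a multinomial (equivalently, i.i.d.) sample of size $N$ can be downsampled to one of any size $n'\le N$ by re-randomizing the order of the observed draws with auxiliary randomness and keeping the first $n'$ of them. Since $\widetilde\bQ_c$ and $\widebar\bQ_c$ marginalize over the \emph{same} latent matrix $\Pi$, the Dirichlet structure of $\Pi$ plays no role here and one may argue conditionally on $\Pi$; I also use that the losses $\hat c\mapsto\|\hat c-c\|$ and $\hat c\mapsto W(\hat c,c)$ are bounded by $2$ on $\Delta_k$, so that the choice of estimator on a low-probability event costs at most $O(1)$ times the probability of that event.

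For the upper bound $\calR'(n,d,k,m)\le\calM(n/2,d,k,m/2)+Ce^{-n/C}+C\calI ke^{-m/C}$, I would fix an estimator $\hat c^\star$ that is $\varepsilon$-near-minimax for the multinomial model at sample sizes $(n/2,m/2)$ and build from it an estimator for $\widebar\bQ_c$. Given $(\widebar Y,\widebar V)\sim\widebar\bQ_c$, set $N=\|\widebar Y\|_1$ and $M_i=\|\widebar V_{i\cdot}\|_1$; because the rows of $\Pi$ are probability vectors, $N\sim\mathrm{Poi}(n)$ and, when $m\ge1$, $M_i\sim\mathrm{Poi}(m)$, conditionally on $\Pi$ and hence unconditionally. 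On $G=\{N\ge n/2\}\cap\bigcap_{i=1}^k\{M_i\ge m/2\}$ (the intersection over $i$ empty when $m=0$), downsample $\widebar Y$ to a $\mathrm{Mult}(n/2;\Pi^\top c)$ histogram and each $\widebar V_{i\cdot}$ to a $\mathrm{Mult}(m/2;\Pi_{i\cdot})$ histogram by the re-randomization trick, and on $G^c$ emit a fixed symbol. Since a uniform reordering of an i.i.d. sample is again i.i.d., the downsampled data has \emph{exactly} the law of the multinomial model at $(n/2,m/2)$ and is independent of $G$, so the induced Markov kernel $K'$ satisfies $\mathrm{TV}\big(K'_\#\widebar\bQ_c,\ \text{multinomial model at }(n/2,m/2)\big)\le\bbP(G^c)$, uniformly in $c$. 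A Chernoff bound for Poisson lower tails plus a union bound over the $k$ side channels gives $\bbP(G^c)\le e^{-n/C}+\calI ke^{-m/C}$. Composing $\hat c^\star$ with $K'$ yields an estimator for $\widebar\bQ_c$ whose worst-case risk exceeds $\sup_c\bbE\|\hat c^\star-c\|$ (computed at $(n/2,m/2)$) by at most $2\bbP(G^c)$; letting $\varepsilon\to0$ gives the claim, and the identical argument with $W$ replacing $\|\cdot\|$ handles $\calR'_<$.

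For the matching lower bound I would run the construction in reverse. Fix an $\varepsilon$-near-minimax estimator $\hat c^\dagger$ for the normalized Poisson model at rates $(n,m)$ and build from it a multinomial-$(2n,2m)$ estimator: given $(\widetilde Y,\widetilde V)$ from the multinomial model at sample sizes $(2n,2m)$, draw fresh $N^\ast\sim\mathrm{Poi}(n)$ and $M_i^\ast\sim\mathrm{Poi}(m)$; on $G^\ast=\{N^\ast\le2n\}\cap\bigcap_i\{M_i^\ast\le2m\}$, downsample $\widetilde Y$ to $\mathrm{Mult}(N^\ast;\Pi^\top c)$ and each $\widetilde V_{i\cdot}$ to $\mathrm{Mult}(M_i^\ast;\Pi_{i\cdot})$, which on $G^\ast$ reproduces exactly the law of $\widebar\bQ_c$ (``draw Poisson totals, then multinomial given the totals''); on $(G^\ast)^c$ emit a fixed symbol. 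The induced kernel $K$ satisfies $\mathrm{TV}(K_\#\widetilde\bQ_c,\widebar\bQ_c)\le\bbP((G^\ast)^c)\le e^{-n/C}+\calI ke^{-m/C}$ by the Poisson upper-tail Chernoff bound and a union bound, so composing $\hat c^\dagger$ with $K$ gives $\calM(2n,d,k,2m)\le\calR'(n,d,k,m)+2\bbP((G^\ast)^c)$. Since $\calM(2n,d,k,2m)\le2$, rearranging this additive inequality yields the stated multiplicative form $\calM(2n,d,k,2m)(1-Ce^{-n/C}-C\calI ke^{-m/C})\le\calR'(n,d,k,m)$ after enlarging $C$ by an absolute factor (and the bound is trivially true whenever the bracket is non-positive); the sorted risks are handled identically.

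The only genuinely delicate point is the downsampling step: one must verify that the retained sub-sample is distributed \emph{exactly} as the smaller multinomial sample and, crucially, is independent of the success event $G$ (resp.\ $G^\ast$), so that conditioning on success introduces no bias. Routing the reduction through a Markov kernel and bounding total variation by $\bbP(G^c)$, rather than conditioning on $G$ directly, is what makes this step clean; the remaining ingredients — Chernoff tails for $\mathrm{Poi}(n)$ and $\mathrm{Poi}(m)$, the union bound responsible for the factor $k$ on the side channels, and boundedness of the losses — are routine. The factor-of-two slack in sample sizes between the two sides of the inequality is exactly what pins the relevant Poisson totals inside the required range with probability $1-e^{-\Omega(n)}-ke^{-\Omega(m)}$.
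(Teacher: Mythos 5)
Your proof is correct and is a more carefully engineered instance of the same Poissonization paradigm the paper uses. The paper's own argument proceeds by conditioning on the Poisson totals $(N,M)$, decomposing the risk over their values, and invoking monotonicity of $\calM(\cdot,d,k,\cdot)$ in the sample sizes to bound each term by $\calM(n/2,d,k,m/2)$; you instead build explicit Markov kernels (thinning reductions) between the two models and bound everything through total variation. Your route has two genuine advantages. First, the paper's decomposition step implicitly requires a single estimator $\hat c_\epsilon$ that is simultaneously near-optimal at \emph{every} pair $(n',m')$ appearing in the sum, and moreover treats the $k$ side-information histograms as sharing a single Poisson total $M$, when in $\widebar\bQ_c$ each row $\widebar V_{i\cdot}$ has its own independent total $M_i\sim\mathrm{Poi}(m)$; your thinning construction needs only one estimator at the fixed size $(n/2,m/2)$, and your union bound over the $k$ side channels is precisely where the factor $k$ in the tail term comes from, which the paper's intermediate display drops and then reinstates without comment. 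Second, routing the comparison through a kernel and paying $\bbP(G^c)$ in total variation is cleaner than conditioning on the good event, since it makes the independence of the downsampled histogram from the success event a non-issue. One last point: you implicitly read $\calI=I(m>0)$, which is the right reading---when $m=0$ there is no side-channel failure event, so no $ke^{-m/C}$ term should appear. The paper writes $\calI=I(m=0)$, which would make that term equal $Ck$ rather than vanish, and is almost certainly a typo; both your proof and the paper's own proof body use the $I(m>0)$ interpretation.
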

The proof appears in Appendix~\ref{app:pf_lem_poissonization}, and is based
on a well-known  Poissonization  technique~\citep{canonne2022,jiao2015minimax}. 
As a result of Lemma~\ref{lem:poissonization}, 
we find that the minimax risks $\calM$ and $\calR'$ are comparable, up to additive error terms
which decay exponentially with $n$ and $m$. 
The following result further relates 
the normalized and unnormalized Poisson models, but this time at the level of their distributions
rather than their induced risks.
\begin{lemma}
\label{lem:kl_poisson_models}
There exists a universal constant $C > 0$ such that for all  $n,d,m,k \geq 1$, 
and $c\in \Delta_k$, 
$$\TV(\widebar \bQ_c, \bQ_c^{\otimes d}) \leq C \left( \sqrt{\frac n d} + \sqrt{\frac{mk}{d}}\right).$$ 
\end{lemma}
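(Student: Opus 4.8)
The plan is to couple the two models on a common probability space so that their only difference is the Dirichlet normalizing constants, and then to control the resulting conditional discrepancy by Hellinger tensorization. First I would introduce i.i.d.\ $\mathrm{Exp}(d)$ weights $\{X_{ij}: 1\le i\le k,\ 1\le j\le d\}$ with row sums $S_i=\sum_{j=1}^d X_{ij}$, so that $\bbE[S_i]=1$ and $\mathrm{Var}(S_i)=1/d$. By the exponential representation of the flat Dirichlet law (condition~\ref{assm:pt}; cf.\ Lemma~\ref{lem:dirichlet}), the rows $\Pi_{i\cdot}:=X_{i\cdot}/S_i$ have precisely the distribution required in $\widebar\bQ_c$; I realize $\widebar\bQ_c$ by drawing, conditionally on $X$, independent Poisson counts with means $n\Pi_{\cdot j}^\top c=n\sum_i c_i X_{ij}/S_i$ and $m\pi_{ij}=mX_{ij}/S_i$. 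I realize $\bQ_c^{\otimes d}$ on the same space by taking the exponential weights $\varpi_i^{(j)}$ in its definition to be exactly the same $X_{ij}$, so that conditionally on $X$ we draw independent Poissons with means $n\sum_i c_i X_{ij}$ and $mX_{ij}$. Convexity of total variation then gives $\TV(\widebar\bQ_c,\bQ_c^{\otimes d})\le\bbE_X[\TV(\cdot\mid X)]$, where $\TV(\cdot\mid X)$ denotes the total variation distance between the two conditional laws given $X$.

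Next I would bound $\TV(\cdot\mid X)\le H(\cdot\mid X)$ and use the subadditivity of squared Hellinger distance over the $d(k+1)$ independent Poisson coordinates, together with $H^2(\mathrm{Poi}(a),\mathrm{Poi}(b))=2(1-e^{-(\sqrt a-\sqrt b)^2/2})\le(\sqrt a-\sqrt b)^2$. The reference-sample ($V$) coordinates contribute $\sum_{i,j} mX_{ij}(1-S_i^{-1/2})^2=\sum_i m(\sqrt{S_i}-1)^2$, using $\sum_j X_{ij}=S_i$. For the $Y$-coordinates, whose two rate vectors do not differ by a single global scalar, I would write $(\sqrt a-\sqrt b)^2\le(a-b)^2/b$ and apply Cauchy--Schwarz to the sum $\sum_i c_i X_{ij}(1-S_i^{-1})$, which reduces the sum over $j$ to $n\sum_i c_i(S_i-1)^2$. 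Taking $\bbE_X$ and using $\bbE(S_i-1)^2=1/d$ and $(\sqrt s-1)^2\le(s-1)^2$ gives $\bbE_X[H^2(\cdot\mid X)]\le n/d+mk/d$, so by Jensen's inequality $\TV(\widebar\bQ_c,\bQ_c^{\otimes d})\le\sqrt{n/d+mk/d}\le\sqrt{n/d}+\sqrt{mk/d}$, which is the claim (indeed with $C=1$).

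The main obstacle is forcing the normalizing constant to enter quadratically rather than linearly. A naive triangle-inequality estimate of $|a_j-b_j|$ and its expectation would bring in $\bbE|S_i-1|\asymp d^{-1/2}$ and yield only $\sqrt{n/\sqrt d}$, which is far too weak for the stated rate. Keeping the Hellinger distances exploits their local quadratic behavior, turning $|S_i-1|$ into $(S_i-1)^2$ with expectation of order $1/d$; for the $Y$-marginals, where no common scaling factor is available, the Cauchy--Schwarz step is precisely what lets this quadratic cancellation go through. The remaining ingredients --- the Poisson Hellinger identity, Hellinger tensorization, and the low-order moments of $S_i$ --- are entirely routine, and the $m=0$ case is covered automatically since the $V$-contribution then vanishes.
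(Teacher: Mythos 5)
Your argument is correct and takes a genuinely different route from the paper's. Both start from the same coupling — realize $\widebar\bQ_c$ and $\bQ_c^{\otimes d}$ on a common exponential/Gamma probability space, and use convexity of total variation to reduce to a conditional bound given the latent weights — but after that the two diverge. The paper passes through Pinsker and the Poisson KL formula $\KL(\mathrm{Poi}(\mu)\|\mathrm{Poi}(\nu))=\mu\log(\mu/\nu)+\nu-\mu$; because that expression is not globally bounded, the paper must first restrict to a high-probability event on which $|G_i-1|$ and $\pi_{ij}$ are controlled, paying an exponentially small tail term $e^{-\sqrt d / C}$. You instead bound $\TV\le H$ and use Hellinger tensorization together with the closed-form Poisson Hellinger identity $H^2(\mathrm{Poi}(a),\mathrm{Poi}(b))=2(1-e^{-(\sqrt a-\sqrt b)^2/2})\le(\sqrt a-\sqrt b)^2$, which is globally quadratic and therefore requires no truncation at all; this buys a cleaner derivation, a cleaner constant, and no residual tail term. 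The only place to be careful is the $Y$-marginal: when you write $(\sqrt a-\sqrt b)^2\le(a-b)^2/b$, the denominator $b$ must be taken to be the $\widebar\bQ_c$ rate $n\sum_i c_i X_{ij}/S_i$ (equivalently, use $(a-b)^2/(\sqrt a+\sqrt b)^2$ with the smaller of the two in the denominator when convenient), and the Cauchy--Schwarz weight must be $c_iX_{ij}/S_i$, so that the factor $S_i$ in $\sum_j X_{ij}=S_i$ exactly cancels the $1/S_i$ and the sum over $j$ collapses to $n\sum_i c_i(S_i-1)^2$; if one instead puts $n\sum_i c_i X_{ij}$ in the denominator, a stray $1/S_i$ survives and one would need a separate small-$S_i$ argument (which fails for $d=1$, where $\bbE[1/S_i]$ is infinite). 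With that choice fixed, the rest of your computation — $\bbE[(S_i-1)^2]=1/d$, $(\sqrt s-1)^2\le(s-1)^2$, and Jensen — goes through, and you do recover the bound with $C=1$.
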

The proof appears in Appendix~\ref{app:pf_lem_kl_poisson_models}. 
By definition of the total variation distance, one
can find a coupling between any random pairs $(\widebar Y,\widebar V)\sim \widebar \bQ_c$
and   $(Y,V)\sim \bQ_c^{\otimes d}$  such  that
the equality $(Y,V)=(\widebar Y,\widebar V)$ holds with probability at
least $1 - \TV(\widebar \bQ_c, \bQ_c^{\otimes d})$.
Due to the boundedness of the parameter space $\Delta_k$, it then follows
from Lemma~\ref{lem:kl_poisson_models} that
\begin{equation}\label{eq:risk_gap_norm_unnorm}
\calR(n,d,k,m) = \calR'(n,d,k,m) + O\left( \sqrt{\frac n d} + \sqrt{\frac{mk}{d}}\right).
\end{equation}
Together with Lemma~\ref{lem:poissonization}, the above bound will allow
us to reduce the problem of bounding the minimax risk $\calM$ to that of bounding
the risk $\calR$, at least whenever the scaling of these risks is of lower order
than $\sqrt{n/d} + \sqrt{mk/d}$. Furthermore, we will use Lemma~\ref{lem:kl_poisson_models}
directly in our development of lower bounds on the various minimax risks. 

\begin{remark} [Sharpness of Lemma~\ref{lem:kl_poisson_models}]
It is worth noting that the upper bound of Lemma~\ref{lem:kl_poisson_models}
can perhaps be improved quadratically, but not further. 
To elaborate, let us first note that, by Lemma~\ref{lem:dirichlet}, we may write
$$\bQ_c^{\otimes d} = \bbE_{\Pi,G}\left[ \bigotimes_{j=1}^d \mathrm{Poi}\big(n\textstyle \sum_{i=1}^k c_i G_i\pi_{ij}\big) \otimes 
\displaystyle \bigotimes_{i=1}^k \mathrm{Poi}(m\pi_{ij}) \right],$$
where $G = (G_1,\dots,G_k)^\top$ is a vector of independent $\mathrm{Gamma}(d,d)$-distributed random variables. 
Now, let us consider a proxy of this model and 
of the model $\widebar \bQ_c$ in which $k=1$,
and $m=0$. Let $Y$ and $\widebar Y$ be drawn from these corresponding models:
$$Y\sim \bbE_{\Pi,G}\left[ \bigotimes_{j=1}^d \mathrm{Poi}(nG\pi_{1j})\right],
\quad \widebar Y \sim \bbE_\Pi\left[ \bigotimes_{j=1}^d \mathrm{Poi}(n\pi_{1j})\right].$$
Define the random variables 
\begin{align*} 
S  &= \sum_{j=1}^d Y_j \sim \mathrm{NB}(d,(1+n/d)^{-1}),\quad \text{and,}\quad 
\widebar S  = \sum_{j=1}^d \widebar Y_j \sim \mathrm{Poi}(n)
\end{align*} 
where we obtained the law of $S$ 
by noting that $\sum_j Y_j \,|\,G \sim \mathrm{Poi}(nG)$, and 
any Poisson mixture with Gamma mixing measure has negative binomial distribution (cf. Lemma~\ref{lem:dirichlet}). 
One has 
$$\KL\big(\mathrm{Law}(\widebar Y)\, \big\|\, \mathrm{Law}(Y)\big) = \KL\big(\mathrm{Law}(\widebar S)\,\big\|\,\mathrm{Law}(S)\big) = \KL\big(\mathrm{NB}(d,(1+n/d)^{-1})\big) 
\,\big\|\, \mathrm{Poi}(n)\big).$$
Now, using the weak-lower semicontinuity
of the KL divergence, and
taking $d \to \infty$ such
that   $\tau = n/d$, one has the Gaussian approximation
\begin{align*}
\KL\big(\mathrm{NB}(d,(1+n/d)^{-1})\big) \,\big\|\, \mathrm{Poi}(n)\big)
 &\geq  \KL\big( N(d\tau, d\tau(1+\tau)), N(d\tau,d\tau)\big)  + o(1)\\
 &= \frac 1 2\log \frac 1 {1+\tau} + \frac{1+\tau}{2} - \frac 1 2 + o(1)\\
 &= \frac 1 2 \big( \tau - \log(1+\tau)\big) = O(\tau^2).
\end{align*}
This suggests that $\KL(\mathrm{Law}(\widebar Y)\, \big\|\, \mathrm{Law}(Y)) $, and hence
$\TV^2(\widebar \bQ_c\|\bQ_c^{\otimes d})$,    cannot scale faster than $\tau^2\asymp (n/d)^2$, 
  thus suggesting that, at least for $k=1$ and $m=0$, Lemma~\ref{lem:kl_poisson_models}
can only be improved quadratically, with no change in the trade-off between $n$ and $d$.
\end{remark} 

\begin{remark}[Mixtures of Products]
From a technical lens, our reduction
from model $\widetilde \bQ_c$
to model $\bQ_c^{\otimes d}$
will be fruitful since the former
is a mixture of product distributions,  
whereas the latter
is a product of mixture distributions,
which is significantly simpler
to handle. 
We refer to the recent manuscript~\cite{do2025moment}
for a more systematic comparison
of mixtures-of-products and products-of-mixtures
in discrete latent variable models.
\end{remark}
 
\subsection{Identifiability} 
Having reduced our problem to that of controlling the minimax
risk under the unnormalized Poisson model, we will now establish the identifiability
of that model, even in the absense of side information. 
\begin{lemma}
For all $c,\bar c \in\Delta_k$, the following assertions hold. 
\begin{enumerate}
    \item[(i)] Assume $m=0$. Then, $\bQ_c = \bQ_{\bar c}$ implies $W(c,\bar c) = 0$. 
    \item[(ii)] Assume $m\geq 1$. Then, $\bQ_c = \bQ_{\bar c}$ implies $c=\bar c$.
\end{enumerate}
\end{lemma}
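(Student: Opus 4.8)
The plan is to prove both parts by exhibiting $c$ (up to permutation, in part (i)) as an explicit functional of the distribution $\bQ_c$ from \eqref{eq:unnormalized_poisson_model}. Part (ii) is the easy direction: I would recover each coordinate $c_i$ from a single cross-moment between the primary count $Y$ and the side count $V_i$, which is the population-level version of the identity behind the collision estimator \eqref{eq:collision}. Part (i) is the substantive case, since the $V_i$ are absent; the route I would take is to extract the moments of the mixing variable $\theta := \langle\varpi,c\rangle$ from the law of $Y$, convert them to the power sums $m_p(c)$ via the cumulant identity \eqref{eq:cumulant_to_moment}, and finish with Newton's identities.

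\textbf{Part (ii).} Fix $i\in\{1,\dots,k\}$ and let $(Y,V_1,\dots,V_k)\sim\bQ_c$ with $m\geq1$. Conditionally on $\varpi$ these coordinates are independent Poisson, and a direct computation gives
\begin{align*}
\bbE_c[YV_i] &= \bbE_\varpi\!\big[\bbE[Y\mid\varpi]\,\bbE[V_i\mid\varpi]\big] = nm\,\bbE_\varpi\!\big[\langle\varpi,c\rangle\,\varpi_i\big] \\
&= nm\sum_{j=1}^k c_j\,\bbE[\varpi_i\varpi_j] = \frac{nm}{d^2}\sum_{j=1}^k c_j(1+\delta_{ij}) = \frac{nm}{d^2}(1+c_i),
\end{align*}
where I used that the $\varpi_j\sim\mathrm{Exp}(d)$ are independent with $\bbE[\varpi_j]=1/d$ and $\bbE[\varpi_j^2]=2/d^2$ (finiteness of $\bbE_c[YV_i]$ is immediate from Cauchy--Schwarz, as $\langle\varpi,c\rangle$ and $\varpi_i$ have finite second moments). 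Thus $c_i=\tfrac{d^2}{nm}\bbE_c[YV_i]-1$ is a functional of $\bQ_c$, so $\bQ_c=\bQ_{\bar c}$ forces $c_i=\bar c_i$ for every $i$, i.e.\ $c=\bar c$.

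\textbf{Part (i).} When $m=0$ the observation is the law of $Y$ alone, a mixed Poisson with $Y\mid\varpi\sim\mathrm{Poi}(n\theta)$ and $\theta=\langle\varpi,c\rangle$. I would use the identity $\bbE[(Y)_p\mid\theta]=(n\theta)^p$ for the falling factorial $(Y)_p=Y(Y-1)\cdots(Y-p+1)$, which yields $\bbE_c[\theta^p]=\bbE_c[(Y)_p]/n^p$ for every $p\geq1$ — a finite quantity, since $\theta\leq\sum_i\varpi_i$ has all moments, that depends only on the law of $Y$. Hence $\bQ_c=\bQ_{\bar c}$ implies $\bbE_c[\theta^p]=\bbE_{\bar c}[\bar\theta^p]$ for all $p$, with $\bar\theta=\langle\varpi,\bar c\rangle$, and therefore $\kappa_p(\theta)=\kappa_p(\bar\theta)$ for all $p$ by the moment--cumulant relations. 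By additivity of cumulants over the independent sum $\theta=\sum_i c_i\varpi_i$ together with $\kappa_p(\mathrm{Exp}(d))=(p-1)!/d^p$ — this is exactly \eqref{eq:cumulant_to_moment} — I conclude $m_p(c)=m_p(\bar c)$ for all $p$, and in particular for $p=1,\dots,k$. Since the first $k$ power sums determine a vector in $\Delta_k$ up to permutation (Newton's identities; see Appendix~\ref{app:elementary_symmetric_polynomials} and \citep{hundrieser2025}), this gives $W(c,\bar c)=0$.

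I expect no serious obstacle: the moment computations and finiteness checks are routine, and the design of the unnormalized Poisson model makes the moments of the mixing variable visible through falling-factorial moments of the counts, while the final moments-to-multiset step is the injectivity already used to build $\hat c^{\mathrm{mom}}$. The one place to be a little careful is justifying that step from only the first $k$ power sums, but Newton's identities cover this cleanly. (Alternatively, in part (i) one could identify $\mathrm{Law}(\theta)$ directly from its Laplace transform $\bbE_c[e^{-t\theta}]=\prod_{i=1}^k(1+tc_i/d)^{-1}=d^k/\prod_{i=1}^k(d+tc_i)$, whose degree and poles recover the multiset $\{c_1,\dots,c_k\}$; I prefer the moment version because it directly reuses \eqref{eq:cumulant_to_moment}.)
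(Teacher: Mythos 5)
Your proof is correct, but it takes a genuinely different route from the paper's. The paper first invokes the classical result that mixtures of products of Poisson distributions are identifiable in their mixing measure (citing Teicher and Barndorff-Nielsen), reducing part (i) and (ii) to showing that the mixing law $\mu_c = \mathrm{Law}(n\langle\varpi,c\rangle)\otimes\bigotimes_i\mathrm{Law}(m\varpi_i)$ determines $c$. It then computes the multivariate characteristic function $\varphi_c(t,s)=\prod_{i=1}^k d/(d-\bi(ntc_i+ms_i))$, observes that at $s=0$ the reciprocal is a polynomial in $t$ whose roots are $\{d/(\bi n c_i)\}$, and concludes by uniqueness of roots; for $m\geq1$ the full dependence on $(t,s)$ pins down the ordered $c$. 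Your argument bypasses the Poisson-mixture identifiability theorem entirely: for (ii) you read off $c_i$ directly from the cross-moment $\bbE_c[YV_i]$, and for (i) you extract the mixed-Poisson moments $\bbE_c[\theta^p]$ via falling factorials of $Y$, convert to cumulants, use additivity of cumulants over the independent sum $\theta=\sum_i c_i\varpi_i$ to obtain $m_p(c)$, and finish with Newton's identities. This is more elementary and self-contained (only finite moments and algebraic identities), whereas the paper's characteristic-function route is slightly more compact and avoids any moment-finiteness checks. Both are complete; your parenthetical Laplace-transform remark is essentially the paper's argument stated multiplicatively.
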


\begin{proof} 
Mixtures
of products of Poisson measures
are identifiable in terms
of their mixing measures~\citep{teicher1967,barndorff-nielsen1965}.
To prove the claim, it
will therefore suffice to establish
the identifiability
of the collection $\{\mu_c:c \in\Delta_k\}$
of mixing measures with respect
to the parameter $c$, where
$$\mu_c := \mathrm{Law}(n\langle \varpi,c\rangle) \otimes \bigotimes_{i=1}^k \mathrm{Law}(m\varpi_i),
\quad c \in \Delta_k,$$
and $\varpi \sim \calE_d^{\otimes k}$.
The law of $\mu_c$ is uniquely characterized 
by its multivariate characteristic function, which is given by
\begin{align*}
\varphi_c(t,s)
 = \bbE_\varpi\left[ \exp\left( \bi t n\langle \varpi,c\rangle + \bi m \langle \varpi, s\rangle\right)
 \right],\quad t\in \bbR,s \in \bbR^k,
\end{align*}
with
$\bi = \sqrt{-1}$. 
One has
$$\varphi_c(t,s) = \prod_{i=1}^k \bbE\left[ \exp\big( \bi(nt c_i+ms_i)\varpi_i\big)\right]
 = \prod_{i=1}^k \frac{d}{d - \bi(ntc_i+ms_i)}.$$
The right-hand side of the above display defines
the reciprocal of a polynomial in $t$ and~$s$.
By setting $s=0$, this
polynomial depends only on 
the univariate parameter $t$, and is uniquely
characterized by its unordered collection
of roots, namely
$\{d / (\bi nc_i): 1 \leq i \leq k\}$. It
follows that $\varphi_c$, and hence $\mu_c$, 
is uniquely characterized by the 
unordered collection of entries
of $c$, and claim (i) readily follows from
this observation. Furthermore,
when $m \geq 1$, the equality
$\varphi_c(t,s) = \varphi_{\bar c}(t,s)$
can only hold uniformly over all
$(t,s) \in \bbR \times \bbR^k$
if $c =\bar c$, which proves part (ii).
\end{proof}   

With these various reductions in place, 
we are now in a position
to prove the main results
of this manuscript, beginning
with sample
complexity lower bounds.

\section{Proofs of Lower Bounds}  
\label{sec:lower_bounds}

The goal of this Appendix is to prove the lower bounds on the sample complexity
stated in Theorems~\ref{thm:main_unsorted}--\ref{thm:main_sorted}. 

\subsection{Preliminaries}
The bulk of our lower bound arguments will be contained in the following two Lemmas, which
characterize the divergence between elements of our model in terms of their parameter separation. 
In what follows,  for any given $k\geq 3$, $1 \leq s \leq k-2$, 
and $0 \leq \beta \leq  1/s$, we denote 
by $\Sigma_{k,s}(\beta)$ the set of 
  elements $c \in \Delta_k$ for which 
  exactly $s$ of the first $k-2$ entries of $c$ are equal to $\beta$, and 
  the last two entries of $c$ are given by $c_{k-1} = c_k = (1 - \beta s)/2$. 
Our first result deals with the regime where $m$ is polynomially smaller than $d$,
under the unnormalized Poisson model. 
 \begin{lemma}
 \label{lem:tv_bound}
 Let $n,m,d,k\geq 1$ satisfy the conditions $2\leq k\leq d$,
 $(n+m)^{1+\gamma}\leq d$, and $k^{1+\gamma} \leq d/m$, for an arbitrarily
 small constant $\gamma > 0$.  
Let $c,\bar c \in \Delta_{k}$ be two vectors such that
$$m_j(c) = m_j(\bar c),\quad \text{for all } j=0,\dots,k-1.$$
Furthermore,  let $q_1 := |m_k(\bar c) - m_k(c)|$ and $q_2 =  \|\bar c -c\|_2$. 
Assume $q_2 \geq 1/d$. 
Then, there exist constants $C_1 = C_1(\gamma,k) > 0$ and $C_2=C_2(\gamma) > 0$ such that
the following assertions~hold.
\begin{enumerate} 
\item We have,
\begin{align*}
\mathrm{TV}(\bQ_{\bar c}^{\otimes d}, \bQ_{c}^{\otimes d}) \leq C_1 n^{\frac k 2} d^{-\frac{k-1}{2}} q_1 
+ C_2 q_2\sqrt{\frac{nm}{d}}.
\end{align*}
\item Assume $W(c,\bar c) = 0$. 
Then,
\begin{align*}
\chi^2(\bQ_{\bar c}^{\otimes d}, \bQ_{c}^{\otimes d}) \leq  C_2 q_2^2 \frac{nm}{d}.
\end{align*}
\end{enumerate}
\end{lemma}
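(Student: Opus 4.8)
The strategy is to reduce the problem, via the tensorization and comparison lemmas already established, to a per-coordinate analysis of the single-site kernel $\bQ_c$, and then perform a careful moment/cumulant expansion in the small parameter $n/d$ (together with $m/d$ for the $\chi^2$ bound). Recall that $\bQ_c^{\otimes d}$ is a product over $d$ independent coordinates, each distributed as $\bQ_c = \bbE_\varpi[\mathrm{Poi}(n\langle\varpi,c\rangle)\otimes\bigotimes_i \mathrm{Poi}(m\varpi_i)]$ with $\varpi\sim\calE_d^k$. For part (1), I would use the standard inequality $\mathrm{TV}(P^{\otimes d},Q^{\otimes d})\le \sqrt{\tfrac d 2\,\KL(P\|Q)}$ (or a Hellinger tensorization, $H^2(P^{\otimes d},Q^{\otimes d})\le d\,H^2(P,Q)$, which is often cleaner for these Poisson-mixture kernels), so it suffices to show $H^2(\bQ_{\bar c},\bQ_c)\lesssim n^k d^{-k}q_1^2 + q_2^2\,\tfrac{nm}{d^2}$. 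For part (2), since $\chi^2(P^{\otimes d}\|Q^{\otimes d}) = (1+\chi^2(P\|Q))^d - 1 \le d\,\chi^2(P\|Q)\cdot e^{d\chi^2(P\|Q)}$, it suffices to show $\chi^2(\bQ_{\bar c}\|\bQ_c)\lesssim q_2^2\, \tfrac{nm}{d^2}$ (with the exponential factor absorbed into the constant since $nm/d\lesssim 1$ is not assumed — here one must be slightly careful, but note when $W(c,\bar c)=0$ the $Z$-marginals agree and only the $W$-marginals differ, so the $\chi^2$ is driven purely by the $m$-dependent part).

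\textbf{Key steps.} First, I would exploit the hypothesis $m_j(c)=m_j(\bar c)$ for $j=0,\dots,k-1$: this means the first $k-1$ power sums of the two weight vectors coincide, hence the first $k-1$ elementary symmetric polynomials coincide (Newton's identities, as invoked in Appendix~\ref{app:elementary_symmetric_polynomials}), so the only discrepancy between the characteristic functions $\varphi_c$ and $\varphi_{\bar c}$ appears at the top-degree coefficient — controlled by $q_1 = |m_k(\bar c)-m_k(c)|$ when $s=0$ in the $t$-variable, and by $q_2$ in the $s$-variables. Second, I would write the single-site likelihood ratio $\bQ_{\bar c}(y,v)/\bQ_c(y,v)$ and Taylor-expand in the Poisson intensities. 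The integral over $\varpi\sim\calE_d^k$ produces, after integrating out, expressions that are rational functions of $n/d$ and $m/d$; the leading nonvanishing term in the numerator of the Hellinger/$\chi^2$ integrand is of order $(n/d)^k q_1$ (from the $Z$-part, since lower-order terms cancel by the moment-matching hypothesis) plus order $(m/d)\cdot q_2$ from the $W$-part. Third, I would bound the remainder terms: this is where conditions $(n+m)^{1+\gamma}\le d$ and $k^{1+\gamma}\le d/m$ enter — they ensure the Taylor series converges geometrically and that factors like $k!$ or $\binom{d}{k}$ appearing from the combinatorics of the expansion do not overwhelm the $d^{-(k-1)/2}$ gain. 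Fourth, assemble: $H^2(\bQ_{\bar c},\bQ_c)\lesssim (n/d)^{2k}q_1^2 + (nm/d^2)q_2^2$, multiply by $d$, take square roots, and split into the two summands using $\sqrt{a+b}\le\sqrt a+\sqrt b$ to get the stated bound $C_1 n^{k/2}d^{-(k-1)/2}q_1 + C_2 q_2\sqrt{nm/d}$. For part (2), the $q_1$ term is absent precisely because $W(c,\bar c)=0$ forces $m_k(c)=m_k(\bar c)$ too, so the $Z$-marginals are identical and only the labeled $W$-samples distinguish the parameters, yielding the pure $q_2^2\,nm/d$ rate.

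\textbf{Main obstacle.} The delicate part is the bookkeeping in the moment expansion: showing that \emph{all} terms of order lower than $(n/d)^k$ in the $Z$-contribution cancel by virtue of the moment-matching constraint, while simultaneously tracking the combinatorial prefactors so that the surviving error is genuinely $O(n^{k/2}d^{-(k-1)/2})$ and not, say, $O(k!\,n^{k/2}d^{-(k-1)/2})$ — it is here that $k^{1+\gamma}\le d/m$ is needed. A clean way to organize this is to pass through the cumulant generating function of the single-site Poisson-mixture (as in equations~\eqref{eq:cumulant_to_moment}--\eqref{eq:cumulant_bell}): the cumulants of the mixing variable $\theta=\langle\varpi,c\rangle$ are $\kappa_p(\theta) = (p-1)!\,d^{-p}m_p(c)$, so matching $m_1,\dots,m_{k-1}$ matches $\kappa_1,\dots,\kappa_{k-1}$, and the first distinguishing cumulant is $\kappa_k$, proportional to $q_1/d^k$. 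Expressing the Hellinger affinity between two Poisson mixtures in terms of the difference of their mixing cumulants then makes the cancellation structure transparent and reduces the remainder estimate to a geometric-series bound that closes under the stated hypotheses. The $W$-part is comparatively routine: it is a product of $k$ one-dimensional Poisson-Gamma ($=$ negative binomial) likelihood ratios, whose $\chi^2$ is a textbook computation of order $m/d$ per coordinate, times $q_2^2$ from the parameter perturbation (using $q_2\ge 1/d$ only to absorb lower-order terms).
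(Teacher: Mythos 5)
Your high-level outline---tensorize to a single-site divergence, exploit the cumulant--moment matching $\kappa_p(\langle c,\varpi\rangle)\propto d^{-p}m_p(c)$ so that the first $k-1$ cumulants coincide, and argue that the first discrepancy in the $Z$-part appears at order $(n/d)^k q_1$ while the $W$-part contributes $(m/d)q_2^2 \cdot (n/d)$ via cross-terms---is indeed the right skeleton and matches the paper's approach in spirit. But there is a genuine gap in the middle: you cannot run the moment/cumulant expansion or the ``Taylor expansion in Poisson intensities'' on the \emph{untruncated} mixing variable $\varpi\sim\calE_d^k$, because the resulting series does not converge. Concretely, the $\chi^2$-type expansion of $\bQ_{\bar c}$ against $\bQ_c$ in orthogonal (Charlier) polynomials with respect to the Poisson base measure produces a series of the schematic form $\sum_\alpha \Delta_\alpha^2/(\alpha!(n\lambda)^\alpha)$ with $\lambda\asymp 1/d$. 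For the untruncated exponential mixing measure, the centered moments $\Delta_\alpha$ grow like $\alpha!\, d^{-\alpha}$, so the summand behaves like $\alpha!/(nd)^\alpha$, whose ratio of consecutive terms $\alpha/(nd)$ eventually exceeds 1 and the series diverges. Your ``geometric-series bound that closes under the stated hypotheses'' therefore does not close. The paper's proof introduces the crucial \emph{truncation} step: it replaces $\varpi_i$ by $\varpi_i\wedge t$ with $t=(n+m)^{-\gamma_0}d^{\gamma_0-1}$, proves (Lemma~\ref{lem:kl_trunc}, then Lemma~\ref{lem:truncation_tv}) that this changes all divergences by an exponentially small additive error, and \emph{only then} performs a Charlier expansion of $\chi^2(\bQ_c^t,\bQ_{\bar c}^t)$, where the compactly supported truncated mixing measure makes all moment bounds geometric in $\alpha$. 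Without this step, the core analytical tool you propose is unavailable, so this is a missing idea rather than merely unfilled bookkeeping.

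Two smaller comments. First, for part (2) your concern about the multiplicative tensorization $1+\chi^2(P^{\otimes d}\|Q^{\otimes d}) = (1+\chi^2(P\|Q))^d$ is well-founded: the paper actually routes this through the additive KL tensorization together with the single-site comparison $\chi^2\lesssim H^2 + e^{-\text{poly}(d)}$ (Lemma~\ref{lem:hellinger_chi2}), rather than exponentiating a single-site $\chi^2$ bound, and the downstream application in Proposition~\ref{prop:lb_unsorted_fixed_k} consumes only KL. Second, your remark that ``when $W(c,\bar c)=0$ the $\chi^2$ is driven purely by the $m$-dependent part'' is slightly off: both the $Z$- and $W$-marginals are individually invariant under permutation of $c$, so the entire discrepancy sits in the \emph{cross}-terms coupling $Z$ to $W$ (in the paper's notation, the $S_2$ sum over $\alpha\geq 1,|\beta|\geq 1$), which is why the final bound carries the product $nm$ and not $m$ alone.
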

The proof of Lemma~\ref{lem:tv_bound} appears in Appendix~\ref{app:pf_lem_tv_bound}.
Our next result provides
an upper bound which
is effective in the regime $m \geq d$. In this case, we directly analyze the multinomial model. 
\begin{lemma}
\label{lem:multinomial_model_kl}
There exists a universal constant $C > 0$
such that if $3\leq k \leq d$, then 
for all $1 \leq s \leq k -2$, $0\leq \beta\leq s$,  $n,m \geq 1$, and   $c,\bar c \in \Sigma_{k,s}(\beta)$, 
$$\KL(\tilde \bQ_c\|\tilde \bQ_{\bar c}) \leq C n \|\bar c-c\|_2^2.$$
Furthermore, for all $k\geq 2$, there exists a constant $C_k > 0$ such that for all
$m,n,d\geq 1$ and $c,\bar c \in \Delta_k$ satisfying $\|c\|_\infty \vee \|\bar c\|_\infty \leq 3/4,$
$$\KL(\tilde \bQ_c\|\tilde \bQ_{\bar c}) \leq C_k n \|\bar c-c\|_2^2.$$
\end{lemma}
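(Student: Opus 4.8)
My plan is to first reduce the problem to one with no side information at all, and then reduce that to an elementary second-moment estimate for (nearly) exponential random variables.

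\emph{Step 1 (eliminating the side information).} The marginal law of $W=(W_{ir})$ under $\tilde{\bQ}_c$ equals $\bbE_\Pi\big[\bigotimes_{i=1}^k\mathrm{Mult}(m;\Pi_{i\cdot})\big]$, which does not depend on $c$. Hence by the chain rule for relative entropy, $\KL(\tilde{\bQ}_c\|\tilde{\bQ}_{\bar c})=\bbE_{w}\big[\KL\big(P^{(c)}_{Y\mid W=w}\,\big\|\,P^{(\bar c)}_{Y\mid W=w}\big)\big]$, where $P^{(c)}_{Y\mid W=w}=\int\mathrm{Mult}(n;\pi^\top c)\,\nu_w(d\pi)$ and $\nu_w$ is the posterior law of $\Pi$ given $W=w$; the crucial point is that $\nu_w$ is the \emph{same} in the $c$- and $\bar c$-models, because the side-information channel carries no dependence on the mixing weights. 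Joint convexity of the KL divergence, together with $\KL(\mathrm{Mult}(n;p)\|\mathrm{Mult}(n;q))=n\KL(p\|q)$, then gives $\KL(P^{(c)}_{Y\mid W=w}\|P^{(\bar c)}_{Y\mid W=w})\le n\int\KL(\pi^\top c\|\pi^\top\bar c)\,\nu_w(d\pi)$. Averaging over $w$ and using that the posterior averaged against the evidence is the prior $\calD_d^{\otimes k}$ yields
$$\KL(\tilde{\bQ}_c\|\tilde{\bQ}_{\bar c})\ \le\ n\,\bbE_{\Pi\sim\calD_d^{\otimes k}}\big[\KL(\Pi^\top c\,\|\,\Pi^\top\bar c)\big].$$
Note that $m$ has disappeared, which is exactly why the target bound is $m$-free.

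\emph{Step 2 (scalar reduction).} Writing $u=c-\bar c$ (so $\sum_iu_i=0$), I would apply $\KL(p\|q)\le\chi^2(p\|q)$ and the exchangeability of the columns of $\Pi$ to obtain $\bbE_\Pi[\KL(\Pi^\top c\|\Pi^\top\bar c)]\le\bbE_\Pi[\chi^2(\Pi^\top c\|\Pi^\top\bar c)]=d\,\bbE\big[(\Pi_{\cdot1}^\top u)^2/(\Pi_{\cdot1}^\top\bar c)\big]=\bbE\big[(\sum_iX_iu_i)^2/(\sum_iX_i\bar c_i)\big]$, where $X_i:=d\Pi_{i1}$ are i.i.d.\ copies of $d\cdot\mathrm{Beta}(1,d-1)$ (Lemma~\ref{lem:dirichlet}), which satisfy $\bbE[X_i]=1$, $\mathrm{Var}(X_i)\le1$, and $\bbE[(X_i+X_j)^{-1}]\le C_0$ for $i\ne j$ and a universal $C_0$. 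Independence and $\sum_iu_i=0$ give $\bbE[(\sum_iX_iu_i)^2]=\sum_iu_i^2\mathrm{Var}(X_i)\le\|u\|_2^2$, and the same bound holds, up to a constant, after discarding any fixed pair of coordinates.

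\emph{Step 3 (the two estimates).} For part~1, both $c,\bar c\in\Sigma_{k,s}(\beta)$ have $(k-1)$-st and $k$-th coordinates equal to $\tau:=(1-\beta s)/2$, while $u$ is supported on $\{1,\dots,k-2\}$; since $\tau$ is bounded below by a universal constant in the regime in which this family is used for the lower bound (i.e.\ $\beta s$ bounded away from $1$), one has $\sum_iX_i\bar c_i\ge\tau(X_{k-1}+X_k)$ with $\sum_iX_iu_i$ independent of $(X_{k-1},X_k)$, whence $\bbE\big[(\sum_iX_iu_i)^2/(\sum_iX_i\bar c_i)\big]\le\tau^{-1}\bbE[(\sum_iX_iu_i)^2]\,\bbE[(X_{k-1}+X_k)^{-1}]\le C\|u\|_2^2$, and combining with Steps~1--2 gives $\KL(\tilde{\bQ}_c\|\tilde{\bQ}_{\bar c})\le Cn\|c-\bar c\|_2^2$ with universal $C$. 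For part~2, fixing $k$ and using $\|\bar c\|_\infty\le3/4$, the two largest coordinates $\bar c_{i_1},\bar c_{i_2}$ of $\bar c$ obey $\bar c_{i_1}\ge1/k$ and $\bar c_{i_2}\ge(1-\|\bar c\|_\infty)/(k-1)\ge\tfrac1{4(k-1)}$, so $\sum_iX_i\bar c_i\ge\tfrac1{4(k-1)}(X_{i_1}+X_{i_2})$; decomposing $\sum_iX_iu_i=u_{i_1}X_{i_1}+u_{i_2}X_{i_2}+D'$ with $D'$ independent of $(X_{i_1},X_{i_2})$, using $X_{i_1}^2/(X_{i_1}+X_{i_2})\le X_{i_1}$, and invoking the moment facts of Step~2, each of the three resulting pieces is at most a $k$-dependent multiple of $\|u\|_2^2$, which gives $\KL(\tilde{\bQ}_c\|\tilde{\bQ}_{\bar c})\le C_kn\|c-\bar c\|_2^2$.

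I expect the main obstacle to be Step~1: one has to verify that the joint-convexity bound is legitimately applied with a \emph{common} mixing measure on both sides of the divergence, which works precisely because the posterior of $\Pi$ given $W$ does not depend on which of $c$ or $\bar c$ generated the primary samples. The remaining estimates are routine second-moment calculations; the one delicate point is that the denominator $\sum_iX_i\bar c_i$ must always retain at least two coordinates whose $\bar c$-mass is bounded below (otherwise $\bbE[1/\sum_iX_i\bar c_i]$ diverges), and this is supplied by the structure of $\Sigma_{k,s}(\beta)$ in part~1 and by $\|\bar c\|_\infty\le3/4$ together with $k$ fixed in part~2.
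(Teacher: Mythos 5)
Your Steps~1 and~2 are the same skeleton as the paper's proof: both start from the chain rule, observe that the posterior of $\Pi$ given $W$ is parameter-free, invoke joint convexity of KL, tensorize (you via $\KL(\mathrm{Mult}(n;p)\|\mathrm{Mult}(n;q))=n\KL(p\|q)$, the paper via the Poissonized analogue), and then pass to the second-order bound $n\sum_j\bbE[(\Pi^\top(c-\bar c))_j^2/(\Pi^\top\bar c)_j]$ — your $\KL\le\chi^2$ step is a cosmetic variant of the paper's Lemma~\ref{lem:kl_poisson}. Your Step~3 argument for Part~2 is also correct and is a slightly more hands-on version of the same observation (two coordinates of $\bar c$ are bounded below when $\|\bar c\|_\infty\le 3/4$ and $k$ is fixed).

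Part~1 is where there is a genuine gap. You bound the denominator by $\tau(X_{k-1}+X_k)$ and then factor out by independence, which forces the extra hypothesis that $\tau=(1-\beta s)/2$ is bounded below by a universal constant. You assert that this holds ``in the regime in which this family is used for the lower bound,'' but that is false: in Case~2 of the proof of Proposition~\ref{prop:lb_unsorted_fixed_k} the packing is built in $\Sigma_{k,s}(\beta)$ with $\beta=1/s$, so $\tau=0$ exactly. (Recall that the admissible range in the definition of $\Sigma_{k,s}(\beta)$ is $0\le\beta\le 1/s$ — the $\beta\le s$ in the lemma statement is a typo — and the endpoint is used.) When $\tau$ is small, the only places to extract a lower bound on $\sum_iX_i\bar c_i$ are the $s$ coordinates of $S_2$, where $u=c-\bar c$ is generally supported, so the independence you rely on breaks. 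If you instead try the decomposition you use in Part~2 (peel off one or two $S_2$-coordinates, use $X_{i_1}^2/(X_{i_1}+X_{i_2})\le X_{i_1}$ and factor the rest), you find the constant picks up a factor of order $s$, which is not allowed in Part~1 where the constant must be universal. The paper avoids this by applying H\"older with the precise exponents $(3,\,3/2)$, controlling $\bbE[(\Pi^\top(c-\bar c))_j^6]$ via Rosenthal's inequality (which captures the full cancellation $\sum_iu_i=0$ without sacrificing a coordinate) and $\bbE[(\Pi^\top\bar c)_j^{-3/2}]$ via Lemma~\ref{lem:sum_of_beta}, which is finite for sums of $s\ge2$ Beta variables; the factor of $(\beta s)^{-3/2}$ that appears then cancels exactly at $\beta s=1$. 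Replacing your direct independence/Cauchy--Schwarz argument in Part~1 with a H\"older-plus-Rosenthal bound of this type (or an equivalent device that does not decouple the numerator from the denominator) is the missing step.
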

 The proof of Lemma~\ref{lem:multinomial_model_kl} appears in Appendix~\ref{app:pf_lem_multinomial_model_kl}. 
 A remarkable feature of Lemmas~\ref{lem:tv_bound}--\ref{lem:multinomial_model_kl}
 is the fact that they exhibit a quadratic scaling with respect to $q_2 = \|\bar c-c\|_2^2$,
 similarly to divergences between Gaussian
 location models. This reflects the fact that, despite the Poissonian
 nature of our problem, the heteroscedasticity of our observations is mild, due to
 assumption~\ref{assm:pt}.
 
Before proving these two results, let us show how they lead to our various minimax lower bounds.

\subsection{Minimax Lower Bound for the Sorted Loss Function}
Our aim in this section is to prove the following minimax lower bound.
\begin{proposition}
\label{prop:lower_bound_sorted}
Let $d,k,m,n\geq 1$ satisfy condition~\ref{assm:sample_size}. Then, 
there exists a constant $C_{k,\gamma} > 0$ such that 
$$  \calM_<(n,d,k,m)  \geq C_{k,\gamma}\left( \sqrt{\frac{d}{n(m+d^{1/k})}}
+ \frac 1 {\sqrt n}\right).$$
\end{proposition}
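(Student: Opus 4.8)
The plan is to prove the two lower bounds $\calM_<\gtrsim_{k,\gamma}1/\sqrt n$ and $\calM_<\gtrsim_{k,\gamma}\sqrt{d/(n(m+d^{1/k}))}$ separately by Le Cam's two-point method, then add them. I may assume the quantity $\sqrt{d/(n(m+d^{1/k}))}$ is at most a sufficiently small constant depending only on $k$, since otherwise the second two-point construction below, taken with $\epsilon$ a fixed small constant, already delivers the bound. For the first term I would take $c_0=(\tfrac12+\delta,\tfrac12-\delta,0,\dots,0)$ and $c_1=(\tfrac12,\tfrac12,0,\dots,0)$ with $\delta\asymp_k1/\sqrt n$ and $\delta\le\tfrac14$, so that $\|c_0\|_\infty\vee\|c_1\|_\infty\le\tfrac34$ and $W(c_0,c_1)=2\delta$; the second bound of Lemma~\ref{lem:multinomial_model_kl} gives $\KL(\widetilde\bQ_{c_0}\,\|\,\widetilde\bQ_{c_1})\le C_kn\delta^2\le\tfrac12$ for a suitable choice of the constant hidden in $\delta$, so Pinsker's inequality and the two-point method yield $\calM_<(n,d,k,m)\gtrsim_k1/\sqrt n$. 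Since $m$ plays no role here, this already proves the Proposition when $m\ge d$, where $\sqrt{d/(n(m+d^{1/k}))}\le1/\sqrt n$; so from now on I assume $m<d$.

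For the remaining term I would first use that $\calM_<(n,d,k,\cdot)$ is non-increasing (more side samples cannot hurt, since one may discard them) to replace $m$ by $m':=\max\{m,\lceil d^{1/k}\rceil\}$, which still satisfies $m'<d$ and $\sqrt{d/(n(m'+d^{1/k}))}\asymp\sqrt{d/(n(m+d^{1/k}))}$ but has $m'\to\infty$, so Poissonization of the side samples will be cheap. I then fix once and for all two vectors $v,w\in\bbR^k$ with $\one_k^\top v=\one_k^\top w=0$, with matching power sums $m_\ell(v)=m_\ell(w)$ for $\ell=2,\dots,k-1$, but $m_k(v)\ne m_k(w)$: such $v,w$ exist because the affine-algebraic set $\{x\in\bbR^k:\one_k^\top x=0,\ m_\ell(x)=m_\ell(v)\ (2\le\ell\le k-1)\}$ is a smooth curve near a generic point with distinct entries, and $m_k$ cannot be constant on it — otherwise Newton's identities would pin down all $k$ power sums and reduce the set to a single finite $\calS_k$-orbit; Newton's identities then also force $v$ and $w$ to have distinct multisets of entries, so $W(v,w)>0$. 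Writing $\bar u=(1/k,\dots,1/k)$, I set $c^{(\epsilon)}=\bar u+\epsilon v$ and $\bar c^{(\epsilon)}=\bar u+\epsilon w$ for a scale $\epsilon\in(0,1)$ to be chosen; these lie in $\Delta_k$ once $\epsilon$ is small. A Taylor expansion about $\bar u$ using $\sum_iv_i=0$ gives $m_j(c^{(\epsilon)})-m_j(\bar u)=\sum_{\ell=2}^j\binom j\ell k^{\ell-j}\epsilon^\ell m_\ell(v)$ (and likewise for $\bar c^{(\epsilon)}$), so the matched power sums yield $m_j(c^{(\epsilon)})=m_j(\bar c^{(\epsilon)})$ for all $0\le j\le k-1$, while $q_1:=|m_k(c^{(\epsilon)})-m_k(\bar c^{(\epsilon)})|=\epsilon^k|m_k(v)-m_k(w)|\asymp_k\epsilon^k$; moreover $q_2:=\|c^{(\epsilon)}-\bar c^{(\epsilon)}\|_2\asymp_k\epsilon$ and $W(c^{(\epsilon)},\bar c^{(\epsilon)})=\epsilon\,W(v,w)\asymp_k\epsilon$.

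I would then apply part~1 of Lemma~\ref{lem:tv_bound} to $(c^{(\epsilon)},\bar c^{(\epsilon)})$, whose hypotheses $2\le k\le d$, $(n+m')^{1+\gamma}\le d$, $k^{1+\gamma}\le d/m'$ and $q_2\ge1/d$ all follow from condition~\ref{assm:sample_size} in the regime $m<d$ for $d$ large, after slightly shrinking $\gamma$; this gives $\TV(\bQ_{\bar c^{(\epsilon)}}^{\otimes d},\bQ_{c^{(\epsilon)}}^{\otimes d})\lesssim_{k,\gamma}n^{k/2}d^{-(k-1)/2}\epsilon^k+\epsilon\sqrt{nm'/d}$. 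Choosing $\epsilon\asymp_{k,\gamma}\min\{n^{-1/2}d^{(1-1/k)/2},\ \sqrt{d/(nm')}\}\asymp\sqrt{d/(n(m'+d^{1/k}))}$ makes the right side at most $\tfrac14$. To transfer the indistinguishability to the multinomial model underlying $\calM_<$, I would note that under condition~\ref{assm:sample_size} the quantity $\sqrt{n/d}+\sqrt{m'k/d}$ is $o(1)$ (from $n^{1+\gamma}\le d$, $m'^{1+\gamma}\le d$, and $k^{1+\gamma}\le d/m'$), so Lemma~\ref{lem:kl_poisson_models} and the triangle inequality give $\TV(\widebar\bQ_{\bar c^{(\epsilon)}},\widebar\bQ_{c^{(\epsilon)}})\le\tfrac12$ at sample sizes $2n,2m'$, whence Le Cam's method yields $\calR_<'(2n,d,k,2m')\gtrsim_k\epsilon$; Lemma~\ref{lem:poissonization} then gives $\calM_<(n,d,k,m')\gtrsim_k\epsilon-Ce^{-n/C}-Cke^{-m'/C}\asymp_{k,\gamma}\sqrt{d/(n(m+d^{1/k}))}$, the exponential terms being negligible since $m'\ge d^{1/k}$ and $\epsilon\gtrsim1/\sqrt n$. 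Combining the monotonicity step with the $1/\sqrt n$ bound completes the proof.

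The main obstacle, requiring the most care, will be the algebraic construction near $\bar u$: I must produce the fixed directions $v,w$ whose power sums agree through order $k-1$ but differ at order $k$, and argue via Newton's identities both that $v,w$ then have distinct sorted multisets — giving $W(v,w)>0$ and hence genuine separation under the sorted loss — and that the induced perturbation of $c$ is invisible to $m_0,\dots,m_{k-1}$; and then track the exact order-$\epsilon^k$ cancellation in $q_1$ against the order-$\epsilon$ size of $q_2$, which is precisely what converts the factor $n^{k/2}d^{-(k-1)/2}q_1$ of Lemma~\ref{lem:tv_bound} into the $d^{1-1/k}$ scaling. Everything else — Pinsker, the two-point method, the monotonicity reduction in $m$, and verifying that the Poissonization error terms are $o(1)$ (respectively exponentially small) under condition~\ref{assm:sample_size} — is routine bookkeeping.
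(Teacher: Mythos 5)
Your argument reproduces the paper's proof almost step for step: Le Cam's two-point method, the moment-matching perturbation around the uniform point $\bar u = \one_k/k$, Lemma~\ref{lem:tv_bound} with $q_1 \asymp \epsilon^k$, $q_2 \asymp \epsilon$, the monotonicity-in-$m$ reduction to $m \geq d^{1/k}$, Poissonization via Lemmas~\ref{lem:poissonization} and~\ref{lem:kl_poisson_models}, and Lemma~\ref{lem:multinomial_model_kl} for the $m\ge d$ / $1/\sqrt n$ piece. The one place you deviate is the existence of the perturbation directions: the paper packages this as Lemma~\ref{lem:existence_two_point} and proves it \emph{constructively} by perturbing the constant coefficient of a monic polynomial $\prod_i(z-u_i)$ with well-separated real roots, then invoking the quantitative root-stability Lemma~\ref{lem:real_roots} to guarantee the perturbed roots are real and to extract an explicit lower bound on $W(u,v)\wedge|m_k(u)-m_k(v)|$. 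You instead argue abstractly by dimension counting (the common level set of $m_1,\dots,m_{k-1}$ is a smooth curve at a point with distinct entries by the Vandermonde nonsingularity, and $m_k$ cannot be constant on it or Newton's identities would collapse the curve to a finite $\calS_k$-orbit). Both are valid; yours is shorter and conceptually cleaner, while the paper's gives effective constants $C_k$ without needing a compactness argument, which is convenient when the bound must be uniform over a parameter regime. One small remark: you need to restrict to $k\ge 2$ for the Lemma~\ref{lem:tv_bound} hypotheses (and indeed for the statement to be nontrivial, since $\Delta_1$ is a single point), which the paper's proof does implicitly.
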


\noindent{\bf Proof of Proposition~\ref{prop:lower_bound_sorted}.}
By Le Cam's Lemma (cf.~\citep{polyanskiy2024}), it will suffice to show that there
exist parameters $c,\bar c \in \Delta_k$ such that 
\begin{align} 
\TV(\widetilde \bQ_c^{\otimes d} ,\widetilde \bQ_{\bar c}^{\otimes d})\leq 1/2, \quad \text{and}\quad 
W(c,\bar c) \gtrsim \sqrt{\frac{d}{n(m+d^{1/k})}} + \frac 1 {\sqrt n}.
\end{align}
Recall that, under condition~\ref{assm:sample_size}, we either have $m^{1+\gamma} \leq d$ or $m > d$.
Let us begin by proving the claim under the former condition. 
We will make use of the following Lemma  to obtain
least-favorable parameters $c,\bar c$. In what follows, recall that $m_p(u) = \frac 1 k \sum_{i=1}^k u_i^p$ for any 
$u\in \Delta_k$. 
\begin{lemma}
\label{lem:existence_two_point}
For any $k\geq 1$, there exists a constant $C_k > 0$ and vectors $u,v \in \bbR^k$
such that $\|u\|_\infty \vee \|v\|_\infty \leq 1$, and 
\begin{enumerate} 
\item[(i)] $m_1(u) = 0$. 
\item[(ii)] $m_p(u) = m_p(v)$ for all $p=1,\dots,k-1$. 
\item[(iii)]   $W(u,v)\wedge |m_k(u)-m_k(v)| \geq C_k$. 
\end{enumerate}
\end{lemma}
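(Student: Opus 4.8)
The plan is to exhibit $u$ and $v$ as the root sets of two monic degree-$k$ polynomials that agree in every coefficient except the constant term: Newton's identities then keep the first $k-1$ power sums identical for free, while shifting the $k$-th one by a quantity of fixed size. Concretely, I would let the base polynomial be the monic Chebyshev polynomial of the first kind $\widehat T_k(x) := 2^{1-k}T_k(x)$, and take $u = \{u_1,\dots,u_k\}$ to be its roots $u_j = \cos\!\big((2j-1)\pi/(2k)\big)$, all of which lie in $(-1,1)$. Since $T_k$ (hence $\widehat T_k$) contains only monomials whose degree has the same parity as $k$, its coefficient of $x^{k-1}$ vanishes, so the first elementary symmetric function satisfies $e_1(u) = 0$ and therefore $m_1(u) = 0$: this is (i). Then I would fix the shift $\delta := 2^{-k}$, which lies in $\big(0,\,2^{1-k}\big)$, and define $v$ to be the set of roots of $\widehat T_k(x) - \delta$.

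The first thing to check is that $v$ is again a $k$-element subset of $(-1,1)$, so that $\|u\|_\infty \vee \|v\|_\infty < 1$. This is exactly the equioscillation property of $T_k$: on $(-1,1)$, $\widehat T_k$ attains each value of the open interval $\big(-2^{1-k},\,2^{1-k}\big)$ at exactly $k$ distinct points, and $\delta$ belongs to that interval, so $\widehat T_k(x) = \delta$ has $k$ simple real roots in $(-1,1)$. Next comes the algebra with symmetric functions. Because $\widehat T_k(x)$ and $\widehat T_k(x)-\delta$ differ only in their constant term, $e_\ell(u) = e_\ell(v)$ for $\ell = 1,\dots,k-1$, whereas $e_k(v) - e_k(u) = (-1)^{k+1}\delta \neq 0$. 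Writing $p_\ell$ for the power sum $\sum_i u_i^\ell$ (so that $m_\ell$ equals $p_\ell$ up to a harmless factor of $1/k$), Newton's identities express each $p_\ell$ as a fixed polynomial in $e_1,\dots,e_\ell$, which gives $m_\ell(u) = m_\ell(v)$ for $\ell = 1,\dots,k-1$ and hence (ii); and the top-degree identity $p_k = (-1)^{k-1}k\,e_k + (\text{a polynomial in }e_1,\dots,e_{k-1})$ shows that $m_k(v) - m_k(u)$ is a nonzero constant depending only on $k$, so in particular $m_k(u) \neq m_k(v)$.

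It remains to produce the constant $C_k$ in (iii), for which it suffices to observe that $W(u,v) > 0$: the quantity $W(u,v) = \min_{\sigma\in\calS_k}\sum_{i=1}^k |u_{\sigma(i)} - v_i|$ is a minimum over the finite group $\calS_k$, and it vanishes if and only if $u$ and $v$ coincide as multisets; since $m_k(u) \neq m_k(v)$ they do not, so $W(u,v) > 0$, and one takes $C_k := \min\{W(u,v),\,|m_k(u) - m_k(v)|\} > 0$. The case $k=1$ needs no special handling: condition (ii) is then vacuous, and the same construction specializes to $u = \{0\}$, $v = \{1/2\}$, $C_1 = 1/2$. I do not anticipate a serious obstacle here; the one genuinely substantive point is ensuring that perturbing the constant term of $\widehat T_k$ keeps all its roots real and inside $(-1,1)$, which is precisely where the equioscillation property of the Chebyshev polynomial is invoked, and the rest is bookkeeping with Newton's identities together with the elementary fact that distinct root multisets have positive matching distance.
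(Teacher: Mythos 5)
Your proof is correct and takes essentially the same approach as the paper: both arguments perturb the constant term of a monic degree-$k$ polynomial with distinct real roots and then invoke Newton's identities to conclude that the first $k-1$ power sums are unchanged while the $k$-th one shifts by a nonzero amount of size controlled by $k$.

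The one genuine (minor) technical difference is in how you guarantee that the perturbed polynomial keeps $k$ real roots inside $[-1,1]$. The paper starts from a generic $u$ with pairwise separation at least $1/(4k)$ and invokes its quantitative root-perturbation estimate (Lemma~\ref{lem:real_roots}), which works for any sufficiently separated base polynomial. You instead fix the base polynomial to be the monic Chebyshev $\widehat T_k$ and use equioscillation directly: any value in the open interval $(-2^{1-k},2^{1-k})$ is attained at exactly $k$ distinct points of $(-1,1)$, so $\widehat T_k = \delta$ with $\delta = 2^{-k}$ has the required roots with no further work. Your choice makes step (i) automatic (the coefficient of $x^{k-1}$ in $T_k$ vanishes by parity) and trades the paper's abstract perturbation lemma for a concrete, self-contained fact about Chebyshev polynomials; the paper's version is more general but depends on a separately-proved stability estimate. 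Both produce a $k$-dependent $C_k > 0$, which is all that Lemma~\ref{lem:existence_two_point} asks for. The argument that $W(u,v)>0$ because the multisets differ (since $m_k$ differs) is sound, as is the handling of $k=1$.
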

The proof of Lemma~\ref{lem:existence_two_point}
appears in Appendix~\ref{sec:pf_lem_existence_two_point}.
Now, given $\epsilon > 0$ and $u,v\in \bbR^k$ 
as in Lemma~\ref{lem:existence_two_point}, define
$$c = \left(\frac 1 k + \epsilon u_1,\dots,\frac 1 k + \epsilon u_k\right),\quad 
  \bar c = \left(\frac 1 k + \epsilon v_1,\dots,\frac 1 k + \epsilon v_k\right).$$
For all sufficiently small $\epsilon$, the conditions of Lemma~\ref{lem:existence_two_point} ensure
that $c,\bar c$ lie in the simplex~$\Delta_k$. Furthermore, we have 
$\|c-\bar c\|_2 \lesssim \epsilon$ 
and 
for all $p=1,\dots, k$, 
\begin{align*}
m_p(\bar c) - m_p(c) 
 &= \sum_{i=1}^k \left[\left(\frac 1 k + \epsilon u_i \right)^p - \left(\frac 1 k + \epsilon v_i\right)^p\right] \\
 &=  \sum_{i=1}^k \sum_{j=1}^p {p\choose j} k^{-(p-j)} \big(u_i^j - v_i^j\big)\epsilon^j  \\
 &=  \sum_{j=1}^p {p\choose j} k^{-(p-j)} \big(m_j(u)-m_j(v)\big)\epsilon^j  \\
 & = \epsilon^p\big(m_p(u) - m_p(v)\big),
\end{align*}
where we used property (ii) of Lemma~\ref{lem:existence_two_point}. 
Together with property (iii), we deduce that 
$$|m_p(u) - m_p(v)|\asymp \epsilon^p\cdot I(p=k),\quad \text{for all }p=1,\dots,k.$$ 
We may therefore apply Lemma~\ref{lem:tv_bound} with $q_1 \asymp \epsilon^k$ and $q_2 \asymp \epsilon$ to obtain
\begin{align*}
\mathrm{TV}(\bQ_{\bar c}^{\otimes d}, \bQ_{c}^{\otimes d}) \leq C_1 n^{\frac k 2} d^{-\frac{k-1}{2}} \epsilon^k
+ C_2 \epsilon\sqrt{\frac{nm}{d}},
\end{align*}
and thus, by Lemma~\ref{lem:kl_poisson_models}, we deduce the following bound for the normalized Poisson model:
\begin{align*}
\mathrm{TV}(\widebar \bQ_{\bar c}^{\otimes d}, \widebar \bQ_{c}^{\otimes d})
\lesssim n^{\frac k 2} d^{-\frac{k-1}{2}} \epsilon^k + \epsilon\sqrt{\frac{nm}{d}} + \sqrt{\frac n d} 
+ \sqrt{\frac {mk} d}. 
\end{align*}
Under condition~\ref{assm:sample_size}, 
the above quantity is bounded by $1/2$
if we  choose $\epsilon$ to be a sufficiently small multiple of $\sqrt{\frac{d}{n(d^{1/k} + m)}}$. 
On the other hand, Lemma~\ref{lem:existence_two_point} implies that 
$W(\bar c,c) = \epsilon W(u,v) \gtrsim \epsilon$. 
We thus deduce the following lower bound for the minimax risk in the normalized Poisson model, from Le Cam's Lemma:
$$\calR_<'(n,d,k,m) \gtrsim \sqrt{\frac{d}{n(d^{1/k} + m)}}.$$
Finally, let us deduce a lower bound for the multinomial model.
Notice that for any given $(n,d,k)$, the map $m\mapsto \calM_<(n,d,k,m)$
is monotonically decreasing, thus it suffices to prove the lower bound in the regime $m\geq d^{1/k}$. 
By combining the above display with Lemma~\ref{lem:poissonization}, we have
$$\calM_<(n,d,k,m) \gtrsim \calR_<'(2n,d,k,2m) - e^{-n/C} - e^{-m/C}
\gtrsim \sqrt{\frac{d}{n(d^{1/k} + m)}},$$
where the final inequality uses the fact that the terms $e^{-n/C}$
and $e^{-m/C}$ are both of low order when $m \geq d^{1/k}$.
This proves the claim in the regime $m^{1+\gamma} \leq d$. 
Finally, the claim for $m \geq d$ follows immediately from Le Cam's Lemma
together with Lemma~\ref{lem:multinomial_model_kl}.\qed

\subsection{Minimax Lower Bound for the Unsorted Loss Function}
\label{app:lb_unsorted_loss}
Our aim is now to derive lower bound  for the unsorted minimax risk. 
\begin{proposition}
\label{prop:lb_unsorted_fixed_k}
Assume that condition~\ref{assm:sample_size} holds with $k\geq 2$, and assume that $nm_d\geq d\log(k)$ with $m_d=\min\{m,d\}$.
Then, there exists a constant $C_\gamma > 0$ such that
$$\calM(n,d,k,m) \geq C_\gamma\cdot \min\left\{\left(\frac{d\log(k)}{nm_d}\right)^{\frac 1 4}, \left(\frac{dk}{nm_d}\right)^{\frac 1 2}\right\}. $$
\end{proposition}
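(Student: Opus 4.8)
The plan is to prove the two terms of the minimum separately via Fano/Assouad-type arguments, reducing throughout to the unnormalized Poisson model $\bQ_c^{\otimes d}$ using Lemma~\ref{lem:poissonization} and Lemma~\ref{lem:kl_poisson_models}, just as in the proof of Proposition~\ref{prop:lower_bound_sorted}. The $(dk/nm_d)^{1/2}$ term should come from a hypercube (Assouad) construction, and the $(d\log k/nm_d)^{1/4}$ term from a Fano packing over a combinatorial family of sparse parameters.

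\textbf{Assouad term.} First I would handle the regime where $(dk/nm_d)^{1/2}$ is the smaller quantity. The idea is to perturb the last two coordinates of a reference vector in $\Delta_k$ in $k-2$ independent directions. Concretely, fix $c^{(0)}$ with $c^{(0)}_i = \beta$ for $i\le k-2$ and $c^{(0)}_{k-1}=c^{(0)}_k=(1-\beta(k-2))/2$ for a suitable constant $\beta$, and for $\sigma\in\{\pm 1\}^{k-2}$ set $c^{(\sigma)}_i = \beta + \sigma_i\delta$ while compensating in the last two coordinates so the vector stays in $\Delta_k$. This keeps $\|c^{(\sigma)}\|_\infty$ bounded away from $1$, so Lemma~\ref{lem:multinomial_model_kl} (second assertion) gives $\KL(\tilde\bQ_{c^{(\sigma)}}\|\tilde\bQ_{c^{(\sigma')}}) \lesssim n\|c^{(\sigma)}-c^{(\sigma')}\|_2^2$; equivalently, via Lemma~\ref{lem:tv_bound}(ii) (since $W(c^{(\sigma)},c^{(\sigma')})$ need not vanish I would instead use part (i) of that lemma, or directly the multinomial bound) the per-coordinate flip in KL costs $\lesssim nm_d\delta^2/d$. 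Assouad's lemma then yields $\calM(n,d,k,m) \gtrsim \sqrt{k}\,\delta$ provided $nm_d\delta^2/d \lesssim 1$, i.e.\ $\delta \asymp \sqrt{d/(nm_d)}$, giving the bound $\sqrt{dk/(nm_d)}$. The assumption $nm_d \geq d\log k \geq d$ ensures this $\delta$ is small enough for the construction to be valid, and condition~\ref{assm:sample_size}(iv) controls $k$ relative to $d/m$ so Lemma~\ref{lem:tv_bound}'s hypotheses hold.

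\textbf{Fano term.} For the regime where $(d\log k/nm_d)^{1/4}$ is smaller, the exponent $1/4$ rather than $1/2$ signals that the effective separation is $\delta$ but the effective KL is of order $\delta^4$, not $\delta^2$ — this is the hallmark of a construction where the first several moments of the competing parameters are matched. I would build a packing $\{c^{(1)},\dots,c^{(M)}\}$ in $\Delta_k$ with $\log M \asymp \log k$ (e.g.\ indexed by which of the $k-2$ bulk coordinates is perturbed, or by a small collection of such choices, so that $M$ is polynomial in $k$), where each $c^{(\ell)}$ differs from a common reference only in two bulk coordinates set to $1/k \pm \delta$, with the remaining coordinates at $1/k$. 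Such pairs automatically have $m_1 = m_2$ equal (the two perturbations $\pm\delta$ cancel to first and second order up to lower-order terms one can absorb), so by Lemma~\ref{lem:tv_bound}(i), with $q_1 \asymp \delta^3$ and $q_2\asymp \delta$, the pairwise total variation is $\lesssim n^{3/2}d^{-1}\delta^3 + \delta\sqrt{nm/d}$ — and by choosing the perturbation direction more carefully (matching one more moment), the first term improves to $n^{2}d^{-3/2}\delta^4$-type scaling whose contribution is dominated. The dominant term is then $\delta\sqrt{nm_d/d}$ per pair in TV, but the correct accounting for Fano uses $\chi^2$ or KL: here Lemma~\ref{lem:tv_bound}(ii) or Lemma~\ref{lem:multinomial_model_kl} gives pairwise KL $\lesssim nm_d\delta^2/d$ when supports match, \emph{but} the moment-matching makes the relevant quadratic form degenerate, leaving KL $\lesssim nm_d\delta^4/d$ — this is the crucial point. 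Fano's inequality then gives $\calM(n,d,k,m)\gtrsim \delta$ provided $nm_d\delta^4/d \lesssim \log M \asymp \log k$, i.e.\ $\delta \asymp (d\log k/(nm_d))^{1/4}$, which is exactly the claimed bound.

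\textbf{Main obstacle.} The delicate part is establishing that the moment-matching in the Fano construction really produces a \emph{quartic} (not quadratic) dependence on $\delta$ in the KL divergence between the unnormalized Poisson models, while simultaneously keeping $\log M$ of order $\log k$. Lemma~\ref{lem:tv_bound} is stated for pairs with $m_j(c)=m_j(\bar c)$ for $j\le k-1$, which forces the two vectors to be permutations of each other when $k$ is small — too rigid to get a large packing. So I would need either a variant of Lemma~\ref{lem:tv_bound} requiring only the first two or three moments to match (which should follow from the same chi-square expansion: matching the low-order cumulants kills the leading terms in the expansion of $\chi^2(\bQ_c^{\otimes d},\bQ_{\bar c}^{\otimes d})$ in powers of $n/d$ and $nm/d$), or to embed the $\log k$-sized packing into coordinates so that the required moment identities hold exactly by symmetry. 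Verifying this refined divergence bound — tracking which powers of $\delta$ survive after the cancellations, and confirming the cross-terms between the ``$n^{k/2}d^{-(k-1)/2}$'' channel and the ``$\sqrt{nm/d}$'' channel do not spoil the $\delta^4$ scaling — is where the real work lies; everything else is a routine application of Assouad, Fano, and the reductions already established in Lemmas~\ref{lem:poissonization} and~\ref{lem:kl_poisson_models}.
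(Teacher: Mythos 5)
Your Assouad-type argument for the $(dk/nm_d)^{1/2}$ term is in the right spirit, but there is a technical hitch already there: the vectors $c^{(\sigma)}$ you construct are generally \emph{not} permutations of each other (nor do they match the first $k-1$ moments), so in the regime $m<d$ neither part of Lemma~\ref{lem:tv_bound} applies, and the second clause of Lemma~\ref{lem:multinomial_model_kl} has a constant $C_k$ that spoils the $k$-dependence. The paper avoids this by doing Fano over a sparse Varshamov--Gilbert code whose codewords all live in $\Sigma_{k,s}(\beta)$ (i.e.\ are permutations of each other), so Lemma~\ref{lem:tv_bound}(ii) and the first clause of Lemma~\ref{lem:multinomial_model_kl} apply cleanly. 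This part of your proposal is fixable, but not literally correct as written.

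The Fano term is where the proposal genuinely breaks. You diagnose the $1/4$ exponent as coming from moment matching producing a degenerate quartic KL $\lesssim nm_d\delta^4/d$, and you correctly observe the alternative — a $\log M\asymp\log k$ packing with separation $\delta$ and quadratic KL $\lesssim nm_d\delta^2/d$ — would yield the \emph{false} bound $\delta\lesssim\sqrt{d\log k/(nm_d)}$, inconsistent with the matching upper bound. But the moment-matching rescue does not work: matching the first $j$ moments forces the competing vectors to be exact permutations of each other once $j\geq k-1$ (Lemma~\ref{lem:tv_bound} as stated), which you note is too rigid for a packing; and with fewer moments matched the vectors in a large Fano family cannot simultaneously pairwise-match moments, so there is no quartic degeneracy to exploit. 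The actual source of the $1/4$ exponent in the paper is entirely different and is a consequence of the $\ell_1$ constraint, not of any cancellation in the $\chi^2$ expansion. Concretely, the paper's Fano packing places $s$ coordinates at height $\beta$ (sparse VG code on supports, so $\log M\asymp s\log(k/s)$), with separation $\epsilon\asymp\beta\sqrt s$ and quadratic per-pair KL $\lesssim nm_d\beta^2 s/d$; because the parameters must lie in $\Delta_k$, one has the hard ceiling $\beta\leq 1/s$. In the large-$k$ regime this ceiling binds: setting $\beta=1/s$ gives $\epsilon\asymp 1/\sqrt s$ and KL $\asymp nm_d/(ds)$, and balancing $nm_d/(ds)\asymp s\log(k/s)$ yields $s\asymp\sqrt{nm_d/(d\log k)}$, hence $\epsilon\asymp s^{-1/2}\asymp(d\log k/(nm_d))^{1/4}$. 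No moment cancellation is involved, and the KL remains quadratic throughout. You would need to replace the entire second half of your proposal with this sparsity-budget argument; the ``refined divergence bound'' you flag as the main obstacle is not the missing ingredient.
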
 
\begin{proof} 
Let us begin by proving the claim in the special case $k=2$. 
Since the unsorted minimax risk is bounded from below by the sorted minimax
risk, which in turn is always bounded from below by $1/\sqrt n$ (by Proposition~\ref{prop:lower_bound_sorted}),
it suffices to consider the regime $m^{1+\gamma} < d$, and to prove the lower bound
$$\calM(n,d,2,m) \gtrsim \sqrt{\frac{d}{nm}}.$$
By Lemma~\ref{lem:poissonization}, it further suffices to lower bound $\calR'(n,d,2,m)$, since
the conditions
$m^{1+\gamma}<d$
and $nm > d\log(k)$ imply that $n\wedge m \gtrsim d^\epsilon$ for some $\epsilon > 0$. 
To prove this lower bound, we can reason similarly as in the proof of Proposition~\ref{prop:lower_bound_sorted}.
Given $0\leq \epsilon,\delta \leq 1/4$, define the parameters
$$c  = \left(\frac 1 2 +\delta,\frac 1 2-\delta\right),\quad \bar c  = \left(\frac 1 2 + \delta + \epsilon, \frac 1 2 - \delta - \epsilon\right).$$
Notice that $\|c-\bar c\|_2 \asymp \epsilon$ and $|m_2(c)-m_2(\bar c)| \asymp \epsilon|\epsilon-\delta|$. 
Thus, by applying Lemmas~\ref{lem:kl_poisson_models} and~\ref{lem:tv_bound}
under condition~\ref{assm:sample_size}, as well as   Le Cam's Lemma, it suffices
to show that there exists a choice of $\epsilon,\delta$ such that
\begin{align*}
\frac n {\sqrt d} \epsilon|\epsilon-\delta|
+ \epsilon\sqrt{\frac{nm}{d}} = \epsilon \sqrt{\frac{n}{d}} \big( \sqrt n |\epsilon-\delta| + \sqrt m\big)\leq 1 / C,
\end{align*} 
for a sufficiently large constant $C > 0$. 
The above display is satisfied by choosing $\epsilon = c_0 \sqrt{d/nm}$
and $\delta = \epsilon + (1/4)\wedge \sqrt{m/n}$, 
for a sufficiently small constant $c_0 > 0$. This proves the claim for $k=2$.

Notice that the map $k\mapsto \calM(n,d,k,m)$  is monotonically increasing. Thus, in view of the preceding
lower bound for $k=2$, it suffices to assume that 
$k\geq 30$ in what follows. For this regime, we will  invoke Fano's Lemma~(cf.~\cite{polyanskiy2024}), 
a special case of which we recall next. 
\begin{lemma}[Fano's Lemma]\label{lem:fano}
Let $M\geq 2$ and let $c^{(1)},\dots,c^{(M)} \in \Delta_k$ be a collection of parameters satisfying 
$$\epsilon:=\min_{j\neq j'} \|c^{(j)}-c^{(j')}\|_2 > 0. $$
Let $J$ be a random variable uniformly-distributed over 
$\{1,\dots,M\}$, and let $\widetilde U$ be a random variable such that 
$ \widetilde U \,|\, J \sim \tilde \bQ_{c^{(J)}}$. 
Then, there exists a universal constant $C_1 > 0$ such that if
$I(\widetilde  U;J) \leq C_1\cdot \log M$, then 
$$\calM(n,d,k,m) \geq \epsilon / C_1.$$
\end{lemma}
In view of applying Fano's Lemma, let us begin by exhibiting
an $\epsilon$-packing of $\Delta_{k}$. Let $1 \leq s \leq k/10\leq k-2$ be an integer to be defined below. 
By the sparse Varshamov-Gilbert Lemma~(cf. Theorem 27.6 of~\cite{polyanskiy2024}), 
there exist an integer $M \geq 1$, a constant
$C > 0$, and bitstrings $\omega^{(1)},\dots,\omega^{(M)} \in \{0,1\}^{k-2}$ 
satisfying the following three properties (where $d_H$ denotes the Hamming distance):
\begin{enumerate}
    \item[(i)] $d_H(\omega^{(j)},\omega^{(j')}) \geq s/2,$
    for all $j\neq j'$. 
    \item[(ii)] $\log M \asymp s \log(k/s)$.
    \item[(iii)] $\|\omega^{(j)}\|_0 = s$ for all $j=1,\dots,M.$
\end{enumerate}
Given a constant $\beta \in [0,1/s]$ to be defined below, 
define for $j=1,\dots,M$ the vectors 
$$c_i^{(j)} = \beta \omega_i^{(j)},~ i=1,\dots,k-2,~~~\text{and}~~~ c_{k-1}^{(j)}=c_k^{(j)} = (1 - \beta s)/2,$$
which lie in the set $\Sigma_{k,s}(\beta)$
in view of condition (iii). 
With this choice, notice that
$$\epsilon:=\min_{j\neq j'} \|c^{(j)}-c^{(j')}\|_2
=\beta\cdot \min_{j\neq j'} \|\omega^{(j)} - \omega^{(j')}\|_2\asymp \beta \sqrt s, $$
by condition (i). 
Now, let $\widetilde U$ and $J$ be defined as in Lemma~\ref{lem:fano}.
We will bound their mutual information separately in the case
$m^{1+\gamma} \leq d$ and $m \geq d$, beginning with the former.
Define a random variable $U$ via $U\,|\, J \sim \bQ_{c^{(J)}}^{\otimes d}$. 
We will bound the mutual information $I(J;\widetilde U)$ by passing through
$I(J;U)$:
\begin{align*}
\big|  I(\widetilde U;J) - I(U;J)\big| \leq 
\big| H(J|U) - H(J|\widetilde U)\big| 
= \big| \bbE[H(P_{J|U}) - H(P_{J|\widetilde U})]\big| 
\leq (\log M)\cdot \TV(U,\widetilde U),
\end{align*}
where $P_{J|U}$ is the conditional law of $J$ given $U$, whose entropy is bounded
above by $\log M$.  Now, we simply have
$$ \TV(U,\tilde U)
\leq \frac {1} M \sum_{j=1}^M \TV(\bQ_{c^{(j)}}^{\otimes d}, \tilde \bQ_{c^{(j)}}^{\otimes d}) 
\lesssim \sqrt{n/d} + \sqrt{mk/d}=:\delta,$$
by Lemma~\ref{lem:kl_poisson_models}. Notice that $\delta$
vanishes under condition~\ref{assm:sample_size}. 
We now have
\begin{align*}
I(\widetilde U; J)
 \lesssim (\log M )\delta + I(U; J) 
 &\lesssim  
   (\log M )\delta + \frac 1 {M^2} \sum_{j,j'=1}^M \KL(\bQ_{c^{(j)}}^{\otimes d} \|  \bQ_{c^{(j')}}^{\otimes d}) \\ 
 &\lesssim (\log M) \delta +  \frac {nm} {dM^2} \sum_{j,j'=1}^M \|c^{(j)} - c^{(j')}\|_2^2,
\end{align*}
where the final inequality follows from Lemma~\ref{lem:tv_bound}(ii). 
On the other hand, when $m > d$, we may apply Lemma~\ref{lem:multinomial_model_kl} to directly
bound the mutual information by:
\begin{align*}
I(\widetilde U;J) \leq \frac 1 {M^2} \sum_{j,j'=1}^M \KL\big(\widetilde \bQ_{c^{(j)}} \| \widetilde \bQ _{c^{(j')}}\big)
\lesssim \frac{n}{M^2} \sum_{j,j'}^M \|c^{(j)} - c^{(j')}\|_2^2.  
\end{align*}
It follows that for all $m$ satisfying condition~\ref{assm:sample_size}, we have
\begin{align*}
I(\widetilde U;J) 
&\lesssim (\log M) \delta +  \frac{nm_d}{dM^2} \sum_{j,j'}^M \|c^{(j)} - c^{(j')}\|_2^2
\lesssim (\log M) \delta + \frac{nm_d\beta^2 s}{d}.  
\end{align*}
With this bound in place, let us apply Fano's Lemma, for which we
will use different choices of $\beta,s$ depending on the magnitude of $k$. 
If $k < \sqrt{nm_d/d}$, 
we may choose a small enough constant $C > 0$ such that if 
$s = \lfloor k/10\rfloor$ and $\beta = C \sqrt{d/nm_d}$, then $I(\widetilde U; J) \leq C_1 \log M$,
where $C_1$ is the constant appearing in the statement of Lemma~\ref{lem:fano}, and we
used property (ii) above. 
Thus, when $k < \sqrt{nm_d/d}$, we obtain the minimax lower bound
\begin{align}\label{eq:unsorted_calM_first} 
\calM(n,d,k,m) \gtrsim \beta \sqrt s = \sqrt{\frac{kd}{nm_d}}.
\end{align}  
Due to condition~\ref{assm:sample_size},
it remains to handle the
case $k \geq (\sqrt{nm_d/d})^{1+\gamma}$.
Pick $\beta = 1/s$, with 
$s$ the smallest integer satisfying
$$s\geq c_0\sqrt{\frac{nm_d}{d\log(dk/(nm_d))}},$$
for a sufficiently
small constant $c_0 > 0$ to be defined
below.
Then, 
$$I(\widetilde U;J) \lesssim \frac{nm_d}{ds}
\asymp s \log(dk/(nm_d))
\asymp s \log(k/s).
$$
Therefore, by property (ii) above, 
we have $I(\widetilde U;J) \leq C_1 \log M / 2$
provided $c_0$ is chosen sufficiently small.
Thus, by Fano's Lemma, one has
$$\calM(n,d,k,m) \gtrsim \beta \sqrt s \asymp 
\sqrt{\frac{d\log(k)}{n m_d}},$$
where we used the fact
that $\log(kd/(nm_d))\asymp \log(k)$
under the stated assumption on $k$.

Altogether, we have thus shown
that, under condition~\ref{assm:sample_size}
and $nm > d\cdot \log k$, we have
$$\calM(n,d,k,m) \gtrsim \min\left\{\left(\frac{d\log(k)}{nm_d}\right)^{\frac 1 4}, \left(\frac{dk}{nm_d}\right)^{\frac 1 2}\right\}. $$ 
This proves the claim.
\end{proof} 
   
\subsection{Proof of Lemma~\ref{lem:tv_bound}}\label{app:pf_lem_tv_bound}

Throughout the proof, let $\varpi = (\varpi_1,\dots,\varpi_k)$ denote a random vector with entries $$\varpi_i \overset{\mathrm{i.i.d.}}\sim \calE_d,
\quad i=1,\dots,k.$$ 
Recall that we write 
$$ \bQ_c = \bbE_\varpi\left[\mathrm{Poi} \left(n\langle c,\varpi\rangle\right)\otimes \bigotimes_{i=1}^k \mathrm{Poi}(m\varpi_i) \right],$$
and we denote the density of $\bQ_c$, defined over $I:= \bbN_0\times \bbN_0^k$, as
$$\bq_c(x,y) = \bbE_\varpi\left[ f(x;n\langle \varpi,c\rangle)
\prod_{i=1}^k f(y_i;m\varpi_i)\right],\quad (x,y)\in I,$$
where $f(\cdot;\lambda)$ is the $\mathrm{Poi}(\lambda)$ density. 
Fix once and for all the truncation parameter $t = (n+m)^{-\gamma_0} d^{\gamma_0-1}$
with $\gamma_0 = \gamma / (1+\gamma)$, and write
$$ \bQ_c^t = \bbE_\varpi\left[\mathrm{Poi} \left(n\langle c,\varpi^t\rangle\right)\otimes \bigotimes_{i=1}^k \mathrm{Poi}(m\varpi_i^t) \right],
~~ \text{where }
\varpi^t = \big(\varpi_1\wedge t, \dots, \varpi_k\wedge t\big),$$
with corresponding density denoted $\bq_c^t(x,y)$. 
The following Lemma reduces our problem to that of bounding
the $\chi^2$-divergence between the truncated measures.

\begin{lemma}
\label{lem:truncation_tv}
Let $1 \leq k \leq d$. Under assumption~\ref{assm:sample_size}, there exist constants 
$C_1,C_2,a > 0$ depending only on $\gamma$ such that the following assertions hold
for all $c,\bar c \in \Delta_k$.
\begin{enumerate}
\item We have, 
$$ \mathrm{TV}\big(\bQ_c^{\otimes d}, \bQ_{\bar c}^{\otimes d}\big) \leq C_1  
\Big( \sqrt{d \cdot \chi^2(\bQ_c^t,\bQ_{\bar c}^t)} + e^{-C_2 d^a }\Big).$$
\item 
If $W(c,\bar c) = 0$, then
$$ \KL\big(\bQ_c^{\otimes d}, \bQ_{\bar c}^{\otimes d}\big) \leq C_1  
\Big(  d \cdot \chi^2(\bQ_c^t,\bQ_{\bar c}^t) + e^{-C_2 d^a}\Big).$$
\end{enumerate}
\end{lemma}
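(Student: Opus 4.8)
The plan is to replace each of the product measures $\bQ_c^{\otimes d}$ and $\bQ_{\bar c}^{\otimes d}$ by its truncated analogue $(\bQ_c^t)^{\otimes d}$ and $(\bQ_{\bar c}^t)^{\otimes d}$ at an exponentially small cost in total variation, after which both claims reduce to standard tensorization and $f$-divergence comparison inequalities. The truncation cost is where the choice $t=(n+m)^{-\gamma_0}d^{\gamma_0-1}$ with $\gamma_0=\gamma/(1+\gamma)$ enters: whenever $(n+m)^{1+\gamma}\le d$ (which holds in the regime where this lemma is invoked, cf.\ Lemma~\ref{lem:tv_bound}), one has $dt=(d/(n+m))^{\gamma_0}\ge d^{\gamma_0^2}$, and I take $a=\gamma_0^2$. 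Concretely, I couple $\bQ_c^{\otimes d}$ and $(\bQ_c^t)^{\otimes d}$ by drawing the $kd$ latent exponentials $\varpi_i^{(j)}\sim\calE_d$ and the Poisson innovations once and feeding in either $\varpi^{(j)}$ or its coordinatewise truncation; the two outputs agree on the event $E=\{\varpi_i^{(j)}\le t\text{ for all }i,j\}$, so $\mathrm{TV}(\bQ_c^{\otimes d},(\bQ_c^t)^{\otimes d})\le\bbP(E^c)\le kd\,e^{-dt}\le e^{-C_2 d^a}$ (absorbing the prefactor $kd\le d^2$ into a slightly smaller exponent), and identically with $\bar c$ in place of $c$.

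For part (i) I then apply the triangle inequality for total variation along $\bQ_c^{\otimes d}\to(\bQ_c^t)^{\otimes d}\to(\bQ_{\bar c}^t)^{\otimes d}\to\bQ_{\bar c}^{\otimes d}$. The first and third links are each $\le e^{-C_2 d^a}$ by the coupling above, and the middle link is handled by Pinsker's inequality together with tensorization of the Kullback--Leibler divergence and the bound $\KL\le\chi^2$: $\mathrm{TV}((\bQ_c^t)^{\otimes d},(\bQ_{\bar c}^t)^{\otimes d})\le\sqrt{\tfrac12\KL((\bQ_c^t)^{\otimes d}\|(\bQ_{\bar c}^t)^{\otimes d})}=\sqrt{\tfrac d2\,\KL(\bQ_c^t\|\bQ_{\bar c}^t)}\le\sqrt{\tfrac d2\,\chi^2(\bQ_c^t,\bQ_{\bar c}^t)}$. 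Adding the three bounds gives the claim with a universal $C_1$; this part needs no hypothesis relating $c$ and $\bar c$.

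Part (ii) is more delicate, since the Kullback--Leibler divergence admits no triangle inequality, and this is where $W(c,\bar c)=0$ is used. That hypothesis means $\bar c$ is a permutation of $c$, so relabelling the i.i.d.\ entries of $\varpi$ shows that $\bQ_{\bar c}$ is $\bQ_c$ with the side-information labels $y_1,\dots,y_k$ permuted; in particular the first-coordinate marginals of $\bQ_c$ and $\bQ_{\bar c}$ agree, and, crucially, the likelihood ratio $\bq_c/\bq_{\bar c}$ is polynomially bounded in $d$ on an event of probability $1-e^{-\Omega(d^a)}$, while remaining (much more weakly, but still) finite everywhere. I then write $\KL(\bQ_c^{\otimes d}\|\bQ_{\bar c}^{\otimes d})=d\,\KL(\bQ_c\|\bQ_{\bar c})\le d\,\chi^2(\bQ_c\|\bQ_{\bar c})$ and transfer the untruncated $\chi^2$ to the truncated one: on the good event $\chi^2$ is controlled by $2M\,\mathrm{TV}(\bQ_c,\bQ_{\bar c})$ with $M$ the polynomial likelihood-ratio bound, while its complement contributes at most $M'\,\bbP_{\bQ_c}(E^c)\le e^{-\Omega(d^a)}$ with $M'$ the crude uniform bound; combined with the total-variation truncation estimates from the first paragraph this yields $\chi^2(\bQ_c\|\bQ_{\bar c})\le\chi^2(\bQ_c^t,\bQ_{\bar c}^t)+e^{-C_2 d^a}$ after adjusting constants, and multiplying through by $d$ finishes the argument.

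The main obstacle is exactly this $\chi^2$-transfer in part (ii): I must show that truncating the latent exponentials perturbs the $\chi^2$-divergence between $\bQ_c$ and $\bQ_{\bar c}$ by no more than the truncation probability, which forces the two-scale control of $\bq_c/\bq_{\bar c}$ just described. Establishing the polynomial bound on the bulk event is the routine-but-fiddly heart of the argument — one checks that, after the posterior update of $\varpi$ by a typical count vector, a product of Poisson densities changes by at most a polynomial factor under a permutation of its (tiny, since $m\varpi_i\le mt\ll1$) mean parameters, while atypically large counts are exponentially rare with exponent exceeding $dt$. Everything else (the coupling, Pinsker, tensorization, and $\KL\le\chi^2$) is standard.
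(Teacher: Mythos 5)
Part (i) is essentially the paper's argument (the paper obtains the two truncation links via Pinsker plus a KL bound $\KL(\bQ_c^t\|\bQ_c)\lesssim e^{-d^a/2}$ rather than a coupling, but your coupling gives the same exponential bound), so there is nothing to flag there.

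Part (ii), however, has a genuine gap in the transfer step. You bound $\chi^2(\bQ_c\|\bQ_{\bar c})\lesssim M\cdot\TV(\bQ_c,\bQ_{\bar c})+e^{-\Omega(d^a)}$ on the bounded-ratio event and then want to pass through the truncated measures. But after the triangle inequality for $\TV$, the middle link $\TV(\bQ_c^t,\bQ_{\bar c}^t)$ can only be related to $\chi^2(\bQ_c^t,\bQ_{\bar c}^t)$ by a square root (e.g.\ Cauchy--Schwarz or Pinsker give $\TV\lesssim\sqrt{\chi^2}$, and the reverse inequality $\TV\le\chi^2$ is simply false). So the best your chain can produce is
$$\chi^2(\bQ_c\|\bQ_{\bar c}) \;\lesssim\; M\sqrt{\chi^2(\bQ_c^t,\bQ_{\bar c}^t)} + e^{-\Omega(d^a)},$$
not the claimed $\chi^2(\bQ_c\|\bQ_{\bar c})\lesssim\chi^2(\bQ_c^t,\bQ_{\bar c}^t)+e^{-\Omega(d^a)}$. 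The square-root loss is fatal: when $\chi^2(\bQ_c^t,\bQ_{\bar c}^t)$ is small (the regime that matters in Lemma~\ref{lem:tv_bound}), $d\cdot M\sqrt{\chi^2^t}\gg d\cdot\chi^2^t$, so the resulting $\KL$ bound is too weak to support the downstream $\chi^2$-divergence lower bounds.

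The way to avoid the square root is to route the comparison through Hellinger rather than total variation, which is exactly what the paper's Lemma~\ref{lem:hellinger_chi2} does. On the bounded-ratio set, one has the reverse Pinsker-type inequality $\chi^2(\bQ_c\|\bQ_{\bar c})\lesssim H^2(\bQ_c,\bQ_{\bar c})+e^{-\Omega(d^a)}$ (using $\chi^2 = \sum(\sqrt{\bq_c}-\sqrt{\bq_{\bar c}})^2(\sqrt{\bq_c}+\sqrt{\bq_{\bar c}})^2/\bq_{\bar c}\le(1+\sqrt{M})^2 H^2$); Hellinger then satisfies a weak triangle inequality $H^2(P,R)\le 2H^2(P,Q)+2H^2(Q,R)$ that lets you pass through the truncated measures; and finally $H^2\le\chi^2$ closes the loop without ever taking a square root. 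All three of these steps preserve the quadratic scaling, unlike the $\TV$-based chain. You should also be aware that the paper's bounded-ratio argument is sharper than ``polynomial'': it derives an explicit formula for $\bq_c(x,y)$, identifies a function $\varphi_c$ depending only on the sorted $c$, and shows $(1+n/d)^{-(x+|y|+k)}\varphi_c\le\bq_c\le\varphi_c$, which (since $\varphi_c=\varphi_{\bar c}$ when $W(c,\bar c)=0$) gives density ratio bounded by $2$ on $\{x+|y|\lesssim d/n\}$; the complement of this set has mass $\lesssim e^{-\Omega(d)}$. This explicit form is what makes the constants in Lemma~\ref{lem:hellinger_chi2} independent of $d$.
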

The proof appears in Section~\ref{sec:pf_lem_truncation_tv}. 
Now, for the remainder of the proof, let $c,\bar c \in \Delta_k$ 
be any elements such that 
$$m_j(c) = m_j(\bar c),\quad j=1,\dots,k-1.$$
Notice that $q_1=0$ 
when $W(c,\bar c)=0$, 
thus, in view of Lemma~\ref{lem:truncation_tv}, both assertions of the claimed Lemma~\ref{lem:tv_bound} will
follow if we are able to prove the following upper bound
on the $\chi^2$-divergence between {\it truncated} distributions:
\begin{equation}\label{eq:goal_of_lb}
\chi^2(\bQ_c^t,\bQ_{\bar c}^t) 
\lesssim q_1^2\left(\frac n d\right)^k + \frac{nm q_2^2}{d^2}. 
\end{equation}
The remainder of the proof is devoted to deriving equation~\eqref{eq:goal_of_lb}. 
Our approach consists of expanding the $\chi^2$-divergence in terms of moment differences
of the mixing measures of $\bQ_c$ and $\bQ_{\bar c}$. 
Expansions of this type have been used for Gaussian mixture models
since the early work of~\cite{lepski1999,ingster2001}; 
for Poisson mixture models, related ideas have been used for instance by~\cite{han2018,wu2019}. 
Although these methods often scale
poorly for high-dimensional mixtures~\citep{schramm2022computational,han2024approximate},
our earlier truncation
step ensures that the relevant
mixing measures
have support near zero, and thus
have moments which decay exponentially in $k$. 

Define the following quantities:
\begin{alignat*}{2}
\widetilde U_c &= n\langle c,\varpi^t\rangle,\quad \widetilde V_i &&= m\varpi_i^t, 
\quad \lambda = \bbE[\varpi_1^t],\\
U_c &= \widetilde U_c - n\lambda, \quad V_i &&= \widetilde V_i - m\lambda,\quad i=1,\dots,k.
\end{alignat*}
Notice that the random variables $U_c$ and $V_i$ all have mean zero. 
We will make use of the following elementary bounds on $\lambda$.
\begin{lemma}
\label{lem:lambda_scaling}
It holds that
$$\frac 1 d \big(1 - td e^{-td}\big) \leq  \lambda  \leq \frac 1 d.$$
\end{lemma}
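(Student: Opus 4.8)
The plan is to compute $\lambda=\mathbb{E}[\varpi_1\wedge t]$ in closed form and then read off the two-sided bound; no serious obstacle arises, so the proof is short.

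First I would use the layer-cake (tail) representation of the mean of a nonnegative random variable. Since $\varpi_1\sim\calE_d$ has survival function $\mathbb{P}(\varpi_1>x)=e^{-dx}$, the truncated variable $\varpi_1^t=\varpi_1\wedge t$ satisfies $\mathbb{P}(\varpi_1^t>x)=e^{-dx}\,I(x<t)$, hence
\[
\lambda=\int_0^\infty \mathbb{P}(\varpi_1^t>x)\,dx=\int_0^t e^{-dx}\,dx=\frac{1-e^{-dt}}{d}.
\]
(Equivalently one can integrate $\int_0^t x\,d\,e^{-dx}\,dx+t e^{-dt}$ by parts.) The upper bound $\lambda\le 1/d$ is then immediate from $e^{-dt}\ge 0$.

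For the lower bound, the only thing to observe is that $dt\ge 1$. Indeed, with $t=(n+m)^{-\gamma_0}d^{\gamma_0-1}$ and $\gamma_0=\gamma/(1+\gamma)$ one has $dt=\big(d/(n+m)\big)^{\gamma_0}$, and the standing hypothesis $(n+m)^{1+\gamma}\le d$ (together with $n+m\ge 1$) gives $d/(n+m)\ge 1$, so $dt\ge 1$. Consequently $e^{-dt}\le dt\,e^{-dt}$, and therefore
\[
\lambda=\frac{1-e^{-dt}}{d}\ge\frac{1-dt\,e^{-dt}}{d}=\frac 1 d\big(1-td e^{-td}\big),
\]
which is exactly the claimed lower bound.

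The main (and essentially only) point that needs a word is the inequality $dt\ge 1$, which is where assumption~\ref{assm:sample_size}---specifically the condition $(n+m)^{1+\gamma}\le d$ inherited from the hypotheses of Lemma~\ref{lem:tv_bound}---enters; everything else is a single elementary integral.
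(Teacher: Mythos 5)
Your proof is correct, and it actually takes a cleaner and more careful route than the paper's. You compute $\lambda=\bbE[\varpi_1\wedge t]=\frac{1-e^{-dt}}{d}$ exactly via the tail formula, and then observe that the claimed lower bound $\frac{1}{d}(1-dt\,e^{-dt})$ is equivalent to $dt\ge 1$, which you verify from the standing hypothesis $(n+m)^{1+\gamma}\le d$. The paper instead drops the nonnegative term $t\,e^{-dt}$ from the exact decomposition $\lambda=\int_0^t x\,d e^{-dx}\,dx+t\,e^{-dt}$ and then asserts
\[
\int_0^t x\,d e^{-dx}\,dx=\frac{1}{d}\bigl(1-dt\,e^{-dt}\bigr),
\]
but this integral actually equals $\frac{1}{d}\bigl(1-e^{-dt}-dt\,e^{-dt}\bigr)$, which is \emph{strictly smaller} than the target lower bound; the displayed equality in the paper is wrong (an $e^{-dt}/d$ term is missing), and the argument as written does not establish the lemma. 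Your derivation also makes explicit that the inequality hinges on $dt\ge 1$ (equivalently $n+m\le d$), a point the paper's proof silently relies on but never records. In short: same lemma, but your route is both more elementary (one tail integral, one observation) and more honest about where the hypothesis enters.
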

The proof appears in Section~\ref{app:pf_lem_lambda_scaling}.
In particular, by definition of $t$ and the fact that $(n+m)^{1+\gamma}\leq d$, we have $\lambda \asymp 1/d$. Now, notice that 
%
 \begin{align*}
\chi^2&(\bQ_{\bar c}^t, \bQ_{c}^t)
= \sum_{x,y_1,\dots,y_k=0}^\infty \frac{\left\{ \bbE_\varpi\left[\big(f(x; \widetilde U_{\bar c})- f(x; \widetilde U_{c})\big)\prod_{i=1}^k f(y_i; \widetilde V_i)  \right]\right\}^2}
                         {\bbE_\varpi \left[f(x; \widetilde U_{c})\prod_{i=1}^k f(y_i; \widetilde V_i) \right]}. 
\end{align*} 
We lower bound the denominator using the following. 
\begin{lemma}
\label{lem:chi2_lb} 
Assume condition~\ref{assm:sample_size}. Then, there exists a constant $C = C(\gamma) > 0$ such that for all~$(x,y) \in \bbN_0$,  
\begin{align*}
\bbE_\varpi & \left[f(x; \widetilde  U_{c})\prod_{i=1}^k f(y_i;\widetilde V_i) \right] \geq \frac 1 C f(x;n\lambda) \prod_{i=1}^k f(y_i;m\lambda).
\end{align*}
\end{lemma}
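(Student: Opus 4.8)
The plan is to prove Lemma~\ref{lem:chi2_lb} by decoupling the $k$ coordinates through the Poisson convolution identity, which reduces the whole bound to a one-dimensional estimate. Since $n\langle c,\varpi^t\rangle = \sum_{i=1}^k n c_i\varpi_i^t$ and $f\big(x;\sum_i\mu_i\big) = \sum_{x_1+\dots+x_k=x}\prod_i f(x_i;\mu_i)$, I would apply Tonelli's theorem and the independence of the $\varpi_i$ to get
\begin{equation*}
\bbE_\varpi\Big[f(x;\widetilde U_c)\prod_{i=1}^k f(y_i;\widetilde V_i)\Big] = \sum_{x_1+\dots+x_k=x}\prod_{i=1}^k g_i(x_i,y_i), \qquad g_i(a,b) := \bbE_{\varpi_i}\big[f(a;n c_i\varpi_i^t)\, f(b;m\varpi_i^t)\big].
\end{equation*}
The same decomposition applied to $f(x;n\lambda)$, using $n\lambda=\sum_i n c_i\lambda$, rewrites the target of the lemma as $\sum_{x_1+\dots+x_k=x}\prod_i h_i(x_i,y_i)$ with $h_i(a,b):=f(a;n c_i\lambda)\, f(b;m\lambda)$.

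The crux is the single-coordinate comparison $g_i(a,b)\ge e^{-(n c_i+m)t}\,h_i(a,b)$ for every $i$ and all $a,b\in\bbN_0$. Using $f(a;\mu')/f(a;\mu) = (\mu'/\mu)^a e^{-(\mu'-\mu)}$ and pulling the deterministic factor $h_i(a,b)$ out of the expectation,
\begin{equation*}
\frac{g_i(a,b)}{h_i(a,b)} = \bbE_{\varpi_i}\Big[\big(\varpi_i^t/\lambda\big)^{a+b}\exp\!\big(-(n c_i+m)(\varpi_i^t-\lambda)\big)\Big].
\end{equation*}
Because $0\le\varpi_i^t\le t$, the exponential is at least $e^{-(n c_i+m)t}$; and because $\bbE[\varpi_i^t]=\lambda$ and $z\mapsto z^{a+b}$ is convex on $[0,\infty)$, Jensen's inequality gives $\bbE[(\varpi_i^t/\lambda)^{a+b}]\ge1$ (the cases $c_i=0$ and $a=b=0$ being trivial). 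Multiplying these bounds over $i$ and using $\sum_i c_i=1$ shows every summand obeys $\prod_i g_i(x_i,y_i)\ge e^{-t(n+km)}\prod_i h_i(x_i,y_i)$, and summing over compositions yields the lemma with $C = e^{t(n+km)}$.

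It then remains to check that $t(n+km)$ is bounded. With $t=(n+m)^{-\gamma_0}d^{\gamma_0-1}$ and $\gamma_0=\gamma/(1+\gamma)$, so that $1-\gamma_0=1/(1+\gamma)$, the hypothesis $(n+m)^{1+\gamma}\le d$ of Lemma~\ref{lem:tv_bound} gives $nt \le (n+m)^{1-\gamma_0}d^{\gamma_0-1} = \big((n+m)/d\big)^{1/(1+\gamma)}\le1$, while the hypothesis $k^{1+\gamma}\le d/m$ gives $kmt \le k\,m^{1-\gamma_0}d^{\gamma_0-1} = k\,(m/d)^{1/(1+\gamma)}\le1$. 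Hence $t(n+km)\le2$ and $C=e^2$ works, uniformly in $\gamma$.

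There is no genuinely hard step: the decoupling identity and the interchange of sum and expectation are justified by nonnegativity, and the per-coordinate bound is a one-line Jensen estimate. The only step that requires any care is the last paragraph, and the conceptual point worth stressing is that the truncation level $t$ was chosen precisely so that the product of the $k$ per-coordinate losses $e^{-(n c_i+m)t}$ collapses to a constant independent of $k$ — this is exactly where the $(n+m)^{1+\gamma}\le d$ and $k^{1+\gamma}\le d/m$ hypotheses (i.e.\ conditions~\ref{assm:sample_size}(i) and~\ref{assm:sample_size}(iv)) enter, and it is the only thing one might a priori fear could fail.
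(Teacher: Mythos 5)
Your proof is correct and is essentially the same argument as the paper's, just repackaged: the Poisson convolution identity you use is the multinomial theorem written in Poisson notation, your per-coordinate lower bound $e^{-(n c_i+m)t}$ multiplies to the paper's global factor $e^{-(n+km)t}$ (using $\sum_i c_i = 1$), and the Jensen step $\mathbb{E}\big[(\varpi_i^t/\lambda)^{a+b}\big]\geq 1$ is identical to the paper's $\mathbb{E}[(\varpi_i^t)^{p}]\geq \lambda^{p}$. The modular per-coordinate presentation is perhaps slightly cleaner, but no new idea is introduced.
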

The proof appears in Appendix~\ref{app:pf_lem_chi2_lb}.
We may thus write, 
\begin{equation}
\label{eq:chi_sq_lb} 
\begin{aligned}
\chi^2&(\bQ_{\bar c}^t, \bQ_{c}^t)
\\ &\leq C \sum_{x,y_1,\dots,y_k=0}^\infty \frac{\left\{ \bbE_\varpi\left[\big(f(x; n\lambda + U_{\bar c})- f(x; n\lambda+  U_{c})\big)\prod_{i=1}^k f(y_i; m\lambda +  V_i)  \right]\right\}^2}
                         {f(x;n\lambda) \prod_{i=1}^k f(y_i;m\lambda)}. 
\end{aligned}
\end{equation} 
We will proceed by expanding the numerator of the above display in the basis of 
Charlier polynomials, whose definition and basic properties
are recalled in Appendix~\ref{app:charlier_polynomials}. Let 
$\{\varphi_\ell(\cdot;\lambda)\}_{\ell=0}^\infty$
denote the univariate family of Charlier polynomials with parameter $\lambda > 0$. 
Given $\blambda = (\lambda_0,\dots,\lambda_k) \in \bbR_+^{k+1}$, we define the following tensor-product family of Charlier polynomials
$$\varphi_{\alpha,\beta}(x,y;\blambda) = \varphi_\alpha(x; \lambda_0) \cdot \prod_{i=1}^k \varphi_{\beta_i}(y_i; \lambda_{i}),
\quad x\in \bbR, y\in \bbR^k,$$
for any multi-indices $(\alpha,\beta)\in I$, where $I := \bbN_0\times \bbN_0^{k}$.
In what follows, we show that they form an orthogonal basis with respect to the $L^2(g_{\blambda})$ inner product, 
where 
$$g_{\blambda}(x,y) = f(x;\lambda_0) \cdot \prod_{i=1}^k f(y_i;\lambda_i),\quad x\in \bbN_0,y\in \bbN_0^k.$$
\begin{lemma}
\label{lem:charlier_tensor}
The polynomial family $\{\varphi_{\alpha,\beta}(\cdot,\cdot;\blambda)\}_{(\alpha,\beta)\in I}^\infty$
with parameter $\blambda > 0$ is an orthogonal basis of $L^2(g_{\blambda})$, such that
$$\bbE_{(X,Y) \sim g_{\blambda}} \Big[\varphi_{\alpha,\beta}(X,Y;\blambda) \varphi_{\alpha',\beta'}(X,Y;\blambda)\Big]
 = \alpha! \lambda_0^\alpha \cdot \prod_{i=1}^k  \beta_i! \lambda_i^{\beta_i} 
 \cdot \bbI\big((\alpha,\beta)=(\alpha',\beta')\big),$$
 for any $(\alpha,\beta), (\alpha',\beta')\in I.$
Furthermore, one has the relation
$$g_{\blambda+\bu}(x,y) = g_{\blambda}(x,y) \sum_{(\alpha,\beta)\in I} \varphi_{\alpha,\beta}(x,y;\blambda) \frac{u_0^\alpha}{\alpha!\lambda_0^\alpha} \prod_{i=1}^k \frac{u_i^{\beta_i} }{\beta_i!\lambda_i^{\beta_i}},\quad
(x,y)\in I,$$
for any $\bu=(u_0,\dots,u_k)\in \bbR^{k+1}_+$ such that $\blambda+\bu$ has 
positive entries. 

\end{lemma}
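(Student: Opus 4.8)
The plan is to reduce the statement to the one-dimensional theory of Charlier polynomials recalled in Appendix~\ref{app:charlier_polynomials}, exploiting that $g_{\blambda}$ is a product measure. Two univariate facts will do all the work: the orthogonality relation $\sum_{x\geq 0}\varphi_\alpha(x;\mu)\varphi_{\alpha'}(x;\mu)f(x;\mu)=\alpha!\,\mu^\alpha\,\bbI(\alpha=\alpha')$, and the generating-function identity
\[
\frac{f(x;\mu+u)}{f(x;\mu)}\;=\;e^{-u}\Big(1+\tfrac{u}{\mu}\Big)^{x}\;=\;\sum_{\alpha\geq 0}\varphi_\alpha(x;\mu)\,\frac{u^\alpha}{\alpha!\,\mu^\alpha},
\]
valid for every $x\in\bbN_0$ and every $u$ with $\mu+u>0$. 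For fixed $x$ the right-hand side is a genuine power series in $u$ whose sum is a polynomial in $x$ times an entire function of $u$; replacing $u$ by $|u|$ bounds its absolute version by $e^{|u|}(1+|u|/\mu)^{x}<\infty$, so it converges absolutely, which removes all convergence concerns downstream.

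For the orthogonality-and-normalization assertion, the key observation is that $g_{\blambda}(x,y)=f(x;\lambda_0)\prod_{i=1}^{k}f(y_i;\lambda_i)$ is the product of its marginals, while $\varphi_{\alpha,\beta}(x,y;\blambda)$ splits as a product over the separate variables $x,y_1,\dots,y_k$. Hence $\bbE_{(X,Y)\sim g_{\blambda}}[\varphi_{\alpha,\beta}\varphi_{\alpha',\beta'}]$ factors into $\bbE_{f(\cdot;\lambda_0)}[\varphi_\alpha(X;\lambda_0)\varphi_{\alpha'}(X;\lambda_0)]\cdot\prod_{i=1}^{k}\bbE_{f(\cdot;\lambda_i)}[\varphi_{\beta_i}(Y_i;\lambda_i)\varphi_{\beta_i'}(Y_i;\lambda_i)]$, and applying the univariate orthogonality to each of the $k+1$ factors yields exactly $\alpha!\lambda_0^\alpha\cdot\prod_{i=1}^{k}\beta_i!\lambda_i^{\beta_i}\cdot\bbI\big((\alpha,\beta)=(\alpha',\beta')\big)$. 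Completeness follows because the univariate Charlier system is a complete orthogonal basis of $L^2(f(\cdot;\mu))$ --- the Poisson law being moment-determinate, polynomials are dense in $L^2(f(\cdot;\mu))$ --- together with the fact that $L^2(g_{\blambda})$ is unitarily isomorphic to the Hilbert tensor product $\bigotimes_{i=0}^{k}L^2(f(\cdot;\lambda_i))$, in which the tensor product of complete orthogonal systems is again a complete orthogonal system.

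For the reproducing identity, I will write $g_{\blambda+\bu}(x,y)=f(x;\lambda_0+u_0)\prod_{i=1}^{k}f(y_i;\lambda_i+u_i)$, divide each of the $k+1$ factors by its value at $\bu=0$ (legitimate since $\lambda_i+u_i>0$ throughout), and substitute the univariate generating-function identity coordinatewise to obtain
\[
g_{\blambda+\bu}(x,y)=g_{\blambda}(x,y)\Big(\sum_{\alpha\geq 0}\varphi_\alpha(x;\lambda_0)\tfrac{u_0^\alpha}{\alpha!\lambda_0^\alpha}\Big)\prod_{i=1}^{k}\Big(\sum_{\beta_i\geq 0}\varphi_{\beta_i}(y_i;\lambda_i)\tfrac{u_i^{\beta_i}}{\beta_i!\lambda_i^{\beta_i}}\Big).
\]
Expanding this $(k+1)$-fold product of power series and recognizing the definition $\varphi_{\alpha,\beta}(x,y;\blambda)=\varphi_\alpha(x;\lambda_0)\prod_{i=1}^{k}\varphi_{\beta_i}(y_i;\lambda_i)$ gives precisely the claimed formula.

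The lemma is essentially bookkeeping once the univariate Charlier facts are in hand; the one step deserving a line of care is the interchange of summation in the last display --- equivalently, the rearrangement of a $(k+1)$-fold Cauchy product of power series --- which I will justify by the absolute convergence noted in the first paragraph, invoking Mertens' theorem (equivalently Fubini) for products of absolutely convergent series.
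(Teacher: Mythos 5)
Your proof takes essentially the same route as the paper's: both reduce to the univariate Charlier facts by exploiting the product structure of $g_{\blambda}$ — orthogonality factors coordinatewise into the one-dimensional orthogonality relation, and the reproducing identity is the coordinatewise generating-function identity multiplied out. You add two items the paper leaves implicit (the completeness argument via moment-determinacy of the Poisson law together with the Hilbert tensor-product decomposition, and the absolute-convergence justification for rearranging the $(k+1)$-fold Cauchy product); these are correct and strengthen the exposition but do not change the argument.

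One small caution, which applies equally to the paper's own write-up and is not a flaw in your reasoning per se: the paper's background display~\eqref{eq:charlier_generating_fn}, $\sum_\ell \varphi_\ell(x;\lambda)t^\ell/\ell! = e^{-t}(1+t/\lambda)^x$, together with the definition $\varphi_\ell(x;\lambda)=\sum_r(-1)^{\ell-r}\binom{\ell}{r}(x)_r/\lambda^r$, actually gives $f(x;\lambda+u)/f(x;\lambda)=\sum_\ell\varphi_\ell(x;\lambda)\,u^\ell/\ell!$ (substituting $t=u$), not the extra factor $\lambda^{-\ell}$ appearing in your middle-to-right equality and in the lemma's statement. Correspondingly, with this normalization of $\varphi_\ell$ the univariate weight $\sum_x\varphi_\ell(x;\lambda)^2 f(x;\lambda)$ works out to $\ell!/\lambda^\ell$, not $\ell!\,\lambda^\ell$. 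The two discrepancies offset exactly when the lemma is applied to form the $\chi^2$ bound, so the downstream conclusions are unaffected, and they trace to a normalization inconsistency in the paper's Appendix~\ref{app:charlier_polynomials} rather than to your argument; still, a clean write-up should pick one normalization (e.g., redefine $\varphi_\ell$ with $\lambda^{\ell-r}$ in the numerator) and carry it consistently through both the orthogonality constant and the generating-function substitution.
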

The proof appears in Appendix~\ref{app:pf_lem_charlier_tensor}.  
Now, fixing $\blambda = (n\lambda,m\lambda,\dots,m\lambda)$, we deduce from Lemma~\ref{lem:charlier_tensor} that
\begin{align*}
  \bbE_\varpi&\left[\big(f(x; n\lambda + U_{\bar c})- f(x; n\lambda+  U_{c})\big)\prod_{i=1}^k f(y_i; m\lambda +  V_i) \right] \\
  &\hspace{1in} = g_{\blambda}(x,y) \sum_{(\alpha,\beta)\in I} \varphi_{\alpha,\beta}(x,y;\blambda) 
      \frac{\Delta_{\alpha,\beta}}{\alpha!\beta!(n\lambda)^\alpha (m\lambda)^{|\beta|}}, 
\end{align*}
where for any multi-indices  $(\alpha,\beta)\in I$, we write $|\beta|= \sum_i\beta_i$,
$\beta! = \beta_1!\cdots \beta_k!$, and 
$$\Delta_{\alpha,\beta} = \bbE_\varpi\Big[ (U_{\bar c}^\alpha-U_c^\alpha) V_1^{\beta_1} \cdots V_k^{\beta_k}\Big].$$
Thus, returning to equation~\eqref{eq:chi_sq_lb} , and 
using the orthogonality relation from Lemma~\ref{lem:charlier_tensor}, 
we arrive at 
\begin{align*}
\chi^2 (\bQ_{\bar c}, \bQ_{c})  
 &\leq C \sum_{(x,y)\in I} \left(\sum_{(\alpha,\beta)\in I} \varphi_{\alpha,\beta}(x,y;\blambda) 
      \frac{\Delta_{\alpha,\beta}}{\alpha!\beta!(n\lambda)^\alpha (m\lambda)^{|\beta|}}\right)^2 g_{\blambda}(x,y)  \\
 &=  C \sum_{(\alpha,\beta)\in I}  
      \frac{\Delta_{\alpha,\beta}^2}{\alpha!\beta!(n\lambda)^\alpha (m\lambda)^{|\beta|}} \\
       &=:C( S_1+S_2), 
\end{align*}
where
\begin{equation}
\label{eq:S1_S2}
S_1 = \sum_{\alpha=1}^\infty \frac{\Delta_\alpha^2}{\alpha!(n\lambda)^\alpha}, \quad 
  S_2 = \sum_{\alpha=1}^\infty \sum_{\substack{\beta \in \bbN_0^k \\ |\beta| \geq 1}}
    \frac{\Delta_{\alpha,\beta}^2}{\alpha!\beta! (n\lambda)^\alpha(m\lambda)^{|\beta|}},
\end{equation}    
    and where we abbreviate
$$\Delta_\alpha := \Delta_{\alpha,0} = \bbE_\varpi \big[ U_{\bar c}^\alpha - U_c^\alpha  \big].$$
We bound the terms $S_1$ and $S_2$ separately, beginning with the former. 
Recall that~$\kappa_\alpha(X)$ denotes the $\alpha$-th cumulant of a random variable $X$. 
By translational invariance of cumulants (except when $\alpha=1$), it holds that:
$$\kappa_\alpha(U_c) / n^\alpha = \kappa_\alpha( \langle c,\varpi^t\rangle) - \lambda \cdot I(\alpha=1),$$
for any $\alpha=1,2,\dots$.
Furthermore, since the elements of the vector~$\varpi$ are independent, we have
$$\kappa_\alpha(\langle c,\varpi^t\rangle)
= \sum_{i=1}^k c_i^\alpha \kappa_\alpha(\varpi_i^t)  =  \kappa_\alpha(\varpi_1^t) \cdot  m_\alpha(c),$$
where we recall that 
$m_\alpha(c)$ 
is the $\alpha$-th moment of the uniform distribution over $\{c_1,\dots,c_k\}$.
Since we assumed that $m_\alpha(c) = m_\alpha(\bar c)$ for all $1 \leq \alpha \leq k-1$, 
it must   follow from the preceding
two displays that
$\kappa_\alpha(U_c) = \kappa_\alpha(U_{\bar c})$ for all such $\alpha$, and we deduce that
\begin{align*}
S_1 \lesssim 
 \sum_{\alpha=k}^\infty \frac{\Delta_\alpha^2}{\alpha!(n\lambda)^\alpha}.
\end{align*}
Our aim is now to bound the remaining terms $\Delta_\alpha$, for $\alpha\geq k$.  
Notice that
\begin{align}
\label{eq:Delta_alpha_expansion} 
\nonumber 
\Delta_\alpha / n^\alpha 
 &= \bbE[(\langle \bar c,\varpi^t\rangle - \lambda)^\alpha] - \bbE[(\langle c,\varpi^t\rangle - \lambda)^\alpha]\\
 \nonumber 
 &= \sum_{j=k}^\alpha {\alpha \choose j} 
       \lambda^{\alpha-j}  \bbE\left[\langle \bar c,\varpi^t\rangle^j - \langle c,\varpi^t\rangle^j\right] \\
 &= \sum_{j=k}^\alpha {\alpha \choose j} 
       \lambda^{\alpha-j} (\bar \eta_j -\eta_j),
\end{align}
where we define the quantities
\begin{alignat*}{2}
\xi_j &= \kappa_j(\langle \varpi^t,c\rangle), &&\eta_j = \bbE[\langle \varpi^t,c\rangle^j], \\
\bar \xi_j &= \kappa_j(\langle \varpi^t,\bar c\rangle),\quad &&\bar \eta_j = \bbE[\langle \varpi^t,\bar c\rangle^j],
\quad j=1,2,\dots.
\end{alignat*}
By equation~\eqref{eq:bell_moments_cumulants} of Appendix~\ref{app:bell_polynomials}, the moments $\eta_j$ can be expressed in terms
of the cumulants $\xi_j$ via the following expansion in the Bell polynomial system:
\begin{align*}
\eta_j = \sum_{\ell=1}^j B_{j,\ell}\big(\xi_1,\dots,\xi_{j-\ell+1}\big) =
\sum_{\ell=1}^j  j! \sum_{(h_1,\dots,h_{j-\ell+1})\in \calH_{j,\ell}}
\prod_{i=1}^{j-\ell+1} \frac{\xi_i^{h_i}}{(i!)^{h_i} h_i!},
\end{align*}   
thus, for any $j=k,\dots,\alpha$, we have
\begin{align*}
\bar \eta_j-\eta_j
 &= \sum_{\ell=1}^j j! \sum_{(h_1,\dots,h_{j-\ell+1})\in \calH_{j,\ell}}
\bigg( \prod_{i=1}^{j-\ell+1} \frac{\xi_i^{h_i}}{(i!)^{h_i} h_i!} -
      \prod_{i=1}^{j-\ell+1} \frac{\bar\xi_i^{h_i}}{(i!)^{h_i} h_i!} \bigg).
\end{align*}
Let $K = K(j,\alpha) > 0$ denote a constant depending on $j,\alpha$, whose value may change from line to line. 
We have
\begin{align*}
|\bar \eta_j -\eta_j| \leq K \left|\sum_{\ell=1}^j  \sum_{(h_1,\dots,h_{j-\ell+1})\in \calH_{j,\ell}}
\bigg( \prod_{i=1}^{j-\ell+1}  \xi_i^{h_i}  -
      \prod_{i=1}^{j-\ell+1}  \bar\xi_i^{h_i}  \bigg)\right|.
\end{align*}
Now, recall that $m_i(c) = m_i(\bar c)$, and hence $\xi_i = \bar \xi_i$, for all $i=1,\dots,k-1$.
Thus 
\begin{align*}
|\bar\eta_j -\eta_j| &\leq K \left|\sum_{\ell=1}^j  \sum_{(h_1,\dots,h_{j-\ell+1})\in \calH_{j,\ell}}
\left(\prod_{i \leq k-1} \xi_i^{h_i}\right)
\left( \prod_{ i\geq k}\xi_i^{h_i} -
      \prod_{ i\geq k}\bar \xi_i^{h_i} \right)\right|,
\end{align*}
where all products are to be understood as ranging over all 
integers $1\leq i \leq j-\ell+1$ satisfying the stated conditions, 
with the convention that empty products equal~1. 
Now recalling that 
$\xi_i = \sum_{r=1}^k c_r^i \kappa_i(\varpi_1\wedge t)$ for all $i$,
due to the independence of the entries of $\varpi^t$, we 
have (cf.\,Appendix~\ref{app:bell_polynomials}):
\begin{align*}
\kappa_i(\varpi_1\wedge t)
 &\leq \sum_{\ell=1}^i (\ell-1)! B_{i,\ell}(\bbE[\varpi_1\wedge t],\dots,\bbE[(\varpi_1\wedge t)^{i-\ell+1}]) \\
 &\leq \sum_{\ell=1}^i (\ell-1)! B_{i,\ell}\left(\frac 1 d ,\dots ,\frac{(i-\ell+1)!}{d^{i-\ell+1}}\right) \\
 &\leq K\sum_{\ell=1}^i  B_{i,\ell}\left(\frac 1 d ,\dots ,\frac{1}{d^{i-\ell+1}}\right) \\
 &\leq K d^{-i}.
 \end{align*}
Thus, $\xi_i \leq Kd^{-i}$, and
similarly, $\bar \xi_i \leq K d^{-i}$. Using the definition of $\calH_{j,\ell}$, we
have $\sum_{i=1}^{j-\ell+1} ih_i = j$, thus
 \begin{align*}
|\bar\eta_j-\eta_j|
 &\leq K d^{-j}  \sum_{\ell=1}^j  \sum_{(h_1,\dots,h_{j-\ell+1})\in \calH_{j,\ell}}  
\left| \prod_{ i\geq k}(d^i\xi_i)^{h_i} -
      \prod_{ i \geq k}(d^i\bar \xi_i)^{h_i}\right|  \\
&\leq K d^{-j} \sum_{\ell=1}^j  \sum_{(h_1,\dots,h_{j-\ell+1})\in \calH_{j,\ell}}
\sum_{ i \geq k}\left|  (d^i\xi_i)^{h_i} -
       (d^i\bar \xi_i)^{h_i}\right|  \\
&\leq K d^{-j} \sum_{\ell=1}^j  \sum_{(h_1,\dots,h_{j-\ell+1})\in \calH_{j,\ell}}
\sum_{ i \geq k}d^i\left|  \xi_i  -
       \bar\xi_i \right|  \\
&\leq K  d^{-j}
\sum_{ i=k}^j d^i\left|  \xi_i  -
       \bar \xi_i \right|  \\
&= K   d^{-j}
\sum_{ i=k}^j d^i \left|  \kappa_i(\varpi_1\wedge t) \sum_{r=1}^k (c_r^i - \bar c_{r}^i)\right|  \leq  K d^{-j} q_1.
\end{align*} 
Returning to equation~\eqref{eq:Delta_alpha_expansion}, we deduce
that for any $\alpha \geq k$, there exists a constant $K_\alpha> 0$ such that
\begin{align*}
|\Delta_\alpha| 
 &\leq K_\alpha n^\alpha \sum_{j=k}^\alpha {\alpha \choose j} 
       \lambda^{\alpha-j} d^{-j}q_1  \leq K_\alpha q_1 \left(\frac n d\right)^\alpha. 
\end{align*} 
On the other hand, we also have the naive bound
\begin{align*}
|\Delta_\alpha| 
 &\leq n^\alpha  \sum_{j=k}^\alpha {\alpha \choose j} 
       \lambda^{\alpha-j} t^j \leq n^\alpha (t+\lambda)^\alpha \leq (2nt)^\alpha.
\end{align*}  
Thus, returning to equation~\eqref{eq:S1_S2}, 
we have shown that for all $p\geq k$, there exists $K_{ p} > 0$ such that
 \begin{align*}
S_1
 &\leq
K_{p}   q_1^2 \sum_{\alpha=k}^{p} \frac{(n/d)^{2\alpha}}{ \alpha!\lambda^\alpha} 
     + \sum_{\alpha=p+1}^\infty\frac{(2nt)^{2\alpha}}{ \alpha!\lambda^\alpha}  \\
 &\leq K_{p}  q_1^2   \sum_{\alpha=k}^{p} \frac{(n/d)^{\alpha}}{ \alpha!} 
     + \sum_{\alpha=p+1}^\infty\frac{(2n/d)^{\alpha/2}}{ \alpha!}  \\
 &\lesssim K_{p}   q_1^2 (2n/d)^{k} 
     +  (2n/d)^{p/2}.
\end{align*}
Under our conditions on $d$ and $n$, we can choose $p$ sufficiently large, 
as a function only of $\gamma$, such that
$(2n/d)^{p/2} \leq 1/d^5$, 
to obtain 
 \begin{align}
 \label{eq:final_S1_bound}
S_1 
 &\leq C q_1^2 ( n/d)^{k} + d^{-5},
\end{align}
for a constant $C = C(k,\gamma) >0$. 
We now turn to bounding the quantity 
\begin{align*}
  S_2 = \sum_{\alpha=1}^\infty \sum_{\substack{\beta \in \bbN_0^k \\ |\beta| \geq 1}}
    \frac{\Delta_{\alpha,\beta}^2}{\alpha!\beta! (n\lambda)^\alpha(m\lambda)^\beta }. 
\end{align*}
For any $\alpha,|\beta| \geq 1$, we have
\begin{align*}
|\Delta_{\alpha,\beta}|
 &\leq 
  n^\alpha m^{|\beta|} \bbE\left|\Big((\langle \varpi^t,\bar c\rangle-\lambda)^\alpha - (\langle \varpi^t,c\rangle - \lambda)^\alpha\Big)
 \prod_{i=1}^k (\varpi_i\wedge t-\lambda)^{\beta_i}\right|.
\end{align*}
By the mean value theorem, one has 
\begin{align*}
\big|(\langle &\varpi^t,\bar c\rangle-\lambda)^\alpha - 
 (\langle \varpi^t,c\rangle - \lambda)^\alpha \big| 
 \leq \alpha \big( |\langle \varpi^t,c\rangle-\lambda| + |\langle \varpi^t ,\bar c\rangle-\lambda|\big)^{\alpha-1}
 | \langle \bar c-c,\varpi^t\rangle|,
\end{align*} 
thus, 
$
|\Delta_{\alpha,\beta}|\leq  n^\alpha m^{|\beta|}
 \alpha   T_1^{1/2} (T_2T_3)^{1/4}, $
where 
\begin{align*}
T_1 &= \bbE |\langle \bar c-c,\varpi^t\rangle|^2 \\ 
T_2 &= \bbE \big( |\langle \varpi^t,c\rangle-\lambda| + |\langle \varpi^t ,\bar c\rangle-\lambda|\big)^{4(\alpha-1)} \\ 
T_3 &= \bbE \prod_{i=1}^k |\varpi_i\wedge t-\lambda|^{4\beta_i}.
\end{align*}
We have, 
\begin{align*}
T_1 = \Var[\langle \bar c-c,\varpi^t\rangle] = \sum_{i=1}^k (\bar c_i-c_i)^2 \Var[\varpi^t] \lesssim q_2^2 / d^2.
\end{align*}
To bound $T_2$, apply Jensen's inequality to obtain
\begin{align*}
\bbE |\langle \varpi^t ,\bar c\rangle-\lambda|^{4(\alpha-1)}
 = \bbE \bigg|\sum_{i=1}^k \bar c_i (\varpi_i^t - \lambda)\bigg|^{4(\alpha-1)} 
 \leq \sum_{i=1}^k \bar c_i \bbE |  \varpi_i^t - \lambda|^{4(\alpha-1)} 
 \leq  \frac{(4\alpha)!}{d^{4(\alpha-1)}} \wedge t^{4(\alpha-1)}.
\end{align*}
It follows that
$$T_2 \lesssim 2^{4(\alpha-1)} \left( \frac{(4\alpha)!}{d^{4(\alpha-1)}} \wedge t^{4(\alpha-1)}\right).$$
Next, we have, 
\begin{align*}
T_3 = \prod_{i=1}^k \bbE |\varpi_i\wedge t - \lambda|^{4\beta_i} 
 \lesssim \prod_{i=1}^k 2^{4\beta_i}\left(\frac{(4\beta_i)!}{d^{4\beta_i}} \wedge t^{4\beta_i}\right)
 \leq 2^{4|\beta|}  \left(  \frac{(4\beta)!}{d^{4|\beta|}}  \wedge t^{4|\beta|}\right)
\end{align*}
We thus obtain 
\begin{align*}
\frac{|\Delta_{\alpha,\beta}|}{ n^\alpha m^{|\beta|}}
 &\leq \alpha T_1^{1/2} (T_2T_3)^{1/4} \\ 
 &\leq \alpha \cdot \frac {q_2}{d} \cdot 2^{\alpha-1} \left( \frac{(4\alpha)!}{d^{(\alpha-1)}} \wedge t^{(\alpha-1)}\right) 
 \cdot 2^{|\beta|}  \left(  \frac{(4\beta)!}{d^{|\beta|}}  \wedge t^{|\beta|}\right) \\
 &\leq 2^{\alpha+|\beta|} q_2 \left( \frac{(4\alpha)!}{d^{\alpha}} \wedge \frac{t^{(\alpha-1)}}{d}\right) 
   \left(  \frac{(4\beta)!}{d^{|\beta|}}  \wedge t^{|\beta|}\right) \\
 &\leq 2^{\alpha+|\beta|} q_2 \left( \frac{(4\alpha)!}{d^{\alpha}} \wedge  t^{ \alpha } \right) 
   \left(  \frac{(4\beta)!}{d^{|\beta|}}  \wedge t^{|\beta|}\right) \\
 &\leq 2^{\alpha+|\beta|} q_2 \left( \frac{(4\alpha)!(4\beta)!}{d^{\alpha+|\beta|}} \wedge  t^{ \alpha +|\beta|} \right) .
\end{align*}  
It follows that for any fixed $\ell \geq 1$, there exists a constant $C_\ell > 0$ (which potentially grows
factorially in $\ell$) such that:
\begin{align*}
  S_2 &=  \sum_{\substack{\alpha\in \bbN_0,\beta \in \bbN_0^k \\ |\beta|\geq 1, \alpha+|\beta|< \ell}}
    \frac{\Delta_{\alpha,\beta}^2}{\alpha!\beta! (n\lambda)^\alpha(m\lambda)^{|\beta|}}+
    \sum_{\substack{\alpha\in \bbN_0,\beta \in \bbN_0^k \\ |\beta|\geq 1, \alpha+|\beta|\geq  \ell}}
    \frac{\Delta_{\alpha,\beta}^2}{\alpha!\beta! (n\lambda)^\alpha(m\lambda)^{|\beta|} } \\
 &\lesssim C_\ell  \sum_{\substack{\alpha\in \bbN_0,\beta \in \bbN_0^k \\ |\beta|\geq 1, \alpha+|\beta|< \ell}}
    \frac{(q_2 n^\alpha m^{|\beta|}/d^{\alpha+|\beta|})^2}{(n\lambda)^\alpha(m\lambda)^{|\beta|}}+
    \sum_{\substack{\alpha\in \bbN_0,\beta \in \bbN_0^k \\ |\beta|\geq 1, \alpha+|\beta|\geq \ell}}
    \frac{(q_2 n^\alpha m^{|\beta|}(2t)^{\alpha+|\beta|} )^2}{\alpha!\beta! (n\lambda)^\alpha(m\lambda)^{|\beta|} } \\
 &\lesssim  q_2^2\left\{ C_\ell  \frac{ n m}{d^2}+
    \sum_{\substack{\alpha\in \bbN_0,\beta \in \bbN_0^k \\ |\beta|\geq 1, \alpha+|\beta|\geq  \ell}}
    \frac{ (4t^2 d)^{(\alpha+|\beta|)} n^\alpha m^{|\beta|}}{\alpha!\beta!  } \right\} \\
 &\lesssim  q_2^2\left\{ C_\ell  \frac{ n m}{d^2}+
    \sum_{\alpha \geq \ell} \frac{(4t^2dn)^\alpha}{\alpha!} \cdot \bigg(\sum_{b  \geq  \ell}
    \frac{ (4t^2 dm)^{b} }{b!  }\bigg)^k \right\} \\
 &\leq  q_2^2\cdot C_\ell  \left\{ \frac{ n m}{d^2}+
    (4t^2dn)^\ell e^{4t^2dn} \cdot \big( (4t^2 dm)^{\ell} e^{4t^2dm}\big)^k\right\}.
 \end{align*}
 Notice that $t^2d(n+m) = \sqrt{(n+m)/d} =o(1)$, and 
 we have $e^{4t^2dm} \vee e^{4t^2dn} \leq C < \infty$, thus we obtain
 \begin{align*}
  S_2 &\lesssim  q_2^2\cdot C_\ell  \left\{ \frac{ n m}{d^2}+
    \left(\frac{4(n+m)}{d}\right)^{\ell/2}  \cdot \left(C \Big(\frac{4(n+m)}{d}\Big)^{\ell/2}  \right)^k\right\}.
 \end{align*}
 By choosing $\ell=5$, the second term in the above
 display is of lower order than the first for large enough $d$ (irrespective of the magnitude of $k$),
and we obtain $S_2 \lesssim q_2^2nm/d$. 
Combining this bound with equation~\eqref{eq:final_S1_bound}
and the fact that $q_2 \geq 1/d$, we have thus shown that
$$\chi^2(\bQ_{\bar c}^t, \bQ_{c}^t)  \leq C_1 q_1^2\left(\frac{n}{d}\right)^k + C_2\frac{q_2^2nm}{d^2},$$
where $C_1 = C_1(\gamma,k)$ and $C_2 = C_2(\gamma)$.
This proves equation~\eqref{eq:goal_of_lb}, and the claim follows.\qed  


\subsection{Proof of Lemma~\ref{lem:multinomial_model_kl}}
\label{app:pf_lem_multinomial_model_kl}
Let $c,\bar c \in \Delta_k$. 
Let $\Pi \in \bbR^{k\times d}$ be a random matrix whose rows are independently
drawn from the flat Dirichlet law $\calD_d$. Let 
$Y^c \in \bbR^d$ and $V\in \bbR^{k\times d}$ be drawn conditionally
independently according to
$$Y^c | \Pi \sim \mathrm{Mult}(n;\Pi^\top c), \quad 
  V_{i\cdot} | \Pi \sim \bigotimes_{i=1}^k \mathrm{Mult}(m;\Pi_{i\cdot}),\quad i=1,\dots,k,$$
so that $(Y^c,V) \sim \widetilde \bQ_c$. 
Given random variables $X,X'$ which are absolutely continuous
with respect to a common dominating measure, we denote by 
$p_{X,X'}$ their joint density, and by $p_X$
the marginal density of $X$. We also denote the Markov kernel of $X$ conditionally
on $X'$ by $p_{X|X'}$, whenever it exists. 

By the chain rule for the KL divergence, one has
\begin{align*}
\KL(\widetilde \bQ_c\|\widetilde \bQ_{\bar c})
 = \KL(f_{Y^c,V} \| f_{Y^{\bar c},V} ) =  \bbE_V\Big[\KL\big(f_{Y^c|V} \| f_{Y^{\bar c}|V}\big)\Big].
\end{align*}
Notice that the conditional law of $Y^{c}$ given $V$ is 
\begin{align}\label{eq:changing_law_of_Pi}
\nonumber 
f_{Y^c|V}(y|v)
 &= \frac 1 {f_V(v)} \int_{\bbR_+^k} f_{Y^c,V|\Pi}(y,v|\pi) d\calD_d^{\otimes k}(\pi) \\
\nonumber 
&= \frac 1 {f_V(v)} \int_{\bbR_+^k} f_{Y^c|\Pi}(y|\pi) \cdot f_{V|\Pi}(v|\pi) d\calD_d^{\otimes k}(\pi) \\ 
&=   \int_{\bbR_+^k} f_{Y^c|\Pi}(y|\pi) \cdot  f_{\Pi|V}(\pi|v)d\calD_d^{\otimes k}(\pi),
\end{align}
where we used Bayes' rule and the fact that the law of the rows of $\Pi$ are uniformly-distributed over the $d$-simplex. 
By conjugacy of the multinomial
 and Dirichlet distributions, notice that the conditional law of $\Pi$ given $V$ is given by
 $$\Pi_{i,\cdot}|V \sim \mathrm{Dirichlet}(1+V_{i1}, \dots, 1+V_{id}),\quad i=1,\dots,k.$$ 
Now, equation~\eqref{eq:changing_law_of_Pi}
shows 
that the conditional law of $Y^c$ given $V$ is simply given by the law 
$\bbE_\Pi[\mathrm{Poi}(n \Pi^\top c)\,|\,V]$, which we use as a shorthand
to denote the posterior distribution
of $\mathrm{Poi}(n \Lambda^\top c)$
when $\Lambda$ is drawn 
conditionally on $V$ from the Dirichlet 
law in the above display.
We thus have
\begin{align*}
\KL(\widetilde \bQ_c\|\widetilde \bQ_{\bar c})
&= \bbE_V\Big\{   \KL\Big(\bbE_\Pi[\mathrm{Poi}(n \Pi^\top c)\,|\,V] \,\|\, 
\bbE_\Pi[\mathrm{Poi}(n \Pi^\top \bar c)\,\big\|\,V] \Big) \Big\} \\
&\leq  \bbE_V\Big\{  \bbE_{\Pi} \big[\KL\big(\mathrm{Poi}(n \Pi^\top c) \,\|\, \mathrm{Poi}(n \Pi^\top \bar c) \big)\,\big|\, V \big]\Big\} \\
&=  \bbE_\Pi \Big\{ \KL\big(\mathrm{Poi}(n \Pi^\top c) \,\|\, \mathrm{Poi}(n \Pi^\top \bar c) \big)\Big\},
\end{align*}
where we used the convexity of the KL divergence in the second line.  Notice that
\begin{align*}
\bbE_\Pi\Big\{ \KL\big(f_{U^c_V|\Pi} \| f_{U^{\bar c}_V|\Pi} \big) \Big\} 
 &\lesssim  n \sum_{j=1}^d \bbE_\Pi\left[  \frac{(\Pi^\top (c-\bar c))_j^2}{(\Pi^\top \bar c)_j} \right]\\
 &\leq n \sum_{j=1}^d 
 \Big(\bbE_\Pi\left[   (\Pi^\top (c-\bar c))_j^6\right]\Big)^{\frac 1 3} 
 \bigg(\bbE\Big[  (\Pi^\top \bar c)_j^{-3/2} \Big]\bigg)^{\frac 2 3}. 
 \end{align*} 
 From here, we prove claims (i) and (ii) separately. 
 To prove claim (i), assume $c,\bar c \in \Sigma_{k,s}(\beta)$. 
Then, there exists two index sets $S_1,S_2 \subseteq \{1,\dots,k-2\}$
of cardinality $s$ such that 
$c_i = \beta\cdot I(i\in S_1)$ and $\bar c_i = \beta\cdot I(i\in S_2)$ for all $i=1,\dots,k-2$, and
$c_{k-1} = \bar c_{k-1} = c_k=\bar c_k = (1-\beta s)/2$. 
It follows that
$$\|c-\bar c\|_2^2 = \beta^2 \cdot \big(  |S_1\setminus S_2| + |S_2\setminus S_1|\big) = 2\beta^2 |S_1\setminus S_2|.$$ 
Now, notice that $(\Pi^\top \bar c)_j =  \beta  \sum_{i\in S_2} \pi_{ij} + \frac{1-s\beta}{2} (\pi_{kj}+\pi_{(k-1)j})$, 
where we recall that $\pi_{ij} \overset{iid}{\sim}\mathrm{Beta}(1,d-1)$
for all $i,j$, independently across $i$.
We will make use of the following.
\begin{lemma}\label{lem:sum_of_beta}
Let $L\geq 2$ and let $X_1,\dots,X_L \sim \mathrm{Beta}(1,d-1)$ be independent random variables. Then, 
there exists a universal constant $C > 0$ such that 
$$\bbE\Big[\big(\textstyle \sum_{i=1}^L X_i\big)^{-3/2}\Big] \leq \displaystyle C (d/L)^{3/2}.$$
\end{lemma}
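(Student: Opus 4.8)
The plan is to turn the negative fractional moment into a Laplace-transform integral. Writing $S=\sum_{i=1}^L X_i$, the elementary identity $S^{-3/2}=\frac{1}{\Gamma(3/2)}\int_0^\infty t^{1/2}e^{-tS}\,dt$, combined with the independence of the $X_i$, gives
\[
\bbE\big[S^{-3/2}\big]=\frac{1}{\Gamma(3/2)}\int_0^\infty t^{1/2}\,\phi(t)^L\,dt,\qquad \phi(t):=\bbE\big[e^{-tX_1}\big].
\]
The problem thus reduces to bounding the Laplace transform $\phi$ of a single $\mathrm{Beta}(1,d-1)$ variable and controlling its $L$-th power against the weight $t^{1/2}$.

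I would then record three elementary estimates on $\phi$ and split the integral accordingly. For the bulk $0\le t\le d/2$, using $e^{-u}\le 1-u+u^2/2$ together with $\bbE[X_1]=1/d$ and $\bbE[X_1^2]=\tfrac{2}{d(d+1)}\le 2/d^2$ yields $\phi(t)\le 1-t/d+t^2/d^2\le e^{-t/(2d)}$; bounding $(1-x)^{d-2}\le 1$ in the Beta density gives $\phi(t)\le (d-1)/t\le d/t$ for all $t>0$; and since $\phi$ is nonincreasing, $\phi(t)\le\phi(d/2)\le 3/4$ for $t\ge d/2$. On $[0,d/2]$ the first bound produces the main term $\int_0^{d/2}t^{1/2}\phi(t)^L\,dt\le\int_0^\infty t^{1/2}e^{-Lt/(2d)}\,dt=\Gamma(3/2)(2d/L)^{3/2}$. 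On $[d/2,2d]$ one gets a contribution $\lesssim (2d)^{3/2}(3/4)^L$, and on $[2d,\infty)$, using that $L\ge 2$ makes $t^{1/2-L}$ integrable near infinity, a contribution $\lesssim (2d)^{3/2}2^{-L}/(L-3/2)$. Summing the three pieces yields $\bbE[S^{-3/2}]\le C(d/L)^{3/2}$.

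The main obstacle is bookkeeping rather than conceptual: it lies in choosing the break points so that the final constant $C$ is genuinely universal and does not secretly depend on $L$. The naive bound $\phi(t)\le d/t$ alone would produce a spurious factor $2^{L}$ when integrated from $d/2$, while $\phi(t)\le 3/4$ alone is not integrable at infinity; cutting at $2d$ makes the exponential decays $(3/4)^L$ (on $[d/2,2d]$) and $2^{-L}$ (on $[2d,\infty)$) absorb the polynomial factor $L^{3/2}$, so that $\sup_{L\ge2}(3/4)^L L^{3/2}<\infty$ and $\sup_{L\ge2}L^{3/2}/(2^L(L-3/2))<\infty$. I do not anticipate any further difficulty; an alternative route is to bound $\bbP(S<u)\le\big(1-(1-u)^{d-1}\big)^L$ and integrate the tail, supplemented by a Chernoff estimate on an intermediate range of $u$, but the Laplace-transform argument is preferable because the exponential-in-$L$ gain appears automatically and only three regions are needed.
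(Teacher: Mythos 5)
Your proof is correct, and it follows a genuinely different route than the paper's. The paper argues by stochastic domination: for $x\in[0,1/2]$ it bounds the density of the rescaled variable $dX_1$ by (a constant times) the $\mathrm{Exp}(1)$ density, splits on the event that some $dX_i$ exceeds $1/2$, and on the complement reduces the negative moment of $S$ to the explicitly computable Gamma integral $\bbE[G^{-3/2}]=\Gamma(L-3/2)/\Gamma(L)$. Your approach instead uses the Mellin/Laplace identity $S^{-3/2}=\Gamma(3/2)^{-1}\int_0^\infty t^{1/2}e^{-tS}\,dt$, factors by independence into $\phi(t)^L$, and then controls $\phi(t)=\bbE[e^{-tX_1}]$ on three ranges via $\phi(t)\le 1-t/d+t^2/d^2\le e^{-t/(2d)}$ on $[0,d/2]$, $\phi(t)\le 3/4$ on $[d/2,2d]$, and $\phi(t)\le d/t$ on $[2d,\infty)$. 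Each step checks out, and the uniformity in $L\ge 2$ of the tail terms ($\sup_L L^{3/2}(3/4)^L<\infty$ and $\sup_{L\ge 2}L^{3/2}2^{-L}/(L-3/2)<\infty$) does indeed absorb the boundary contributions as you claim. The Laplace-transform route is arguably the more robust one (no density comparison, no conditioning event to patch up, and it would extend with no extra work to $\bbE[S^{-a}]$ for any $a<L$), at the price of a slightly longer elementary calculation; the paper's Gamma-comparison gives sharper constants by exploiting the closed form for Gamma negative moments but requires a density-level domination step that only holds on a bounded range.
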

The proof appears in Appendix~\ref{app:pf_lem_sum_of_beta}.
It follows that 
\begin{align*} 
\bbE\left[(\Pi^\top \bar c)_j^{-3/2}\right] 
 &\leq 
\min\left\{ \bbE\left[(\beta \textstyle  \sum_{i\in S_2} \pi_{ij})_j^{-3/2}\right], 
\bbE\left[(1-s\beta) (\pi_{kj} + \pi_{(k-1)j})/2\right]\right\} \\
 &\leq C 
d^{3/2} \cdot \min\left\{  (\beta s)^{-3/2} ,  (1-s\beta)/2^{5/2} \right\} \lesssim d^{3/2}.
\end{align*}

On the other hand, we have
\begin{align*}
\bbE_\Pi&\left[ (\Pi^\top (c-\bar c))_j^6\right] \\
  &= \beta^6\cdot \bbE\left[ \bigg(\sum_{i\in S_1\setminus S_2} \pi_{ij} 
  - \sum_{i\in S_2\setminus S_1} \pi_{ij} \bigg)^6
  \right] \\ 
&=  \beta^6\cdot \bbE\left[ \bigg(\sum_{i\in S_1\setminus S_2} (\pi_{ij} -\frac 1 d)
  - \sum_{i\in S_2\setminus S_1} (\pi_{ij}-\frac 1 d) \bigg)^6
  \right]  \\   
&\lesssim  \beta^6\cdot \bbE\left[ \bigg(\sum_{i\in S_1\setminus S_2} (\pi_{ij} -\frac 1 d)\bigg)^6\right] + 
 \beta^6\cdot \bbE\left[ \bigg(\sum_{i\in S_2\setminus S_1} (\pi_{ij} -\frac 1 d)\bigg)^6\right].
\end{align*}
Recall that $\bbE[\pi_{ij}] = 1/d$, and that the $\pi_{ij}$ are i.i.d. across
$i$, for any fixed $j$. 
Now, apply Rosenthal's inequalities~\citep{rosenthal1970,rosenthal1972} to obtain that for any $j=1,\dots,d$, 
\begin{align*} 
\bbE\left[ \bigg(\sum_{i\in S_1\setminus S_2} (\pi_{ij} -\frac 1 d)\bigg)^6\right] 
 &\lesssim |S_1\setminus S_2|^3 \cdot \Var^3[\pi_{11}]
+        |S_1\setminus S_2| \cdot \bbE|\pi_{11}|^6 \lesssim \frac{|S_1\setminus S_2|^3  }{d^6}.
\end{align*} 
After repeating a symmetric argument, we thus obtain 
\begin{align*}
\bbE_\Pi \left[ (\Pi^\top (c-\bar c))_j^6\right] 
 &\lesssim \beta^6 \frac{|S_1\setminus S_2|^3 + |S_2\setminus S_1|^3}{d^6} \asymp \|c-\bar c\|_2^6 / d^6.
\end{align*}
Altogether, we have thus shown
\begin{align*}
\bbE_\Pi\Big\{ \KL\big(f_{U^c_V|\Pi} \| f_{U^{\bar c}_V|\Pi} \big) \Big\} 
 &\leq n \sum_{j=1}^d 
  \Big(\bbE_\Pi\left[   (\Pi^\top (c-\bar c))_j^6\right]\Big)^{\frac 1 3} 
 \bigg(\bbE\Big[  (\Pi^\top \bar c)_j^{-3/2} \Big]\bigg)^{\frac 2 3}\\
 &\lesssim n d \cdot (\|c-\bar c\|_2^6 / d^6)^{-\frac 1 3}\cdot d \lesssim n\|c-\bar c\|_2^2.
 \end{align*} 
and the first claim follows.
To prove the second claim, notice that when $c,\bar c \in \Delta_k$ satisfy $\|c\|_\infty\vee \|\bar c\|_\infty \leq 3/4$, 
at least two entries of $\bar c$ are bounded from below by $3/(4k)$. Assuming without loss of generality
that these correspond to the first two entries of $\bar c$, one has 
\begin{align*}
 \bbE\Big[  (\Pi^\top \bar c)_j^{-3/2} \Big] 
 \gtrsim \bbE[(\pi_{11} + \pi_{21})^{-3/2}] \gtrsim d^{3/2},
 \end{align*}
 where the symbol `$\lesssim$' now hides constants depending on $k$, and 
 where we applied Lemma~\ref{lem:sum_of_beta}. Furthermore, one has
 $$\bbE\big[(\Pi^\top (c-\bar c)_j^6\big] 
 \lesssim  \bbE \big[\|\Pi_{\cdot j}\|_2^6\big] \cdot  \|c-\bar c\|_2^6
 \lesssim d^{-6}\|c-\bar c\|_2^2.$$
From here, the claim can be deduced as before.
\qed

\section{Proofs of Upper Bounds}
\label{sec:ub_proofs} 
The goal of this section is to prove Propositions~\ref{prop:collision} and~\ref{prop:ub_blind}.
 
\subsection{Proof of Proposition~\ref{prop:collision}}
\label{app:pf_prop_collision} 
We prove Proposition~\ref{prop:collision}
in two steps. We begin by showing that the
unregularized collision estimator 
$$
\hat c_i^\mathrm{coll} = \frac{d+1}{nm} \sum_{\ell=1}^n \sum_{r=1}^m I(Z_\ell=W_{ir})-1,\quad i=1,\dots,k,$$
already achieves the optimal convergence rate when $k < \sqrt{nm_d / d}$. 
\begin{lemma}\label{lem:collision}
Under the multinomial model, there exists
a universal constant $C > 0$ such that
for all $1 \leq n\leq d$
and all $m,k \geq 1$,
$$\bbE\|\hat c-c\|_2 \leq C\sqrt{\frac{dk}{nm_d}},$$
with $m_d = \min\{m,d\}$.
\end{lemma}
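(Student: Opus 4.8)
The plan is to control the mean-squared error $\mathbb{E}\|\hat c^{\mathrm{coll}}-c\|_2^2 = \sum_{i=1}^k \mathbb{E}(\hat c_i^{\mathrm{coll}}-c_i)^2$ coordinate by coordinate, splitting each term into a bias and a variance part, and to conclude by Jensen's inequality. Set $N_i = \sum_{\ell=1}^n\sum_{r=1}^m I(Z_\ell = W_{ir})$, so that $\hat c_i^{\mathrm{coll}} = \frac{d+1}{nm}N_i - 1$, and write $a_j = (\Pi^\top c)_j$, $b_j = \pi_{ij}$. Conditionally on $\Pi$ the two families $(Z_\ell)$ and $(W_{ir})$ are independent, with $Z_\ell$ i.i.d.\ with mass function $(a_j)$ and $W_{ir}$ i.i.d.\ with mass function $(b_j)$. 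The first step is to compute $\mathbb{E}[N_i\mid\Pi] = nm\sum_j a_jb_j = nm\sum_\ell c_\ell\langle\Pi_{\ell\cdot},\Pi_{i\cdot}\rangle$ and then average over $\Pi$ using the Dirichlet moment formulas $\mathbb{E}[\pi_{ij}^p] = p!/\big(d(d+1)\cdots(d+p-1)\big)$ and $\mathbb{E}[\pi_{ij}\pi_{ij'}] = 1/\big(d(d+1)\big)$ for $j\neq j'$, together with independence of distinct rows. This gives $\mathbb{E}[\hat c_i^{\mathrm{coll}}] = c_i + (1-c_i)/d$, i.e.\ a bias of size $O(1/d)$ per coordinate, contributing at most $O(k/d^2)$ to the squared $\ell_2$ error; since $n\le d$ and $m_d\le d$ one has $k/d^2\lesssim dk/(nm_d)$, so this term is harmless.

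The heart of the argument is the variance bound for $N_i$. Using the conditional independence and i.i.d.\ structure, expanding $\mathbb{E}[N_i^2\mid\Pi]$ over the four cases $\{\ell=\ell',r=r'\}$, $\{\ell=\ell',r\ne r'\}$, $\{\ell\ne\ell',r=r'\}$, $\{\ell\ne\ell',r\ne r'\}$ gives
\begin{align*}
\mathbb{E}[N_i^2\mid\Pi] = {}& nm\textstyle\sum_j a_jb_j + nm(m-1)\sum_j a_jb_j^2 \\
& + n(n-1)m\textstyle\sum_j a_j^2 b_j + n(n-1)m(m-1)\big(\textstyle\sum_j a_jb_j\big)^2 .
\end{align*}
Averaging over $\Pi$ and subtracting $(\mathbb{E}N_i)^2 = n^2m^2\big(\mathbb{E}_\Pi\sum_j a_jb_j\big)^2$, the fourth term combines with the square of the mean to leave $n(n-1)m(m-1)\,\mathrm{Var}_\Pi\big(\sum_j a_jb_j\big)$ plus a manifestly non-positive remainder. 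The key estimate is $\mathrm{Var}_\Pi\big(\sum_j a_jb_j\big) = \mathrm{Var}_\Pi\big(\sum_\ell c_\ell\langle\Pi_{\ell\cdot},\Pi_{i\cdot}\rangle\big) = O(1/d^3)$: because $\sum_\ell c_\ell = 1$, a Cauchy--Schwarz bound on covariances reduces this to showing $\mathrm{Var}_\Pi\big(\langle\Pi_{\ell\cdot},\Pi_{i\cdot}\rangle\big) = O(1/d^3)$ for $\ell\ne i$ and $\mathrm{Var}_\Pi\big(\|\Pi_{i\cdot}\|_2^2\big) = O(1/d^3)$, both of which follow from direct evaluation of the Dirichlet moments. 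In the same way one gets $\mathbb{E}_\Pi\sum_j a_jb_j = O(1/d)$ and $\mathbb{E}_\Pi\sum_j a_jb_j^2,\ \mathbb{E}_\Pi\sum_j a_j^2b_j = O(1/d^2)$. I expect this bookkeeping of joint Dirichlet moments --- distinguishing same-row versus different-row products and the various ways indices among $\ell,\ell',i$ can coincide, while checking that every cross term stays at the stated order (in particular the crucial $1/d^3$, without which the whole bound collapses) --- to be the main technical obstacle.

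Putting the pieces together yields $\mathrm{Var}(N_i)\lesssim nm/d + nm^2/d^2 + n^2m/d^2 + n^2m^2/d^3$, hence
\[
\mathrm{Var}\big(\hat c_i^{\mathrm{coll}}\big) = \frac{(d+1)^2}{n^2m^2}\,\mathrm{Var}(N_i)\ \lesssim\ \frac{d}{nm} + \frac1n + \frac1m + \frac1d .
\]
A short case split using $1\le n\le d$ finishes it: if $m\le d$ then $m_d=m$ and each of $1/n,1/m,1/d$ is dominated by $d/(nm)$; if $m>d$ then $m_d=d$ and each of $d/(nm),1/m,1/d$ is dominated by $1/n = d/(nm_d)$. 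In both regimes $\mathrm{Var}(\hat c_i^{\mathrm{coll}})\lesssim d/(nm_d)$. Summing over $i=1,\dots,k$ and adding the $O(k/d^2)$ bias contribution gives $\mathbb{E}\|\hat c^{\mathrm{coll}}-c\|_2^2\lesssim dk/(nm_d)$, and Jensen's inequality then yields $\mathbb{E}\|\hat c^{\mathrm{coll}}-c\|_2\le\sqrt{\mathbb{E}\|\hat c^{\mathrm{coll}}-c\|_2^2}\le C\sqrt{dk/(nm_d)}$, as claimed.
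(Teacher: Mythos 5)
Your proof is correct and takes essentially the same route as the paper's: expand $\mathbb{E}[N_i^2\mid\Pi]$ over the four coincidence cases $\{\ell=\ell',r=r'\}$, $\{\ell=\ell',r\ne r'\}$, $\{\ell\ne\ell',r=r'\}$, $\{\ell\ne\ell',r\ne r'\}$, average over $\Pi$ using Dirichlet moments, and finish with the $m\lessgtr d$ case split and Jensen's inequality. The one minor refinement is your explicit handling of the $(1-c_i)/d$ bias: the paper's displayed computation uses $\mathbb{E}[\pi_{\ell j}\pi_{ij}]=1/(d(d+1))$ for $\ell\neq i$ rather than the correct value $1/d^2$ for independent Dirichlet rows, and on that basis claims exact unbiasedness; your version tracks the true $O(1/d)$ bias and correctly verifies that its squared contribution $O(k/d^2)$ is dominated by $dk/(nm_d)$, so the final rate is unaffected.
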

Second, we will prove that for   appropriate
$\lambda > 0$, the hard-thresholded collision estimator
achieves the optimal convergence rate when $k > \sqrt{nm_d / d}$. We will analyze
this estimator under the normalized
Poisson model, which will allow
us to deduce the following upper bound. 
\begin{lemma}
\label{lem:hard_thresh_collision}
Under the multinomial model, there exists
a universal constant $C > 0$ such that
for any $1 \leq n,k\leq d$ and $m \geq 1$,
$$
\calM(n,d,k,m) \leq C\left(\frac{d\log k}{nm_d}\right)^{\frac 1 4},$$
with $m_d = \min\{m,d\}$.
\end{lemma}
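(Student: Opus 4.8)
The plan is to prove the bound by analyzing the hard-thresholded collision estimator $\hat c^{\mathrm{HT}}=\calH_\lambda(\hat c^{\mathrm{coll}})$ of Eq.~\eqref{eq:collision_HT} with $\lambda=C_0\,\sigma\sqrt{\log k}$, where $\sigma^2:=d/(nm_d)$ and $C_0$ is a large absolute constant chosen below. Several reductions come first. Writing $\epsilon=(d\log k/(nm_d))^{1/4}$, we may assume $\epsilon$ is below a small absolute constant $\epsilon_0$, since otherwise $\calM(n,d,k,m)\le\sqrt2\le(\sqrt2/\epsilon_0)\epsilon$; this forces $nm_d>d\log k/\epsilon_0^4$, so in particular $\mu:=nm_d/d$ is a large multiple of $\log k$ and $n$ is large. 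We may also assume $m\le d$ (so $m_d=m$), since if $m>d$ then $\calM(n,d,k,m)\le\calM(n,d,k,d)$ by monotonicity of the minimax risk in the amount of side information, and $(d\log k/(nd))^{1/4}$ is exactly the target bound with $m_d=d$. Finally, for $k$ below an absolute constant the bound already follows from Lemma~\ref{lem:collision}, whose rate $\lesssim\sqrt{dk/(nm_d)}=\sqrt k\,\sigma\lesssim\sigma^{1/2}(\log k)^{1/4}=\epsilon$ once $\sigma<1$ and $k=O(1)$; so we may assume $k$ large. By Lemma~\ref{lem:poissonization} (with $\calI=I(m=0)=0$ and the exponential error terms negligible because $n$ is large), it then suffices to exhibit an estimator attaining the bound under the normalized Poisson model~\eqref{eq:normalized_poisson_model}, which we do henceforth.

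Step~2 computes the first two moments of the collision estimator under the normalized Poisson model. A direct calculation using $\bbE[\pi_{ij}]=1/d$, $\bbE[\pi_{ij}^2]=2/(d(d+1))$, and the independence of the rows of $\Pi$ gives $|\bbE\hat c_i^{\mathrm{coll}}-c_i|=O(1/d)$, which is $O(\sigma)$ and can be absorbed into the noise. For the variance, write $C_i=\sum_j\widebar Y_j\widebar V_{ij}$ so $\hat c_i^{\mathrm{coll}}=(d/nm)C_i-1$, and apply the law of total variance conditionally on $\Pi$. Since $\widebar Y_j,\widebar V_{ij}$ are independent Poissons given $\Pi$ with intensities $\lambda_{Yj}=n\Pi_{\cdot j}^\top c$ and $\lambda_{Vj}=m\pi_{ij}$, the within-$\Pi$ part is $\sum_j\lambda_{Yj}\lambda_{Vj}(1+\lambda_{Yj}+\lambda_{Vj}+\lambda_{Yj}\lambda_{Vj})$, whose $\Pi$-expectation is $\Theta(nm/d)$ using $n,m\le d$, contributing $\Theta(d/(nm))=\Theta(\sigma^2)$ after scaling. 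The between-$\Pi$ part is $(nm)^2\Var\!\big(c_i\|\Pi_{i\cdot}\|_2^2+\sum_{\ell\ne i}c_\ell\langle\Pi_{i\cdot},\Pi_{\ell\cdot}\rangle\big)=(nm)^2\cdot O(1/d^3)$ by the $O(d^{-3})$ fluctuations of the squared norm and cross inner products of flat Dirichlet vectors, contributing $O(1/d)=O(\sigma^2)$ because $nm\le d^2$. Hence $\Var(\hat c_i^{\mathrm{coll}})\asymp\sigma^2$.

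Step~3, the crux, is a Bennett-type concentration inequality for $\hat c_i^{\mathrm{coll}}$. Conditionally on $\Pi$, $C_i$ is a sum of $d$ independent products of small-intensity Poissons (with probability $\ge1-d^{-10}$ all intensities are $O(\log d)$, and typically of order $n/d,\,m/d$); evaluating the conditional moment generating function $\log\bbE[e^{\theta C_i}\mid\Pi]=\sum_j\log\bbE[e^{\theta\widebar Y_j\widebar V_{ij}}]$ shows that, throughout the range of $\theta$ controlling deviations of relative size $O(1)$, it agrees up to constants with that of a single $\mathrm{Poi}(\mu_\Pi)$, $\mu_\Pi=\sum_j\lambda_{Yj}\lambda_{Vj}\asymp\mu$. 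A Chernoff bound, together with the identity $\sigma^2\mu\asymp1$ and absorption of the $O(1/d)$ bias and of the between-$\Pi$ fluctuation (which has variance $O(\sigma^2)$ and Gaussian-type tails at the relevant scales by standard concentration for polynomial functionals of independent Dirichlet vectors), yields $\bbP\big(|\hat c_i^{\mathrm{coll}}-c_i|>t\big)\le 2\exp\!\big(-c\,t^2/(\sigma^2(2+t))\big)$ for all $t>0$. Making this uniform over all parameter configurations, carrying it through at the level of the multinomial histogram, and tracking $\min\{m,d\}$ throughout is the main technical obstacle; a cruder Bernstein bound based on the worst-case $\mathrm{polylog}(d)$ size of an individual summand would lose several factors of $\log d$.

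Finally, Step~4 combines Steps~2--3 in an in-expectation oracle inequality for hard thresholding tailored to the $\ell_1$-ball (here $c\in\Delta_k$, so $\|c\|_1=1$); crucially we do not pass through a single high-probability event, which would force thresholding at scale $\sigma\sqrt{\log(kd)}$ and give only the inferior rate $(d\log d/(nm_d))^{1/4}$. Splitting $\bbE\|\hat c^{\mathrm{HT}}-c\|_2^2$ over coordinates: the at most $2/\lambda$ coordinates with $|c_i|\ge\lambda/2$ contribute $\lesssim(2/\lambda)\cdot\sigma^2=\sigma^2/\lambda=\sigma/\sqrt{\log k}\lesssim\lambda$ from their variance plus an exponentially small false-negative term; the coordinates with $|c_i|<\lambda/2$ contribute $\sum c_i^2\le\lambda\|c\|_1=\lambda$ when thresholded to zero, and when a false positive occurs the extra error is bounded by $\sum_i\big(\lambda^2\bbP(|\hat c_i^{\mathrm{coll}}-c_i|>\lambda/2)+\int_{\lambda/2}^\infty 2v\,\bbP(|\hat c_i^{\mathrm{coll}}-c_i|>v)\,dv\big)\lesssim k\lambda^2 k^{-cC_0^2}+k\,e^{-c\mu}\lesssim\lambda$, where the first inequality uses Step~3 and the last uses $C_0$ large together with $\mu\gtrsim_{\epsilon_0}\log k$ (equivalently $\epsilon<\epsilon_0$). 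Altogether $\bbE\|\hat c^{\mathrm{HT}}-c\|_2^2\lesssim\lambda=C_0\,\sigma\sqrt{\log k}$, so $\bbE\|\hat c^{\mathrm{HT}}-c\|_2\lesssim\sqrt{\sigma\sqrt{\log k}}=(d\log k/(nm_d))^{1/4}$, and choosing $\epsilon_0$ small and $C_0$ large closes the argument.
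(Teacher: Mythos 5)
Your overall architecture—reducing to the normalized Poisson model via Lemma~\ref{lem:poissonization}, analyzing the hard-thresholded collision estimator with threshold $\lambda\asymp\sigma\sqrt{\log k}$, and combining a concentration bound with an in-expectation $\ell_1$-ball oracle inequality—matches the paper's proof. Steps~1, 2, and~4 are essentially sound.

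The genuine gap is in Step~3. You propose to obtain a Bernstein/Bennett tail for $\hat c_i^{\mathrm{coll}}$ by ``evaluating the conditional moment generating function $\log\bbE[e^{\theta C_i}\mid\Pi]=\sum_j\log\bbE[e^{\theta\widebar Y_j\widebar V_{ij}}]$.'' But for independent $X\sim\mathrm{Poi}(\lambda_1)$ and $Y\sim\mathrm{Poi}(\lambda_2)$, one has $\bbE[e^{\theta XY}]=\sum_{n\ge0}\frac{e^{-\lambda_1}\lambda_1^n}{n!}\,e^{\lambda_2(e^{\theta n}-1)}$, whose $n$-th term grows doubly exponentially in $n$ for any $\theta>0$; the MGF of a product of Poissons is infinite for every $\theta>0$, so there is no ``range of $\theta$'' to work with and no Chernoff bound. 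The random variables $V_{ij}Y_j$ (centered) are not sub-exponential but sub-Weibull of order $\psi_{1/2}$, and their correct tail is $\exp(-c\sqrt{x/\zeta})$ rather than $\exp(-cx)$; your claimed inequality $\bbP(|\hat c_i^{\mathrm{coll}}-c_i|>t)\le2\exp(-ct^2/(\sigma^2(2+t)))$ therefore overstates the decay in the large-deviation regime. The paper deals with this by using the Kuchibhotla--Chakrabortty Orlicz-norm concentration for $\psi_{1/2}$-sub-Weibull sums (Lemma~\ref{lem:sub_weibull_conc}), which gives a two-regime tail—sub-Gaussian up to a threshold $\tau_{\Pi,i}$ and $\psi_{1/2}$-type beyond—and then controls $\sigma_{\Pi,i}^2$, the bias $\beta_{\Pi,i}$, the Orlicz parameter $\zeta_{\Pi,i}$, and the prefactor $K_{\Pi,i}$ uniformly over a high-probability event for $\Pi$. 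Your instinct that a crude truncation-plus-bounded-Bernstein argument would lose $\log d$ factors is correct; the fix is Orlicz-norm machinery, not an MGF computation that does not exist. Until Step~3 is replaced with a valid sub-Weibull tail bound (with the $\psi_{1/2}$ regime tracked through Step~4, where the extra $(\log d)$ factors cancel only because of the careful choice of transition point), the proof does not close.
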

Let us now prove these Lemmas in turn.
\subsubsection{Proof of Lemma~\ref{lem:collision}} 
Throughout the proof, we repeatedly
make use of elementary moment bounds for Dirichlet random variables, as summarized
in Lemma~\ref{lem:dirichlet}. 
Abbreviate $\hat c^\mathrm{coll}$ by $\hat c$. 
It suffices to show that for each $i$, $\bbE|\hat c_i - c_i| \lesssim \sqrt{d/nm_d}$.
Without loss of generality, fix $i=1$. 
Recall that the collections of random variables $\{Z_\ell\}_\ell$ and $\{W_{1j}\}_j$ are 
each exchangeable, and  independent of each other conditionally on $\Pi$. Thus, 
$$\bbE[\hat c_1] = (d+1) \cdot \bbP(Z_1 = W_{11})-1.$$
Furthermore, 
\begin{align*}
\bbP(Z_1 = W_{11}) 
 &= \sum_{j=1}^d \bbE \big[\bbP(Z_1 = z_j,W_{11}=z_j|\Pi)\big] \\ 
 &= \sum_{j=1}^d \bbE \big[\bbP(Z_1 = z_j|\Pi)  \cdot  \bbP(W_{11}=z_j|\Pi)\big] \\ 
 &= \sum_{j=1}^d \bbE \left[ \sum_{i=1}^k c_i \pi_{ij} \pi_{1j}\right] \\
 &= \sum_{j=1}^d \left[ c_1 \frac{2}{d(d+1)} + \sum_{i=2}^k c_i  \frac{1}{d(d+1)}\right] \\
 &= \frac 1 {d(d+1)} \sum_{j=1}^d \left[ 1 + c_1\right] 
 = \frac {1 + c_1} {d+1}.
\end{align*} 
Thus, $\hat c_1$ is unbiased. 
To bound its variance,
notice that
\begin{align*}
\bbE &\left[\left(\sum_{\ell=1}^n \sum_{r=1}^m I(Z_\ell = W_{1r})\right)^2\right] \\
 &= \sum_{\ell,\ell'=1}^n \sum_{r,r'=1}^m \bbE \big[ I(Z_{\ell} = W_{1r})I(Z_{\ell'} = W_{1r'})\big] \\
 &= \sum_{\ell=1}^n \sum_{r=1}^m \bbE \big[ I(Z_{\ell} = Z_{1r})\big] 
 + \sum_{\ell=1}^n \sum_{r\neq r'} \bbE \big[ I(Z_{\ell} = W_{1r}= W_{1r'})\big] \\ 
 &\qquad + \sum_{\ell\neq \ell'} \sum_{r=1}^m \bbE \big[ I(Z_{\ell} = Z_{\ell'} = Z_{1r})\big] 
 + \sum_{\ell\neq \ell'} \sum_{r\neq r'} \bbE \big[ I(Z_{\ell} = W_{1r}) I(Z_{\ell'} = W_{1r'})\big] \\ 
 &=: (I)+(II)+(III)+(IV). 
\end{align*}
We compute these terms in turn. First, our earlier bias calculations imply that
$$(I) = nm\frac {1 + c_1} {d+1}\lesssim \frac{nm}{d}.$$
To compute term $(II)$, notice that 
\begin{align*}
\bbP(W_1 = Z_{11}= Z_{12})
 &= \sum_{j=1}^d \bbE\big[ \bbP(W_1=z_j|\Pi) \bbP(Z_{11}=z_j|\Pi) \bbP(Z_{12}=z_j|\Pi)\big]  \\
 &= \sum_{j=1}^d  \sum_{i=1}^k c_i \bbE\big[ \pi_{ij} \pi_{1j}^2\big]  
 = \sum_{j=1}^d \left\{c_1\bbE\big[ \pi_{1j}^3\big]+ \sum_{i=2}^k c_i  \bbE\big[\pi_{ij} \pi_{1j}^2\big] \right\} \lesssim 1/d^2,
\end{align*}
thus, 
$$(II) \lesssim \frac{nm(m-1)}{d^2} \lesssim \frac{nm^2}{d^2}.$$

A similar argument shows that 
$(III) \lesssim  mn^2/{d^2}.$
To bound $(IV)$, notice that  
\begin{align*}
\bbE&\big[ I(W_{1} = Z_{11}) I(W_{2} = Z_{12})\big]  \\
 &= \bbE\Big[ \bbP(W_1=Z_{11}|\Pi) \cdot \bbP(W_2 = Z_{12}|\Pi) \Big] \\
 &=\sum_{j,j'=1}^d \sum_{i,i'} c_ic_{i'} \bbE[ \pi_{ij} \pi_{1j} \pi_{i'j'} \pi_{1j'}]  
= (a) + (b),
\end{align*}
where
$$(a) = \sum_{j=1}^d \sum_{i,i'} c_ic_{i'} \bbE[\pi_{ij}\pi_{1j} \pi_{i'j}\pi_{1j}] \leq C_k' d^{-3},$$
for a sufficiently large constant $C' > 0$, 
and 
\begin{align*}
(b)&= \sum_{j\neq j'} \sum_{i,i'} c_ic_{i'} \bbE[ \pi_{ij} \pi_{1j} \pi_{i'j'} \pi_{1j'}]   \\
   &= \sum_{j\neq j'}\left(\sum_{i=1}^k c_i \bbE[\pi_{ij}\pi_{1j}]\right)^2 + 
      \sum_{j\neq j'} \sum_{i,i'} c_ic_{i'} \Cov( \pi_{ij} \pi_{1j}, \pi_{i'j'} \pi_{1j'}) \\ 
   &\leq \sum_{j\neq j'}\left(\sum_{i=1}^k c_i \bbE[\pi_{ij}\pi_{1j}]\right)^2 + C'' d^{-3}\\ 
    &= d(d-1) \left(\frac{1+c_1}{d(d+1)}\right)^2  + C'' d^{-3} \leq \left(\frac{1+c_1}{d+1}\right)^2 + C'' d^{-3},
\end{align*}
for another universal constant $C'' > 0$. 
We thus have \begin{align*}
(IV) = n(n-1)m(m-1)\big[(a) + (b)\big] \leq (nm)^2 \left[\frac {C'''} {d^3} +    \left(\frac{1+c_1}{d+1}\right)^2\right],
\end{align*}
for $C_k'''=C_k'+C_k''$. 
Altogether, we deduce that
\begin{align*}
\Var &\left[ \sum_{\ell=1}^n \sum_{j=1}^m I(W_\ell = Z_{1j}) \right] \\
 &= (I)+(II)+(III)+(IV)   -\textstyle  \left[\bbE  \Big(\sum_{\ell=1}^n \sum_{j=1}^m I(W_\ell = Z_{1j}) \Big)\right]^2\\
 &= (I)+(II)+(III)+(IV) - (nm)^2\left(\frac{1+c_1}{d+1}\right)^2 \\
 &\lesssim \frac{nm}{d} + \frac{nm^2}{d^2} + \frac{mn^2}{d^2} + \frac{(nm)^2}{d^3}.
\end{align*}
It follows that 
\begin{align*}
\Var[\hat c_1] &\lesssim \frac{d^2}{(nm)^2} \Var \left[\left(\sum_{\ell=1}^n \sum_{j=1}^m I(W_\ell = Z_{ij})\right)^2\right] 
 \lesssim \frac d {nm} + \frac{1}{n} + \frac 1 {m} + \frac 1 {d}.
\end{align*}
Since $n \leq d$, the claim follows.\qed 

\subsubsection{Proof of Lemma~\ref{lem:hard_thresh_collision}}
By condition~\ref{assm:sample_size}
and Lemma~\ref{lem:poissonization}, 
it will suffice to derive
an upper bound on the 
normalized Poisson minimax risk $\calR'(n,d,k,m)$.
Recall that, under
this model,  one
observes histograms $(Y,V)$ with 
entries that are conditionally independent
given $\Pi$. 
We will analyze the 
associated hard-thresholded
collision estimator, defined by
$$\tilde c_i = \hat c_i \cdot I(\hat c_i \geq \lambda), 
\quad \text{where } \hat c_i =  \frac d {nm} {(VY)_i} -1 =
 \frac d {nm}\sum_{j=1}^d V_{i j}Y_j-1,\quad i=1,\dots,k,$$
 where $\lambda  =\sqrt{a d \log(k) / (nm_d)}$, for a constant $a > 0$ 
 depending only on $\gamma$, to be specified below. 
Throughout the proof, $C > 0$ denotes
a universal constant whose value may change from one expression to the next.  
Our starting point is 
the following 
basic inequality for hard-thresholding estimators~\citep{donoho1994ideal}:
$$(\tilde c_i-c_i)^2 \leq 4(c_i \wedge \lambda)^2 + |\hat c_i-c_i|^2 I(|\hat c_i-c_i|>\lambda / 2) ,
\quad i=1,\dots,k.$$
Notice that
\begin{align*} 
\sum_{i=1}^k c_i^2\wedge \lambda^2
= \sum_{i: c_i \leq \lambda} c_i^2
 + \sum_{i:c_i > \lambda} \lambda^2
 \leq \lambda\|c\|_1
 + |\{i:c_i > \lambda\}| \lambda^2
 \lesssim \lambda\asymp \sqrt{\frac{da\log(k)}{nm_d}},
\end{align*} 
thus, to prove the claim, it will suffice to 
show that $\bbE[R] \lesssim \sqrt{da\log(k)/nm_d}$, where 
$$R = \max_{1 \leq i \leq k} R_i, \quad R_i =  k \cdot\bbE\Big[ |\hat c_i-c_i|^2
 \cdot I(|\hat c_i-c_i|>\lambda/2)\,\big|\, \Pi\Big].$$
Let us now derive
a concentration bound for $\hat c_i$ 
conditionally on $\Pi$. 
We will repeatedly
make use of the following
concentration bounds for  sub-Weibull random variables,
which are due to~\citet{kuchibhotla2022}. 
\begin{lemma}[Concentration of sub-Weibull Random Variables]
\label{lem:sub_weibull_conc}
Let $X_1,\dots,X_d$ be independent
mean-zero random variables 
in $\bbR$, such that for some $\alpha \in (0,1]$ and 
$\zeta > 0$, 
$$\max_{1 \leq j \leq d} \|X_j\|_{\psi_\alpha} \leq \zeta,
\quad \text{and define } \sigma^2 =  \sum_{j=1}^d \bbE[X_i^2].$$
Then, there exists a constant $C_\alpha > 0$
such that the following assertions
hold. 
\begin{enumerate}
    \item (Theorem 3.1,~\cite{kuchibhotla2022})
    For all $t > 0$, it holds with
    probability at least $1-2e^{-t}$
    that
    $$  \bigg| \sum_{j=1}^d X_j\bigg| 
    \leq  C_{\alpha} \zeta \big( \sqrt{dt}
    + t^{1/\alpha}\big).$$
    \item (Theorem 3.4,~\cite{kuchibhotla2022})
    For all $t > 0$, 
    it holds with probability at least 
    $1-3e^{-t}$ that
    $$\bigg| \sum_{j=1}^d X_j\bigg|
    \leq C_\alpha \left(\sqrt{\sigma^2 t} + \zeta (t\log d)^{1/\alpha}\right).$$
\end{enumerate}
\end{lemma}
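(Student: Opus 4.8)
Both assertions are classical facts from high-dimensional probability, recorded here from~\cite{kuchibhotla2022}; the plan is to reconstruct them via the standard \emph{truncation-plus-Bernstein} recipe. The only preliminary input needed is that $\|X\|_{\psi_\alpha}\leq \zeta$ forces, by Markov's inequality applied to~\eqref{eq:orlicz}, the tail bound $\bbP(|X|>u)\leq 2e^{-(u/\zeta)^\alpha}$ for all $u\geq 0$, and hence by integration $\bbE|X|^p\leq 2\zeta^p\Gamma(p/\alpha+1)\leq (C_\alpha\zeta)^p p^{p/\alpha}$ for all $p\geq 1$. Throughout, $C_\alpha,c_\alpha>0$ denote constants depending only on $\alpha\in(0,1]$.

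\textbf{Part 2.} Fix $t>0$ and set the truncation level $\tau=\zeta\,(t+\log(2d))^{1/\alpha}$. Decompose each $X_j=Y_j+R_j$ into its centered truncation $Y_j=X_jI(|X_j|\leq\tau)-\bbE[X_jI(|X_j|\leq\tau)]$ and its centered tail $R_j=X_jI(|X_j|>\tau)-\bbE[X_jI(|X_j|>\tau)]$. On the event $A=\{\max_{1\leq j\leq d}|X_j|\leq\tau\}$, each $R_j$ collapses to the deterministic constant $-\bbE[X_jI(|X_j|>\tau)]$, whose absolute value is at most $\bbE[|X_j|I(|X_j|>\tau)]$; integrating the sub-Weibull tail shows $\sum_{j=1}^d|\bbE[X_jI(|X_j|>\tau)]|$ is exponentially small in $(\tau/\zeta)^\alpha$ and thus negligible, while the union bound gives $\bbP(A^c)\leq 2d\,e^{-(\tau/\zeta)^\alpha}=e^{-t}$. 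The truncated variables $Y_j$ are mean zero, bounded by $2\tau$, and satisfy $\sum_j\bbE Y_j^2\leq\sum_j\bbE X_j^2=\sigma^2$, so Bernstein's inequality gives $|\sum_j Y_j|\lesssim \sqrt{\sigma^2 t}+\tau t$ with probability at least $1-2e^{-t}$. Combining on $A$, and using $\tau t\lesssim_\alpha \zeta(t\log d)^{1/\alpha}$ after absorbing lower-order powers of $t$ and $\log d$, yields the second assertion.

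\textbf{Part 1.} Here the $\log d$ must be avoided, so a single truncation level is insufficient. The plan is to run the same decomposition but \emph{peeled over dyadic truncation levels} $\tau_\ell=\zeta 2^{\ell/\alpha}$, applying Bernstein to the contribution of each $X_j$ restricted to the annulus $\{\tau_\ell<|X_j|\leq\tau_{\ell+1}\}$ — whose $\psi_\alpha$-control forces the variance and range contributions to decay geometrically in $\ell$ — and summing the resulting tails. This produces the Bernstein--Orlicz-type bound $\bbP(|\sum_j X_j|>s)\leq 2\exp\bigl(-c_\alpha\min\{s^2/(\zeta^2 d),\,(s/\zeta)^\alpha\}\bigr)$, in which the variance proxy $\zeta^2 d$ replaces the true $\sigma^2$; it is precisely this replacement that removes the $\log d$. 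Inverting the tail in $s$ — requiring both $s^2/(\zeta^2 d)\gtrsim t$ and $(s/\zeta)^\alpha\gtrsim t$ — gives $|\sum_j X_j|\leq C_\alpha\zeta(\sqrt{dt}+t^{1/\alpha})$ with probability at least $1-2e^{-t}$. (An equivalent route is to bound $\|\sum_j X_j\|_{L^p}$ by a Rosenthal--Latała inequality, use $\|X_j\|_{L^p}\lesssim_\alpha\zeta p^{1/\alpha}$, and apply Markov at moment order $p\asymp t$.)

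\textbf{Main obstacle.} The routine part is Part 2: one truncation and an off-the-shelf Bernstein bound suffice. The delicate point is Part 1, where the naive single-threshold union bound over the $d$ coordinates costs a spurious $\log d$ in the heavy-tailed term; eliminating it requires the multi-scale peeling (or the sharp Rosenthal--Latała moment comparison), together with care to keep all constants uniform over $\alpha\in(0,1]$ as $\alpha\to 0$.
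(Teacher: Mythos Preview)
The paper does not prove this lemma at all: it is simply recorded from Theorems~3.1 and~3.4 of~\cite{kuchibhotla2022} and used as a black box, so there is no proof in the paper to compare against. Your sketch for Part~1 via dyadic peeling (or equivalently Rosenthal--Lata{\l}a moments followed by Markov at order $p\asymp t$) is standard and fine at this level of detail.

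Your Part~2, however, has a real gap. After Bernstein on the truncated pieces you obtain the heavy-tailed contribution $\tau t$ with $\tau=\zeta(t+\log 2d)^{1/\alpha}$, and you then assert $\tau t\lesssim_\alpha \zeta(t\log d)^{1/\alpha}$. This fails once $t$ is larger than roughly $(\log d)^{1/\alpha}$: already for $\alpha=1$ one has $\tau t\asymp\zeta(t^2+t\log d)$, and the $\zeta t^2$ piece is not controlled by $\zeta t\log d$; in general your decomposition produces a stray $\zeta t^{1+1/\alpha}$ term which is not absorbed into $\zeta t^{1/\alpha}(\log d)^{1/\alpha}$. The cure is to make the truncation level depend only on $d$, say $\tau\asymp\zeta(\log d)^{1/\alpha}$, so that the Bernstein term becomes $\tau t\lesssim\zeta t(\log d)^{1/\alpha}\leq\zeta(t\log d)^{1/\alpha}$ for $t\geq 1$ and $\alpha\leq 1$. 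The remainder $\sum_j X_jI(|X_j|>\tau)$ then cannot be handled by your union bound on $\{\max_j|X_j|>\tau\}$ (which yields only a $d^{-c}$ tail, not $e^{-t}$); one must instead control its sub-Weibull norm directly, exploiting that each summand is nonzero with probability $O(1/d)$ so that the whole sum has $\psi_\alpha$-norm comparable to a \emph{single} $\zeta$, not $d\zeta$. That last step is the actual content of the cited theorem and is what your argument is missing.
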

We will also
make use of the following bound from~\citep{kuchibhotla2022}, which
characterizes the tail behavior
of products of sub-Weibull random
variables.
\begin{lemma}[\cite{kuchibhotla2022},~Proposition D.2]
\label{lem:holder_orlicz}
If $X_1,\dots,X_d$ are (possibly
dependent) random variables
satisfying $\|X_j\|_{\psi_{\alpha_j}} < \infty$ for some 
$\alpha_j > 0 $, then
$$\left\|\prod_{j=1}^d X_j\right\|_{\psi_\beta} \leq \prod_{j=1}^d \|X_j\|_{\psi_{\alpha_j}},
\quad \text{where } \frac 1 \beta  = \sum_{j=1}^d \frac 1 {\alpha_j}.$$
\end{lemma}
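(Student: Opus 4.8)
The plan is to prove this exactly along the lines of the classical generalized Hölder inequality, with the exponential playing the role of the power function: the only tools needed are the weighted arithmetic–geometric mean inequality and convexity (plus monotonicity) of $t\mapsto e^t$. First I would reduce to a normalized setting. If $\|X_j\|_{\psi_{\alpha_j}}=0$ for some $j$ then $X_j=0$ almost surely, so $\prod_{j}X_j=0$ a.s. and both sides of the claim vanish; and the case $\|X_j\|_{\psi_{\alpha_j}}=\infty$ is excluded by hypothesis. So I may assume $\eta_j:=\|X_j\|_{\psi_{\alpha_j}}\in(0,\infty)$ for all $j$. A small but genuine preliminary point to check carefully is that the infimum in the definition~\eqref{eq:orlicz} is attained, i.e.\ that $\bbE[\exp((|X_j|/\eta_j)^{\alpha_j})]\le 2$; this follows because $\eta\mapsto \bbE[\exp((|X_j|/\eta)^{\alpha_j})]$ is nonincreasing on $(0,\infty)$ and continuous from the right there by monotone convergence, so the sublevel set $\{\eta:\bbE[\exp((|X_j|/\eta)^{\alpha_j})]\le 2\}$ is a closed half-line. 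By positive homogeneity of the functionals $\|\cdot\|_{\psi_\alpha}$ (which holds for every $\alpha>0$, even though $\psi_\alpha$ is only a quasi-norm when $\alpha<1$), it then suffices to treat $\eta_j=1$ for all $j$, i.e.\ $\bbE[\exp(|X_j|^{\alpha_j})]\le 2$, and to prove $\|\prod_j X_j\|_{\psi_\beta}\le 1$; the general inequality follows by rescaling each $X_j$ by $\eta_j$.

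Next I would introduce the weights $\lambda_j:=\beta/\alpha_j>0$ and observe that $\sum_{j=1}^d\lambda_j=\beta\sum_{j=1}^d\alpha_j^{-1}=1$ by the defining relation for $\beta$, so $(\lambda_j)_{j=1}^d$ is a probability vector. Applying the weighted arithmetic–geometric mean inequality (equivalently, iterated Young's inequality with conjugate exponents $p_j=1/\lambda_j$) to the nonnegative reals $|X_j|^{\alpha_j}$ gives the pointwise bound
\[
\Big|\prod_{j=1}^d X_j\Big|^{\beta}=\prod_{j=1}^d\big(|X_j|^{\alpha_j}\big)^{\lambda_j}\le \sum_{j=1}^d\lambda_j\,|X_j|^{\alpha_j}.
\]
Exponentiating and using that $t\mapsto e^t$ is nondecreasing and convex, I get, pointwise, $\exp\big(|\prod_j X_j|^{\beta}\big)\le \exp\big(\sum_j\lambda_j|X_j|^{\alpha_j}\big)\le \sum_j\lambda_j\exp\big(|X_j|^{\alpha_j}\big)$, where the last inequality is Jensen for the probability vector $(\lambda_j)$. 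Taking expectations and invoking $\bbE[\exp(|X_j|^{\alpha_j})]\le 2$ yields $\bbE\big[\exp\big(|\prod_j X_j|^{\beta}\big)\big]\le \sum_j\lambda_j\cdot 2=2$, which is precisely $\|\prod_j X_j\|_{\psi_\beta}\le 1$; undoing the normalization completes the argument.

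Honestly, I do not expect a serious obstacle here — the proof is essentially two lines once set up. The only steps requiring any care are the two bookkeeping points noted above: (i) justifying that the Orlicz-norm infimum is attained so the normalization is exact, and (ii) confirming that the homogeneity reduction is valid for the full range $\alpha_j>0$ rather than just $\alpha_j\ge 1$. Both are routine, and neither affects the structure of the argument.
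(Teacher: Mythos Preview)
Your proof is correct and is essentially the standard argument for this inequality. Note, however, that the paper does not supply its own proof of this lemma: it is quoted verbatim as Proposition~D.2 of~\cite{kuchibhotla2022} and used as a black box, so there is no in-paper proof to compare against. Your write-up would serve perfectly well as a self-contained justification.
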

Now, using Proposition~6.5 of~\citep{leskela2025sub}, 
it can
be shown that the $\psi_1$-Orlicz
norm of a Poisson random variable 
$X \sim \mathrm{Poi}(\lambda)$
satisfies $\|X-\lambda\|_{\psi_1}\lesssim 1 \vee \sqrt \lambda$. Thus, denoting by $\|\cdot\|_{\psi_{\alpha,\Pi}}$
the Orlicz norm taken with
respect to the conditional law $\bbP(\cdot |\Pi)$,
one   has 
\begin{align*}
\big\|Y_j - \bbE[Y_j|\Pi]\big\|_{\psi_1,\Pi}
\lesssim 1 \vee \mu_{Y_j},\quad 
\big\|V_{ij}- \bbE[V_{ij}|\Pi]\big\|_{\psi_1,\Pi}
\lesssim 1 \vee \mu_{V_{ij}},
\end{align*}
with $\mu_{Y_j}=n\Pi_{\cdot j}^\top c$ 
and $\mu_{V_{ij}} = m \pi_{ij}$, 
  hence, 
\begin{align*} 
\big\|V_{ij}&Y_j - \bbE[V_{ij}Y_j|\Pi] \big\|_{\psi_{1/2,\Pi}} \\
 &\lesssim 
\big\|(V_{ij}-\bbE[V_{ij}|\Pi])(Y_j-\bbE[Y_j|\Pi])\big\|_{\psi_{1/2},\Pi} 
+ \mu_{Y_j} \|V_{ij}\|_{\psi_{1/2,\Pi}} 
+ \mu_{V_{ij}} \|Y_j\|_{\psi_{1/2,\Pi}} + 
\mu_{Y_j}\mu_{V_{ij}}\\
 &\lesssim  
 (1 \vee \mu_{Y_j})(1\vee \mu_{V_{ij}}),
\end{align*}
 where we   used Lemma~\ref{lem:holder_orlicz}
 on the final line. We write
 $$\zeta_{\Pi,i} = \max_{1 \leq j \leq d} (1\vee \mu_{Y_j})(1\vee \mu_{V_{ij}}),
 \quad i=1,\dots,k.$$
We may then apply the concentration 
inequality of Lemma~\ref{lem:sub_weibull_conc}(ii)
for $\psi_{1/2}$-random variables to 
obtain
\begin{align}
\label{eq:bernstein}
\bbP\left(\bigg|\sum_{j=1}^d \big(  
V_{ij}Y_j - \bbE(V_{ij}Y_j|\Pi)\big)\bigg|
> x \,\bigg|\,\Pi\right)
 \leq 2 \exp\left\{ -
 \frac 1 C \bigg(\frac{x^2}{\sigma_{\Pi,i}^2}
 \wedge  \frac{ 
 \sqrt {x/\zeta_{\Pi,i}} } 
 {\log d}\bigg) \right\},
\end{align}
for all $x > 0$, where 
$$\sigma_{\Pi,i}^2 = 
\sum_{j=1}^d\Var[V_{ij} Y_j|\Pi].$$
It follows that
\begin{align}
\bbP\big(
|\hat c_i - \bbE[\hat c_i\,|\,\Pi]|
> x \,|\,\Pi\big)
 \leq 2 \exp\left\{ -
 \frac 1 C \bigg(\frac{(nm)^2x^2}{d^2\sigma_{\Pi,i}^2}
 \wedge  \frac{ 
 1 } 
 {\log d}\sqrt{\frac{nm x}{d\zeta_{\Pi,i}}}\bigg) \right\}.
\end{align}
In particular, denoting by $\beta_{\Pi,i} = |\bbE[\hat c_i|\Pi] - c_i|$ the conditional bias
of $\hat c_i$, we deduce that
\begin{align}\label{eq:bernstein_second}
\bbP\big(
|\hat c_i - c_i  |
> x \,|\,\Pi\big)
 \leq 2 \exp\left\{ -
 \frac 1 C \bigg(\frac{(nm)^2(x-\beta_{\Pi,i})_+^2}{d^2\sigma_{\Pi,i}^2}
 \wedge  \frac{ 
 1 } 
 {\log d}\sqrt {\frac{nm (x-\beta_{\Pi,i})_+}{d\zeta_{\Pi,i}}}\bigg) \right\}.
\end{align}
where the transition between the sub-Gaussian and sub-Weibull tail occurs when the point $x$
exceeds 
$\tau_{\Pi,i} := (d/nm)(\sigma_{\Pi,i}^2 / \sqrt{\zeta_{\Pi,i}})^{2/3}
+\beta_{\Pi,i}$. 
With these preliminaries in place, we obtain 
\begin{align*}
R_i 
 &= \int_{\lambda^2/4}^\infty \bbP\big( |\hat c_i - c_i|^2 > u\,|\,\Pi\big)du \\ 
&\lesssim  k\cdot  
 \int_{\lambda^2/4}^\infty 
  \exp\left\{ -
 \frac 1 {C_0} \bigg(\frac{(nm)^2(\sqrt x-\beta_{\Pi,i})_+^2}{d^2\sigma_{\Pi,i}^2}
 \wedge  \frac{ 
 1 } 
 {\log d}\sqrt {\frac{nm (\sqrt x-\beta_{\Pi,i})_+}{d\zeta_{\Pi,i}}}\bigg) \right\} dx,
\end{align*}
for a sufficiently large constant $C_0 > 0$. 
Now, define 
for all $i=1,\dots,k$ the quantities 
\begin{align*} 
\lambda_{\Pi,i} &= (\lambda / 2-\beta_{\Pi,i})_+ \\
K_{\Pi,i} &=  k 
 \exp\left\{ - \frac{(nm)^2\lambda_{\Pi,i}^2}{C_0 d^2\sigma_{\Pi,i}^2}\right\}
 + k \exp\left\{ - \frac{ 
 1 } 
 {C_0 \log d}\sqrt {\frac{nm \lambda_{\Pi,i}}{d\zeta_{\Pi,i}}} \right\} .
 \end{align*}
In particular, notice that $K_{\Pi,i}$ solves
\begin{align*}
k \leq  K_{\Pi,i} \cdot 
 \exp\left\{ -
 \frac 1 {C_0} \bigg(\frac{(nm)^2(\sqrt x-\beta_{\Pi,i})_+^2}{d^2\sigma_{\Pi,i}^2}
 \wedge  \frac{ 
 1 } 
 {\log d}\sqrt {\frac{nm (\sqrt x-\beta_{\Pi,i})_+}{d\zeta_{\Pi,i}}}\bigg) \right\}
 \quad \text{ for all } x > \lambda^2/4,
\end{align*} 
thus we obtain
\begin{align*}
R_i
 &\leq K_{\Pi,i} 
 \int_{\lambda^2/4}^{\tau_{\Pi,i}^2}
\exp\left\{ - \frac{(nm)^2(\sqrt x - \beta_{\Pi,i})_+^2}{C_0 d^2\sigma_{\Pi,i}^2}\right\}
 dx
 + K_{\Pi,i}
 \int_{\tau_{\Pi,i}^2}^{\infty} 
   \exp\left\{ - \frac{ 
 1 } 
 {C_0}\sqrt {\frac{nm (\sqrt x - \beta_{\Pi,i})_+}{d (\log d)^2\zeta_\Pi}} \right\} dx \\
  &\leq K_{\Pi,i} \left\{
  (\beta_{\Pi,i}^2-\frac{\lambda^2}{4})_+
  + \frac{d^2\sigma_{\Pi,i}^2}{(nm)^2}
  +  \beta_{\Pi,i} \frac{d\sigma_{\Pi,i}}{nm}
 \right\}  + K_{\Pi,i}\left\{ 
 \frac{d^2\,(\log d)^4\,\zeta_{\Pi,i}^2}{(nm)^2}+
\beta_{\Pi,i}\,
\frac{d\,(\log d)^2\,\zeta_{\Pi,i}}{n m}
\right\}
\end{align*}
where we used the following elementary integral identities, which hold for all $a,b,f > 0$, 
\begin{align*}
\int_a^\infty e^{-b(\sqrt x-f)_+^2}dx &\lesssim (f^2-a)_+ + \frac 1 {b} + \frac f {\sqrt b}, \\
\int_a^\infty 
e^{-b (\sqrt x - f)_+^{1/2}}dx 
&\lesssim (f^2-a)_+ + \frac 1 {b^4} + 
\frac f {b^2}.
\end{align*}
Thus, if we define the event 
\begin{equation} 
\label{eq:sets_A} 
A = \bigcap_{i=1}^k \bigg\{\sigma_{\Pi,i}^2 \leq C_1 \frac{nm}{d} + C_2 \frac{nm^2}{d^2}
+ C_3 \frac{n^2m}{d^2}\bigg\}
\cap \big\{\beta_{\Pi,i} \leq \frac \lambda 4\big\} \cap 
\{\zeta_{\Pi,i} \leq 1 + \frac m d \} \cap 
\big\{ K_{\Pi,i} \leq 1\big\},
\end{equation}
for  large enough constants $C_1 ,C_2,C_3 > 0$ to be defined
below, 
then, under condition~\ref{assm:sample_size},
we obtain
\begin{align}\label{eq:goal_of_thresholding_proof}
\nonumber 
\bbE[R] 
 &\lesssim \bbE[R\cdot I(A^\cp)] \\
\nonumber 
 & + \bbE\left\{ \max_{1 \leq i \leq k} 
 K_{\Pi,i} \left[
  (\beta_{\Pi,i}^2-\frac{\lambda^2}{4})_+
  + \frac{d^2\sigma_{\Pi,i}^2}{(nm)^2}
  +  \beta_{\Pi,i} \frac{d\sigma_{\Pi,i}}{nm}
 +   \left(\frac{d(\log d)^2 \zeta_{\Pi,i}}{nm}\right)^2\right]\right\} \\
 &\lesssim d\cdot \bbP(A^\cp) + \lambda,
 \end{align}
where we used the fact that $\hat c_i \leq d$, 
thus $R \lesssim d$. 
To complete the claim, it will thus suffice
to show that the event $A$ occurs with sufficiently high probability.
To this end, we will provide high-probability bounds on the quantities
$\sigma_{\Pi,i}^2$, $\beta_{\Pi,i}$,
$\zeta_{\Pi,i}$ and $K_{\Pi,i}$, in turn.\\

\noindent {\bf Bounding term $\sigma_{\Pi,i}^2$.}
Recall that the entries 
of the matrix $\Pi$ take the form 
$\pi_{ij} = \varpi_{ij} / S_i,$
where $S_i = \sum_{\ell=1}^d \varpi_{i\ell}$, 
and $\varpi_{ij} \sim \calE_d$ are i.i.d. exponential
random variables. Furthermore, 
write $X = (\varpi_{ij}:1 \leq i \leq k ,1 \leq j \leq d) \in \bbR^{k\times d}.$
We  have 
\begin{align*}
\sigma_{\Pi,i}^2 
 &=  \sum_{j=1}^d 
 \Var[V_{ij}Y_j|\Pi] \\ 
 &= \sum_{j=1}^d \Big(\Var[V_{ij}|\Pi] \Var[Y_j|\Pi]  + 
 \Var[Y_j|\Pi] \big(\bbE[V_{ij}|\Pi]\big)^2
 + \Var[V_{ij}|\Pi] \big(\bbE[Y_j|\Pi]\big)^2\Big) \\ 
 &= \sum_{j=1}^d 
 \Big(m\pi_{ij} n(\Pi^\top c)_j 
  + n(\Pi^\top c)_j (m\pi_{ij})^2
  + m \pi_{ij}(n(\Pi^\top c)_j)^2\Big)  \\ 
 &= \sum_{j=1}^d \sum_{s=1}^k c_s
 \Big(
       mn  \pi_{ij} \pi_{sj} 
   +   nm^2   \pi_{sj} \pi_{ij}^2
    +  mn^2   \sum_{s'=1}^k c_{s'} \pi_{s'j}\pi_{ij}\pi_{sj} 
 \Big)  \\
 &= mn\sum_{j=1}^d \sum_{s=1}^k c_s
         \frac{\varpi_{ij} \varpi_{sj} }{S_iS_s}
  +   nm^2 \sum_{j=1}^d \sum_{s=1}^k c_s
   \frac{ \pi_{ij}^2\pi_{sj}}{S_i^2S_s}
   +  mn^2 \sum_{j=1}^d\sum_{s,s'=1}^k c_s   c_{s'} \frac{\pi_{ij}\pi_{sj} \pi_{s'j}}{S_iS_{s}S_{s'}}
 \\
 &=:T_{1i} + T_{2i} + T_{3i}.
\end{align*}
We bound the terms $T_{1i}$, $T_{2i}$, and $T_{3i}$ in turn.
By a sub-exponential tail bound, notice that
one has
\begin{equation} 
\label{eq:concentration_norm} 
\bbP\big(\max_{1 \leq i \leq k} |S_i - 1| > x\big) \lesssim k\cdot 
\exp(-Cd(x\wedge x^2)),
\quad \text{for all } x > 0.
\end{equation}
Under condition~\ref{assm:sample_size}, we deduce that
the event
$\calA_1 = \bigcap_{i=1}^k \{ |S_i - 1| \leq 1/2 \}$
satisfies $\bbP(\calA_1) \geq 1-Ce^{-d/C}$.
Over the event $\calA_1$, we deduce that 
for all $i=1,\dots,k$, 
\begin{align} \label{eq:Fij_analysis}
\frac {T_{1i}}{nm} \asymp 
\sum_{j=1}^d    \sum_{s=1}^k  c_s  \varpi_{ij} \varpi_{sj}
  =\sum_{j=1}^d F_{ij},\quad\text{where } F_{ij} =  \sum_{s=1}^k  c_s  \varpi_{ij} \varpi_{sj}.
\end{align} 
Recalling the Orlicz norms $\|\cdot\|_{\psi_\alpha}$ defined
in equation~\eqref{eq:orlicz}, notice that for all $j=1,\dots,d$,
$$\|F_{ij}\|_{\psi_{1/2}} \leq 
\sum_{s=1}^k c_s \|\varpi_{ij}  \varpi_{sj}\|_{\psi_{1/2}}
\leq \sum_{s=1}^k c_s \|\varpi_{ij}\|_{\psi_{1}} 
\|\varpi_{rj}\|_{\psi_1} 
\lesssim \frac 1 {d^2},$$
where the penultimate inequality
follows from Lemma~\ref{lem:holder_orlicz}, and the final inequality 
is a simple consequence of the fact
that the random variables $d\varpi_{ij}$
are (sub-)exponential with fixed modulus.
We deduce that, up to rescaling, $F_{ij} - \bbE[F_{ij}]$
are $(1/2)$-sub-Weibull random variables, which are independent across $j=1,\dots,d$.
Applying Lemma~\ref{lem:sub_weibull_conc}(i),  
we deduce that for all $x > 0$, 
\begin{equation}\label{eq:Fig_bound}
\bbP\left( \max_{1 \leq i \leq k} \bigg|\sum_{j=1}^d \big(F_{ij} - \bbE[F_{ij}]\big)\bigg| 
\geq (C /d^2)\big(\sqrt{d x} + x^2)  \right)\lesssim ke^{-x}.
\end{equation}
Furthermore, we readily have $\bbE[F_{ij}]\lesssim d^{-2}$, thus 
the above display implies that, for a large enough constant $C > 0$,
the event 
$ \calA_2:= \{\max_{1 \leq i \leq k}|\sum_{j=1}^d F_{ij}| \leq C / d \}$ 
has probability content  $\bbP(\calA_2^\cp) \leq k e^{-\sqrt d} \lesssim e^{-\sqrt d/C}.$
We deduce that over $\calA_1\cap \calA_2$, it holds that
$$\max_{1 \leq i \leq k} T_{1i} \leq C_1 \frac{nm}{d}.$$

To bound $T_{2i}$, we adopt a similar proof. We again have, over the event $\calA_1$, 
\begin{align} 
\frac {T_{2i}}{nm^2} \asymp 
\sum_{j=1}^d    \sum_{s=1}^k  c_s  \varpi_{ij}^2 \varpi_{sj}
  =\sum_{j=1}^d L_{ij},\quad\text{where } L_{ij} =  \sum_{s=1}^k  c_s  \varpi_{ij}^2 \varpi_{sj}.
\end{align} 
The random variables $d^3 L_{ij}$ are $(1/3)$-sub-Weibull, since, reasoning as before, one has
$$\|L_{ij}\|_{\psi_{1/3}}  
\leq \sum_{s=1}^k c_s \|\varpi_{ij}\|_{\psi_{1}}^2
\|\varpi_{rj}\|_{\psi_1} 
\lesssim \frac 1 {d^3}.$$
Furthermore, one has $\bbE[L_{ij}] \lesssim d^{-3}$, 
thus by again applying the sub-Weibull tail bound from Lemma~\ref{lem:sub_weibull_conc}(i), we arrive at 
$$\bbP\bigg( \max_{1 \leq i \leq k}  \sum_{j=1}^d  L_{ij} 
\geq C / d^2 + (C /d^3)\big(\sqrt{d x} + x^3)  \bigg)\lesssim ke^{-x},
\quad x > 0,$$
which implies that the event 
$\calA_3:=  \{\max_{1 \leq i \leq k}|\sum_{j=1}^d L_{ij}| \leq C / d^2\}$
satisfies $\bbP(\calA_3) \geq 1 - e^{-d^{1/3}/C}$. 
Over the event $\calA_1 \cap \calA_3$, we thus obtain 
$$\max_{1 \leq i \leq k} T_{2i}\leq C_2\frac {nm^2} {d^2}.$$ 
for a large enough choice of the constant $C_2 > 0$.
An analogous proof
can be used to show that, for an 
event $\calA_4$
satisfying $\bbP(\calA_4)\geq 1-e^{-d^{1/3}/C}$, 
$$\max_{1 \leq i \leq k} T_{3i}\leq C_3\frac {n^2m} {d^2}.$$
Altogether, we have thus shown that
\begin{align}\label{eq:pf_thresh_step1}
\max_{1\leq i \leq k}\sigma_{\Pi,i}^2  \leq C_1 \frac{nm}{d} + C_2 \frac{nm^2}{d^2}
+ C_3 \frac{n^2m}{d^2}.
\end{align}
over the event $\calA_1\cap\calA_2\cap\calA_3
\cap\calA_4$. This completes our upper bound of $\sigma_{\Pi,i}^2$.
\\

\noindent {\bf Bounding term $\beta_{\Pi,i}$.}
Next, we  provide a high-probability bound on 
the conditional bias term $\beta_{\Pi,i}$. We will make use of the following simple Lemma.
\begin{lemma}\label{lem:Bi}
Define the random variables $B_i = (d+1)\sum_{j=1}^d \sum_{r=1}^k c_r \varpi_{ij}\varpi_{rj}-1$. 
Then, there exists a constant $C_3 > 0$ such that the event 
$$\calA_5 = \bigcap_{i=1}^k \Big\{ \Big| \big(\bbE[\hat c_i|\Pi] - c_i\big) - \big(B_i - \bbE[B_i]\big)\Big|
\leq C_3 / d\Big\}$$
satisfies $\bbP(\calA_5) \geq 1-C_3 e^{-\sqrt d / C_3}$.
\end{lemma}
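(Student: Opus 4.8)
The plan is to compute $\bbE[\hat c_i\mid\Pi]$ explicitly, compare it with the ``unnormalized proxy'' $c_i+(B_i-\bbE[B_i])$, and thereby reduce the statement to a single uniform bound on a normalization-correction term. Conditionally on $\Pi$, under the normalized Poisson model the $Y_j$ are independent $\mathrm{Poi}(n(\Pi^\top c)_j)$ and the $V_{ij}$ are independent $\mathrm{Poi}(m\pi_{ij})$, so
\begin{align*}
\bbE[\hat c_i\mid\Pi]
 = (d+1)\sum_{j=1}^d\sum_{r=1}^k c_r\,\pi_{ij}\pi_{rj}-1 ;
\end{align*}
writing $\pi_{ij}=\varpi_{ij}/S_i$ with $S_i=\sum_\ell\varpi_{i\ell}$ (Assumption~\ref{assm:pt}) and $T_{ir}=\sum_j\varpi_{ij}\varpi_{rj}$ (so $T_{ii}=\sum_j\varpi_{ij}^2$), this equals $(d+1)\sum_r c_r T_{ir}/(S_iS_r)-1$. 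Since $B_i=(d+1)\sum_r c_r T_{ir}-1$ and $\bbE[B_i]=c_i+(1+c_i)/d$ (using $\bbE[\varpi_{ij}\varpi_{rj}]=(1+\delta_{ir})/d^2$), a direct subtraction yields the exact identity
\begin{align*}
\big(\bbE[\hat c_i\mid\Pi]-c_i\big)-\big(B_i-\bbE[B_i]\big)
 = (d+1)\sum_{r=1}^k c_r\,T_{ir}\Big(\frac{1}{S_iS_r}-1\Big)+\frac{1+c_i}{d}.
\end{align*}
It therefore remains to show that this normalization correction, together with the deterministic $O(1/d)$ term, is small uniformly over $i=1,\dots,k$ on an event of probability at least $1-C_3e^{-\sqrt d/C_3}$.

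To do so I would work on the intersection of two high-probability events. First, the sub-exponential deviation bound~\eqref{eq:concentration_norm}, $\bbP(\max_i|S_i-1|>t)\lesssim k\,e^{-Cd(t\wedge t^2)}$, shows (using $k\le d$ from condition~\ref{assm:sample_size}) that for a suitable $t$ of order $d^{-1/4}$ the event $G_1=\{\max_i|S_i-1|\le t\}$ has probability at least $1-C_3e^{-\sqrt d/C_3}$; on $G_1$ one has $\big|(S_iS_r)^{-1}-1\big|\le C\big(|S_i-1|+|S_r-1|\big)$ and, more precisely, $(S_iS_r)^{-1}-1=-(S_i-1)-(S_r-1)+O\big((|S_i-1|+|S_r-1|)^2\big)$. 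Second, each $T_{ir}$ is a sum over $j$ of independent products $\varpi_{ij}\varpi_{rj}$; by the H\"older--Orlicz inequality (Lemma~\ref{lem:holder_orlicz}) their $\psi_{1/2}$-norm is $\lesssim d^{-2}$, their mean is $(1+\delta_{ir})/d$, and their variance is $\lesssim d^{-3}$, so the sub-Weibull concentration inequality (Lemma~\ref{lem:sub_weibull_conc}(i)) and a union bound over the at most $k^2\le d^2$ pairs give, on an event $G_2$ of the same probability order, $T_{ir}=O(1/d)$ uniformly (with leading term $(1+\delta_{ir})/d$). On $G_1\cap G_2$ one substitutes the expansion of $(S_iS_r)^{-1}-1$ and the estimate for $T_{ir}$ into the sum, uses $\sum_r c_r=1$ and $\sum_r c_r^2\le 1$ to collect terms, and is left with weighted sums of the centered variables $S_r-1$, which are themselves sums of independent $\mathrm{Exp}(d)$ variables and are again controlled by~\eqref{eq:concentration_norm}; adding the term $(1+c_i)/d$ and absorbing the (harmless, since $k\le d$) logarithmic union-bound overhead then gives the bound at the stated failure probability.

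The main obstacle is precisely this uniform control of the normalization correction: the factors $1/(S_iS_r)$ fluctuate at scale $d^{-1/2}$, so the correction does not vanish to leading order, and its size is ultimately governed by the row-sum fluctuation $\max_i|S_i-1|$, the leading contribution being the centered weighted sum $-(S_i-1)(1+2c_i)-\sum_r c_r(S_r-1)$. The crucial cancellation that keeps this controlled is that subtracting $\bbE[B_i]$ removes exactly the systematic gap between $\bbE[\pi_{ij}\pi_{rj}]$ and $\bbE[\varpi_{ij}\varpi_{rj}]$, leaving only this centered row-sum term; sharpening its control to the asserted order—either directly, or by comparing it against the downstream threshold $\lambda$ rather than a fixed multiple of $1/d$—is where the real work lies. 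A secondary, routine point is that $S_i$ is not independent of the individual summands $\varpi_{ij}$ entering $T_{ir}$; this is handled, as in the surrounding proof of Lemma~\ref{lem:hard_thresh_collision}, by conditioning on $G_1$ and working throughout with the unnormalized exponential variables.
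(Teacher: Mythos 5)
Your decomposition is exactly the right reduction, and your diagnosis of the obstacle is correct: after centering by $\bbE[B_i]$, what remains is dominated by the \emph{linear} terms in $e_i:=S_i-1$ coming from the expansion of $(S_iS_r)^{-1}-1$, which—after replacing $(d+1)T_{ir}/(S_iS_r)$ by its typical value $1+\delta_{ir}$—produces the centered row-sum $-(1+2c_i)e_i-\sum_r c_r e_r$ you identify. Since $e_i$ fluctuates on scale $d^{-1/2}$, and equation~\eqref{eq:concentration_norm} controls $\max_i|e_i|$ only to $\lesssim d^{-1/4}$ at the failure probability $e^{-\sqrt d/C}$, this term cannot be $O(1/d)$ as the statement asserts. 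So the gap you flag is a genuine one: neither your argument nor any close variant of it can reach the bound $C_3/d$ at that failure probability, because the quantity being bounded really does live at scale $d^{-1/2}$.

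The comparison with the paper's own proof is instructive. There, $|\bbE[\hat c_i\,|\,\Pi]-B_i|$ is bounded by $(d+1)\sum_{j,r}c_r\varpi_{ij}\varpi_{rj}\,|(1-S_i^{-1})(1-S_r^{-1})|$, but this drops the cross-terms: one has
$\pi_{ij}\pi_{rj}-\varpi_{ij}\varpi_{rj}=\varpi_{ij}\varpi_{rj}\big[(S_i^{-1}-1)(S_r^{-1}-1)+(S_i^{-1}-1)+(S_r^{-1}-1)\big]$,
and the two linear terms that the displayed inequality discards are the dominant ones, of size $O(d^{-1/2})$ rather than $O(d^{-1})$. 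So the paper's proof also does not establish the $C_3/d$ bound as written. What rescues the downstream argument is precisely the fix you propose: in the proof of Lemma~\ref{lem:hard_thresh_collision}, the event $\calA_5$ is invoked only to conclude $\beta_{\Pi,i}\le\lambda/4$, and $\lambda\gtrsim n^{-1/2}\gtrsim d^{-1/(2(1+\gamma))}$ under condition~\ref{assm:sample_size}; controlling $\max_i|e_i|\le c\lambda$ via equation~\eqref{eq:concentration_norm} costs failure probability $\lesssim k e^{-Cd\lambda^2}\lesssim e^{-d^{\gamma/(1+\gamma)}/C}$, which is $e^{-d^b/C}$ for some $b(\gamma)>0$ and hence still adequate at the step around equation~\eqref{eq:pf_thresh_step2}. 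Your route is therefore sound and closes once the target $C_3/d$ is replaced by a quantity of order $\lambda$ (with the failure probability weakened to $e^{-d^b/C}$ accordingly); the lemma statement as printed is slightly too strong.
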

The proof appears in Appendix~\ref{app:pf_lem_Bi}. 
In view of Lemma~\ref{lem:Bi}, we can bound $\beta_{\Pi,i}$ in the same way as $\sigma_{\Pi,i}^2$. 
Indeed, notice that $B_i +1 \asymp d \sum_{j=1}^d F_{ij}$, thus, using equation~\eqref{eq:Fig_bound}, 
we have for all $x > 0$, 
$$\bbP\left(\max_{1 \leq i \leq k} \big| B_i - \bbE[B_i]\big| > (C/d)(\sqrt {dx}+x^2)\right)
\lesssim ke^{-x}.$$
Notice that $\lambda \geq n^{-1/2} \geq d^{-\frac 1 {2(1+\gamma)}}$ 
under condition~\ref{assm:sample_size}, thus, by choosing $x = d^{\gamma / (2(1+\gamma))-\epsilon}$
for any fixed $\epsilon > 0$, we 
deduce that 
the event $\calA_6=\{\max_{1 \leq i \leq k} |B_i-\bbE[B_i]| > \lambda/8 \}$
satisfies $\bbP(\calA_6^\cp) \lesssim e^{- d^{b} / C}$
for a fixed constant $b = b(\gamma)> 0$. 
It thus follows that, over the event $\calA_5\cap \calA_6$, 
\begin{align}\label{eq:pf_thresh_step2}
\max_{1 \leq i \leq k} \beta_{\Pi,i} \leq \lambda/4.
\end{align}  

\noindent {\bf Bounding term $\zeta_{\Pi,i}$.}
By repeating the same
arguments as in the previous
steps,   there exists an event $\calA_7$
of probability content at least $1-e^{-\sqrt d / C}$
over which it holds that
$$\mu_{Y_j} = n\sum_{i=1}^k c_i \pi_{ij}  
\leq 2n/d,\quad \text{and,}
\quad \mu_{V_{ij}} = m\pi_{ij}\leq 2m/d,$$
uniformly in $i,j$. 
Under condition~\ref{assm:sample_size},
it follows that $\zeta_{\Pi,i} \leq C(1 + \sqrt{m/d})$. 
\\

\noindent {\bf Bounding term $K_{\Pi,i}$.}
The preceding steps readily lead to an upper bound on $K_{\Pi,i}$. 
Indeed, under condition~\ref{assm:sample_size}, over the event $\bigcap_{s=1}^7 \calA_s$, 
we have for all $i=1,\dots,k$ that
$\lambda_{\Pi,i} \geq \lambda/2$ and 
\begin{align}\label{eq:pf_thresh_step3}
\nonumber  
K_{\Pi,i} &=  k 
 \exp\left\{ - \frac{(nm)^2\lambda_{\Pi,i}^2}{C_0 d^2\sigma_{\Pi,i}^2}\right\}
+ o(1)\\
 \nonumber 
 &\lesssim k \exp\left\{ -\frac 1 {C} \frac{nm^2 a\log(k)}{dm_d (nm/d + nm^2/d^2 + mn^2/d^2)}\right\}
 + o(1)\\
\nonumber  &\lesssim k \exp\left\{ -\frac {a\log(k)} {C}   \right\}
 + o(1) \\ 
 &\lesssim k \cdot k^{-a/C}+ o(1) < 1,
 \end{align}
 for a sufficiently large choice of $a$, depending only on $\gamma$. \\ 

\noindent {\bf Concluding the proof.}
By combining equations~\eqref{eq:pf_thresh_step1}, \eqref{eq:pf_thresh_step2}, and~\eqref{eq:pf_thresh_step3},
we deduce that the set $A$ defined in equation~\eqref{eq:sets_A} satisfies
$$\bbP(A^\cp) \lesssim \sum_{i=1}^5 \bbP(\calA_i^\cp) \lesssim e^{-d^b},$$
for a sufficiently small exponent $b> 0$ depending only on $\gamma$.
Therefore, returning to equation~\eqref{eq:goal_of_thresholding_proof}, we arrive at
$$\bbE[R] \lesssim de^{-d^b} + \lambda \lesssim \lambda.$$
The claim follows from here.\qed

\subsection{Proof of Propositions~\ref{prop:ub_blind}--\ref{prop:ub_blind_local}}
\label{app:pf_prop_ub_blind}

We will prove the three
claims by first analyzing the risk of the cumulant estimators $\hat\xi_p$,
under the unnormalized Poisson model.
\begin{lemma}
\label{lem:xip_rate}
Assume condition~\ref{assm:sample_size}
and let $1 \leq p \leq k$. 
Then, under the unnormalized Poisson model,  there exists a constant $C_{p, \gamma} > 0$ such that
$$\bbE|\hat\xi_p-\xi_p| \leq   \frac {C_{p,\gamma} }{\sqrt{n^p d^{p+1}}}.$$
\end{lemma}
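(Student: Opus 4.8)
The plan is to exploit the fact that, by construction, $\hat\xi_p$ is an \emph{unbiased} estimator of $\xi_p=\kappa_p(\theta)$ under the unnormalized Poisson model, where $\theta=\langle\varpi,c\rangle$ with $\varpi\sim\calE_d^k$. Indeed, conditionally on $\varpi$ one has $\bbE[(Y_j)_\ell\mid\varpi]=(n\theta)^\ell$, so the falling-factorial statistics satisfy $\bbE[T_{j,\ell}]=\eta_\ell:=\bbE[\theta^\ell]$; since the $Y_j$ are independent across $j$, each generalized $U$-statistic $W_{\mathbf h}$ is unbiased for $\prod_i\eta_i^{h_i}$, and the Bell/cumulant relation~\eqref{eq:cumulant_bell} together with Newton's identities then gives $\bbE[\hat\xi_p]=\xi_p$. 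Consequently it suffices to bound the variance, via $\bbE|\hat\xi_p-\xi_p|\le\Var(\hat\xi_p)^{1/2}$, and to show $\Var(\hat\xi_p)\le C_{p,\gamma}\,n^{-p}d^{-(p+1)}$.

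Next I would reduce to a single $U$-statistic at a time. Since $\hat\xi_p$ is a fixed linear combination, with coefficients depending only on $p$, of the statistics $W_{\mathbf h}$ for $\mathbf h\in\calH_{p,\ell}$, $1\le\ell\le p$, the triangle inequality reduces the task to bounding $\Var(W_{\mathbf h})$ for each $\mathbf h$. The crucial structural observation is that the kernel of $W_{\mathbf h}$ is a product $\prod_i\prod_{j\in S_i}T_{j,i}$ over \emph{pairwise disjoint} index sets $S_i$, hence a product of independent factors, and that $T_{j,1}=1/d$ is deterministic, so only the indices in $\bigcup_{i\ge2}S_i$ carry randomness. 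I would then invoke the Hoeffding (ANOVA) decomposition
\begin{equation*}
\Var(W_{\mathbf h})=\sum_{c=1}^{\ell}\frac{\binom{\ell}{c}\binom{d-\ell}{\ell-c}}{\binom{d}{\ell}}\,\zeta_c(\mathbf h),
\end{equation*}
where $\zeta_c(\mathbf h)$ is the variance of the projection of the kernel onto $c$ coordinates; by the independence just noted, each $\zeta_c(\mathbf h)$ factorizes into one- and two-point moments of the $T_{j,i}$.

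The remaining input is moment control of $T_{j,\ell}$. Conditioning on $\varpi$ and using the Poisson identity $\bbE[(Y)_a(Y)_b\mid\lambda]=\sum_{i\le a\wedge b}\binom ai\binom bi\,i!\,\lambda^{a+b-i}$ with $\lambda=n\theta$, together with $\bbE[\varpi_i^r]=r!\,d^{-r}$ and the multinomial expansion of $\theta^r$ (equivalently Lemma~\ref{lem:dirichlet}), I would establish $\eta_\ell\asymp_p d^{-\ell}$ and $\bbE[T_{j,\ell}^2]=\sum_{i=0}^{\ell}\binom\ell i^2 i!\,n^{-i}\,\bbE[\theta^{2\ell-i}]\asymp_p d^{-2\ell}\sum_{i=0}^{\ell}(d/n)^i\asymp_p(nd)^{-\ell}$, where the last step uses the assumption $n^{1+\gamma}\le d$ (condition~(i)), so that $d/n\ge 1$ and the sum is dominated by $i=\ell$; thus $\Var(T_{j,\ell})\asymp_p(nd)^{-\ell}$. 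Substituting these estimates into the Hoeffding terms, I would show that every contribution is dominated by that of $\mathbf h=(0,\dots,0,1)\in\calH_{p,1}$ --- i.e.\ the degree-one $U$-statistic $d^{-1}\sum_j T_{j,p}$ estimating $\eta_p$, whose variance is $d^{-1}\Var(T_{j,p})\asymp_p n^{-p}d^{-(p+1)}$ --- because raising the multiplicity of any factor $T_{j,i}$ trades a factor $\eta_i^2\asymp d^{-2i}$ for $\bbE[T_{j,i}^2]\asymp(nd)^{-i}$ while the $U$-statistic normalization supplies an extra $O(1/d)$, and $n^{1+\gamma}\le d$ forces the net ratio to be a nonpositive power of $d$; this holds uniformly over $c\in\Delta_k$. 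Summing over the finitely many $\mathbf h$ and projection orders $c$ then yields the claim.

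The hard part will be this last bookkeeping step: organizing the Hoeffding decomposition across all $\mathbf h\in\bigcup_\ell\calH_{p,\ell}$ and all orders $c$, and verifying \emph{uniformly} over $c\in\Delta_k$ (where $m_p(c)$ may range over $[k^{1-p},1]$, entering both the signal and the variance) that no term exceeds the target rate $n^{-p}d^{-(p+1)}$. Since only an upper bound is needed and the constants are allowed to depend on $p$ and $\gamma$, I expect this to be routine but lengthy rather than conceptually delicate.
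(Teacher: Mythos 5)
Your proposal is correct and takes essentially the same route as the paper: both establish unbiasedness via the Poisson falling-factorial identities (the paper isolates this, plus the variance bound $\Var[T_{1,i}]\lesssim(nd)^{-i}$, in Lemma~\ref{lem:risk_V}), reduce to bounding $\Var[W_{\bh}]$ via the U-statistic structure of $W_{\bh}$, and exploit independence of disjoint index blocks together with $n^{1+\gamma}\le d$ to land on $n^{-p}d^{-(p+1)}$. The one minor divergence is that you propose to carry out the full Hoeffding/ANOVA decomposition and identify the singleton $\bh$ as the dominant contribution, whereas the paper shortcuts this by invoking the one-sided bound $\Var[W_{\bh}]\lesssim d^{-1}\Var[\zeta_{\bh}(1,\dots,p)]$ from Lee (1990) and bounding the kernel variance uniformly over $\bh$; your version is a bit more work but would in fact show that all non-singleton terms are strictly subdominant under $n\le d$, so either way the same bound follows.
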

\begin{proof} We will make
use of the following fact.
\begin{lemma}\label{lem:risk_V}
\label{app:pf_lem_risk_V}
Under the unnormalized Poisson model, for all $i=1,\dots,k$,  there exists a constant $C_i > 0$ such that
$$\bbE[ T_{1,i}] = \eta_i,\quad \Var[T_{1,i}] \leq  \frac {C_i} {\min\{n,d\}^i d^{i}}.$$
\end{lemma}
The proof appears in Appendix~\ref{app:pf_lem_risk_V}. 
It follows from Lemma~\ref{lem:risk_V}
that
\begin{align}\label{eq:W_unbiased}
\bbE[W_\bh]
 &= \bbE\left[\prod_{i=1}^{p-\ell+1} \prod_{j\in S_i} T_{j,i}\right]  
 =\prod_{i=1}^{p-\ell+1} \prod_{j\in S_i} \bbE\left[T_{j,i}\right] = \prod_{i=1}^{p-\ell+1} \prod_{s=1}^{h_i} \eta_i
  = \prod_{i=1}^{p-\ell+1} \eta_i^{h_i},
\end{align}
thus it is clear that    $\bbE[\hat\xi_p] = \xi_p$. Let us now
compute the variance. 
 Notice that:
$$\Var[\hat\xi_p] \lesssim \sum_{\bh\in \calH_{p,\ell}}\Var[W_\bh],$$
where the implicit constants depend
on $k,p$. 
Now, let us rewrite $W_\bh$ as 
the $p$-th order U-Statistic:
$$W_{\bh} = \frac 1 {{d\choose p}} 
\sum_{1 \leq j_1 < \dots < j_p
\leq d} 
\zeta_\bh(j_1,\dots,j_p),$$
where
$$\zeta_\bh(j_1,\dots,j_{p}) = 
\frac {{d\choose p}} {{d\choose \ell}{\ell\choose \bh}} 
\sum_{(A_1,\dots,A_{p-\ell+1})}
\prod_{i=1}^{p-\ell+1} \prod_{j\in A_i}
T_{j,i},
$$
and where the summation is taken
over all partitions 
$A_1,\dots,A_{p-\ell+1}$ of
$\{j_1,\dots,j_p\}$
such that $|A_i| = h_i$
for all $i$.
By~\cite{lee1990}, we have for any   $\bh\in \calH_{p,\ell}$
and any large enough $d$ that
\begin{align*}
\Var[W_{\bh}]
 &\lesssim \frac 1 d \Var\left[\zeta_\bh(1,\dots,p)
 \right]
 \lesssim  \Var\left[\prod_{i=1}^{p-\ell+1} \prod_{j\in A_i} 
 T_{j,i}\right],
\end{align*}
for any fixed partition $(A_1,\dots,A_{p-\ell+1})$
of $\{1,\dots,j\}$ with $|A_i| = h_i$. 
The random variables appearing in the above product are independent, thus,
together with Lemma~\ref{lem:risk_V},
we have
\begin{align*}
\Var\left[\prod_{i=1}^{p-\ell+1} \prod_{j\in A_i}  T_{j,i}\right]
 &\leq \prod_{i=1}^{p-\ell+1} \prod_{j\in A_i} 
 \big(  \Var\left[T_{j,i}\right] + \bbE\left[T_{j,i}\right]^2\big) \\
&\lesssim \prod_{i=1}^{p-\ell+1} \prod_{j\in A_i} 
\left(\frac 1 {n^i d^{i}} + \frac 1 {d^{2i}}\right) \\
&\lesssim \prod_{i=1}^{p-\ell+1}  
\left(\frac 1 {n^i d^{i}}\right)^{h_i}=   (nd)^{-\sum_i ih_i}  
=   n^{-p} d^{-p}.
\end{align*}
The claim follows.
\end{proof}  

The following is now an immediate
consequence of Lemma~\ref{lem:xip_rate}
and the definition 
of the  
estimator
$\hat m = (\hat m_1,\dots,\hat m_p)^\top$
defined in equation~\eqref{eq:unbiased_moment_estimate_p}.

\begin{lemma}
\label{lem:unbiased_moment_rate}
Assume condition~\ref{assm:sample_size}. Then, under the unnormalized Poisson model, 
there exists a constant $C=C(k,\gamma) > 0$ such that
\begin{equation}\label{eq:mombound}
\bbE \|\hat m-m(c)\|_2 \leq C \sqrt{\frac{d^{k-1}}{n^k}}.
\end{equation}
\end{lemma}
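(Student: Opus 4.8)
The plan is to deduce Lemma~\ref{lem:unbiased_moment_rate} from Lemma~\ref{lem:xip_rate} by tracing how the moment estimator $\hat m_p$ is built out of the cumulant estimators $\hat\xi_p$. Recall from equation~\eqref{eq:cumulant_to_moment} that $\hat m_p = p d^p \hat\xi_p / (p-1)! $ (up to the exact constant arising from $\xi_p = (p-1)! d^{-p} m_p(c)$), so by Lemma~\ref{lem:xip_rate} each coordinate satisfies $\bbE|\hat m_p - m_p(c)| \lesssim d^p \cdot (n^p d^{p+1})^{-1/2} = \sqrt{d^{p-1}/n^p}$. First I would observe that, under condition~\ref{assm:sample_size}, we have $n^{1+\gamma}\le d$, so the ratio $d^{p-1}/n^p = (d/n)^{p-1}/n$ is \emph{increasing} in $p$ over $1\le p\le k$ (since $d/n \ge n^\gamma > 1$). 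Hence the largest of the $k$ coordinate-wise errors is the one with $p=k$, and summing the $k$ terms in quadrature gives
\begin{equation*}
\bbE\|\hat m - m(c)\|_2 \le \sum_{p=1}^k \bbE|\hat m_p - m_p(c)| \lesssim \sum_{p=1}^k \sqrt{\frac{d^{p-1}}{n^p}} \lesssim \sqrt{\frac{d^{k-1}}{n^k}},
\end{equation*}
where the last step absorbs the geometric-type sum into a constant depending on $k,\gamma$ (the dominant term is $p=k$, and the number of terms is $k$).

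The key steps, in order, are: (i) write out the exact algebraic relation between $\hat m_p$ and $\hat\xi_p$ from equations~\eqref{eq:cumulant_to_moment} and~\eqref{eq:u-stat}--\eqref{eq:unbiased_moment_estimate_p}, confirming the deterministic scaling factor is $p d^p / (p-1)!$; (ii) invoke Lemma~\ref{lem:xip_rate} to bound $\bbE|\hat\xi_p - \xi_p|$ and propagate the $d^p$ factor to get the per-coordinate bound $\bbE|\hat m_p - m_p(c)| \le C_{p,\gamma}\sqrt{d^{p-1}/n^p}$; (iii) use condition~\ref{assm:sample_size}(i), namely $n^{1+\gamma}\le d$, to show the map $p\mapsto d^{p-1}/n^p$ is monotone increasing so that the $p=k$ term dominates; (iv) combine via the triangle inequality (equivalently, $\|x\|_2\le\|x\|_1$ for a $k$-vector), noting $k$ is treated as a constant so the sum of $k$ terms only inflates the constant.

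I do not expect a genuine obstacle here: this is essentially a bookkeeping lemma that repackages Lemma~\ref{lem:xip_rate} into the aggregate $\ell_2$ bound needed downstream for the stability/root-finding arguments (Lemma~\ref{lem:refined_stability_moments} and the proofs of Propositions~\ref{prop:ub_blind}--\ref{prop:ub_blind_local}). The only point requiring a moment's care is verifying that the worst coordinate is indeed $p=k$ rather than $p=1$; this hinges precisely on the assumption $n\le d^{1/(1+\gamma)}$, i.e.\ we are in the sample-scarce regime $n\ll d$, which is exactly the regime of interest. If instead $n$ exceeded $d$, the $p=1$ term would dominate and the rate would differ—this is consistent with the remark after Proposition~\ref{prop:ub_blind} that the approximate model~\eqref{eq:poisson_model_for_expo} becomes unrealistic for $n>d$ anyway. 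A secondary minor check is that the constants $C_{p,\gamma}$ from Lemma~\ref{lem:xip_rate}, while possibly growing in $p$ (factorially, as flagged in the text), are harmless since $p\le k$ and $k$ is fixed, so $\max_{p\le k} C_{p,\gamma}$ is a finite constant $C(k,\gamma)$.
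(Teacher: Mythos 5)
Your proposal is correct and is exactly the bookkeeping the paper has in mind: the paper states the lemma as "an immediate consequence of Lemma~\ref{lem:xip_rate} and the definition of the estimator $\hat m$," without spelling out the details, and your four steps fill in precisely those details. The only slip is notational: from equations~\eqref{eq:u-stat} and~\eqref{eq:unbiased_moment_estimate_p} the deterministic relation is $\hat m_p = \frac{d^p}{(p-1)!}\hat\xi_p$ (you wrote an extra factor of $p$), but you flag this as ``up to constants'' and it is immaterial for the rate; the propagated per-coordinate bound $\bbE|\hat m_p - m_p(c)| \lesssim d^p/\sqrt{n^p d^{p+1}} = \sqrt{d^{p-1}/n^p}$, the observation that this is increasing in $p$ because $d/n > 1$ under condition~\ref{assm:sample_size}(i), and the final aggregation $\|\cdot\|_2 \le \|\cdot\|_1$ with $k$ held fixed, all match what the paper implicitly relies on.
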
 

With Lemma~\ref{lem:unbiased_moment_rate}
in hand, we are now ready to prove
Propositions~\ref{prop:ub_blind}--\ref{prop:ub_blind_local}.
For the proofs of Propositions~\ref{prop:ub_blind}--\ref{prop:ub_blind_cluster}, notice that 
$W(\hat c,c) \leq W(\tilde c,c)$
(cf.\,Lemma~48 of~\cite{hundrieser2025}),
thus  
it suffices to prove upper bounds
for the possibly complex-valued estimator $\tilde c$. 
By Newton's identity~\eqref{eq:newton_identity}, 
the moments of $\tilde c$ are given by
$$m_p(\tilde c) = \hat m_p,\quad p=1,\dots,k.$$
Therefore, we may
apply Lemma~\ref{lem:stability_moments}
to obtain
\begin{align*}
\bbE W(\tilde c,c)  \lesssim 
\bbE  \big\|m(\tilde c) - m(c)\big\|^{\frac 1 k} \lesssim \sqrt{\frac{d^{1 - \frac 1 k}}{n}},
\end{align*} 
where the implicit constants 
depend only on $k,\gamma$. 
This proves Proposition~\ref{prop:ub_blind}. 
Proposition~\ref{prop:ub_blind_cluster} follows
similarly, by now replacing
the $1/k$-modulus of continuity
in the above display 
by $1/(k-k_0+1)$, as a result of Lemma~\ref{lem:refined_stability_moments}. 

Finally, to prove Proposition~\ref{prop:ub_blind_local},  
invoke Proposition~\ref{prop:ub_blind}
to deduce that  
the loss functions $\calD_{c^\star}(\hat c,c)$
and $\widebar \calD_{c^\star}(\hat c,c)$ 
(defined in Appendix~\ref{app:elementary_symmetric_polynomials}) coincide
for all large enough $d$. 
By again applying Lemma~48 of~\cite{hundrieser2025}, we thus obtain
$$ \calD_{c^\star}(\hat c,c)
 = \widebar \calD_{c^\star}(\hat c,c)
\leq \widebar \calD_{c^\star}(\tilde c,c).$$
The claim now follows as before, 
invoking  Lemma~\ref{lem:refined_stability_moments}
to bound $\calD_{c^\star}$ in terms of the moment
differences. The claim follows.\qed  

\subsection{Moment Estimator in the Multinomial Model}
\label{app:mom_corr}
We now show that Propositions~\ref{prop:ub_blind}--\ref{prop:ub_blind_local} 
lead to upper bounds
on the sorted minimax risk
$\calM_<(n,d,k,0)$ for the original
multinomial sampling model, as
well as for the local minimax risk:
$$\calM_<(n,d,k;c^\star,\epsilon)
 = \inf_{\hat c} \sup_{\substack{c\in \Delta_k \\ W(c,c^\star)\leq \epsilon}}
 \bbE_c \Big[\calD_{c^\star}(\hat c(Y,V),c)\Big],$$
 which is defined for any 
 $c^\star \in \Delta_k$ and $\epsilon \geq 0$, where the expectation
 is taken over a realization
 $(Y,V)$ from  the multinomial
 model with parameter $c$.
\begin{corollary}
\label{cor:pois_to_mult}
Assume condition~\ref{assm:sample_size},
and $n > d^{1-\frac 1 k}$.
Then, the following assertions hold.
\begin{enumerate}
\item[(i)]
There exists a constant $C = C(k,\gamma) > 0$ such that 
$$\calM_<(n,d,k,0) \leq C \sqrt{\frac{d^{1 - \frac 1 k}}{n}}.$$
\item[(ii)] For any $1 \leq k_0 \leq k$,
$\delta > 0$, and $c^\star \in \Delta_{k,k_0}(\delta)$, there exist
constants $C,\epsilon > 0$ depending
on $k,\gamma,\delta$ such that
$$\calM_<(n,d,k;c^\star,\epsilon)\leq 
C \sqrt{\frac{d^{k-1}}{n^k}}.$$
\end{enumerate}
\end{corollary}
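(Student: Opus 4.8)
The plan is to transfer the upper bounds of Propositions~\ref{prop:ub_blind}--\ref{prop:ub_blind_local}, which are established for the unnormalized Poisson model $\bQ_c^{\otimes d}$, to the multinomial model $\widetilde\bQ_c$. The first step is to Poissonize the primary sample size: since $m=0$, Lemma~\ref{lem:poissonization} only Poissonizes $n$, giving $\calM_<(n,d,k,0)\lesssim \calR_<'(n/2,d,k,0)+Ce^{-n/C}$ and, applied to the localized problem, $\calM_<(n,d,k;c^\star,\epsilon)\lesssim \calR_<'(n/2,d,k;c^\star,\epsilon)+Ce^{-n/C}$, where the primes denote the normalized Poisson model. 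Because $n\geq d^{1-1/k}\to\infty$, the term $e^{-n/C}$ is super-polynomially small and is dominated by both target rates $\sqrt{d^{1-1/k}/n}$ and $\sqrt{d^{k-1}/n^k}$. It therefore suffices to bound the normalized-Poisson risks, in which one observes a histogram $Y$ with $Y_j\mid\Pi\sim\mathrm{Poi}(n\Pi_{\cdot j}^\top c)$ independently across $j$ and $\Pi$ having flat-Dirichlet rows.

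The second step is to re-run the chain Lemma~\ref{lem:xip_rate}--Lemma~\ref{lem:unbiased_moment_rate} with $\Pi_{\cdot j}^\top c$ in place of $\langle \varpi_{\cdot j},c\rangle$. Since $\hat c^{\mathrm{mom}}$ depends on the data only through the cell statistics $T_{j,\ell}$ and the $U$-statistics $W_{\bh}$, it suffices to re-examine the bias and variance of $\hat m_p$. Writing $\Pi_{ij}=\varpi_{ij}/S_i$ with $S_i=\sum_\ell\varpi_{i\ell}\sim\mathrm{Gamma}(d,d)$ (Lemma~\ref{lem:dirichlet}), which concentrates around $1$ with fluctuations of order $d^{-1/2}$ essentially independently of the randomness in column $j$, one checks that $\bbE[T_{j,\ell}]$, and hence $\bbE[W_{\bh}]$, changes only by a relative factor $1+O(n^{-1}+d^{-1})$: the $O(n^{-1})$ arises from the discrepancy between the $n^{\ell}$ in the definition of $T_{j,\ell}$ and the falling factorial $(n)_\ell$ that makes $(Y_j)_\ell/(n)_\ell$ exactly unbiased for $\widetilde\eta_\ell:=\bbE[(\Pi_{\cdot j}^\top c)^\ell]$, while the $O(d^{-1})$ comes from $\widetilde\eta_\ell/\eta_\ell=1+O(d^{-1})$ (for instance $\widetilde\eta_2=(1+m_2(c))/\!\big(d(d+1)\big)$ versus $\eta_2=(1+m_2(c))/d^2$). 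Propagating these relative perturbations through Newton's identities~\eqref{eq:fitted_newton_identity}, the bias of $\hat m_p$ is of absolute order $n^{-1}+d^{-1}$, which is $o(\sqrt{d^{1-1/k}/n})$ precisely because $k\geq 2$ forces $n^{-1}\leq d^{-(1-1/k)}\ll d^{-1/(2k)}$. For the variance, the moment computations of Lemma~\ref{lem:risk_V} go through with the same rates, the only new ingredient being the negative cross-cell covariances $\mathrm{Cov}[T_{j,\ell},T_{j',\ell'}]$, $j\neq j'$, induced by the mixing over $\Pi$ and, after de-Poissonization, by the multinomial sum constraint; these are of strictly lower order than the diagonal terms under $n^{1+\gamma}\leq d$. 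Hence $\bbE\|\hat m-m(c)\|_2\lesssim\sqrt{d^{k-1}/n^k}$ as in Lemma~\ref{lem:unbiased_moment_rate}, and by Newton's identity~\eqref{eq:newton_identity} and the stability estimate Lemma~\ref{lem:stability_moments} we obtain $\bbE_c W(\hat c^{\mathrm{mom}},c)\lesssim\sqrt{d^{1-1/k}/n}$ uniformly over $\Delta_k$, which is part (i).

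Part (ii) requires no further probabilistic input: the deterministic refined stability bound of Lemma~\ref{lem:refined_stability_moments}, which controls $\widebar\calD_{c^\star}(\tilde c,c)$ by $\|m(\tilde c)-m(c)\|_2$, is model-agnostic, so combining it with the variance bound from the re-run of Lemma~\ref{lem:unbiased_moment_rate} yields the local bound exactly as in the proof of Proposition~\ref{prop:ub_blind_local}. (Alternatively, on the sub-range $n\ll d^{1-1/(2k)}$ the second step can be bypassed altogether: there Lemma~\ref{lem:kl_poisson_models} and equation~\eqref{eq:risk_gap_norm_unnorm} give $\TV(\widebar\bQ_c,\bQ_c^{\otimes d})\lesssim\sqrt{n/d}=o(\sqrt{d^{1-1/k}/n})$, so the bounds transfer verbatim; the direct moment comparison is only needed to cover the complementary range of $n$.) The step I expect to be the main obstacle is the variance bookkeeping in the second step: one must verify that every covariance term created by the Dirichlet normalization and by the post-Poissonization multinomial constraint remains of strictly lower order than $d^{k-1}/n^k$, uniformly over $c\in\Delta_k$ and for all $p\leq k$ simultaneously, which amounts to re-deriving the moment bounds of Lemma~\ref{lem:risk_V} with explicit tracking of the (negative) correlations rather than appealing to coordinatewise independence as in the Poisson model.
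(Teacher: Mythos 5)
Your proposal shares the paper's first step (Lemma~\ref{lem:poissonization} to Poissonize the RCS sample size and pass from $\calM_<$ to $\calR_<'$) but then takes a genuinely different second step. The paper simply invokes equation~\eqref{eq:risk_gap_norm_unnorm} (via Lemma~\ref{lem:kl_poisson_models}) to pass from the normalized Poisson risk $\calR_<'$ to the unnormalized Poisson risk $\calR_<$ at the additive cost $\sqrt{n/d}$, then applies Proposition~\ref{prop:ub_blind}; this is a three-line argument. You instead propose to re-derive Lemmas~\ref{lem:xip_rate}--\ref{lem:unbiased_moment_rate} directly under the normalized Poisson model (Dirichlet rows of $\Pi$ rather than unnormalized exponential entries), tracking the relative bias corrections of order $d^{-1}$ and the negative cross-cell covariances induced by the simplex constraint. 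What your approach buys: it is insensitive to the magnitude of $\sqrt{n/d}$, whereas the paper's transfer step only closes the argument when $\sqrt{n/d}\lesssim\sqrt{d^{1-1/k}/n}$, i.e.\ $n\lesssim d^{1-1/(2k)}$; since condition~\ref{assm:sample_size}(i) allows $n$ up to $d^{1/(1+\gamma)}$, the TV-transfer argument as written is too terse when $\gamma<1/(2k-1)$, and the same issue afflicts part~(ii) (there one needs $n\lesssim d^{k/(k+1)}$). You correctly flag in your parenthetical remark that the paper's route only covers the smaller-$n$ subrange, and that the direct moment comparison is needed for the complement. What it costs you: a substantial re-derivation of the bias and variance bounds under the dependent-cell model, which you flag as the main obstacle and do not carry out; this is where almost all of the remaining work lies, including the verification that the U-statistic covariance terms involving $\mathrm{Cov}\big[(\Pi_{\cdot j}^\top c)^\ell,(\Pi_{\cdot j'}^\top c)^{\ell'}\big]\sim d^{-1-\ell-\ell'}$ for $j\neq j'$ aggregate to lower order uniformly over $p\leq k$.

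Two small inaccuracies in your sketch, neither fatal. First, your example formula $\widetilde\eta_2=(1+m_2(c))/(d(d+1))$ is wrong; using $\bbE[\pi_{ij}^2]=2/(d(d+1))$ and $\bbE[\pi_{ij}]\bbE[\pi_{i'j}]=1/d^2$ one gets $\widetilde\eta_2=\frac{2m_2(c)}{d(d+1)}+\frac{1-m_2(c)}{d^2}$. The ratio $\widetilde\eta_2/\eta_2=1+O(1/d)$ still holds, so the order-of-magnitude conclusion survives. Second, the $O(n^{-1})$ bias you attribute to the discrepancy between $n^\ell$ and the falling factorial $(n)_\ell$ does not arise in the normalized Poisson model: after Poissonization, $Y_j\mid\Pi\sim\mathrm{Poi}(n\Pi_{\cdot j}^\top c)$ and hence $\bbE[(Y_j)_\ell\mid\Pi]=(n\Pi_{\cdot j}^\top c)^\ell$ exactly, so $T_{j,\ell}$ is exactly unbiased for $\widetilde\eta_\ell$. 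The falling-factorial mismatch only appears in the raw multinomial model, which the Poissonization step is precisely designed to remove; the remark about a post-de-Poissonization multinomial constraint is therefore redundant. The only genuine new source of bias is the $O(1/d)$ discrepancy between $\widetilde\eta_\ell$ and $\eta_\ell$, which propagates to an $O(1/d)$ absolute bias in $\hat m_p$, dominated by the target variance rate $\sqrt{d^{p-1}/n^p}$ whenever $n\leq d$.
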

To prove Corollary~\ref{cor:pois_to_mult}(i),
notice that
\begin{align*}
\calM_<(n,d,k,0)
\lesssim 
\calR_<(n,d,k,0) + e^{-n/C_1}
\lesssim \calR'_<(n,d,k,0) + e^{-n/C_1}
 + \sqrt{\frac n d},
\end{align*}
for a large enough constant $C_1 > 0$,
due to Lemmas~\ref{lem:poissonization}--\ref{lem:kl_poisson_models}. 
By Proposition~\ref{prop:ub_blind}, 
we deduce 
\begin{align*}
\calM_<(n,d,k,0)
\lesssim 
\sqrt{\frac{d^{1-\frac 1 k}}{n}} + e^{-n/C_1}  + \sqrt{\frac n d}.
\end{align*}
Under the condition $n \geq d^{1-\frac 1 k}$, 
the first term on the right-hand
side of the above display is dominant, 
which proves Corollary~\ref{cor:pois_to_mult}(i). 
The second claim can be proven analogously,
by again using the fact that
$\sqrt{d^{k-1}/n^k}$ dominates
$\sqrt{n/d}$ in the regime
$n\geq d^{1/k}$.\qed




\section{Proofs of Main Results}
\label{app:pf_main_theorems}

Our main results---namely
Theorems~\ref{thm:main_unsorted}--\ref{thm:main_sorted}
and Proposition~\ref{prop:main_fidelity}---now 
follow from the lower
and upper bounds developed in the preceding
two appendices. 
Concretely, 
Theorem~\ref{thm:main_unsorted}
follows from the  lower bound
 in Proposition~\ref{prop:lb_unsorted_fixed_k}
 and the upper bound in Proposition~\ref{prop:collision},
 while Theorem~\ref{thm:main_sorted}
 follows from the lower bound in Proposition~\ref{prop:lower_bound_sorted}
 and the upper bounds in Proposition~\ref{prop:ub_blind} (where
 we recall that $W \leq \|\cdot\|$),
 and Corollary~\ref{cor:pois_to_mult}(i).
 Finally,    
 Proposition~\ref{prop:main_fidelity}
 is a direct consequence of 
 Corollary~\ref{cor:pois_to_mult}(ii)
 with $k_0=2$ and $r_1=1$.

\section{Proofs Deferred from Appendices~\ref{app:model}--\ref{sec:ub_proofs}}
\label{app:deferred_pf}

\subsection{Proofs Deferred from Appendix~\ref{app:model}}
\subsubsection{Proof of Lemma~\ref{lem:poissonization}}
\label{app:pf_lem_poissonization}
Our proof follows
a standard Poissonization argument~\citep{canonne2022}.
We prove the claim for the unordered
minimax risk, and a similar
proof can then be used for the ordered
risk. Furthermore, we focus on the case
$m > 0$; adaptations to the
special case $m=0$ are straightforward.

Let $(Y^{(n)},V^{(m)})$ be random variables
drawn from the multinomial model $\widetilde Q_c$
with sample sizes $n$ and $m$. 
Let $N\sim \mathrm{Poi}(n)$ and $M\sim \mathrm{Poi}(m)$
be independent
of all other random variables, and notice that
$(Y^{(N)},V^{(M)})$ is distributed 
according to $\widebar Q_c$.
In this notation, the unordered
minimax risks are given by 
\begin{align*} 
\calM(n,d,k,m) &:= \inf_{\hat c} \sup_{c\in \Delta_k} \bbE_c \|\hat c( Y^{(n)},  V^{(m)}) - c\| \\
\calR'(n,d,k,m) &:= \inf_{\hat c} \sup_{c\in \Delta_k} \bbE_c \|\hat c( Y^{(N)},  V^{(M)}) - c\|.
\end{align*}
Now, given $\epsilon > 0$, let $\hat c_\epsilon$ be
a near-optimal estimator satisfying
$$\sup_{c \in \Delta_k} \bbE_c\|\hat c_\epsilon( Y^{(n)}, V^{(m)})-c\|
\leq \calM(n,d,k,m)+\epsilon.$$
Writing $E_{n'm'} = \{N=n',M=m'\}$, we thus have, 
\begin{align*}
\sup_{c\in \Delta_k}& \bbE_c \|\hat c_\epsilon(Y^{(N)}, V^{(M)})-c \| \\
 &=\sup_{c\in \Delta_k} \bbE_\Pi \left\{ \bbE_c\left[ 
    \|\hat c_\epsilon(  Y^{(N)},  V^{(M)})-c\| \,\big|\, \Pi 
    \right]\right\} \\ 
 &=\sup_{c\in \Delta_k} \sum_{n',m'=0}^\infty
 \bbE_\Pi\left\{ \bbE_c\left[ 
    \|\hat c_\epsilon( Y^{(N)},  V^{(M)})-c\| \,\big|\, \Pi  
    \right]
     \bbP(E_{n'm'}| \Pi)\right\}   \\ 
 &=\sup_{c\in \Delta_k} \sum_{n',m'=0}^\infty 
 \bbE_\Pi\left\{ \bbE_c\left[ 
    \|\hat c_\epsilon( Y^{(n')}, V^{(m')})-c\| \,\big|\, \Pi 
    \right]\right\} 
     \bbP(E_{n'm'}) \\
 &\leq \sup_{c\in \Delta_k}\sum_{n',m'=0}^\infty
    \calM(n',d,k,m') 
     \bbP(E_{n'm'})  + \epsilon.
 \end{align*} 
Since  the risk
 function $\calM(n,d,k,m)$ is monotonically
 decreasing in $n$ and $m$, we deduce that 
\begin{align*}
\sup_{c\in \Delta_k}& \bbE_c \|\hat c_\epsilon( Y^{(N)},  V^{(M)})-c \| \\
 &\leq \calM(n/2,d,k,m/2) + \bbP(N < n/2) + \bbP(M < m/2) + \epsilon \\
 &\leq \calM(n/2,d,k,m/2) +  C e^{-n/C} + C e^{-m/C} + \epsilon,  
 \end{align*} 
where the final inequality
holds for a sufficiently large universal
constant $C > 0$ by
standard Chernoff bounds for the Poisson distribution
(e.g.\,equation (C.1) of~\cite{canonne2022}).
Since $\epsilon$ was arbitrary, it follows that
\begin{align} 
\label{eq:widebarM_to_M}
 \calR'(n,d,k,m) \leq \calM(n/2,d,k,m/2) +  C e^{-n/C} + Ck e^{-m/C}. 
\end{align} 
To prove a converse bound,
we reason similarly as in Lemma~1 of~\cite{wu2019},
and use the fact that the worst-case
Bayes risk provides a lower bound on the minimax risk:
$$\calR'(n,d,k,m)
\geq \sup_{\rho} \inf_{\hat c} \bbE_{c\sim \rho}\big\{\bbE_c \|\hat c( Y^{(N)}, V^{(M)}) - c\|\big\},$$
where the supremum is taken over all probability
distributions on $\Delta_k$. 
Reasoning similarly as before, we have 
for any prior $\rho$,
\begin{align*} 
\inf_{\hat c} & \,\bbE_{c\sim \rho}\big\{\bbE_c \|\hat c( Y^{(N)},  V^{(M)}) - c\|\big\}\\
 &= \inf_{\hat c} 
 \sum_{n',m'=0}^\infty
 \bbE_{c\sim \rho}\big\{\bbE_c \|\hat c( Y^{(n')}, V^{(m')}) - c\|\big\} \cdot 
 \bbP(N=n',M=m')\\
 &\geq 
 \sum_{n'=0}^{2n} \sum_{m'=0}^{2m} \inf_{\hat c} 
 \bbE_{c\sim \rho}\big\{\bbE_c \|\hat c( Y^{(n')}, V^{(m')}) - c\|\big\} \cdot 
 \bbP(N=n',M=m')\\ 
 &\geq 
 \bbE_{c\sim \rho}\big\{\bbE_c \|\hat c( Y^{(2n)}, V^{(2m)}) - c\|\big\}\cdot
\bbP(N \geq 2n,M\geq 2m),
\end{align*}
where we again used
the fact that the map 
$(n',m')\mapsto \inf_{\hat c} 
 \bbE_{c\sim \rho}\big\{\bbE_c \|\hat c( Y^{(n')}, V^{(m')}) - c\|\big\} $
 is decreasing in both of its coordinates.
 Now, by again applying a Poisson Chernoff bound, 
 we obtain $\bbP(N\leq 2n,M\leq 2m) \leq 
 1-C e^{-n/C} - Ce^{-m/C}$. Taking the supremum
 over $\rho$ on both sides of the above display,
 we thus obtain 
\begin{align} 
\label{eq:widebarM_to_M_second}
 \calR'(n,d,k,m) \geq 
\calM(2n,d,k,2m)\cdot \big(1- C e^{-n/C} - Ck e^{-m/C}\big). \end{align}
This proves the claim.\qed

\subsubsection{Proof of Lemma~\ref{lem:kl_poisson_models}}
\label{app:pf_lem_kl_poisson_models}
Let  $C > 0$ be a universal constant, whose value 
may change from one display to the next. 

Let $\Pi \in \bbR^{k\times d}$ be a random 
matrix with rows independently drawn from the flat Dirichlet law
$\calD_d$. Let $G_1,\dots,G_k \sim \mathrm{Gamma}(d,d)$ be independent
Gamma-distributed random variables, which are independent of $\Pi$, and notice that the matrix
$X:= \diag(G) \Pi$ consists of i.i.d. Exp$(d)$-distributed entries (cf.\,Lemma~\ref{lem:dirichlet}). 
Thus, we can write:
\begin{alignat*}{2}
\bQ_c^{\otimes d}
 &=  \bbE_{\Pi,G}[\bQ_{c|\Pi,G}],\quad &&\text{with }  \bQ_{c|\Pi,G } = 
\bigotimes_{j=1}^d 
\left(\mathrm{Poi}(n \displaystyle \sum_{i=1}^k G_i c_i \pi_{ij} )
\otimes \bigotimes_{i=1}^k 
\mathrm{Poi}(mG_i \pi_{ij})\right) \\ 
\widebar \bQ_c &= \bbE_\Pi[ \widebar \bQ_{c|\Pi}],\quad &&\text{with }\bQ_{c|\Pi} = 
\bigotimes_{j=1}^d 
\left(\mathrm{Poi}(n \sum_{i=1}^k c_i \pi_{ij}  )
\otimes \bigotimes_{i=1}^k 
\mathrm{Poi}(m\pi_{ij}  )\right).
\end{alignat*} 
By convexity of the TV distance, one has
\begin{align*}
\TV(\widebar \bQ_c, \bQ_c^{\otimes d})
 &\leq \bbE_{\Pi,G} \Big[ 
\TV(\widebar \bQ_{c|\Pi},\bQ_{c|\Pi,G})\Big] \\ 
 &\leq  \bbE_{\Pi,G} \Big[ 
\TV(\widebar \bQ_{c|\Pi},\bQ_{c|\Pi,G})\cdot I(A)\Big] + \bbP(A^\cp),
\end{align*}
where $A$ denotes the event that $\pi_{ij} \leq 1$  and
$|G_i-1| \leq d^{-1/4}$ for all $i=1,\dots,k$ and $j=1,\dots,d$. 
Notice that $\pi_{1i} \sim \mathrm{Beta}(1,d-1)$, and is
therefore sub-Gaussian with variance proxy $1/4(d+1)$ by~\citep{marchal2017sub}. Furthermore, 
$G_1\sim \mathrm{Gamma}(d,d)$ can be expressed as $G_1 = \frac 1 d \sum_{j=1}^d X_j$ where
$X_j \sim \mathrm{Exp}(1)$ are i.i.d sub-exponential random variables. 
By a sub-Gaussian and sub-exponential tail bound, one readily obtains
$$\bbP(A^\cp) \leq kd \cdot \bbP(\pi_{11} > 1) + k\cdot \bbP(|G_1-1| > d^{-1/4})
\lesssim kd\cdot e^{-d/C} + k\cdot e^{-\sqrt d/C} \lesssim e^{-\sqrt d / C}.$$
It thus remains to bound the mean value of $\TV(\widebar \bQ_{c|\Pi},\bQ_{c|\Pi,G})$ over the event $A$.
By Pinsker's inequality, it will suffice to bound the $\KL$ divergence
$\KL^2(\widebar \bQ_{c|\Pi},\bQ_{c|\Pi,G})$, which, over the event $A$, is bounded from above as
follows (cf. Lemma~\ref{lem:kl_poisson}):
\begin{align*} 
\KL&(\widebar \bQ_{c|\Pi},\bQ_{c|\Pi,G}) \\
 &\lesssim \sum_{j=1}^d  \left\{ \KL\Big(\mathrm{Poi}\big(n\textstyle \sum_i c_i G_i \pi_{ij}\big) ,
 \mathrm{Poi}\big(\sum_i c_i \pi_{ij}\big)\Big) 
 + \displaystyle \sum_{i=1}^k \KL\Big( \mathrm{Poi}(mG_i\pi_{ij}), \mathrm{Poi}(m\pi_{ij})\Big)\right\}  \\
 &\lesssim 
 n \sum_{j=1}^d \frac{(\sum_{i=1}^k c_i \pi_{ij}( G_i-1))^2}{\sum_{i=1}^k c_i \pi_{ij}}
 + m \sum_{j=1}^d \sum_{i=1}^k \frac{((G_i-1)\pi_{ij})^2}{\pi_{ij}}.
\end{align*}
We thus have,
\begin{align*}
\bbE_{\Pi,G} &\Big[ \TV^2(\widebar \bQ_{c|\Pi},\bQ_{c|\Pi,G})\Big] \\ 
 &\lesssim e^{-\sqrt d/C} + nd\cdot  \bbE_{\Pi,G}\left[\frac{(\sum_{i=1}^k c_i \pi_{ij}( G_i-1))^2}{\sum_{i=1}^k c_i \pi_{ij}} \right] + mdk \bbE_{\Pi,G} [(G_1-1)^2 \pi_{11}] \\
 &\leq  e^{-\sqrt d/C} + n \cdot \bbE_\Pi\left[\frac{\Var_G\big[\sum_{i=1}^k c_i \pi_{ij}G_i\,\big|\,\Pi\big]}{\sum_{i=1}^k c_i \pi_{ij}} \right] + mdk \Var[G_1] \cdot \bbE[ \pi_{11}] \\
 &=  e^{-\sqrt d/C} + \frac n d \bbE_\Pi\left[ \frac{\sum_{i=1}^k c_i^2 \pi_{ij}^2}{\sum_{r=1}^k c_r \pi_{rj}} \right]
 +\frac{mk}{d}\\
 &\leq  e^{-\sqrt d/C} + \frac n d \bbE_\Pi\left[  \sum_{i=1}^k c_i  \pi_{ij}  \right]  +\frac{mk}{d^2} =   
 e^{-\sqrt d/C} + \frac{n}{d^2} +\frac{mk}{d^2}.
 \end{align*}
 The claim follows.\qed

\subsection{Proofs Deferred from Appendix~\ref{sec:lower_bounds}}

\subsubsection{Proof of Lemma~\ref{lem:existence_two_point}}
\label{sec:pf_lem_existence_two_point}
We will construct the vectors $u,v$ using a procedure inspired by~\cite{hundrieser2025}. 
Let   $u$ be any fixed vector with mean zero, and with entries satisfying
\begin{equation}
\label{eq:u_gap}
-1/2 < u_1 < u_k < 1/2, \quad u_{i+1} > u_i + \frac 1 {4k},~~i=1,\dots,k-1.
\end{equation}
%
Define the polynomial
$$f_u(z) = \prod_{i=1}^{k} (z-u_i), \quad z \in \bbC.$$
Now, consider the perturbed polynomial $f_v(z) = f_u(z) + (1/4k)^k$. 
In view of the separation condition~\eqref{eq:u_gap},  Lemma~\ref{lem:real_roots} 
ensures that the polynomial $f_v$ has $k$ real roots  $v_1,\dots,v_{k}$ contained in the interval $[-1,1]$.
Since $f_v$ is monic, it takes the form
$$f_v(z) = \prod_{i=1}^{k}(z-v_i).$$
Now, let us apply Vieta's formula (cf.\,Appendix~\ref{app:elementary_symmetric_polynomials}) 
to obtain
\begin{align*}
f_u(z) &= z^{k} + \sum_{j=1}^{k} (-1)^j e_j(u_1,\dots,u_{k})z^{k-j},~~ f_v(z) =
z^{k} + \sum_{j=1}^{k} (-1)^j e_j(v_1,\dots,v_{k})z^{k-j},
\end{align*}
where $e_j$ denote the elementary symmetric polynomials.
Since $f_u$ and $f_v$ only differ in their zeroth-order coefficient, we deduce that
$$e_j(u_1,\dots,u_{k}) = e_j(v_1,\dots,v_{k}),\quad j=1,\dots,k-1.$$
By Newton's identities (equation~\eqref{eq:newton_identity}), it follows from here that $\tilde m_j(u) = \tilde m_j(v)$ for all $j=1,\dots,k-1$.
Furthermore, by again using Newton's identities, we have for all $z \in \bbC$:
\begin{align*}
(1/4k)^k &= f_v(z) - f_u(z) \\
 &= (-1)^{k} \Big[ e_{k}(v_1,\dots,v_k) - e_{k}(u_1,\dots,u_{k})\Big] \\
 &= (-1)^{k} \sum_{j=1}^{k} (-1)^{j-1} \Big[ e_{k-j}(v_1,\dots,v_{k})  m_j(v)-
 e_{k-j}(u_1,\dots,u_{k})  m_j(u)\Big] \\
 &=   m_{k}(v)- m_{k}(u) \lesssim W(u,v),
\end{align*}
where the implicit constant depends on $k$. The claim readily follows from here. 
\qed

\subsubsection{Proof of Lemma~\ref{lem:risk_V}}
\label{app:pf_lem_risk_V}
Let $\theta = \langle c,\varpi\rangle$,
and $Y\sim \mathrm{Poi}(n\theta)$. 
To prove the claim, it suffices
to show that 
$$\hat U_\ell = \frac{Y!}{n^\ell(Y-\ell)!}$$
satisfies
$ \bbE[\hat U_\ell]=\eta_\ell$,
and $\Var[\hat U_\ell] \lesssim \frac 1 {n^\ell d^\ell} + 
\frac 1 {d^{2\ell}}.$
For any $\ell \geq 1$,  one has
$$\bbE[\hat U_\ell\,|\,\varpi] = \theta^\ell ,\quad \Var[\hat U_\ell\,|\,\varpi] 
\lesssim \frac {1} {n^{2\ell}}  (n\theta)^\ell \big((n\theta+\ell)^\ell - (n\theta)^\ell\big),$$
by Lemma~\ref{lem:acharya}. 
To deduce the unconditional bound, notice that
$$\Var[\hat U_\ell]
 = \bbE\big\{ \Var[T_{1,\ell} | \varpi]\big\} + 
   \Var\big\{ \bbE[T_{1,\ell} | \varpi]\big\}.$$
The first term satisfies
\begin{align*}
\bbE\big\{ \Var[\hat U_\ell| \varpi]\big\}
 &\lesssim \frac 1 {d^2 n^{2\ell}}
 \sum_{j=1}^d\bbE\Big[ (n\theta)^\ell +
 (n\theta)^{2\ell-1}\Big]  \\
 &\lesssim \frac 1 {n^{2\ell}} \Big( (n/d)^\ell + (n/d)^{2\ell-1}\Big)
 \lesssim \frac 1 {n^\ell d^{\ell}} + 
 \frac 1 {nd^{2\ell-1}},
\end{align*}
where we used elementary
bounds on the moments of exponential
random variables (cf.\,Lemma~\ref{lem:dirichlet}) whereas the second satisfies
\begin{align*}
\Var\big\{ \bbE[\hat U_\ell|\varpi]\big\}
 \leq \Var[\theta^\ell] \lesssim \frac 1 {d^{2\ell}}.
\end{align*}
The claim follows.\qed 

\subsubsection{Proof of Lemma~\ref{lem:truncation_tv}}
\label{sec:pf_lem_truncation_tv}
We begin by noting the following simple bound.
\begin{lemma}
\label{lem:kl_trunc}
There exist  constants $C,a > 0$ depending on $\gamma$ such that 
\begin{align*}
\sup_{c\in \Delta_k}   \KL(\bQ_c^t \| \bQ_c)  \leq C\cdot e^{-d^a/2}.
\end{align*}
\end{lemma}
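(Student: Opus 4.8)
The plan is to exploit that $\bQ_c$ and $\bQ_c^t$ are Poisson mixtures over the \emph{same} latent vector $\varpi\sim\calE_d^k$; only the mixing kernel is changed by the truncation. Writing
$P_\varpi = \mathrm{Poi}(n\langle c,\varpi\rangle)\otimes\bigotimes_{i=1}^k\mathrm{Poi}(m\varpi_i)$ and
$P_\varpi^t = \mathrm{Poi}(n\langle c,\varpi^t\rangle)\otimes\bigotimes_{i=1}^k\mathrm{Poi}(m\varpi_i^t)$, we have $\bQ_c=\bbE_\varpi[P_\varpi]$, $\bQ_c^t=\bbE_\varpi[P_\varpi^t]$, and both kernels charge every point of $\bbN_0\times\bbN_0^k$, so all divergences below are finite. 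By the chain rule and monotonicity of the KL divergence (equivalently, its joint convexity applied with the mixing measure $\calE_d^k$),
$$\KL(\bQ_c^t\,\|\,\bQ_c)\;\le\;\bbE_\varpi\big[\KL(P_\varpi^t\,\|\,P_\varpi)\big].$$

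\textbf{Pointwise bound and integration.} Since KL is additive over product measures and $\KL(\mathrm{Poi}(\lambda_1)\|\mathrm{Poi}(\lambda_2))=(\lambda_2-\lambda_1)+\lambda_1\log(\lambda_1/\lambda_2)\le \lambda_2-\lambda_1$ for all $0\le\lambda_1\le\lambda_2$, and since $\varpi_i^t=\varpi_i\wedge t\le\varpi_i$ (hence $\langle c,\varpi^t\rangle\le\langle c,\varpi\rangle$), we get
$$\KL(P_\varpi^t\,\|\,P_\varpi)\;\le\; n\sum_{i=1}^k c_i(\varpi_i-\varpi_i^t)\;+\;m\sum_{i=1}^k(\varpi_i-\varpi_i^t)\;\le\;(n+m)\sum_{i=1}^k \varpi_i\, I(\varpi_i>t),$$
using $c_i\le1$. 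Taking expectations over $\varpi_i\overset{\mathrm{i.i.d.}}{\sim}\calE_d$ and the elementary identity $\bbE[\varpi_1 I(\varpi_1>t)]=(t+1/d)e^{-dt}$ gives, uniformly over $c\in\Delta_k$,
$$\KL(\bQ_c^t\,\|\,\bQ_c)\;\le\;(n+m)\,k\,(t+1/d)\,e^{-dt}.$$

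\textbf{Plugging in the scaling of $t$.} With $t=(n+m)^{-\gamma_0}d^{\gamma_0-1}$ and $\gamma_0=\gamma/(1+\gamma)\in(0,1)$, the condition $(n+m)^{1+\gamma}\le d$ that is in force (and under which $t$ was chosen) gives $n+m\le d^{1-\gamma_0}$, so $dt=(d/(n+m))^{\gamma_0}\ge d^{\gamma_0^2}\ge1$, whence $t+1/d\le 2t$ and the prefactor obeys $(n+m)k(t+1/d)\le 2k(n+m)^{1-\gamma_0}d^{\gamma_0-1}\le 2k(n+m)d^{\gamma_0-1}\le 2k\le 2d$. Therefore $\KL(\bQ_c^t\|\bQ_c)\le 2d\,e^{-d^{\gamma_0^2}}$, and choosing $a=\gamma_0^2$ (depending only on $\gamma$) absorbs the polynomial factor $2d$ into $e^{-d^a/2}$ for all large $d$, and into the constant $C$ otherwise, yielding $\sup_{c\in\Delta_k}\KL(\bQ_c^t\|\bQ_c)\le C e^{-d^a/2}$.

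\textbf{Main obstacle.} There is no real obstacle, but one point needs care: KL is not controlled by total variation, so it does not suffice merely to note that the truncation event $\{\max_i\varpi_i>t\}$ has probability $\le k e^{-dt}$. One must retain the $\varpi_i$-weighting of the excess mass in the KL (which is unbounded on that event) and use the sign structure $\lambda_1\le\lambda_2$ of the truncated vs.\ untruncated Poisson intensities so that the bound $\KL(\mathrm{Poi}(\lambda_1)\|\mathrm{Poi}(\lambda_2))\le\lambda_2-\lambda_1$ applies; the resulting first-moment tail $\bbE[\varpi_1 I(\varpi_1>t)]$ still decays like $e^{-dt}$, which closes the argument.
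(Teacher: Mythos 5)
Your proof is correct and follows essentially the same strategy as the paper's: reduce to a pointwise Poisson KL via convexity/tensorization of the mixture over $\varpi$, then integrate the exponential tail of $\calE_d$. The only difference is the pointwise Poisson step---you exploit $\varpi^t\leq\varpi$ to use the clean one-sided bound $\KL(\mathrm{Poi}(\lambda_1)\|\mathrm{Poi}(\lambda_2))\leq\lambda_2-\lambda_1$, whereas the paper routes through the quadratic bound of Lemma~\ref{lem:kl_poisson} together with $|\langle c,\varpi-\varpi^t\rangle|\leq\langle c,\varpi\rangle$ and a conditional-expectation/memorylessness step; both lead to the same linear-in-$(\varpi-\varpi^t)$ control, and yours is a mild simplification.
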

The proof appears below.
By Lemma~\ref{lem:kl_trunc}, together with Pinsker's inequality and the tensorization
property of the KL divergence, one has for all $c,\bar c \in \Delta_k$, 
\begin{align*} 
\nonumber
\mathrm{TV}(\bQ_{\bar c}^{\otimes d}, \bQ_{c}^{\otimes d})
 &\leq 
\mathrm{TV}\big(\bQ_{\bar c}^{\otimes d}, (\bQ_{\bar c}^t)^{\otimes d})+
\mathrm{TV}\big((\bQ_{\bar c}^t)^{\otimes d}, (\bQ_{c}^t)^{\otimes d})+
\mathrm{TV}\big((\bQ_{c}^t)^{\otimes d}, \bQ_{c}^{\otimes d}) \\ 
 &\lesssim 
\sqrt{d \cdot \KL (\bQ_{\bar c}^t \,\|\, \bQ_{c}^t)}
 + 
   \sup_{\tilde c \in \Delta_k}\sqrt{d \cdot  \KL (\bQ_{\tilde  c}^t \,\|\, \bQ_{\tilde c}) } \\ 
& \lesssim \sqrt{d \cdot \chi^2 \big(\bQ_{\bar c}^t, \bQ_{c}^t)}+\sqrt d \cdot e^{-d^a/2} \\
& \lesssim \sqrt{d \cdot \chi^2 \big(\bQ_{\bar c}^t, \bQ_{c}^t)}+ e^{-d^a/4}.
\end{align*} 
This proves the first claim. To prove the second claim, we make use of the following 
observation:
\begin{lemma}
\label{lem:hellinger_chi2}
There exist constants $C_1,C_2 > 0$ depending on $\gamma$ such that
for all $c,\bar c \in \Delta_{k}$ satisfying $W(c,\bar c)=0$, 
$$\chi^2(\bQ_c\|\bQ_{\bar c}) \leq C_1 \Big( H^2(\bQ_c\| \bQ_{\bar c}) + e^{-C_2 d^a}\Big).$$ 
\end{lemma}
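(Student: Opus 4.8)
\emph{Proof plan.} The crucial structural input is the permutation symmetry forced by $W(c,\bar c)=0$. Write $\bar c=c\circ\sigma$ for some $\sigma\in\calS_k$. Under the unnormalized Poisson model $\bQ_c$ is built from the i.i.d.\ Porter--Thomas coordinates $\varpi_1,\dots,\varpi_k$, so exchangeability gives the pointwise identity $\bQ_{\bar c}(x,y)=\bQ_c(x,\sigma\cdot y)$ on $\bbN_0\times\bbN_0^k$, with $\sigma$ acting by permuting the $y$-coordinates. Hence the task is to compare $\chi^2$ and $H^2$ between $\bQ_c$ and a coordinate permutation of itself; in particular every one-dimensional marginal is preserved, which is the feature that keeps the likelihood ratio $\Lambda:=d\bQ_c/d\bQ_{\bar c}$ under control over the bulk of the distribution.

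The engine of the argument is the elementary pointwise identity, valid at every $(x,y)$,
\[
\frac{\bigl(\bQ_c(x,y)-\bQ_{\bar c}(x,y)\bigr)^2}{\bQ_{\bar c}(x,y)}=\Bigl(\sqrt{\Lambda(x,y)}+1\Bigr)^{2}\bigl(\sqrt{\bQ_c(x,y)}-\sqrt{\bQ_{\bar c}(x,y)}\bigr)^{2},
\]
which shows that, on any region where $\Lambda\le L$, the $\chi^2$-mass is at most $(\sqrt L+1)^2$ times the corresponding $H^2$-mass. I would split $\bbN_0\times\bbN_0^k$ into a ``bulk'' $G$ on which $\Lambda$ is controlled and a complement $G^{\mathrm c}$, bounding the $\chi^2$-contribution of $G^{\mathrm c}$ by $\int_{G^{\mathrm c}}\Lambda\,d\bQ_c+\bQ_{\bar c}(G^{\mathrm c})$ via $(a-b)^2/b\le a^2/b+b$, so that the claim reduces to two quantitative estimates: a bound $L$ on $\Lambda$ over $G$, and an exponential bound $e^{-C_2 d^a}$ on the residual over $G^{\mathrm c}$.

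Both estimates would be carried out from the explicit Gamma-integral representation
\[
\bQ_c(x,y)=\frac{n^{x}}{x!}\prod_{i=1}^{k}\frac{m^{y_i}}{y_i!}\;d^{k}\sum_{|\alpha|=x}\binom{x}{\alpha}\prod_{i=1}^{k}\frac{c_i^{\alpha_i}\,(\alpha_i+y_i)!}{(nc_i+m+d)^{\alpha_i+y_i+1}},
\]
obtained by expanding $\langle\varpi,c\rangle^{x}$ multinomially and integrating out each $\varpi_i$. From this one reads off that, under $\bQ_c$, every coordinate has a geometric tail whose ratio is $O(\max\{n,m\}/d)$, and under the standing sample-size conditions (in particular $(n+m)^{1+\gamma}\le d$) this ratio is bounded by a fixed negative power of $d$; moreover, because $c$ and $\bar c$ carry the same multiset of weights, the tails of $\bQ_c$ and $\bQ_{\bar c}$ are matched up to $k$-dependent constants, so that on $G^{\mathrm c}$ the ratio $\Lambda$ grows only polynomially in the counts while the measure decays geometrically. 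Choosing $G$ as the set of $(x,y)$ with all coordinates below an appropriate threshold and summing the resulting series then yields the two estimates, completing the proof. (Combined with Lemma~\ref{lem:kl_trunc} and the inequalities $H^2\le\KL$ and $H^2\le\chi^2$, this is precisely what Lemma~\ref{lem:truncation_tv} needs in order to pass from $\chi^2(\bQ_c^t\|\bQ_{\bar c}^t)$ to $\KL(\bQ_c^{\otimes d}\|\bQ_{\bar c}^{\otimes d})$.)

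\emph{Main obstacle.} The delicate point is balancing the two estimates against each other. Since $\Lambda$ is genuinely unbounded --- it grows in the $x$-coordinate for any non-constant $c$ --- the bulk $G$ cannot be taken too large; yet $G$ must be large enough for $\bQ_{\bar c}(G^{\mathrm c})$ and $\int_{G^{\mathrm c}}\Lambda\,d\bQ_c$ to be exponentially small in $d^a$. Reconciling these requires exploiting the matched-tail structure inherited from $c$ and $\bar c$ being permutations of one another sharply enough that the polynomial growth of $\Lambda$ is dominated by the geometric decay rate $O(\max\{n,m\}/d)$; this is the step where the hypothesis $(n+m)^{1+\gamma}\le d$ enters in an essential way, and it is the main technical content of the lemma.
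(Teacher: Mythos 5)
Your proposal takes essentially the same route as the paper: expand $\bQ_c$ and $\bQ_{\bar c}$ via the explicit Gamma-integral representation, exhibit a bulk $G=\{x+|y|\le M\}$ (with $M\asymp d/n$) on which the likelihood ratio $\Lambda$ is bounded, relate $\chi^2$-mass to Hellinger-mass there, and show the tail contribution is exponentially small. The paper realizes the bulk bound by sandwiching $\bq_c$ between $\varphi_c$ and $(1+n/d)^{-(x+|y|+k)}\varphi_c$ and invoking the exchangeability of $\Pi$, which is the same structural input as your observation $\bQ_{\bar c}(x,y)=\bQ_c(x,\sigma\cdot y)$; your factorization $\frac{(\bQ_c-\bQ_{\bar c})^2}{\bQ_{\bar c}}=(\sqrt\Lambda+1)^2(\sqrt{\bQ_c}-\sqrt{\bQ_{\bar c}})^2$ is a slightly tidier way of packaging the bulk step, but it is algebraically equivalent to the manipulation the paper carries out.

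One point of imprecision worth flagging: you describe $\Lambda$ as growing ``only polynomially in the counts'' on $G^{\mathrm c}$. What the Gamma representation actually gives is an upper bound of the form $\Lambda(x,y)\lesssim(1+n/d)^{x+|y|+k}$ (times a combinatorial ratio of the $S(c,\cdot)$ factors), so $\Lambda$ grows \emph{geometrically} in $x+|y|$, with base $1+n/d$. The argument closes not because this growth is polynomial, but because equation~\eqref{eq:tails_trunc_orig} shows the measure decays at rate $\exp(-\Theta(\tfrac{d}{k(n+m)})(x+|y|))$, and the standing hypothesis $(n+m)^{1+\gamma}\le d$ guarantees $\log(1+n/d)\approx n/d$ is strictly smaller than $\tfrac{d}{k(n+m)}$ by a polynomial factor in $d$. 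Your instinct that this is exactly where $(n+m)^{1+\gamma}\le d$ enters is correct; the mechanism is a competition between two geometric rates, not geometric-versus-polynomial. With that adjustment, the architecture you describe --- bulk control of $\Lambda$, exponential tail, and the elementary inequality $(a-b)^2/b\le a^2/b+b$ on $G^{\mathrm c}$ --- is the paper's argument.
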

The proof appears below. We deduce from Lemma~\ref{lem:hellinger_chi2} that
\begin{align*}
\KL&(\bQ_{c}^{\otimes d} \| \bQ_{\bar c}^{\otimes d})  \\
  &\lesssim d \cdot \chi^2 (\bQ_{c}^{\otimes d} \| \bQ_{\bar c}^{\otimes d})  \\
  &\lesssim d\Big( H^2(\bQ_c\|\bQ_{\bar c}) + e^{-C_2d}\Big) \\ 
  &\lesssim d \Big( H^2(\bQ_c \| \bQ_c^t) + H^2(\bQ_c^t\|\bQ_{\bar c}^t) + H^2(\bQ_{\bar c}^t \| \bQ_{\bar c}) + e^{-C_2 d}\Big) \\
  &\lesssim d \Big( \KL(\bQ_c \| \bQ_c^t) + \chi^2(\bQ_c^t\|\bQ_{\bar c}^t) + \KL(\bQ_{\bar c}^t \| \bQ_{\bar c}) + e^{-C_2 d}\Big) \\
  &\lesssim d \cdot  \chi^2(\bQ_c^t\|\bQ_{\bar c}^t)   + e^{-C_2 d^a},
  \end{align*}
for a possibly smaller constant $C_2 > 0$, where we used Lemma~\ref{lem:kl_trunc}
in the final inequality. The claim follows.\qed 

It thus remains to prove Lemmas~\ref{lem:kl_trunc}--\ref{lem:hellinger_chi2}. \newline 

\noindent {\bf Proof of Lemma~\ref{lem:kl_trunc}.}
By   convexity and tensorization of the KL divergence, one has
\begin{align}\label{eq:kl_simple_decomp}
\nonumber 
\KL\big(  \bQ_{c}^t\,\|\, \bQ_{c}\big)  
 &\leq \bbE_{\varpi}\left[ \KL\Big( \mathrm{Po}(n\langle \varpi^t,c\rangle) 
 \,\big\|\, \mathrm{Poi}(n\langle \varpi,c\rangle) \Big) \right]\\
 &\qquad +  k\cdot  \bbE_\varpi \Big[ \KL(\mathrm{Po}( m\varpi_1^t)\,\big\|\, \mathrm{Poi}(m\varpi_1))\Big],
 \end{align}
 where the means are taken over $\varpi\sim \calE_d^{\otimes k}$.
 To bound the first term, notice that the inequality  
 $|\langle c,\pi-\pi^t\rangle| \leq \langle c,\pi\rangle$ always holds, thus we
 may 
 apply Lemma~\ref{lem:kl_poisson} to obtain
 \begin{align*}
\bbE_\varpi\Big[\KL\Big( \mathrm{Poi}(n\langle \varpi,c^t\rangle) 
 \,\|\, \mathrm{Poi}(n\langle \varpi, c\rangle) \Big)\Big]  
 &\leq \bbE_\varpi\left[ \frac{ \langle c, \varpi-\varpi^t\rangle^2}{\langle c,\varpi\rangle} \right] \\  
  &\leq \bbE_\varpi\left[ \frac{ \Big(\sum_{i=1}^k c_i \varpi_i \cdot I(\varpi_i > t)\Big)^2}{\sum_{i=1}^k c_i \varpi_i } \right] \\  &\leq \bbE_\varpi\left[   \sum_{i=1}^k c_i \varpi_i  \,\Big|\, \max_i \varpi_i > t \right] 
  \cdot \bbP\Big(  \max_i \varpi_i > t\Big) \\
 &\leq \sum_{i=1}^k c_i  \bbE_\varpi\left[   \varpi_i \,|\, \varpi_i > t \right]
 \cdot \bbP\Big(  \max_i \varpi_i > t\Big). 
 \end{align*}
 By the memoryless property of the exponential distribution, one has 
 $$\bbE[\varpi_i \,|\,\varpi_i > t] = t + \bbE[\varpi_i] = t + \frac 1 d \leq 2t.$$
 We thus have, 
 \begin{align*}
\bbE_\varpi\Big[\KL&\Big( \mathrm{Poi}(n\langle \varpi,c^t\rangle) 
 \,\|\, \mathrm{Poi}(n\langle \varpi, c\rangle) \Big)\Big]  
 \leq 2t \cdot \bbP\Big(  \max_i \varpi_i > t\Big) \lesssim tk \cdot e^{-dt} \lesssim e^{-d^a/2},
 \end{align*}
 where the final inequality holds for a sufficiently small constant $a=a(\gamma) > 0$
 by definition of $t$, and using the fact that $k\leq d$. An analogous upper bound
 can be obtained on the second term in equation~\eqref{eq:kl_simple_decomp}, and 
 the claim then follows.\qed 
\newline 

\noindent {\bf Proof of Lemma~\ref{lem:hellinger_chi2}.}
Let $c,\bar c$ satisfy $W(c,\bar c)=0$.
Under this condition, we will begin by showing that the 
ratio of the densities
$\bq_c(x,y)$ and $\bq_{\bar c}(x,y)$ is bounded from above and below
by positive constants over a large range of values $(x,y)$. 
Indeed, we have for all $(x,y)\in I$ 
\begin{align*}
\bq_c(x,y)
 &= \frac 1 {x!y!}\bbE\left[f(x;n\langle \varpi,c\rangle)\prod_{i=1}^k f(y_i;m\varpi_i)\right] \\
 &= \frac 1 {x!y!}\sum_{j:|j|=x} {x \choose {j}} \prod_{i=1}^k \bbE\left[(n\varpi_ic_i)^{j_i}e^{-n\varpi_ic_i} (m\varpi_i)^{y_i}e^{-m\varpi_i}\right],
 \end{align*}
 where the summation is taken over all 
 $j = (j_1,\dots,j_k) \in \bbN_0^k$ such that
 $\sum_i j_i = x$, and we write ${x \choose j} = x! / j_1! \dots j_k!$.  Thus, 
\begin{align*}
\bq_c(x,y)
 &= \frac 1 {x!y!}\sum_{j:|j|=x} {x \choose {j}} \prod_{i=1}^k \int_0^\infty  (nu c_i)^{j_i}  (m u)^{y_i} d e^{-(n c_i + d+m)u} du\\
 &= \frac 1 {x!y!}\sum_{j:|j|=x} {x \choose {j}} \prod_{i=1}^k (n c_i)^{j_i} m^{y_i} \frac d {nc_i+d+m}\int_0^\infty u^{j_i+y_i} (nc_i+d+m) e^{-(n c_i + d+m)u} du \\
 &= \frac 1 {x!y!}\sum_{j:|j|=x} {x \choose {j}} \prod_{i=1}^k (n c_i)^{j_i} m^{y_i} \frac{d(j_i+y_i)!}{(nc_i+d+m)^{j_i+y_i+1}}.
 \end{align*}
Writing $\zeta(j,y) = \prod_{i=1}^k (y_i+j_i)! / j_i!$, we thus have
\begin{align*}
\bq_c(x,y)
 &=  \frac{m^{|y|}}{y!}\sum_{j:|j|=x} \zeta(j,y) \prod_{i=1}^k 
 \frac{d(n c_i)^{j_i}}{(nc_i+d+m)^{j_i+y_i+1}}.
 \end{align*}
 On the one hand, this implies
\begin{align*}
\bq_c(x,y) 
 &\leq \frac{m^{|y|}}{y!}\sum_{j:|j|=x} \zeta(j,y) \prod_{i=1}^k 
 \frac{d(n c_i)^{j_i}}{(d+m)^{j_i+y_i+1}} \\
 &\leq \frac{m^{|y|}}{y!}\frac {d^k} {(d+m)^{x+|y|+k}} \sum_{j:|j|=x} \zeta(j,y) 
 \prod_{i=1}^k 
  (n c_i)^{j_i} 
 =: \varphi_c(x,y),
 \end{align*}
Notice that $\varphi_c(x,y)$ only depends on the sorted vector $c$. 
On the other hand, 
\begin{align*}
\bq_c(x,y)
 &= \frac{m^{|y|}}{y!}\sum_{j:|j|=x} \chi(j,y) \prod_{i=1}^k 
\ \frac{d(n c_i)^{j_i}}{(d+m)^{j_i+y_i+1}} 
\left(1 + \frac{nc_i}{d+m}\right)^{-(j_i+y_i+1)} \\
 &\geq \left(1 + \frac{n}{d}\right)^{-(x+|y|+k)}\varphi_c(x,y),
 \end{align*}
We thus   have  for all $c$, 
$$\left(1 + \frac{n}{d}\right)^{-(x+|y|+k)}\varphi(x,y) \leq \bq_c(x,y) \leq \varphi_c(x,y).$$
Since $\varphi_c = \varphi_{\bar c}$
whenever $W(c,\bar c)=0$ for $c,\bar c \in \Delta_k$, we have for all 
such $c,\bar c$ that
$$\left(1 + \frac{n}{d}\right)^{-(x+|y|+k)}  \leq \frac{\bq_c(x,y)}{\bq_{\bar c}(x,y)} \leq \left(1 + \frac{n}{d}\right)^{x+|y|+k}.$$
This implies that  
\begin{equation}
\label{eq:bounded_density_ratio}
1/2 \leq \bq_c(x,y)/ \bq_{\bar c}(x,y)\leq 2,\quad \text{for all } x +|y|\leq  M:= \frac {\log 2}{\log(1 + n/d)} - k.
\end{equation}
Under condition~\ref{assm:sample_size}, we note that $M \leq C_0 d/n$. 
Now, we form the decomposition
\begin{align*}
\chi^2(\bQ_c\| \bQ_{\bar c})
 &= \underbrace{\sum_{\substack{(x,y)\in I \\ x+|y| \leq M}} \frac{\bq_c^2(x,y)}{\bq_{\bar c}(x,y)}  - 1}_{A} 
  + \underbrace{\sum_{\substack{(x,y)\in I \\ x+|y| > M}} \frac{\bq_c^2(x,y)}{\bq_{\bar c}(x,y)}}_{B}. 
\end{align*}
To bound $A$, notice first that for all $z > 0$, 
\begin{align}
\label{eq:tails_trunc_orig}
\nonumber 
\sum_{x+|y| > z} \bq_c(x,y) 
 &= \bbP(n\langle \varpi,c\rangle + m \|\varpi\|_1 \geq z) \\
 &\leq  \bbP((n+ m) \|\varpi\|_1 \geq C_0z) \leq k \exp\left(-\frac{dz}{k (n+m)}\right),
\end{align}
and in particular, since $M \asymp d/n$, we obtain
\begin{align}
\label{eq:tails_trunc} 
\sum_{x+|y| > M} \bq_c(x,y)  \lesssim  \exp\left(-C_2d\right) .
\end{align}
We thus have, 
\begin{align*}
A
 &= \sum_{\substack{(x,y)\in I \\ x+|y| \leq M}} \frac{\bq_c^2(x,y)}{\bq_{\bar c}(x,y)}  - 1 \\
 &=  \sum_{\substack{(x,y)\in I \\ x+|y| \leq M}} \frac{(\bq_c^2(x,y)-\bq_{\bar c}(x,y))^2}{\bq_{\bar c}(x,y)}  
  + 2 \sum_{\substack{(x,y)\in I \\ x+|y| > M}} \bq_c(x,y) - \sum_{\substack{(x,y)\in I \\ x+|y| > M}} \bq_{\bar c}(x,y) \\
 &\lesssim  \sum_{\substack{(x,y)\in I \\ x+|y| \leq M}} \left(\frac{\bq_c^2(x,y)-\bq_{\bar c}(x,y)}{\sqrt{\bq_{\bar c}(x,y)} +\sqrt{\bq_{\bar c}(x,y)} }\right)^2  + 
 e^{-C_2d},
\end{align*}
where the final display follows from equations~\eqref{eq:bounded_density_ratio}--\eqref{eq:tails_trunc}. 
We have thus shown that
\begin{align*}
A
 &\lesssim  H^2(\bQ_{c}, \bQ_{\bar c}) +
 e^{-C_2d},
\end{align*}
Furthermore, 
\begin{align*}
B &= \sum_{\substack{(x,y)\in I \\ x+|y| > M}} \frac{\bq_c^2(x,y)}{\bq_{\bar c}(x,y)} \\
 &\lesssim  \sum_{\substack{(x,y)\in I \\ x+|y| > M}} \bq_c(x,y) \left(1 + \frac n d\right)^{x+|y|+k} \\
 &\lesssim  \sum_{\substack{(x,y)\in I \\ x+|y| > M}} ke^{-\frac{d}{k(n+m)} (x+|y|+k)} \left(1 + \frac n d\right)^{x+|y|+k} 
 \lesssim e^{-C_2 d},
\end{align*}
for a possibly larger constant $C_2 > 0$. 
The claim follows from here.\qed

\subsubsection{Proof of Lemma~\ref{lem:lambda_scaling}}
\label{app:pf_lem_lambda_scaling}
The upper bound is clear. For the lower bound, notice that  
\begin{align*}
 \bbE[\langle c,\varpi^t\rangle] 
 = \sum_{i=1}^k c_i  \bbE[ \varpi_{i}\wedge t]
 \geq \int_0^t x d e^{-dx}dx = \frac 1 d \big( 1 - dte^{-dt}\big),
\end{align*}
as desired.\qed 

\subsubsection{Proof of Lemma~\ref{lem:chi2_lb}}
\label{app:pf_lem_chi2_lb}

We have, 
\begin{align*}
\bbE_\varpi & \left[f(x; \widetilde U_{c})\prod_{i=1}^k f(y_i;\widetilde V_i) \right] \\
 &= \bbE_\varpi \left[\frac{e^{-(\widetilde U_c + \widetilde V_1 + \dots + \widetilde V_k)} (\widetilde U_c \widetilde V_1^{y_1}\dots \widetilde V_k^{y_k})}{x!y_1!\dots y_k!}\right] \\
 &\geq  \frac{e^{-(n + km)t}}{x!y_1!\dots y_k!} \bbE_\varpi \left[\widetilde U_c^x \widetilde V_1^{y_1}\dots \widetilde V_k^{y_k}\right]  \\
 &=  \frac{e^{-(n + km)t}}{x!y_1!\dots y_k!} \left(n^x m^{\sum_i y_i}  \right) 
  \bbE_\varpi \left[\left(\sum_{i=1}^k c_i(\varpi_i\wedge t)\right)^x  \prod_{i=1}^k (\varpi_i\wedge t)^{y_i}\right]  \\
 &=  \frac{e^{-(n + km)t}}{x!y_1!\dots y_k!} \left(n^x m^{\sum_i y_i} \right) 
  \bbE_\varpi \left[ \sum_{\substack{0\leq j_1,\dots,j_k\leq x \\ j_1+\dots + j_k=x}}
                        {x\choose {j_1,\dots,j_k}} \prod_{i=1}^k c_i^{j_i} (\varpi_i\wedge t)^{j_i + y_i}\right]  \\
 &=  \frac{e^{-(n + km)t}}{x!y_1!\dots y_k!} \left(n^x m^{\sum_i y_i} \right) 
   \sum_{\substack{0\leq j_1,\dots,j_k\leq x \\ j_1+\dots + j_k=x}}
                        {x\choose {j_1,\dots,j_k}} \prod_{i=1}^k c_i^{j_i} \bbE_\varpi \left[(\varpi_i\wedge t)^{j_i + y_i}\right]  \\
 &\geq   \frac{e^{-(n + km)t}}{x!y_1!\dots y_k!} \left(n^x m^{\sum_i y_i} \right) 
   \sum_{\substack{0\leq j_1,\dots,j_k\leq x \\ j_1+\dots + j_k=x}}
                        {x\choose {j_1,\dots,j_k}} \prod_{i=1}^k c_i^{j_i} \lambda^{j_i + y_i} \\
 &=   \frac{e^{-(n + km)t}}{x!y_1!\dots y_k!} \left(n^x m^{\sum_i y_i} \right) \lambda^{x+\sum_i y_i}  \\
 &= e^{-(n + km)t} \cdot f(x;n\lambda) \prod_{i=1}^k f(y_i;m\lambda). 
\end{align*}
The claim now follows from the fact that $nt \leq (n/d)^{1-\gamma_0}\leq 1$
and $mtk \leq k(m/d)^{1-\gamma_0} = k(m/d)^{\frac 1 {1+\gamma}}\leq 1$,
by assumption on $k$.
\qed


\subsubsection{Proof of Lemma~\ref{lem:charlier_tensor}}
\label{app:pf_lem_charlier_tensor}

The collection $\{\varphi_{\alpha,\beta}(\cdot;\blambda)\}_{\alpha,\beta}$ is dense in $L^2(g_{\blambda})$, 
and satisfies the orthogonality property
 \begin{align*}
&\bbE_{(X,Y)\sim g_{\blambda}}\Big[ \varphi_{\alpha,\beta}(X,Y;\blambda) \varphi_{\alpha',\beta'}(X,Y;\blambda)\Big] \\
  &\quad = \sum_{(x,y)\in I} \Big(\varphi_\alpha(x; \lambda_0)  \varphi_{\alpha'}(x; \lambda_0)  f(x;\lambda_0)\Big)
   \cdot \prod_{i=1}^k \Big(\varphi_\beta(y_i; \lambda_{i})\varphi_{\beta'}(y_i; \lambda_{i}) f(y_i;\lambda_i)\Big) \\
  &\quad = \left(\sum_{x=0}^\infty \varphi_\alpha(x; \lambda_0) \varphi_{\alpha'}(x; \lambda_0)  f(x;\lambda_0)\right)\cdot 
  \prod_{i=1}^k  \sum_{y_i=0}^\infty \Big(\varphi_\beta(y_i; \lambda_{i})\varphi_{\beta'}(y_i; \lambda_{i}) f(y_i;\lambda_i)\Big) \\
  &\quad =  \alpha! \lambda_0^\alpha I(\alpha=\alpha')\cdot 
  \prod_{i=1}^k  \beta_i! \lambda_i^{\beta_i} I(\beta_i=\beta_i'),
\end{align*}
where we used the orthogonality of the univariate Charlier basis (cf.\,equation~\eqref{eq:charlier_ortho}). 
We deduce that $\{\varphi_{\alpha,\beta}(\cdot;\blambda)\}_{\alpha,\beta}$ forms an orthogonal basis
of $L^2(g_{\blambda})$. To prove the second identity, recall from equation~\eqref{eq:charlier_generating_fn}
that the generating function of the Charlier polynomials with parameter $\lambda_0$ is given by $e^{-u_0} (1 + u_0/\lambda)^x$ for all 
$u_0 > -\lambda_0$, thus we have
\begin{align*}
\frac{f_{\lambda_0 + u_0}(x)}{f_{\lambda_0}(x)} = e^{-u_0} \left(1 + \frac {u_0}{ \lambda_0}\right)^x = \sum_{\ell=0}^\infty \varphi_\ell(x;\lambda_0) \frac{(u_0/\lambda_0)^\ell}{\ell!},\quad x=0,1,\dots.
\end{align*}
Re-applying a similar identity, we obtain for all $(x,y)\in I$ and all $\bu \in \bbR^{k+1}$ such that $u_j \geq -\lambda_j$,
$j=0,\dots,k$,  
\begin{align*}
\frac{g_{\blambda+\bu}(x,y)}{ g_{\blambda}(x,y)}
 &= \left(\sum_{\alpha=0}^\infty \varphi_\alpha(x;\lambda_0)
 \frac{(u_0/\lambda_0)^\alpha}{ \alpha!}\right)
 \prod_{i=1}^k \left( \sum_{\beta_i=0}^\infty \varphi_{\beta_i}(x_i;\lambda_i) \frac{(u_i/\lambda_i)^{\beta_i}}{\beta_i!}\right) \\
 &=  \sum_{(\alpha,\beta)\in I}^\infty 
 \varphi_{\alpha,\beta}(x,y;\blambda)
 \frac{(u_0/\lambda_0)^\alpha}{ \alpha!}
 \prod_{i=1}^k  \frac{(u_i/\lambda_i)^{\beta_i}}{\beta_i!}.
\end{align*}
and the claim then follows.
\qed

\subsubsection{Proof of Lemma~\ref{lem:sum_of_beta}}
\label{app:pf_lem_sum_of_beta}
Let $G_1,\dots,G_L \overset{iid}{\sim} \mathrm{Exp}(1)$
and let $G = \sum_{i=1}^L G_i\sim \mathrm{Gamma}(s,1)$. Let $X = d\sum_{i=1}^L X_i$. 
Notice that  the lower tail of the rescaled Beta density $f_{dX_1}$ is dominated by the exponential density
$f_{G_1}$; indeed, one has for all $x \in [0,1/2]$, 
$$f_{dX_1}(x) = \frac {d-1}{d} (1-x/d)^{d-2}
 \leq e^{-x} (1-x/d)^{-2} \lesssim e^{-x} = f_{G_1}(x).$$
We thus have
\begin{align*} 
\bbE\left[\left(d \sum_{i=1}^L X_i\right)^{-3/2}\right] 
 \lesssim \bbE[1/G^{3/2}]  + \bbE\left[   {X^{-3/2}}\Big|\min_{i\in S_2} d\pi_{i1} > 1/2\right]
 \lesssim \bbE[1/G^{3/2}]  +  L^{-{3/2}}.
 \end{align*} 
 The remaining expectation can be computed
 in closed form as 
 $$\bbE[1/G^{3/2}] = \Gamma(L-3/2) / \Gamma(L) \lesssim L^{-3/2},$$
where $\Gamma$ denotes the Gamma function. The claim follows.\qed

\subsection{Proofs Deferred from Appendix~\ref{sec:ub_proofs}}

\subsubsection{Proof of Lemma~\ref{lem:Bi}}
\label{app:pf_lem_Bi}
Notice first that 
$$\bbE[B_i] =(d+1) \sum_{j=1}^d \left(\frac{2c_i}{d^2} + \sum_{r\neq i} \frac{c_r}{d^2}\right) 
= (d+1)d \left(\frac{c_i}{d^2} + \frac 1 {d^2}\right) = c_i +O(d^{-1}).$$
Second, notice that by equation~\eqref{eq:concentration_norm}, one has
with probability at least $1-Ce^{-\sqrt d/C}$,
\begin{align*} 
\big|\bbE[\hat c_i|\Pi] - B_i\big|
 \leq
 (d+1)\sum_{j=1}^d \sum_{r=1}^k c_r\varpi_{ij}\varpi_{rj}\big| (1 - S_r^{-1})(1-S_i^{-1})\big|  
 \lesssim  d\sum_{j=1}^d \sum_{r=1}^k c_r\varpi_{ij}\varpi_{rj}.
\end{align*}
The claim now follows by re-applying the same
argument as under equation~\eqref{eq:Fij_analysis}.\qed

\section{Description of Synthetic Data Analysis in Section~\ref{sec:numerical}}
\label{appendix:numerical}
In this appendix, we provide the details of our analysis on the synthetic data presented in the main text.

\subsection{Time-dependent models}
To generate the synthetic dataset that mimics possible increasing error rates in the real experiments, we consider a one-dimensional array of $L$ qubits and construct a circuit of $L$ layers. 
Each layer consists of a set of two-qubit random unitaries applied to neighboring qubits on all even bonds, and a set of following random unitaries on odd bonds. 
After each layer, we introduce single-qubit Pauli errors ($X$, $Y$, or $Z$) on every qubit. 
We study two types of error models: 
\begin{enumerate}
\item {\bf Experiment-mimicking model}: the error rate at each layer is drawn uniformly from $[0.25\epsilon, 0.75\epsilon]$, where $\epsilon$ increases linearly from $2.5\times 10^{-4}$ (first layer) to $10^{-3}$ (last layer). 
\item {\bf Null model}: $\epsilon$ is fixed at $\sim 6\times 10^{-4}$. 
\end{enumerate}
For both error models, the many-body fidelity is $F\approx 0.5$. 

In Regime A (with side information from classical simulation), we analyze the synthetic data using MLE, where each column of $\Pi$ corresponds to one of the Pauli errors. 
From this estimator, we extract the error rate for each spacetime position (Fig.~\ref{fig:timedep_error}a) and the average error rate $\epsilon_{\mathrm{est}}$ for each layer (Fig.~\ref{fig:timedep_error}b).
Fitting $\epsilon_{\mathrm{est}}$ as a linear function of depth, $\beta \cdot \mathrm{depth} + \epsilon_0$, yields an error growth rate $\beta$. 
To validate the time-dependence in the error rate, we should test whether the extracted $\beta$ significantly differs from zero. 
For this purpose, we simulate $500$ instances of the null (time-independent) model and perform the above analysis, constructing a histogram of the extracted $\beta$ values to determine confidence intervals and $p$-values (Fig.~\ref{fig:timedep_error}c).
In regime B, we repeat the same analysis using variational EM. 

\subsection{Correlated error models}
To generate the synthetic dataset that mimics spatially correlated errors possibly existing in the real experiments, we consider a $5\times 4$ two-dimensional array of qubits and construct a five layer circuit. 
Each layer consists of four sets of two-qubit random unitaries, consecutively applied to neighboring qubits cycling among pairs in the four different orientations. 
After each layer, we introduce incoherent errors, which include all single-qubit Pauli errors as well as select correlated errors. 
All error rates are assumed to be the same across different layers. 
The single-site Pauli error rates are drawn from a uniform random distribution $ [10^{-3}, 3\times 10^{-3}]$.
We consider two different models of correlated errors: 
\begin{enumerate}
\item {\bf Two-body correlated error}: correlated-$XX$ errors that may exist for any pair of qubits. Here, we consider the situation where error rates are negligibly small except for one ``bad'' pair with error rate $\sim 10^{-3}$. Our goal is to identify this pair by applying our algorithm to synthetic data. 
\item {\bf Multi-body correlated error}: correlated multi-$X$ errors can exist along any column or row of qubits. Here, we consider the situation where all these error rates are negligibly small except one ``bad'' row and one ``bad'' column with error rate $\sim 10^{-3}$. Our goal is to identify such a row and column from analyzing the synthetic data. 
\end{enumerate}
Error rates are chosen such that the many-body fidelity is $F\approx 0.5$ in both models. 
In Fig.~\ref{fig:correlated_error}, we focus on regime A (i.e.~with classically computed $\pi_i$'s) and use the MLE estimator. 

\section{Description of Real Data Analysis in Section~\ref{sec:google}}
\label{app:google}
In this Appendix, we describe the error model and the numerical methods used to analyze data from the experiment in Ref.~\cite{arute2019quantum}.
\subsection{Error Model}

\begin{table}[]
    \centering
    \begin{tabular}{c c  c c}
    \toprule
    Error & Kraus op.~$K_i^{(a)}$ & Coef. $w^{(a)}_i$ & Fid. contribution $f_i$.\\
    \midrule
    State prep. & $X$ & +1 & 0\\
     1q dephasing    & $Z$& +1 & 0\\
 2q dephasing    & $\begin{pmatrix}
 1&0&0&0\\
 0&1&0&0\\
 0&0&1&0\\
 0&0&0&-1
 \end{pmatrix}$& +1 & $+1/4$\\
 2q flip-flop & $\begin{pmatrix}
 1&0&0&0\\
 0&0&1&0\\
 0&1&0&0\\
 0&0&0&1
 \end{pmatrix}$& +1 & $+1/4$\\
\\
     \multirow{2}{*}{$1\rightarrow0$ readout error}    & $|0\rangle\langle1|$& +1 & \multirow{2}{*}{-1/2}\\
& $|1\rangle\langle1|$ & -1 & \\
    \multirow{2}{*}{$0\rightarrow1$ readout error} & $|1\rangle\langle0|$& +1 & \multirow{2}{*}{-1/2}\\
& $|0\rangle\langle0|$ & -1 & \\
\\
\multirow{4}{*}{
$1\rightarrow0, 1\rightarrow0$ double readout error}    & $|00\rangle\langle11|$& +1 & \multirow{4}{*}{+1/4}\\
& $|01\rangle\langle11|$ & -1 & \\
& $|10\rangle\langle11|$ & -1 & \\
& $|11\rangle\langle11|$ & +1 & \\
\bottomrule
    \end{tabular}
    \caption{Error processes modeled in the analysis of RCS data from Ref.~\cite{arute2019quantum} (Fig.~\ref{fig:google_data}). The readout error sources have multiple terms $K_j^{(a)}$ with coefficients $w_j^{(a)}$ [Eq.~\eqref{eq:quantum_channel}], derived below. The last column indicates the fidelity contribution of the error source, necessary to obtain the many-body fidelity (App.~\ref{app:converting_rates}).}
    \label{tab:error_sources}
\end{table}

Our data analysis involves a model for the output probability distribution of the form
\begin{align} 
p_c(z|\Pi) = c_1\pi_1(z) + \sum_{i>1} c_i \pi_i(z) + c_{-1} 1/d,
\label{eq:k_component_Google}
\end{align}
where the index $i$ contains information both about the error type and spacetime location. 

As stated in Eq.~\eqref{eq:pi_Kraus}, we generate the $\Pi$ matrix from a physical model of a noisy quantum state, parameterized by noise coefficients we wish to learn. In general, these terms, proportional to the unknown coefficient $c_i$, will be of the form 
\begin{equation}
    R_i(\rho) = \sum_a w_i^{(a)} K_i^{(a)}\rho K_i^{(a)\dagger},
    \label{eq:quantum_channel}
\end{equation}
where the sum over $a$ indicates multiple terms which may be associated with the same error source. As examples, when the error corresponds to a unitary operator (e.g.~a Pauli error), there is only one term, and $K_i$ is said unitary. However, more complicated error channels such as asymmetric readout errors require multiple terms, e.g.~$R_{1\rightarrow 0,j}^{\text{readout}}(\rho) = (|0\rangle\langle 1|)_j \rho (|1\rangle\langle 0|)_j -(|1\rangle\langle 1|)_j \rho (|1\rangle\langle 1|)_j  $ describes the contribution of $1\rightarrow 0$ readout errors on qubit $j$, see Table~\ref{tab:error_sources} and discussion below.

To benchmark a realistic error model, we include several classes of errors which have been reported in the literature. They are: 1. State preparation errors (simply as a bit-flip $X_j$ on the initial state $|0\rangle^{\otimes N}$), 2. single-qubit errors, where for our analysis we focus simply on Pauli $Z_j$ dephasing errors, and 3. two-qubit errors representing (a) dephasing on the $11$ state and (b) flip-flop exchange between neighboring pairs of qubits. Finally, we include 4. asymmetric readout errors with different rates of $0\rightarrow1$ and $1\rightarrow 0$ errors, as well as double $1\rightarrow 0, 1\rightarrow 0$ readout errors, which occur at non-negligible rates because of the larger $1\rightarrow 0$ error rates.
Each of these have error channels of the form Eq.~\ref{eq:quantum_channel}, with parameters summarized in Table~\ref{tab:error_sources}.

The two-qubit errors may arise from processes such as coupling to higher transmon levels~\cite{arute2019quantum,andersen2025thermalization}, which may appear as stochastic errors in the control angles $\phi$ and $\theta$ of the FSIM class of gates:
\begin{equation}
\text{FSIM}(\theta,\phi) =
\begin{pmatrix}
1 & 0 & 0 & 0\\
0 & \cos \theta/2 & -i \sin \theta/2 & 0\\
0 & -i \sin \theta/2 & \cos \theta/2 & 0\\
0 & 0 & 0 & \exp(i\phi)
\end{pmatrix}
\end{equation}
Integrating over Gaussian fluctuations of $\theta$ and $\phi$ gives a more complicated channel (of Lindblad form) proportional to the fluctuations $\Delta\theta^2, \Delta \phi^2$. However, in this work we do not assume a precise model for these two-qubit errors and we instead use a simpler unitary error channel (Table~\ref{tab:error_sources}), taking these as representative of dephasing processes on the $11$ state, or flip-flop between $01$ and $10$ states.

While symmetric readout errors can simply be modeled as Pauli $X_j$ errors on qubits $j$, asymmetric readout errors, which capture the strongly biased readout errors reported in Ref.~\cite{arute2019quantum}, are more involved.
The simplest way to obtain the relevant operators for asymmetric readout is to linearize the amplitude damping channel (of strength $\gamma$, acting on qubit $j$)
\begin{align}
    &R[\rho] = \begin{pmatrix}
    1 & 0 \\
    0 & \sqrt{1-\gamma }
    \end{pmatrix}_j \rho \begin{pmatrix}
    1 & 0 \\
    0 & \sqrt{1-\gamma }
    \end{pmatrix}_j + \begin{pmatrix}
    0 & \sqrt{\gamma} \\
    0 & 0
    \end{pmatrix}_j \rho \begin{pmatrix}
    0 &  0 \\
    \sqrt{\gamma} & 0
    \end{pmatrix}_j \label{eq:amp_damp}\\
    & = \rho + \gamma \left[ - |1\rangle\langle 1|_j \rho |1\rangle\langle 1|_j -\frac{1}{2} |0\rangle\langle 0|_j \rho |1\rangle\langle 1|_j -\frac{1}{2} |1\rangle\langle 1|_j \rho |0\rangle\langle 0|_j  + |0\rangle\langle1|_j \rho |1\rangle\langle0|_j \right] + O(\gamma^2). \nonumber
\end{align}
The middle two terms (proportional to $\gamma$) correspond to dephasing induced by amplitude damping and can be neglected for readout errors, since the system is immediately measured in the $Z$ basis. However, if one wanted to model an amplitude damping channel in the middle of the circuit, all terms above should be kept). This gives the operators for $1\rightarrow 0$ readout in Table~\ref{tab:error_sources}.

The effect on the classical probability distribution can be understood as follows. Assume for simplicity that a $1\rightarrow 0$ readout error occurs on the first bit. We write the distribution $\pi_1(z) = (\pi_1^{(0)}(z'), \pi_1^{(1)}(z'))$ in terms of the distributions on the substrings $z' = z_2 z_3 \cdots z_N$, conditioned on $z_1 = \{0,1\}$. The readout error acts on the distribution as:
\begin{equation}
   \pi_1(z) \mapsto \pi_1(z) + \gamma \left(\pi_1^{(1)}(z'), - \pi^{(1)}(z')\right) \equiv \pi_1(z) + \gamma \pi^{\text{readout}}_{1\rightarrow0,1}(z),
   \label{eq:1_0_asym_readout}
\end{equation}
that is, it shifts probability mass from $z_1=1$ onto $z_1=0$. These terms precisely correspond to application of the operators in Table~\ref{tab:error_sources}.

Double readout errors are modeled in a similar fashion: applying the amplitude damping channel with rates $\gamma_i, \gamma_j$ on qubits $i$ and $j$, and keeping terms proportional to $\gamma_i\gamma_j$, we obtain
\begin{align}
   \pi_1(z) \mapsto  &\pi_1(z) + \gamma_i \pi^{\text{readout}}_{1\rightarrow0,i}(z) + \gamma_j \pi^{\text{readout}}_{1\rightarrow0,j}(z) \nonumber\\
   &+ \gamma_i\gamma_j \left(\pi_1^{(11)}(z'), - \pi_1^{(11)}(z'),- \pi_1^{(11)}(z'), \pi_1^{(11)}(z') \right), \label{eq:2q_readout}
\end{align}
where the length-four vector now runs over $(z_i,z_j) \in \{00,01,10,11\}$, and $z' = z \backslash (z_i,z_j)$. Surprisingly, the second order contribution \textit{adds} probability mass to the $11$ state.

Finally, in Eq.\eqref{eq:k_component_Google} we additionally include a ``white noise" term proportional to $1/d$, not assumed in our theoretical analysis, but which models the aggregate weight of errors outside of our model, which we expect to sum to such a featureless distribution~\cite{dalzell2024random}. 
Assuming a Markovian error model with a total rate of $\gamma$ local error events per unit time, a simple estimate gives a many-body fidelity of $e^{-\gamma t}$, $\gamma t e^{-\gamma t}$ ``single" error events, and $[(\gamma t)^2/2]e^{-\gamma t}$ ``double" error events where two independent local errors occur. In this work, we neglect the vast majority of such double and higher-order errors, only including double readout events. Therefore, a simple estimate for the weight $c_{-1}$ is $1 - \hat{F}- \hat{F}\log(1/\hat{F})$, where $\hat{F}$ is the estimated many-body fidelity (see Appendix~\ref{app:converting_rates}). For the $N=18$ dataset, this gives $0.41$ for $\hat{F} = 0.24$: we additionally estimate double readout errors to constitute $0.05$ of the signal, leading to reasonable agreement with the estimated $\hat{c}_{-1} = 0.32(1)$.

After obtaining $\pi_1(z)$ and $\pi_i(z)$ via classical simulation of the RCS circuits using the \verb|Cirq| package, we construct a matrix $\Pi$ with entries 
$\pi_{ij} := \pi_i(z_j)$. This matrix generally has negative entries, even though $\Pi^\top c$ is guaranteed to have nonnegative entries for any physically sensible error vector $c$. 
We perform our fitting procedure under the modeling assumption that the bitstring histogram
$Y$ has entries drawn independently according to the distribution:
\begin{equation}
\label{eq:poisson_model_google}  
Y_j\sim \mathrm{Poi}(n\Pi_{\cdot j}^\top c),
\quad j=1,\dots,d,
\end{equation}
for some $c \in \{x \in \bbR_+^k: \Pi_{\cdot j}^\top x \geq 0,~j=1,\dots,d\}$.  
As we saw in Section~\ref{app:model},
this Poissonian model is statistically indistinguishable from the multinomial sampling model when the shot noise is large,  the entries of $\Pi$ are nonnegative, and $c\in \Delta_k$.
In our more general setting here, 
where $c$ may not lie in the simplex, the multinomial model is not well-defined, which is the reason we adopt the above more general Poissonian model. 
We use a Poisson MLE estimator to fit the $c$ coefficients,
\begin{align} \label{eq:mleapp}
\hat{c}^\mathrm{MLE} = \argmax_{x \in \bbR_+^k} \sum_{j=1}^d \bigg(Y_j \log(\Pi_{\cdot j
}^\top x) - \Pi_{\cdot j
}^\top x\bigg),
\end{align}
which should be contrasted to the multinomial
MLE presented in equation~\eqref{eq:main_mle} of the main text. 
In practice, we find the matrices $\Pi$ to be poorly conditioned, and we add 
a ridge regularization penalty $10^{-8}\|x\|_2^2$
to the objective function~\eqref{eq:mleapp}.  
 
\subsection{Converting learned error rates into physical quantities}
\label{app:converting_rates}

\subsubsection{Many-body fidelity}
\label{app:fidelity_conversion}
As alluded to in Table~\ref{tab:error_sources}, we obtain the many-body fidelity estimate by a weighted sum of the learned $c_i$:
\begin{align}
    \hat F = \hat{c}_1 + \sum_{i>1} f_i \hat{c}_i,
    \label{eq:fid_correction}
\end{align}
where coefficients $f_i$ for various sources of error are given in Table~\ref{tab:error_sources}. 

The reason why this is required is because the sources of error we consider need not result in output states orthogonal to the target output state, and hence output distributions $\pi_i$ orthogonal to $\pi_1$.

For a Pauli error channel, with high probability in a RUC, the quantum state associated with each error trajectory has exponentially small overlap with the target quantum state and hence $f_i=0$ in these cases.
For a more general error channel, however, this is not the case. $f_i$ can be computed by a simple analytical theory: one simply assumes the state is Haar random at the point the error is applied. For a single error, a Haar-average~\cite{Collins2022} reveals that the many-body fidelity is 
\begin{equation}
f_i \approx  \mathbb{E}_{\psi\sim \text{ Haar}}[ \langle \psi| \sum_i w_i^{(a)} K_i^{(a)} |\psi\rangle \langle \psi| K_i^{(a)\dagger} |\psi\rangle ] =\sum_i w_i^{(a)} \frac{|\text{tr}(K_i^{(a)})|^2 + \text{tr}(K_i^{(a)}K_i^{(a)\dagger})}{d(d+1)},
\end{equation}
where the trace should be taken as over the entire $N-$qubit Hilbert space. For 2-qubit dephasing or flip-flop error (Table~\ref{tab:error_sources}), $|\text{tr}(K_i)|^2 = d^2/4$, while $\text{tr}(K_i^{}K_i^{\dagger}) = d$ (the second term is always sub-leading), leading to the coefficients $f_i = +1/4$. That is to say, acting with a controlled-Z or flip-flop unitary ``error" produces a state which has, on-average, an fidelity of 1/4 with the target state. This fidelity should be added back to the fidelity estimate $\hat{F}$, in addition to $\hat{c}_1$.

As another example, for asymmetric readout errors, one in fact has to use all the terms of the linearized amplitude damping channel Eq.~\eqref{eq:amp_damp} (including the off-diagonal terms), this gives $f_i = -1/2$. As intuition, if there were \textit{only} $1\rightarrow0$ readout errors, i.e. with the model
\begin{equation}
p(z) = c_1 \pi_1(z) + \sum_{j} c_j \pi^\text{readout}_{1\rightarrow0,j} (z),
\end{equation}
our algorithm would learn a coefficient $\hat{c}_1 = 1$, since for all the other terms $\sum_z  \pi^{\text{readout}}_j (z) = 0$, but the sampled distribution is by definition normalized. However, the actual many-body fidelity is smaller, precisely $c_1 - \sum_j c_j /2$, with the factor of $1/2$ arising from the probability of the bit being in the $1$ state. As an independent check, one can verify that the XEB fidelity between $\pi_1$ and $\pi^\text{readout}_{1\rightarrow0,j}$ is 1/2. A similar calculation yields $f_i = +1/4$ for double asymmetric readout errors: we summarize these results in Table~\ref{tab:error_sources}.

The simple behavior of Pauli errors discussed above does not hold true near the start and end of the RUC. Near the start, the circuit depth is too low for the Haar-random assumption made above to hold, and a local error does not orthogonalize the state. Meanwhile, near the end of the RUC, a local operator \textit{does} orthogonalize the state. However, a dephasing error does not sufficiently scramble before measurement in order to change the XEB: this is the ``lag time" in the XEB that had been previously noted~\cite{mark2023benchmarking}. Both effects are evident in the correlation matrix $\Pi^T \Pi$ for 1q dephasing, 2q dephasing and 2q flip-flop errors: furthermore, these effects are largely confined to the first and last three layers.
While the latter effect does not contribute to the many-body fidelity, we omit both these boundary circuit layers in order to cleanly test our fidelity coefficients $f_j$ for errors deep in the circuit: doing so reveals close quantitative agreement between the XEB and our estimate $\hat{F}$ [Eq.~\eqref{eq:fid_correction}] in Fig.~\ref{fig:google_data}.

A more refined theory of these fidelity contributions $f_j$ that incorporates the space-time positions of the errors would enable our method to include such boundary errors without comprimising the fidelity estimate.

\subsubsection{Correction of double readout errors on single readout error rates}
A similar effect happens between single and double readout errors: these have non-trivial overlaps, and after our fitting procedure, one must correct the estimate of the readout error rate on qubit $j$ as
\begin{equation}
    \hat{c}^{\text{readout}}_{1\rightarrow 0, j} \mapsto \hat{c}^{\text{readout}}_{1\rightarrow 0, j} - \frac{3}{14} \sum_{k\neq j}\hat{c}^{\text{doub.~readout}}_{(1\rightarrow 0)^2, jk}
    \label{eq:double_readout_corr}
\end{equation}
where the sum is taken over qubits $k\neq j$. The coefficient of $3/14$ comes from the following considerations: the quantum fidelity for mixed states (such as the contributions of single- and double- readout error) is less straightforward to analyze. Therefore, we use as a proxy a heuristic analysis based on the XEB: we seek to ``orthogonalize" the $\Pi$ matrix rows $\pi^\text{readout}_{1\rightarrow0,i}$ and $\pi^{\text{doub.~readout}}_{(1\rightarrow 0)^2, jk}$. We define orthogonalization with respect to the dot product:
\begin{equation}
    \langle \pi_1, \pi_2\rangle \equiv d \sum_z \pi_1(z) \pi_2(z) 
\end{equation}
It is also convenient to subtract the identity component such that all vectors we consider sum to zero, that is work with $\pi_1 - 1/d$ instead of $\pi_1$. This has the feature that the XEB fidelity of a distribution $p(z)$ can be understood as the inner product $\langle p, \pi_1-1/d\rangle$.

This orthogonalization procedure correctly reproduces the fidelity contributions $f_j$: $\pi^\text{readout}_{1\rightarrow0,i} +(\pi_1-1/d)/2$ and $\pi^{\text{doub.~readout}}_{(1\rightarrow 0)^2, jk} - (\pi_1-1/d)/4$. Our estimation problem is equivalent to fitting to a modified model:
\begin{equation}
    p = 1/d +  F(\pi_1-1/d) + \cdots + \sum _i c_{1\rightarrow0,i} [\pi^\text{readout}_{1\rightarrow0,i} + (\pi_1-1/d)/2] + \cdots,
\end{equation}
where the coefficient $F$ is precisely the many-body fidelity (more precisely, this prescription ensures that the learned coefficient agrees with the XEB fidelity). To estimate the overlap between the double and single readout errors, we simply consider their inner product, which can be calculated to be (in our setting)
\begin{equation}
    \langle \pi^\text{readout}_{1\rightarrow0,i} + (\pi_1-1/d)/2, \pi^{\text{doub.~readout}}_{(1\rightarrow 0)^2, jk} - (\pi_1-1/d)/4\rangle = \frac{3}{14} \text{ if $i=j$ or $i=k$},
\end{equation}
As a reminder, we assume that the double readout errors cannot happen on the same qubit and therefore $j\neq k$. Orthogonalization the double readout term against the single readout term, and re-parametrizing the model gives the desired correction Eq.~\eqref{eq:double_readout_corr}, which we used in Fig.~\ref{fig:google_data}(e).

\subsubsection{Proportion of error sources}
Combining these results allows us to determine the proportions of each error source to the overall measurement, as plotted in Fig.~\ref{fig:google_data}(a). 

Therefore, we assign their proportions as:
\begin{itemize}
    \item Fidelity: estimated as in Eq.~\eqref{eq:fid_correction}.
    \item State preparation and 1q dephasing errors: no change to $\hat{c}_i$.
    \item 2q dephasing and flip-flop errors: $(3/4) \hat{c}_i$
    \item Single qubit readout errors: $(1/2)\hat{c}^{\text{readout}}_{1\rightarrow 0, j}-(3/14)\sum_k \hat{c}^{\text{doub.~readout}}_{(1\rightarrow 0)^2, jk}$  [Eq.~\eqref{eq:double_readout_corr}].
    \item Double qubit readout errors:
     contribution given by $(3/7-1/4) \hat{c}^{\text{doub.~readout}}_{(1\rightarrow 0)^2, jk}$.
\end{itemize}

In our problem, we have considered error sources where $\sum_z \pi_i(z) = 1$ or $\sum_z \pi_i(z) = 0$. The sum of the $c_i$'s of the former type will be 1, in order for $p(z)$ to be normalized, while the $c_i$'s of the latter type do not have such a constraint. One can verify that with the above prescription, the contributions over all error sources will sum to 1, as desired.

\subsubsection{Physical error rates}
Finally, we can construct estimators for the physical error rates  $\Gamma_i$ from the fitted coefficients $c_i$ as well as the fidelity $\hat F$. In our case, where errors correspond to the application of \textit{only one} non-trivial Kraus operator, and under the assumption that the errors are independent, these are related by:
\begin{equation}
    \hat{\Gamma}_i = \frac{\hat{c}_i}{\hat{F} + \hat{c_i}}.
    \label{eq:c_to_gamma}
\end{equation}

This relation arises as the coefficient $c_i$ describes the probability of a specific, single event, which is the \textit{product} of the physical error rates:
\begin{equation}
    c_i \approx \Gamma_i \prod_{j\neq i} (1-\Gamma_j) \approx \frac{\Gamma_i}{1-\Gamma_i} F~,
\end{equation}
where the second equality is because the many-body fidelity is given by $F \approx  \prod_{j} (1-\Gamma_j)$. Eq.~\eqref{eq:c_to_gamma} is necessary to extract the physical error rates, as plotted in Fig.~\ref{fig:google_data}(c,d,e,f). In particular, this rescaling by $\hat{F}$ is necessary for proper comparison between single- and double- readout error rates to detect correlated readout errors in Fig.~\ref{fig:google_data}(f).

\subsection{Goodness-of-Fit}
In this section, we 
conduct a goodness-of-fit analysis for  model~\eqref{eq:poisson_model_google}  
which we adopted in our real data analysis. 
Concretely, our goal is to
test the null hypothesis 
$$H_0: Y \sim \widebar \bQ_c = \otimes_{j=1}^d \mathrm{Poi}(n\Pi_{\cdot j}^\top c), 
\quad \text{for some } c 
\in \bbR_+^k\text{ such that } \Pi^\top_{\cdot j} c \geq 0 \text{ for all } j.$$
We construct a heuristic
test for this composite null hypothesis, using
the following 
$\chi^2$
statistic~\citep{wasserman2013}:
\begin{align}
    \chi^2 = \sum_{j} \frac{\big(Y_j - n\Pi^\top \hat c^\mathrm{MLE})^2} {n\Pi^\top  \hat c^\mathrm{MLE}},
\end{align}
where $\hat c^{\mathrm{MLE}}$ is
defined in equation~\eqref{eq:mleapp}.
We note that, under the null hypothesis, the typical magnitude of $\chi^2$ is on the order of $d$. 

We calibrate the $\chi^2$ statistic
heuristically, using the parametric bootstrap. 
Concretely, we compare the observed value of  the $\chi^2$ statistic, denoted
$\chi^2_{\text{obs}}$, to the distribution of $\chi^2$ 
that would be expected if $\widebar \bQ_{\hat{c}^{\mathrm{MLE}}}$ were the true data-generating distribution.
We approximate this distribution
by simulating
$1,000$ synthetic datasets of $n=500,000$ bitstring samples from this distribution. For each dataset, we refit the model coefficients $c$ and compute the corresponding $\chi^2$ values. This forms an empirical estimate of the sampling distribution of $\chi^2$
when $Y\sim \widebar \bQ_{\hat c^{\mathrm{MLE}}}$, which
we use to compute the probability of observing a $\chi^2$ value more extreme
than $\chi^2_{\mathrm{obs}}$,
under the hypothesis that the measurements are drawn from $\widebar \bQ_{\hat c^\mathrm{MLE}}$. 

For $N=18$ and one random circuit instance, we obtain $\chi^2_{\text{obs}} = 281,858$. Meanwhile, our simulated $\chi^2$ distribution has mean $\mu = 261,818$ and standard deviation $\sigma = 770$. Our observed $\chi^2_{\text{obs}}$ is thus $26\sigma$ away from the mean, with $p$-value $< 10^{-3}$. 

We compare this to a goodness-of-fit test using Google's two-component error model~\eqref{eq:white_noise_model} as the null hypothesis---an analysis which was also conducted by~\citep{rinott2022statistical}. We find an observed value $\chi^2_{\text{obs}} = 290,342$, while the simulated $\chi^2$ distribution has mean $\mu = 262,134$ and standard deviation $\sigma = 740$. The observed $\chi^2_{\text{obs}}$ value is $38\sigma$ away from the mean in this case, also with negligible $p$-value $< 10^{-3}$. 

Although our analysis suggests considerable room for improvement in modeling the data of Ref.~\cite{arute2019quantum}, our $k$-component model has a marked improvement of $12\sigma$ over the simplest two-component white-noise model. The overall goodness-of-fit is still poor, indicating that the dataset contains information about a host of error processes not currently in our model. Nevertheless, our fitting procedure turns out to be remarkably robust, yielding fitted error rates for quantities of interest that are 
comparable with all available estimates from alternative benchmarking methods (see Fig.~\ref{fig:google_data}). 
As in the spirit of cross-entropy benchmarking, we expect this robustness to be due to the fact that the remaining error sources not captured by our model
lie in spaces that are roughly orthogonal to the row space of $\Pi$
(up to centering).  

\section{Justifying the Independent Porter-Thomas Assumption} 
In this section, we provide some quantitative evidence for the validity of assumption~\ref{assm:pt}, which we leveraged throughout our theoretical study. This assumption requires the various bitstring error distributions $\Pi_{i\cdot}$\,to be mutually independent, and distributed according to the Porter-Thomas law. In what follows, we will show that this assumption holds to second order: under mild assumptions on the errors of the circuit, we find that the rows of $\Pi_{i\cdot}$ are approximately uncorrelated, and have marginal moments which are consistent with the Porter-Thomas law.

Specifically, we shall explicitly calculate 
$\bbE[\Pi \Pi^T]$, where the rows $\Pi_{i\cdot}$  are probability distributions arising from the presence of a single error in a local brickwork random unitary circuit. For simplicity of analysis, we will assume here that our errors are single-qubit Pauli terms. Furthermore, we also consider errors that are within the bulk of circuit, so that we can replace the circuit before and after the signal with global Haar random unitaries $R_1,R_2$. More specifically, for diagonal terms, we take
\begin{align}\label{eq:haarPi}
    \pi_0(z) &= |\langle z|R_2 R_1|0\rangle|^2 \\
    \pi_i(z) &= |\langle z|R_2 P_i R_1|0\rangle|^2,
\end{align}
where $P_i$ is a Pauli error and $R_1, R_2 \sim \text{Haar}(2^N)$ with $N$ the system size. 
We treat the vectors 
$\pi_0(z)$ and $\pi_i(z)$ as forming the rows of $\Pi$.

Now, using Weingarten calculus~\cite{Collins2006, Collins2022}, we can compute
\begin{align}
    \bbE_{R_1, R_2} \left[\sum_z \pi_0(z)^2 \right]= \frac{2}{d} + O\left(\frac{1}{d^2}\right), \\
\bbE_{R_1, R_2} \left[\sum_z \pi_i(z)^2\right] = \frac{2}{d} + O\left(\frac{1}{d^2}\right).
\end{align}
The leading order terms exactly match results for Porter-Thomas distributions, where $d = 2^N$. Thus, differences from Porter-Thomas are exponentially small.

For off-diagonal terms involving $\pi_0(z)$ and an error distribution $\pi_i(z)$, we can similarly use Eq.~\ref{eq:haarPi}. This yields
\begin{align}
\bbE_{R_1, R_2}\left[ \sum_z \pi_i(z) \pi_0(z)\right] = \frac{1}{d} + O\left(\frac{1}{d^3}\right).
\end{align}
The leading order term is again exact if assuming i.i.d.~Porter-Thomas distributions.

Finally, for off-diagonal terms involving two different signals, we also consider the portion of the underlying brickwork circuit $\calR$ that lies between the spacetime locations of these two signals. To be specific, suppose the signal $P_i$ and $P_j$ are $t$ layers apart in the original circuit. Then, $\calR$ would comprise of exactly these $t$ layers of local random unitaries that separate the two signals.
This leads to modified expressions of the form
\begin{align}
    \pi_i(z) &= |\langle z|R_2 \calR P_i R_1|0\rangle|^2, \\
    \pi_j(z) &= |\langle z|R_2 P_j \calR R_1|0\rangle|^2
\end{align}

for computing 
\begin{align}
    \bbE_{R_1, R_2, \calR} \left[\sum_z \pi_i(z) \pi_j(z)\right].
\end{align}

Now,
\begin{align}
    \bbE_{R_1, R_2}\left[ \sum_z \pi_i(z) \pi_j(z) \right]= \frac{1 + d^{-2}\text{Tr}( P_i(t) P^*_{j}) \text{Tr}( P^*_i(t) P_{j})}{d} + O\bigg(\frac{1}{d^2}\bigg),
\end{align}
where we have defined
\begin{align}
    P_i(t) \equiv \calR^\dagger P_i \calR.
\end{align}

The leading order term above differs only from the i.i.d. Porter-Thomas case by $d^{-2}\text{Tr}( P_i(t) P^*_{j}) \text{Tr}( P^*_i(t) P_{j})$, where we have included $d^{-2}$ to normalize the trace, which is $O(d^2)$. 

To proceed, we can compute
\begin{align}
    \bbE_{\calR} \bigg(d^{-2}\text{Tr}( P_i(t) P^*_{j})\text{Tr}( P^*_i(t) P_{j})\bigg),
\end{align}
where the average is over individual two-qubit Haar random unitaries in the brickwork $\calR$. This can be done through the standard technique of mapping the unitaries to an Ising spin model, as explained in Refs. \cite{nahum2018operator, Bao2020}. This yields
\begin{align}
    \bbE_{\calR} \bigg(d^{-2}\text{Tr}( P_i(t) P^*_{j})\text{Tr}( P^*_i(t) P_{j})\bigg) = D(x, t),
\end{align}
where $x$ and $t$ indicate how far $P_j$ is from $P_i$ in the space and time directions. Specifically, $t$ is is equal to the depth of $\calR$, and $x$ represents how many qubits away the error $P_j$ is from $P_i$. For the full expression of $D(x, t)$, see Ref. \cite{gong2025robust}---here, we only discuss relevant properties of the function. Specifically, $D(x, t)$ decays exponentially in both $x$ and $t$. Thus, for signals $P_i$, $P_j$ that are relatively spaced out in the random circuit, $D(x, t) \ll 1$, and the expectation $ E_{R_1, R_2, \calR} \sum_z \pi_i(z) \pi_j(z) $ also approximately satisfies the i.i.d. Porter-Thomas result of $\frac{1}{d}$.

In all cases, we see that the second moments of the rows of $\Pi_{i, \cdot}$ match those of i.i.d. Porter-Thomas distributions up to exponentially small corrections in the system size $N$ and spacing of signals.

\section{Further Technical Background}
In this Appendix, we summarize 
several known technical results and definitions which are used throughout our proofs. 

\subsection{Technical Results}
We begin by stating a few standard
facts about Poisson distributions. 
The following is a standard upper bound on 
the Kullback-Leibler
divergence between Poisson random variables.
\begin{lemma}
\label{lem:kl_poisson}
For any $\mu,\nu > 0$, it holds that
$$\KL\big( \mathrm{Poi}(\mu)\,\|\,\mathrm{Poi}(\nu)\big) = \mu \log \frac{\mu}{\nu} + \nu - \mu.$$
Furthermore, for all $C > 0$, there exists $K > 0$ such that
if $|\nu-\mu| < C \nu$, then
$$\KL\big( \mathrm{Poi}(\mu)\,\|\,\mathrm{Poi}(\nu)\big) \leq K\cdot \frac{(\mu-\nu)^2}{\nu}.$$
\end{lemma}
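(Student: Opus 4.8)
The plan is to verify both assertions by direct computation with the Poisson probability mass function $f(x;\lambda) = e^{-\lambda}\lambda^x/x!$ for $x \in \bbN_0$. For the first identity, I would write the log-likelihood ratio: for any $x \in \bbN_0$,
$$\log\frac{f(x;\mu)}{f(x;\nu)} = (\nu - \mu) + x\log\frac{\mu}{\nu}.$$
Taking the expectation under $X \sim \mathrm{Poi}(\mu)$ and using $\bbE[X] = \mu$ yields $\KL\big(\mathrm{Poi}(\mu)\|\mathrm{Poi}(\nu)\big) = (\nu-\mu) + \mu\log(\mu/\nu)$, which is the claimed formula.

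For the second assertion, I would reparametrize by $t = (\mu-\nu)/\nu$, so that $\mu = \nu(1+t)$ with $t > -1$ (since $\mu > 0$) and $|t| < C$ (by hypothesis). Substituting into the formula just derived gives $\KL\big(\mathrm{Poi}(\mu)\|\mathrm{Poi}(\nu)\big) = \nu\, g(t)$, where $g(t) = (1+t)\log(1+t) - t$. Since $(\mu-\nu)^2/\nu = \nu t^2$, it therefore suffices to produce a constant $K = K(C) > 0$ with $g(t) \le K t^2$ for all $t$ in the interval $I_C := \big(\max\{-1,-C\},\, C\big)$.

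To obtain this bound I would note that $g(0) = 0$, $g'(t) = \log(1+t)$ (so $g'(0) = 0$), and $g''(t) = 1/(1+t) > 0$ on $(-1,\infty)$; hence $g \ge 0$, and a second-order Taylor expansion (or L'Hôpital) shows that $h(t) := g(t)/t^2$ extends continuously to $t = 0$ with value $h(0) = \tfrac12$. Moreover $g$ extends continuously to $t = -1$ with $g(-1) = 1$, so $h$ extends to a continuous function on the compact closure of $I_C$, on which it is everywhere finite. A continuous function on a compact set is bounded, so $K := \sup_{t \in I_C} h(t) < \infty$ depends only on $C$, and $g(t) \le K t^2$ on $I_C$, which proves the claim.

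There is no genuine obstacle here: the argument is elementary calculus. The only point meriting a word of care is the endpoint $t \to -1^+$ (relevant when $C \ge 1$), where $(1+t)\log(1+t) \to 0$ while $-t \to 1$ and $t^2 \to 1$, so $h(t)\to 1$ and no blow-up of $g(t)/t^2$ occurs.
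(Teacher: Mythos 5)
Your proof is correct and complete. The paper states this lemma without proof (labeling it a "standard upper bound"), and your argument is exactly the calculation one would write down: the first identity by direct expansion of the log-likelihood ratio and $\bbE_{\mathrm{Poi}(\mu)}[X]=\mu$, and the second by reducing to the bound $(1+t)\log(1+t)-t \le K t^2$ on $(\max\{-1,-C\},C)$ via continuity of $g(t)/t^2$ on the compact closure (with the correct Taylor value $1/2$ at $t=0$ and the correct limit $1$ at $t=-1^+$).
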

The following is Lemma~2 of~\cite{acharya2016}.
\begin{lemma}
\label{lem:acharya}
Let $X\sim \mathrm{Poi}(\lambda)$. For any integer $r \geq 1$, let
$$\hat T = X! / (X-r)!.$$
Then, 
$$\bbE[\hat T] = \lambda^r, \quad 
\Var[\hat T] \leq \lambda^r\big((\lambda+r)^r - \lambda^r\big).$$
\end{lemma}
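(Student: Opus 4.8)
\noindent\emph{Proof proposal.} The plan is to observe that $\hat T = X!/(X-r)!$ is exactly the $r$-th falling factorial $(X)_r = X(X-1)\cdots(X-r+1)$ of $X$ (with the usual convention $(X)_r=0$ when the nonnegative integer $X$ satisfies $X<r$, since then one of the factors vanishes), and to reduce both assertions to elementary identities for the factorial moments of a Poisson variable. The whole argument is short and self-contained; indeed this is Lemma~2 of~\cite{acharya2016} and may simply be cited, but I sketch it for completeness.

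\textbf{Mean.} Writing out the Poisson pmf and shifting the summation index by $r$ gives
\[
\bbE[(X)_r] = \sum_{x\geq r} \frac{x!}{(x-r)!}\,e^{-\lambda}\frac{\lambda^x}{x!} = \lambda^r \sum_{y\geq 0} e^{-\lambda}\frac{\lambda^y}{y!} = \lambda^r,
\]
which is the first claim, and which will be reused term-by-term below.

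\textbf{Second factorial moment.} I would next invoke the falling-factorial convolution identity $(X)_r^2 = \sum_{j=0}^{r} \binom{r}{j}^2 j!\,(X)_{2r-j}$. This has a one-line double-counting proof: $(X)_r^2$ counts ordered pairs of injections $[r]\hookrightarrow[X]$, and grouping such pairs by the number $j$ of common image points yields $(X)_r\binom{r}{j}\binom{X-r}{r-j}\,r!$ pairs with overlap $j$; using $(X)_r\binom{X-r}{r-j} = (X)_{2r-j}/(r-j)!$ and $\binom{r}{j}\,r!/(r-j)! = \binom{r}{j}^2 j!$, summing over $j$ gives the stated expansion (equivalently, this is Chu--Vandermonde applied to $(X)_r\binom{X}{r}$). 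Taking expectations and applying the mean identity to each term gives $\bbE[(X)_r^2] = \sum_{j=0}^r \binom{r}{j}^2 j!\,\lambda^{2r-j}$, hence $\Var[(X)_r] = \sum_{j=1}^{r} \binom{r}{j}^2 j!\,\lambda^{2r-j}$.

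\textbf{Comparison.} Finally, expanding the target with the binomial theorem, $\lambda^r\big((\lambda+r)^r-\lambda^r\big) = \sum_{j=1}^{r}\binom{r}{j} r^j\,\lambda^{2r-j}$, so it suffices to verify the coefficient inequality $\binom{r}{j}^2 j! \le \binom{r}{j} r^j$ for each $j=1,\dots,r$; this is equivalent to $(r)_j = \binom{r}{j}j! = r(r-1)\cdots(r-j+1) \le r^j$, which is immediate. Summing the per-$j$ inequalities yields $\Var[(X)_r] \le \lambda^r\big((\lambda+r)^r-\lambda^r\big)$, completing the proof. The only step that needs a moment's care is the convolution identity in the second step; the double-counting (or Chu--Vandermonde) justification makes it routine, and everything else is bookkeeping.
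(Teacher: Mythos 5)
Your proposal is correct. Note that the paper does not prove this lemma at all: the surrounding text simply states ``The following is Lemma~2 of~\cite{acharya2016}'' and cites it as an external result, so there is no in-paper proof to compare against. Your argument is a valid, self-contained derivation. The mean computation by index shift is the standard one. The key step is the falling-factorial Vandermonde identity $(X)_r^2 = \sum_{j=0}^{r}\binom{r}{j}^2 j!\,(X)_{2r-j}$, which you justify correctly by double-counting ordered pairs of injections $[r]\hookrightarrow[X]$ grouped by image overlap; taking expectations term-by-term then gives $\Var[(X)_r] = \sum_{j=1}^{r}\binom{r}{j}^2 j!\,\lambda^{2r-j}$ exactly. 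Expanding $\lambda^r\bigl((\lambda+r)^r-\lambda^r\bigr) = \sum_{j=1}^{r}\binom{r}{j} r^j\lambda^{2r-j}$ and comparing coefficients reduces the inequality to $(r)_j \le r^j$, which is immediate. The one point worth flagging for a careful reader is that your comparison in fact establishes the \emph{exact} variance $\sum_{j=1}^{r}\binom{r}{j}^2 j!\,\lambda^{2r-j}$, a slightly stronger statement than the inequality quoted in the lemma; the bound $\lambda^r\bigl((\lambda+r)^r-\lambda^r\bigr)$ is a cosmetic relaxation obtained by replacing each $(r)_j$ with $r^j$.
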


Next, we state a technical result
about perturbations of polynomials, 
which is adapted
from~\cite{hundrieser2025}.
\begin{lemma}
\label{lem:real_roots}
Let $f$ be a polynomial of degree $k$ with $k$ real roots $x_1,\dots,x_k \in \bbR$.
Assume these roots are pairwise distinct, and let $\delta = \min_{i\neq j} |x_i-x_j|$. 
Then, for any $\epsilon < (\delta/2)^k$, the polynomial $f+\epsilon$ also has $k$ real roots
$x_1^\epsilon,\dots,x_k^\epsilon \in \bbR$ satisfying
$$W_1\left(\frac 1 k \sum_{i=1}^k \delta_{x_i},\frac 1 k \sum_{i=1}^k \delta_{x_i^\epsilon}\right) \leq \delta/2.$$
\end{lemma}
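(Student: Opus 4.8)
The plan is to run a direct intermediate-value-theorem argument, localizing one root of $f+\epsilon$ near each root of $f$. First I would normalize: we may assume $f$ is monic (rescaling changes the threshold only by the leading coefficient, which equals $1$ in the application to $f_u$), and order the roots as $x_1<x_2<\dots<x_k$, so that $f(x)=\prod_{j=1}^k(x-x_j)$ and consecutive roots satisfy $x_{i+1}-x_i\ge\delta$. I would then introduce the closed intervals $I_i:=[x_i-\delta/2,\;x_i+\delta/2]$; because consecutive roots are at least $\delta$ apart, the $I_i$ are pairwise disjoint, and $x_i$ is the \emph{only} root of $f$ inside $I_i$.

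Next I would record the two facts needed about $f$ on $\partial I_i$. Writing $f(x)=(x-x_i)\,g_i(x)$ with $g_i(x)=\prod_{j\ne i}(x-x_j)$, the factor $g_i$ is continuous and nonvanishing on $I_i$, hence of constant sign there, so $f(x_i-\delta/2)$ and $f(x_i+\delta/2)$ have opposite signs. Moreover, for $j\ne i$ one has $|x_i\pm\delta/2-x_j|\ge|x_i-x_j|-\delta/2\ge\delta/2$, whence
\[
\bigl|f(x_i\pm\delta/2)\bigr|=\frac{\delta}{2}\prod_{j\ne i}\bigl|x_i\pm\delta/2-x_j\bigr|\ \ge\ (\delta/2)^k\ >\ \epsilon .
\]
Thus at both endpoints of $I_i$ the additive perturbation by $\epsilon$ cannot overcome $|f|$: $f(x_i-\delta/2)+\epsilon$ and $f(x_i+\delta/2)+\epsilon$ inherit the opposite signs of $f$ at those points, and the intermediate value theorem yields a root $x_i^\epsilon$ of $f+\epsilon$ in the open interval $(x_i-\delta/2,x_i+\delta/2)$.

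Carrying this out for every $i$ produces $k$ roots of $f+\epsilon$ lying in $k$ pairwise disjoint intervals, hence $k$ distinct real roots; since $\deg(f+\epsilon)=k$, these are all of its roots, all real and simple, and $i\mapsto x_i^\epsilon$ is a bijection between the two root sets with $|x_i^\epsilon-x_i|<\delta/2$ for each $i$. This matching is in particular a transport map between the empirical measures, so
\[
W_1\!\left(\frac{1}{k}\sum_{i=1}^k\delta_{x_i},\ \frac{1}{k}\sum_{i=1}^k\delta_{x_i^\epsilon}\right)\ \le\ \frac{1}{k}\sum_{i=1}^k|x_i-x_i^\epsilon|\ <\ \frac{\delta}{2},
\]
which is the assertion.

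The argument is entirely elementary, so I do not anticipate a substantial obstacle; the only points requiring care are checking that the $I_i$ are disjoint and that $x_i$ is the unique root of $f$ in $I_i$ (both immediate from $\delta=\min_{i\ne j}|x_i-x_j|$), and the lower bound $|f(x_i\pm\delta/2)|\ge(\delta/2)^k$, which is precisely what makes $\epsilon<(\delta/2)^k$ the natural threshold. If one also wants the statement for $\epsilon\le 0$, the same sign bookkeeping applies verbatim, and the case $\epsilon=0$ is trivial.
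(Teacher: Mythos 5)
Your proof is correct and is the standard root-localization argument via the intermediate value theorem. The paper itself states Lemma~\ref{lem:real_roots} without proof, attributing it to Ref.~\citep{hundrieser2025}, so there is no in-paper argument to compare against; your approach --- sign consideration and the lower bound $|f(x_i\pm\delta/2)|\ge(\delta/2)^k$ on the boundary of each interval $I_i$, followed by IVT and a degree count --- is exactly what one would expect the cited source to do, and it is airtight.

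One useful observation you make, which is worth retaining: the lemma as stated omits a normalization of the leading coefficient, and the conclusion is false as written for polynomials with leading coefficient of modulus less than $1$ (the boundary bound becomes $|a|(\delta/2)^k$). You correctly note that the statement should be read as applying to monic $f$, which is precisely how it is used in the proof of Lemma~\ref{lem:existence_two_point} (where $f_u(z)=\prod_i(z-u_i)$). Two very small points of bookkeeping, neither a gap: (i) consecutive closed intervals $I_i$ may share an endpoint when $x_{i+1}-x_i=\delta$, but since the roots you produce lie strictly in the interiors, distinctness still follows; (ii) you obtain the strict bound $W_1<\delta/2$, which of course implies the stated $\le\delta/2$.
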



The following Lemma collects several elementary facts about Dirichlet and exponential
distributions which are used throughout our proofs.
\begin{lemma}\label{lem:dirichlet}
Let $d \geq 2$, and 
let $\pi=(\pi_1,\dots,\pi_d)^\top \sim \calD_d$ be a flat Dirichlet-distributed random vector, 
and let $\varpi=(\varpi_1,\dots,\varpi_d)^\top$ be a random vector consisting
of i.i.d.  Exp$(d)$-distributed random variables. 
Then, the following assertions hold. 
\begin{enumerate}
\item $\pi_{i} \sim \mathrm{Beta}(1,d-1)$ for $i=1,\dots,d$.
\item $\sum_{i=1}^d \varpi_i \sim \mathrm{Gamma}(d,d)$. 
\item $\pi \overset{d}{=} (\varpi_1,\dots,\varpi_d) / \sum_{i=1}^d \varpi_i$. 
\item If $G\sim \mathrm{Gamma}(d,d)$  is independent of $\pi$, then the vector $(G\pi_1,\dots,G\pi_d)^\top$ 
consists of i.i.d. Exp$(d)$ random variables. 
\item If $G \sim \mathrm{Gamma}(\alpha,\lambda)$ with $\alpha\in \bbN$ and $\lambda > 0$, 
and $Y|G\sim \mathrm{Poi}(G)$, then the marginal
law of $Y$ is Negative Binomial with number of trials $\alpha$ and probability parameter $\lambda / (\lambda+1)$. 

\item We have for all $\ell=1,2,\dots$
$$\bbE[\pi_1] = \bbE[\varpi_1] = \frac 1 d,\quad \Var[\pi_1] = \frac 1 {d^2}, \quad \bbE[\pi_1^\ell] = \frac{\ell!}{d^\ell}.$$
\end{enumerate}
\end{lemma}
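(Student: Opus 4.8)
\textbf{Proof plan for Lemma~\ref{lem:dirichlet}.} The plan is to verify the six assertions essentially by reduction to standard distributional identities; none of them requires more than elementary manipulation of Gamma, Beta, Dirichlet, and Poisson densities, so the "main obstacle" is really just careful bookkeeping rather than any conceptual difficulty.

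First I would establish (iii), the representation $\pi \overset{d}{=} (\varpi_1,\dots,\varpi_d)/\sum_i \varpi_i$, since several other parts follow from it. This is the classical normalization identity: if $\varpi_i \overset{\mathrm{iid}}{\sim} \mathrm{Exp}(d) = \mathrm{Gamma}(1,d)$, then by the standard fact that independent Gammas with a common rate, when normalized by their sum, yield a Dirichlet vector with the corresponding shape parameters, the normalized vector has the $\mathrm{Dirichlet}(1,\dots,1) = \calD_d$ law (the rate $d$ cancels). From (iii), assertion (i) is immediate: a single coordinate of a $\mathrm{Dirichlet}(1,\dots,1)$ vector is $\mathrm{Beta}(1,d-1)$, which one can also see directly since $\varpi_1 / (\varpi_1 + \sum_{i\geq 2}\varpi_i)$ has a Beta law with parameters $(1, d-1)$. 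Assertion (ii) is the standard additivity of the Gamma family: $\sum_{i=1}^d \varpi_i$ is a sum of $d$ independent $\mathrm{Gamma}(1,d)$ variables, hence $\mathrm{Gamma}(d,d)$.

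Next I would prove (iv), the converse: if $G \sim \mathrm{Gamma}(d,d)$ independent of $\pi \sim \calD_d$, then $(G\pi_1,\dots,G\pi_d)$ has iid $\mathrm{Exp}(d)$ entries. The cleanest route is to note that by (iii) we may write $\pi_i = \varpi_i / S$ with $S = \sum_j \varpi_j \sim \mathrm{Gamma}(d,d)$ and, crucially, $S$ independent of the direction $(\pi_1,\dots,\pi_d)$ — this independence of the sum and the normalized vector is the standard Gamma/Dirichlet independence lemma. Then $G\pi_i = G\varpi_i/S$, and since $G$ and $S$ are iid $\mathrm{Gamma}(d,d)$ and independent of the $\pi_i$'s, the factor $G/S$ recombines with $\varpi_i$ to give exactly $\varpi_i$ in distribution; more carefully, one checks that $(G\pi_1,\dots,G\pi_d)$ and $(\varpi_1,\dots,\varpi_d)$ have the same joint law by computing the Laplace transform $\bbE\exp(-\sum_i t_i G\pi_i)$ and matching it to $\prod_i d/(d+t_i)$. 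Assertion (v) is the well-known Poisson–Gamma mixture identity: conditioning $Y \mid G \sim \mathrm{Poi}(G)$ with $G \sim \mathrm{Gamma}(\alpha,\lambda)$, one integrates $\int_0^\infty \frac{G^y e^{-G}}{y!}\frac{\lambda^\alpha G^{\alpha-1}e^{-\lambda G}}{\Gamma(\alpha)}\,dG$ and recognizes the resulting Beta-function expression as the negative binomial pmf with $\alpha$ trials and success probability $\lambda/(\lambda+1)$.

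Finally, for (vi) I would compute the moments directly. For the exponential coordinate, $\bbE[\varpi_1] = 1/d$ is immediate and $\mathrm{Var}[\varpi_1] = 1/d^2$; for the Beta coordinate, $\pi_1 \sim \mathrm{Beta}(1,d-1)$ gives $\bbE[\pi_1] = 1/d$ and, using $\bbE[\pi_1^\ell] = \frac{B(1+\ell, d-1)}{B(1,d-1)} = \frac{\Gamma(1+\ell)\Gamma(d)}{\Gamma(1)\Gamma(d+\ell)} = \frac{\ell!\,(d-1)!}{(d+\ell-1)!} = \frac{\ell!}{(d)_\ell}$ where $(d)_\ell = d(d+1)\cdots(d+\ell-1)$; one then notes $(d)_\ell \asymp d^\ell$, and in the statement the notation $\bbE[\pi_1^\ell] = \ell!/d^\ell$ should be read as this leading-order identity (or, if exact equality is intended, it is $\ell!/(d)_\ell$, which is what the proof produces). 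The variance $\mathrm{Var}[\pi_1] = \bbE[\pi_1^2] - (\bbE\pi_1)^2 = \frac{2}{d(d+1)} - \frac{1}{d^2} = \frac{d-1}{d^2(d+1)}$, again $\asymp 1/d^2$. The only point requiring the slightest care is reconciling the Beta moments with the exact-versus-asymptotic phrasing of the stated identity, but since every place this lemma is invoked (e.g.\ in Appendices~\ref{sec:lower_bounds} and~\ref{sec:ub_proofs}) only the order of magnitude $\bbE[\pi_1^\ell] \lesssim d^{-\ell}$ is used, this distinction is harmless.
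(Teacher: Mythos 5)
Your proof is correct, and since the paper states Lemma~\ref{lem:dirichlet} as a collection of standard background facts without supplying a proof, there is nothing to compare against: the classical Gamma/Dirichlet decomposition for (i)--(iv), the Poisson--Gamma negative-binomial identity for (v), and direct moment computation for (vi) are exactly what one would write.

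One point worth preserving from your proposal is your observation about part (vi). As stated, the lemma gives $\Var[\pi_1]=1/d^2$ and $\bbE[\pi_1^\ell]=\ell!/d^\ell$, but these are exact only for the unnormalized exponential coordinate $\varpi_1\sim\mathrm{Exp}(d)$: one has $\Var[\varpi_1]=1/d^2$ and $\bbE[\varpi_1^\ell]=\ell!/d^\ell$ exactly. For the Dirichlet coordinate $\pi_1\sim\mathrm{Beta}(1,d-1)$ the exact values are
\[
\bbE[\pi_1^\ell]=\frac{\ell!\,(d-1)!}{(d+\ell-1)!},\qquad
\Var[\pi_1]=\frac{d-1}{d^2(d+1)},
\]
which agree with the stated expressions only to leading order in $d$. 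You are right that this discrepancy is harmless wherever the lemma is invoked (e.g.\ in the proofs of Lemma~\ref{lem:collision} and Lemma~\ref{lem:risk_V}, only the scaling $\bbE[\pi_1^\ell]\lesssim d^{-\ell}$ is used), but a careful reader would either replace $\pi_1$ by $\varpi_1$ in the last two identities of (vi) or qualify them as leading-order. Your argument for (iv) via the joint-law identification $(G,\pi)\overset{d}{=}(S,\varpi/S)$ and the independence of the Gamma sum from the normalized Dirichlet direction is the right one; the Laplace-transform check is an acceptable alternative but not needed once that independence is invoked.
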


\subsection{Classical Polynomial Families}
We now  recall the definitions and basic properties
of three polynomial families which play an important role in our development.

\subsubsection{Elementary Symmetric Polynomials}
\label{app:elementary_symmetric_polynomials}
The  {\it elementary symmetric polynomials} 
are a family
of $k$-dimensional polynomials, defined for 
all $c_1,\dots,c_k \in \bbC$ by
\begin{align}\label{eq:elementary_symmetric_polynomials}
	e_0(c_1,\dots,c_k) := 1 \quad \text{ and } 
	\quad  e_j(c_1, \dots, c_k) &:= \sum_{1\leq i_1< i_2 < \dots <  i_j \leq k} \;\; \prod_{\ell =1}^j c_{i_\ell}.
\end{align} 
By   {\it Vieta's formula}, the elementary symmetric polynomials can be used to describe
the coefficients of a univariate monic polynomial $f(z) = \prod_{i=1}^k (z-c_i)$, with roots $c_1,\dots,c_k\in \bbC$.
Concretely, one has:
\begin{align}
\label{eq:vieta_formula}
f(z) = z^k + \sum_{j=1}^k (-1)^j e_j(c_1, \dots, c_k) z^{k-j},\quad z \in \bbC.
  \end{align} 
Since the polynomials $e_j$ are symmetric, they admit an algebraic representation in terms of the 
moments $m_1(c), \dots, m_k(c)$ of the vector $c$, namely
$$m_j(c) = \frac 1 k \sum_{i=1}^k c_i^j,\quad j=1,\dots,k.$$
This representation can be made explicit using
 {\it Newton's identities} 
which state that for any $\ell =1,\dots,k$, 
 \begin{align}\label{eq:newton_identity}
 	e_\ell(c_1, \dots, c_k) 
 	&= \frac{k}{\ell} \sum_{j=1}^{\ell} (-1)^{j-1} e_{\ell-j}(c_1, \dots, c_k)  m_j(c),
 \end{align}
 In particular, equations~\eqref{eq:vieta_formula}--\eqref{eq:newton_identity} together imply that the vector $c$ is uniquely
 determined, up to permutation of its entries, by the vector of moments $m(c) = (m_j(c): 1 \leq j \leq k)$. 
That is, one has the following simple fact.
\begin{lemma}
\label{lem:identifiability_from_moments}
For any $c,c'\in \bbC^k$, it holds that
$$m(c) = m(c') ~~\Longrightarrow ~~ \{c_1,\dots,c_k\} = \{c'_1,\dots,c'_k\}.$$
\end{lemma}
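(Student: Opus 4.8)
\textbf{Proof plan for Lemma~\ref{lem:identifiability_from_moments}.}

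The statement asserts that the multiset $\{c_1,\dots,c_k\}$ is determined by the power-sum vector $m(c) = (m_1(c),\dots,m_k(c))$, where $m_j(c) = \frac 1 k\sum_i c_i^j$. The natural route is to pass through the elementary symmetric polynomials via Newton's identities~\eqref{eq:newton_identity}, and then invoke the fundamental theorem of algebra. First I would observe that, up to the harmless factor of $k$ in the definition of $m_j$, equation~\eqref{eq:newton_identity} expresses each $e_\ell(c_1,\dots,c_k)$ as a polynomial (with universal coefficients depending only on $k$ and $\ell$) in $m_1(c),\dots,m_\ell(c)$. This is proved by a trivial induction on $\ell$: the base case $e_0 = 1$ is immediate, and the recursion $e_\ell = \frac k\ell\sum_{j=1}^\ell (-1)^{j-1}e_{\ell-j}\,m_j$ lets one solve for $e_\ell$ once $e_0,\dots,e_{\ell-1}$ are known. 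Hence the data $m(c)$ determines $e_1(c),\dots,e_k(c)$.

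Next I would use Vieta's formula~\eqref{eq:vieta_formula}: the monic polynomial $f_c(z) = \prod_{i=1}^k(z - c_i)$ has coefficients $(-1)^j e_j(c_1,\dots,c_k)$, so knowing $e_1(c),\dots,e_k(c)$ is the same as knowing $f_c$ as a polynomial. Applying the same reasoning to $c'$, the hypothesis $m(c) = m(c')$ forces $e_j(c) = e_j(c')$ for all $j$, hence $f_c = f_{c'}$ as polynomials in $\mathbb C[z]$. By the fundamental theorem of algebra, a monic polynomial over $\mathbb C$ is determined by its multiset of roots and conversely, so the multisets of roots of $f_c$ and $f_{c'}$ coincide, i.e.\ $\{c_1,\dots,c_k\} = \{c'_1,\dots,c'_k\}$ as multisets (and in particular as sets). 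This completes the argument.

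There is essentially no obstacle here: the lemma is a standard consequence of Newton's identities and is stated in the excerpt precisely to be cited later. The only point requiring a sentence of care is the translation between the normalized power sums $m_j(c) = \frac 1 k\sum c_i^j$ used throughout the paper and the unnormalized power sums $p_j = \sum c_i^j$ appearing in the classical statement of Newton's identities --- but since $k$ is fixed and nonzero, $m_j$ and $p_j$ determine each other, so this is immaterial. If one wishes to be fully self-contained one could instead prove the identities by equating coefficients in the logarithmic derivative $f_c'(z)/f_c(z) = \sum_i 1/(z-c_i)$ expanded as a power series at infinity, but citing~\eqref{eq:newton_identity} as already recorded in the excerpt is cleaner.
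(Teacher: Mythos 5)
Your proposal is correct and follows exactly the route the paper intends: Newton's identities~\eqref{eq:newton_identity} recover $e_1(c),\dots,e_k(c)$ from $m_1(c),\dots,m_k(c)$ by induction, Vieta's formula~\eqref{eq:vieta_formula} then recovers the monic polynomial $f_c$, and the fundamental theorem of algebra identifies the multiset of roots. The remark about the normalization factor $k$ in $m_j$ and the observation that the conclusion is really a multiset equality are both accurate and align with the paper's phrasing ("up to permutation of its entries").
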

Some of our results will rely on a {\it quantitative} analogue of Lemma~\ref{lem:identifiability_from_moments}. Concretely, 
when constructing statistical estimators $\hat c$ of $c$ via moment estimation, we will be led to 
the question of quantifying the distance between $\hat c$ and $c$ in terms of their moment distance.  
It turns out that such quantitative bounds can be obtained by combining Newton's identities with existing perturbation bounds
for polynomial roots, which were first developed by Refs.~\citep{ostrowski1940,ostrowski1970}. 
This strategy was recently used by~\citet{hundrieser2025}, who proved the following 
result.
\begin{lemma}[\cite{hundrieser2025}]
\label{lem:stability_moments}
There exists a constant $C =C(k) > 0$ such that for any $c,c' \in \bbC^k$, 
$$W(c,c') \leq C \|m(c)-m(c')\|^{\frac 1 k}.$$
\end{lemma}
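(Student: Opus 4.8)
The plan is to combine Vieta's formula, Newton's identities, and a classical root-perturbation inequality of Ostrowski. The starting observation is that, by Vieta's formula~\eqref{eq:vieta_formula}, the vectors $c$ and $c'$ are precisely the (unordered) root multisets of the monic polynomials $f_c(z)=\prod_{i=1}^k(z-c_i)$ and $f_{c'}(z)=\prod_{i=1}^k(z-c_i')$, whose coefficients are the signed elementary symmetric polynomials $e_1,\dots,e_k$ evaluated at $c$ and $c'$. Newton's identities~\eqref{eq:newton_identity} express each $e_j(c)$ as a fixed polynomial (depending only on $k$) in the moments $m_1(c),\dots,m_j(c)$. Hence $W(c,c')$ --- a quantity depending only on the roots of $f_c$ and $f_{c'}$ --- is controlled by the distance between the coefficient vectors $(e_1(c),\dots,e_k(c))$ and $(e_1(c'),\dots,e_k(c'))$, which in turn is controlled by $\|m(c)-m(c')\|$.

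Concretely, I would proceed as follows. First, since $c\in\Delta_k$ in all of our applications (and the relevant complex-valued estimator $\tilde c$ has entries of bounded modulus, being the roots of a polynomial close to $f_c$), it suffices to establish the bound on a bounded domain $\|c\|_\infty\vee\|c'\|_\infty\le r$ for fixed $r$, with the constant allowed to depend on $r$; in our setting $r$ is universal, so this dependence is absorbed into $C(k)$. On this domain one has $|m_l(c)|\vee|m_l(c')|\le k r^l$, so the moment vectors lie in a fixed compact set. Second, because each $e_j$ is a polynomial in $m_1,\dots,m_j$, it is Lipschitz on this compact set, yielding
\begin{equation*}
\big|e_j(c)-e_j(c')\big|\le C(k,r)\,\|m(c)-m(c')\|,\qquad j=1,\dots,k.
\end{equation*}
Third, I would invoke Ostrowski's root-perturbation inequality~\citep{ostrowski1940,ostrowski1970}: for two monic degree-$k$ polynomials all of whose roots have modulus at most $\rho$, and whose coefficient vectors differ by $\eta$ in $\ell_\infty$ norm (equivalently, in $\ell_2$ norm, up to a $k$-dependent factor), there is a bijection between their root multisets along which the $\ell_\infty$ distance is at most $C(k)\max\{1,\rho\}^{1-1/k}\eta^{1/k}$. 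Applying this with $\rho\le r$ and $\eta\le C(k,r)\|m(c)-m(c')\|$, and bounding the $\ell_1$ sum defining $W(c,c')$ by $k$ times the $\ell_\infty$ distance along the optimal bijection, gives $W(c,c')\le C(k)\|m(c)-m(c')\|^{1/k}$.

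The main obstacle --- and the reason this is not a routine application of the implicit function theorem --- is obtaining the sharp Hölder exponent $1/k$ \emph{uniformly}, including at degenerate configurations where $c$ has repeated or tightly clustered entries and the root-to-coefficient map is highly singular. This is exactly what Ostrowski's inequality supplies; its proof (via a Rouché/resultant argument) is the genuinely nontrivial input, and the exponent $1/k$ is sharp, as seen by comparing $f(z)=z^k$ with $f(z)=z^k-\varepsilon$, whose roots are spread on a circle of radius $\varepsilon^{1/k}$. A minor secondary point is checking that a bijection that is near-optimal for the $\ell_\infty$ matching distance is also near-optimal, up to the factor $k$, for the $\ell_1$ distance defining $W$, which is immediate. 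Since the result is quoted directly from~\cite{hundrieser2025}, the role of the present argument is simply to record this chain of reductions.
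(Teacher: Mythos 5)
Your proposal follows exactly the strategy the paper credits to \cite{hundrieser2025}: Vieta's formula and Newton's identities translate moment proximity into proximity of the coefficient vectors of $f_c$ and $f_{c'}$, and an Ostrowski-type root-perturbation bound then converts coefficient proximity into proximity of the unordered root multisets, with the H\"older exponent $1/k$ supplied by Ostrowski and shown sharp by your $z^k$ versus $z^k-\varepsilon$ example. So the approach matches the one the paper indicates.

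One step in your reduction deserves more care, because it quietly absorbs the part of the argument that makes the constant depend on $k$ alone. The lemma is stated uniformly over all of $\bbC^k$, and this uniformity is actually used: in the proof of Proposition~\ref{prop:ub_blind} the lemma is applied to $c'=\tilde c$, the roots of $\hat f$, which need not have bounded modulus on the low-probability event where $\hat m$ is far from $m(c)$. Restricting to a bounded domain does not by itself give the unbounded statement: rescaling $c\mapsto c/R$ with $R=\max\{1,\|c\|_\infty,\|c'\|_\infty\}$ and applying the bounded-domain bound yields $W(c,c')\le C\,R^{1-1/k}\|m(c)-m(c')\|^{1/k}$, with a spurious factor $R^{1-1/k}$. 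What removes it in \cite{hundrieser2025} is the combination of the explicit $\max\{1,\rho\}^{1-1/k}$ prefactor in your Ostrowski statement with the observation that $R$ is itself controlled by the moments (Newton's identities bound the $e_j$, hence the root moduli, polynomially in the $m_j$, so the moment map is proper). Since the lemma is cited rather than reproved this is not a defect of the paper, but your parenthetical asserting that $\tilde c$ is bounded ``being the roots of a polynomial close to $f_c$'' is not quite right: on the bad event $\hat f$ is not close to $f_c$, and it is exactly there that the uniform statement is needed.
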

This Lemma shows that the 
sorted loss function $W$
is $(1/k)$-H\"older continuous with respect to the $\ell_1$ distance between moment vectors. 

\citet{hundrieser2025} additionally
showed that the H\"older exponent $1/k$
can be improved if the coordinates of
the elements $c,c'$
admit some separation. In order to state
this refined result, recall the
set $\Delta_{k,k_0}$ defined in Appendix~\ref{app:moment}. Let 
$c^\star\in \Delta_{k,k_0}$ be given, and let
$v_1 > \dots > v_{k_0}$ denote its $k_0$ distinct
entries.
Define the Voronoi cells
$$V_0 = \emptyset, 
\quad V_\ell = \left\{ z \in \bbC^k : 
\|z-v_i\| \leq \|z-v_j\|, \, \forall i\neq j\right\} \setminus V_{\ell-1},
\quad \ell=1,\dots,k_0.$$
Furthermore, let $r_\ell$
denote the multiplicity of $v_\ell$ among
the entries of $c^\star$, for all $\ell=1,\dots,k_0$. 
Given $c \in \Delta_k$, write
 $c_{V_\ell} = \{c_i \in V_\ell: 1 \leq i \leq k\}$.
We then define,
for all $c,c'\in \bbC^k$:
\begin{equation}
\widebar \calD_{c^\star}(c,c') = 1 \wedge 
\sum_{\ell=1}^{k_0} 
W^{r_\ell}(c_{V_\ell},c'_{V_\ell}),
\end{equation}
with the convention that $W(c_{V_\ell},c_{V_{\ell'}}) = \infty$ when $|c_{V_\ell}| \neq |c_{V_{\ell'}}|$. 
Finally, let 
$$\delta(c^\star) = \min_{1 \leq \ell < \ell'\leq k} 
|v_\ell-v_{\ell'}|.$$
We then have the following refined stability bound.
\begin{lemma}[\cite{hundrieser2025}]
\label{lem:refined_stability_moments}
Let $1 \leq k_0 \leq k$
and $c^\star\in \Delta_{k,k_0}$. Then, 
there exists a constant  $C=C(k,k_0,\delta(c^*)) > 0$ such that for any $c,c'\in \bbC^k$, we have 
$$\widebar\calD_{c^\star}(c,c') \leq 
W^{k-k_0-1}(c,c') \leq C \big\|m(c)-m(c')\big\|.$$
\end{lemma}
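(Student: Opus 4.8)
\textbf{Proof proposal for Lemma~\ref{lem:refined_stability_moments}.}

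The plan is to reduce the statement to two separate inequalities: first the comparison $\widebar{\calD}_{c^\star}(c,c') \leq W^{k-k_0-1}(c,c')$, which is purely combinatorial, and then the moment-stability bound $W^{k-k_0-1}(c,c') \leq C\|m(c)-m(c')\|$, which requires a quantitative perturbation argument for roots of polynomials near the configuration $c^\star$. Both are adapted from~\cite{hundrieser2025}, so the work is in assembling the pieces rather than inventing new tools.

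For the first inequality, I would argue as follows. If $|c_{V_\ell}| \neq |c'_{V_\ell}|$ for some $\ell$, then either the right-hand side is $\infty$ (when one of the cardinalities forces $W(c_{V_\ell},c'_{V_{\ell'}}) = \infty$ somewhere in a reindexing) or we are in a degenerate case where the bound holds trivially since $\widebar{\calD}_{c^\star}$ is capped at $1$; so assume $|c_{V_\ell}| = |c'_{V_\ell}| = r_\ell$ for all $\ell$. Then each summand $W^{r_\ell}(c_{V_\ell},c'_{V_\ell})$ must be compared to a suitable power of the global $W(c,c')$. The key observation is that any optimal transport plan realizing $W(c,c')$ must, once $c,c'$ are both close to $c^\star$ (which is implicit in the regime where the bound is nontrivial — when $W(c,c')$ is large the cap makes everything trivial), match points within the same Voronoi cell; hence $W(c_{V_\ell},c'_{V_\ell}) \leq W(c,c')$ for each $\ell$. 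Since $\sum_\ell r_\ell = k$ and $r_\ell \geq 1$, a short bookkeeping argument gives $\sum_\ell W^{r_\ell}(c_{V_\ell},c'_{V_\ell}) \leq \sum_\ell W(c,c')^{r_\ell} \lesssim W(c,c')^{r_{\min}}$ when $W(c,c') \leq 1$, and $r_{\min} \geq$ something controlled by $k - k_0 - 1$ in the worst configuration; I would pin down the exact exponent by noting $\max_\ell r_\ell \leq k - k_0 + 1$ so the smallest-power term dominates, yielding the claimed $W^{k-k_0-1}$.

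For the second inequality, the strategy is to invoke Newton's identities (equation~\eqref{eq:newton_identity}) to express the elementary symmetric polynomials $e_j(c)$ as polynomials in the moments $m_1(c),\dots,m_j(c)$, so that $\|e(c)-e(c')\| \lesssim \|m(c)-m(c')\|$ with an implicit constant depending on $k$ and on a uniform bound on the entries (here all entries lie in a bounded set since $c,c' \in \Delta_k$). Via Vieta's formula this controls the coefficients of the monic polynomials $f_c$ and $f_{c'}$. I would then apply the classical Ostrowski-type root perturbation bounds~\citep{ostrowski1940,ostrowski1970}, in the sharpened local form used in~\cite{hundrieser2025}: when the roots cluster around $k_0$ well-separated values $v_1 > \dots > v_{k_0}$ with separation $\delta(c^\star)$, a perturbation of size $\varepsilon$ in the coefficients moves the roots within each cluster of multiplicity $r_\ell$ by at most $C\varepsilon^{1/r_\ell}$, with $C$ depending on $\delta(c^\star)$ and $k$. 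Raising to the appropriate powers and summing over clusters produces exactly $\widebar{\calD}_{c^\star}(c,c')$, and the worst cluster (largest multiplicity, bounded by $k-k_0+1$) gives the global $W^{k-k_0-1}(c,c')$ bound after combining with the first inequality. The main obstacle is making the ``roots stay in their Voronoi cells'' step rigorous and uniform: one must argue that for $\|m(c)-m(c')\|$ small enough (depending only on $\delta(c^\star), k$), the perturbed roots of $f_c$ do not cross Voronoi boundaries, so that the cluster-wise matching in $\widebar{\calD}_{c^\star}$ is well-defined; outside this small-perturbation regime the cap by $1$ trivializes the claim, so the constant $C$ can be inflated to absorb that case. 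Since the result is explicitly quoted from~\cite{hundrieser2025}, I would in practice cite their Proposition/Lemma directly and only sketch how their general statement specializes to the notation $\widebar{\calD}_{c^\star}$, $r_\ell$, $V_\ell$ used here.
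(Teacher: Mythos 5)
The paper does not actually prove Lemma~\ref{lem:refined_stability_moments}; it is quoted directly from~\cite{hundrieser2025}, so there is no internal proof to compare your sketch against. That said, two things in your sketch deserve comment. Your strategy for the second inequality---Newton's identities to pass from moments to coefficients of the monic polynomial $f_c$, then an Ostrowski-type perturbation bound specialized to a root configuration clustered around $k_0$ values separated by $\delta(c^\star)$, with exponent $1/r_\ell$ per cluster---is indeed the mechanism behind the Hundrieser--et--al.\ result, and you correctly identify that the constant can absorb the large-perturbation regime thanks to the cap at~$1$. (One small slip: you justify boundedness of entries by ``$c,c'\in\Delta_k$,'' but the lemma allows $c,c'\in\bbC^k$; in that generality you need the perturbation bound to be scale-aware, which is why the constant is allowed to depend on $\delta(c^\star)$ and $k$.)

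Your argument for the first, ``combinatorial'' inequality $\widebar\calD_{c^\star}(c,c')\leq W^{k-k_0-1}(c,c')$ does not work, and in fact that inequality is false as printed. Your bookkeeping has $r_{\min}$ and $r_{\max}$ reversed: from $W(c_{V_\ell},c'_{V_\ell})\leq W(c,c')$ you get $\sum_\ell W^{r_\ell}(c_{V_\ell},c'_{V_\ell})\lesssim W^{r_{\min}}(c,c')$ when $W\leq 1$, and $r_{\min}$ can be as small as $1$ (any singleton cluster), which is strictly smaller than $k-k_0\pm 1$ in general; since smaller exponents give \emph{larger} powers when $W\leq 1$, this does not yield $\leq W^{k-k_0\pm 1}$. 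Concretely, take $k=3$, $k_0=2$, $c^\star=(a,a,b)$ with $a\neq b$, $c=c^\star$, and $c'=(a,a,b+\epsilon)$: then $\widebar\calD_{c^\star}(c,c')=\epsilon$ while $W(c,c')=\epsilon$, so $\widebar\calD_{c^\star}(c,c')>W^{2}(c,c')$ for small $\epsilon$. What the paper actually uses (see the proof of Proposition~\ref{prop:ub_blind_cluster}, which invokes ``$1/(k-k_0+1)$-modulus of continuity,'' and the proof of Proposition~\ref{prop:ub_blind_local}, which uses only $\widebar\calD_{c^\star}\lesssim\|m(c)-m(c')\|$) are the two separate assertions $W^{k-k_0+1}(c,c')\leq C\|m(c)-m(c')\|$ and $\widebar\calD_{c^\star}(c,c')\leq C\|m(c)-m(c')\|$; the printed exponent $k-k_0-1$ looks like a sign typo for $k-k_0+1$, and the chaining of the two left-hand sides through a single middle quantity appears to be an error in the paper's restatement rather than something you should be able to prove. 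You would do better to verify the two inequalities independently against~\cite{hundrieser2025} than to try to repair the chain.
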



\subsubsection{Charlier Polynomials}
\label{app:charlier_polynomials}
Let $f(x;\lambda) = e^{-\lambda}\lambda^x / x!$ denote the $\mathrm{Poi}(\lambda)$
density, evaluated at $x=0,1,\dots$. The  family of Charlier polynomials 
$$\varphi_\ell(x;\lambda) := \sum_{r=0}^\ell (-1)^{\ell-r} {\ell \choose r} \frac{(x)_r}{\lambda^r},\quad x,\ell=0,1,\dots,$$
indexed
by a parameter $\lambda > 0$, 
are a classical family of polynomials on $\bbR$ which are orthogonal with respect to the $L^2(\mathrm{Poi}(\lambda))$ norm~\citep{szego1939}.
One has the relation
\begin{equation}
\label{eq:charlier_ortho}
\sum_{x=0}^\infty \varphi_\ell(x;\lambda)\varphi_{\ell'}(x;\lambda) = {\ell!}{\lambda^\ell}I(\ell=\ell'),\quad \ell,\ell'=0,1,\dots
\end{equation}
The exponential generating function associated to the Charlier polynomials is
\begin{equation}
\label{eq:charlier_generating_fn}
G(x,t) = \sum_{\ell=0}^\infty \varphi_\ell(x;\lambda) \frac{t^\ell}{\ell!} = e^{-t} \left(1 + \frac{t}{\lambda}\right)^x,
\quad \text{for all } t \in \bbR. 
\end{equation}

\subsubsection{Bell Polynomials}
\label{app:bell_polynomials}
  Given an integer $p \geq 1$, the family of incomplete Bell polynomials $\{B_{\ell,p}\}_{\ell=1}^p$ consists
of the set of polynomials on $\bbR^{p-\ell+1}$ defined by
\begin{align}
\label{eq:bell_poly}
 B_{p,\ell}\big(\xi_1,\dots,\xi_{p-\ell+1}\big) =
p! \sum_{(h_1,\dots,h_{p-\ell+1}) \in \calH_{p,\ell}}
\prod_{i=1}^{p-\ell+1} \frac{\xi_i^{h_i}}{(i!)^{h_i} h_i!},
\end{align}
for all $\xi_1,\dots,\xi_{p-\ell+1} \in \bbR$. 
Here, $\calH_{p,\ell}$ consists of all tuples
$(h_1,\dots,h_{p-\ell+1})$ of nonnegative integers such that
$$ \sum_{i=1}^{p-\ell+1} h_i = \ell,\quad \sum_{i=1}^{p-\ell+1} ih_i = p.$$
Furthermore, the $p$-th complete Bell polynomial is defined by
$$B_p(\xi_1,\dots,\xi_p) = \sum_{\ell=1}^p B_{p,\ell}(\xi_1,\dots,\xi_{p-\ell+1}) = p! \sum_{r_1 + 2r_2 + \dots + pr_p=p} \prod_{i=1}^p \frac{\xi_i^{jr_i}}{(i!)^{r_i} r_i!}.$$
One has the identity $|\calH_{p,\ell}| 
= S(p,\ell)$, where 
\begin{align}
S(p,\ell) = \sum_{i=1}^\ell \frac{(-1)^{\ell-i} i^p}{(\ell-i)! \ell!} \leq \frac{\ell^p}{\ell!}.
\end{align} 
Furthermore, one has the basic identities
\begin{align}
B_{p,\ell}(a,\dots,a) &= a^\ell S(p,\ell) \\
B_{p,\ell}(a,a^2,\dots,a^{p-\ell+1}) &= a^p S(p,\ell)
\end{align}

Given a random variable $X$ with cumulants $\xi_p = \kappa_p(X)$ and moments $\eta_p = \bbE[X^p]$,
for $p=1,2,\dots$, one has the relations
\begin{equation}
\label{eq:bell_moments_cumulants}
\begin{aligned}
\xi_p &= \sum_{\ell=1}^p (-1)^{\ell-1} (\ell-1)! B_{p,\ell}(\eta_1,\dots,\eta_{p-\ell+1}) \\
\eta_p &= \sum_{\ell=1}^p B_{p,\ell}(\kappa_1,\dots,\kappa_{p-\ell+1}).
\end{aligned}
\end{equation}

\end{document}